\renewcommand{\@Opargbegintheorem}[4]{%
  #4\trivlist\item[\hskip\labelsep{#3#2\@thmcounterend}]}
\newcommand{\MEC}[1]{\mathit{MEC}(#1)}
\newcommand{\effuncov}{\mathsf{eff}_{\mathsf{unc}}}
\newcommand{\effunc}{\effuncov}
\newcommand{\effcov}{\mathsf{eff}_{\mathsf{cov}}}
\newcommand{\freq}[2]{\mathit{freq}_{#1}(#2)}
\newcommand{\Nat}{\mathbb{N}}
\newcommand{\Real}{\mathbb{R}}
\newcommand{\Rat}{\mathbb{Q}}
\newcommand{\Rational}{\Rat}
\def\Pr{\mathrm{Pr}}
\newcommand{\eqdef}{\stackrel{\text{\rm \tiny def}}{=}}
\newcommand{\init}{\mathsf{init}}
\newcommand{\StAct}{\mathit{StAct}}
\newcommand{\SA}{\StAct}
\newcommand{\Act}{\mathit{Act}}
\newcommand{\residual}[2]{\mathit{res}(#1,#2)}
\newcommand{\yes}{\mathit{yes}}
\newcommand{\no}{\mathit{no}}
\newcommand{\choice}{\mathit{choice}}
\newcommand{\sched}{\mathfrak{S}}
\newcommand{\tsched}{\mathfrak{T}}
\newcommand{\usched}{\mathfrak{U}}
\newcommand{\vsched}{\mathfrak{V}}
\newcommand{\Ssched}{\Sigma}
\newcommand{\Usched}{\Upsilon}
\newcommand{\last}{\mathit{last}}
\newcommand{\Cause}{\mathsf{Cause}}
\newcommand{\Effect}{\mathsf{Eff}}
\newcommand{\Eff}{\Effect}
\newcommand{\effect}{\mathsf{eff}}
\newcommand{\eff}{\effect}
\newcommand{\noeff}{\mathsf{noeff}}
\newcommand{\CanCause}{\mathsf{CanCause}}
\newcommand{\CanCau}{\CanCause}
\newcommand{\noeffc}{\noeff_{\mathsf{fp}}}
\newcommand{\noefffp}{\noeffc}
\newcommand{\noeffbot}{\noeff_{\mathsf{tn}}}
\newcommand{\noefftn}{\noeffbot}
\newcommand{\precision}{\operatorname{\mathit{precision}}}
\newcommand{\recall}{\operatorname{\mathit{recall}}}
\newcommand{\fscore}{\mathit{fscore}}
\newcommand{\relcov}{\recall}
\newcommand{\ratiocov}{\mathit{covrat}}
\newcommand{\covratio}{\ratiocov}
\newcommand{\covrat}{\ratiocov}
\newcommand{\ratio}[3]{\mathit{ratio}^{#1}_{#2}(#3)}
\newcommand{\wminMDP}[2]{#1_{[#2]}}
\newcommand{\wminMDPmax}[2]{#1^{\max}_{[#2]}}
\newcommand{\Until}{ \, \mathrm{U}\, }
\newcommand{\until}{\Until}
\newcommand{\PTIME}{\mathrm{P}}
\newcommand{\NP}{\mathrm{NP}}
\newcommand{\coNP}{\mathrm{coNP}}
\newcommand{\PSPACE}{\mathrm{PSPACE}}
\newcommand{\NPSPACE}{\mathrm{NPSPACE}}
\def\cC{\mathcal{C}}
\def\cE{\mathcal{E}}
\def\cG{\mathcal{G}}
\def\cK{\mathcal{K}}
\def\cM{\mathcal{M}}
\def\cN{\mathcal{N}}
\newcommand{\Ende}{\hfill $\lhd$}
\spnewtheorem{notation}[theorem]{Notation}{\bfseries}{\upshape}
\spnewtheorem{algorithm}[theorem]{Algorithm}{\bfseries}{\upshape}
\newenvironment{proofsketch}{\proof}{\endproof}
\def\@citecolor{blue}%
\def\@urlcolor{blue}%
\def\@linkcolor{blue}%
\def\orcidID#1{\smash{\href{http://orcid.org/#1}{\protect\raisebox{-1.25pt}{\protect\includegraphics{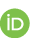}}}}}
\begin{document}

\title{On probability-raising causality
		  \\ in Markov decision processes
	  \thanks{This work was funded by DFG grant 389792660 as part of TRR~248, the Cluster of Excellence EXC 2050/1 (CeTI, project ID 390696704, as part of Germany’s Excellence Strategy), DFG-projects BA-1679/11-1 and BA-1679/12-1,and the RTG QuantLA (GRK 1763).}
  }
\author{Christel Baier(\Envelope) \orcidID{0000-0002-5321-9343} \and Florian Funke \orcidID{0000-0001-7301-1550} \and Jakob Piribauer(\Envelope) \orcidID{0000-0003-4829-0476} \and  Robin Ziemek(\Envelope) \orcidID{0000-0002-8490-1433}}
\authorrunning{Baier et al.}
\institute{Technische Universität Dresden\\
	\email{\{christel.baier, florian.funke, jakob.piribauer,robin.ziemek\}@tu-dresden.de}
}

\maketitle

\begin{abstract}
	The purpose of this paper is to introduce a notion of causality in Markov decision processes based on the probability-raising principle and to analyze its algorithmic properties.
	The latter includes algorithms for checking cause-effect relationships and the existence of probability-raising causes for given effect scenarios.
	Inspired by concepts of statistical analysis, we study quality measures 
	(recall, coverage ratio and f-score)
	for causes and develop algorithms for their computation.
	Finally, the computational complexity for finding optimal causes with respect to these measures is analyzed.
\end{abstract}

\noindent \textbf{Related version:} This is the extended version of the conference version accepted for publication at FoSSaCS 2022.

\section{Introduction}

As modern software systems control more and more aspects of our everyday lives, they grow increasingly complex.
Even small changes to a system might cause undesired or even disastrous behavior.
Therefore, the goal of modern computer science does not only lie in the development of powerful and versatile systems, but also in providing comprehensive techniques to understand these systems.
In the area of formal verification,
counterexamples, invariants and related certificates are often used to provide a verifiable justification that a system does or does not behave according to a specification (see e.g., \cite{MaPn95,CGP99,Namjoshi01}).
These, however, provide only elementary insights on the system behavior. Thus, there is a growing demand for a deeper understanding on \emph{why} a system satisfies or violates a specification and \emph{how} different components influence the performance.
The analysis of causal relations between events occurring during the execution of a system can lead to such understanding.
The majority of prior work in this direction relies on causality notions based on Lewis' counterfactual principle \cite{Lewis1973} stating the effect would not have occurred if the cause would not have happened.
A prominent formalization of the counterfactual principle is given by Halpern and Pearl \cite{HalpernP2001} via structural equation models.
This inspired formal definitions of causality and related notions of blameworthiness and responsibility in Kripke and game structures 
(see, e.g., \cite{ChocklerHK2008,BeerBCOT12,Chockler16,YazdanpanahDas16,FriedenbergHalpern19,YazdanpanahDasJamAleLog19,IJCAI21}).

In this work, we approach the concept of causality in a probabilistic setting, where we focus on the widely accepted \emph{probability-raising principle} which has its roots in philosophy \cite{Reichenbach56,Suppes70,Eells91,Hitchcock-Handbook} 
and has been refined by Pearl \cite{Pearl09} for causal and probabilistic reasoning in intelligent systems.
The different notions of probability-raising cause-effect relations 
discussed in the literature share the following two main principles:
\vspace{-4pt}
\begin{description}
\item [(C1)] Causes raise the probabilities for their effects,
     informally expressed by the requirement
    ``$\Pr( \, \text{effect} \, | \, \text{cause} \, ) >
        \Pr( \, \text{effect} \, )$''.
\item [(C2)] Causes must happen before their effects.
\end{description}
\vspace{-4pt}
Despite the huge amount of work on probabilistic causation in other disciplines, research on probability-raising causes in the context of formal methods is comparably rare and has concentrated on Markov chains (see, e.g., \cite{KleinbergM2009,Kleinberg2012,ATVA21} and the discussion of related work in Section \ref{sec:related-work}).
To the best of our knowledge,
probabilistic causation for probabilistic operational models with nondeterminism has not been studied before.

We formalize the principles (C1) and (C2) for Markov decision processes (MDPs), a standard operational model combining probabilistic and non-deterministic behavior, and  concentrate on reachability properties where both cause and effect are given as sets of states.
Condition (C1) can be interpreted in two natural ways in this setting:
On one hand, the probability-raising property can be locally required for each element of the cause.
Such causes are called \emph{strict probability-raising (SPR) causes} in our framework. 
This interpretation is especially suited when the task is to identify system states that have to be avoided for lowering the effect probability.
On the other hand, one might want to treat the cause set globally as a unit in (C1) leading to the notion of \emph{global probability-raising (GPR) cause}.
Considering the cause set as a whole is better suited when further constraints are imposed on the candidates for cause set.
This might apply, e.g., when the set of non-terminal states of the given MDP is partitioned into sets of states $S_i$  under the control of an agent $i$, $1\leq i \leq k$. For the task  to identify which agent's decisions cause the effect only the subsets of $S_1,\ldots,S_k$ are candidates for causes.
Furthermore, global causes are more appropriate when causes are used for monitoring purposes under partial observability constraints as then the cause candidates are sets of indistinguishable states.


\newcommand{\polytime}{poly-time}
\newcommand{\polyspace}{poly-space}

\begin{table}[t]
 \begin{center} 
 	\caption{Complexity results for MDPs and Markov chains (MC) with fixed effect set}
 \begin{adjustbox}{max width=\textwidth}
  \begin{tabular}{c||c|c|c|c|c|c}
    \multicolumn{1}{c}{\phantom{ GPR }} &
    \multicolumn{1}{c}{\phantom{ in $\PSPACE$ }} &
    \multicolumn{1}{c}{\phantom{covratio}} &
    \multicolumn{1}{c}{\phantom{ covratio }} &
    \multicolumn{1}{c}{\phantom{covratio}} &
    \multicolumn{1}{c}{\phantom{ recall = covratio }} &
    \multicolumn{1}{c}{\phantom{covratio}}   
    \\[-2.5ex]
    &
    \multicolumn{4}{c|}{for fixed set $\Cause$}
    &
    \multicolumn{2}{c}{find optimal cause}
    \\
    &
    &
    \multicolumn{3}{c|}{compute quality values}
    \\[-2ex]
    &
    \multicolumn{1}{c}{%
      \begin{tabular}{c} \\[-2.5ex]
          check PR \\[-0.3ex] condition \\[0.5ex] \end{tabular}}
    &
    \multicolumn{3}{c}{}
    &
    \multicolumn{1}{c}{%
      \begin{tabular}{r} \\[-2.5ex]
          covratio-optimal \\[-0.3ex] = recall-optimal \\[0.5ex] \end{tabular}}
    &
    \multicolumn{1}{c}{}
    \\[-3.8ex]
    &
    &
    \multicolumn{3}{c|}{(recall, covratio, f-score)} 
    &
    \phantom{recall = covratio} & f-score-optimal 
    \\[1ex]
    \hline
    \hline
    SPR &
    $\in \PTIME$ &
    \multicolumn{3}{c|}{\polytime} &
    \polytime
    &
    \begin{tabular}{c}
      \\[-2ex]
      \polyspace  \\
      \polytime \ for MC \\
      threshold problem $\in \NP\cap \coNP$ 
      \\[0.5ex]
    \end{tabular}
    \\
    \hline
    GPR &
    \begin{tabular}{c}
      \\[-2ex]
      $\in \PSPACE$  \\
      and $\in \PTIME$ for MC
      \\[0.5ex]
    \end{tabular}
    &
    \multicolumn{3}{c|}{\polytime} &
    \multicolumn{2}{c}{%
      \begin{tabular}{c}
         \\[-2ex]
         \polyspace \\
         threshold 
         problems $\in \PSPACE$ and $\NP$-hard 
         \\
         and 
         $\NP$-complete for MC
      \\[0ex]
      \end{tabular}}  
  \end{tabular}
  \end{adjustbox}
 \end{center} 
 \vspace{-18pt}
 \label{tabelle:summary-of-results}
\end{table}

Different causes for an effect according to our definition can differ substantially regarding how well they predict the effect and how well the executions exhibiting the cause cover the executions showing the effect.
Taking inspiration from measures used in statistical analysis (see, e.g., \cite{Powers-fscore}), we introduce quality measures that allow us to compare causes and to look for optimal causes: 
The \emph{recall} captures the probability that the effect is indeed preceded by the cause.
The \emph{coverage-ratio} quantifies the fraction of the probability that cause and effect are observed and the probability that the effect but not the cause is observed.
Finally, the \emph{f-score}, a widely used quality measure for binary classifiers, is the harmonic mean of recall and precision, i.e., the probability that the cause is followed by the effect.

\vspace{0.5ex}\noindent\textbf{Contributions.}
The goal of this work are the mathematical and algorithmic foundations
of probabilistic causation in MDPs based on (C1) and (C2).
We introduce strict and global probability-raising causes in MDPs (Section \ref{sec:SPR-GPR}). Algorithms are provided to check whether given cause and effect sets satisfy (one of) the probability-raising conditions (Section \ref{sec:SPR_check} and \ref{sec:check-GPR}) and  to check the existence of causes for a given effect (Section \ref{sec:SPR_check}).
In order to evaluate the coverage properties of a cause, we subsequently introduce the above-mentioned quality measures (Section \ref{sec:acc-measures}).
We give algorithms for computing these values for given cause-effect relations (Section \ref{sec:comp-acc-measures-fixed-cause}) and characterize the computational complexity of finding optimal causes wrt. to the different measures (Section \ref{sec:opt-PR-causes}).
Table \ref{tabelle:summary-of-results} summarizes our complexity results.
Omitted proofs can be found in the appendix.


\section{Preliminaries}

\label{sec:prelim}
Throughout the paper, we will assume some familiarity with basic concepts of
Markov decision processes. Here, we only present a brief summary of the notations used in the paper. For more details, we refer to \cite{Puterman,BaierK2008,Kallenberg20}.

A \emph{Markov decision process (MDP)} is a tuple $\cM=(S, \Act,P,\init)$ where
  $S$ is a finite set of states,
$\Act$  a finite set of actions,
$\init \in S$  the initial state and
$P : S \times \Act \times S \to [0,1]$  the transition probability function such that $\sum_{t\in S}P(s, \alpha,t) \in \{0,1 \}$ for all states $s \in S$ and actions $\alpha \in \Act$.
An action $\alpha$ is \emph{enabled} in state $s \in S$ if $\sum_{t\in S}P(s, \alpha,t)=1$. We define $\Act(s) = \{\alpha \mid \alpha \text{ is enabled in } s\}$.
A state $t$ is \emph{terminal} if $\Act(t) = \emptyset$.
A Markov chain (MC) is a special case of an MDP where $\Act$ is a singleton
(we then write $P(s,u)$ rather than $P(s,\alpha,u)$).
A \emph{path} in an MDP $\cM$ is a (finite or infinite) alternating sequence 
$\pi=s_0 \, \alpha_0 \, s_1 \, \alpha_1 \, s_2 \dots \in (S \times \Act)^* \cup (S \times \Act)^\omega$ such that  $P(s_i, \alpha_{i},s_{i+1})>0$ for all indices $i$. A path is called maximal if it is infinite or finite and ends in a terminal state.
An MDP can be interpreted as a
Kripke structure in which transitions go from states to probability distributions over states.

A \emph{(randomized) scheduler} $\sched$ is a function that maps each finite non-maximal path $s_0 \alpha_0  \dots \alpha_{n-1} s_n$ to a distribution over $\Act(s_n)$. $\sched$ is called deterministic if $\sched(\pi)$ is a Dirac distribution for all finite non-maximal paths $\pi$.
If the chosen action only depends on the last state of the path, 
 $\sched$ is called \emph{memoryless}. We write MR for the class of memoryless (randomized)  and MD for the class of memoryless deterministic schedulers. \emph{Finite-memory} schedulers are those that are representable by a finite-state automaton.

The scheduler $\sched$ of $\cM$ induces a (possibly infinite) Markov chain.
We write $\Pr^{\sched}_{\cM,s}$ for the standard probability measure on measurable sets of maximal paths in the Markov chain induced by $\sched$ with initial state $s$. If $\varphi$ is a measurable set of maximal paths, then $\Pr^{\max}_{\cM,s}(\varphi)$ and $\Pr^{\min}_{\cM,s}(\varphi)$ denote the supremum resp. infimum of the probabilities for $\varphi$ under all schedulers. We use the abbreviation $\Pr^{\sched}_{\cM}=\Pr^{\sched}_{\cM,\init}$ and notations $\Pr^{\max}_{\cM}$ and $\Pr^{\min}_{\cM}$ for extremal probabilities. Analogous notations will be used for expectations. So, if $f$ is a random variable, then, e.g., $\mathrm{E}^{\sched}_{\cM}(f)$ denotes the expectation of $f$ under $\sched$ and $\mathrm{E}^{\max}_{\cM}(f)$ its supremum over all schedulers.
We use LTL-like temporal modalities such as $\Diamond$ (eventually) and $\Until$ (until) to denote path properties.
For $X,T \subseteq S$ the formula $X \Until T$ is satisfied by paths $\pi = s_0 s_1 \dots $ such that there exists $j \geq 0$ such that for all $i<j: s_i \in X$ and $s_j \in T$
and $\Diamond T = S \Until T$.
It is well-known that $\Pr^{\min}_{\cM}(X \Until T)$ and
$\Pr^{\max}_{\cM}(X \Until T)$ and corresponding optimal MD-schedulers are computable in polynomial time. 

If $s\in S$ and $\alpha\in \Act(s)$, then $(s,\alpha)$ is said to be a state-action pair of $\cM$. 
An \emph{end component} (EC) of an MDP $\cM$ is a strongly connected sub-MDP containing at least one state-action pair. ECs will be often identified with the set of their state-action pairs. An EC $\cE$ is called maximal (abbreviated MEC) if there is no proper superset $\cE'$ of (the set of state-action pairs of) $\cE$ which is an EC.

\section{Strict and global probability-raising causes}

\label{sec:SPR-GPR}

We now provide formal definitions for cause-effect relations
in MDPs which rely on the probability-raising (PR) principle
as stated by (C1) and (C2) in the introduction.
We focus on the case where both causes and effects
are state properties, i.e., sets of states.

In the sequel, let
$\cM = (S,\Act,P,\init)$ be an MDP
and $\Effect \subseteq S \setminus \{\init\}$ 
a nonempty set of terminal states.
(Dealing with a fixed effect set, the assumption that all effect states are terminal is justified by (C2).)
Furthermore, we may assume that every state $s\in S$ is reachable from $\init$.
Proofs for the results of this section are provided in Appendix \ref{app:section_3}.

We consider here two variants of the probability-raising condition: the global setting treats the set $\Cause$ as a unit, while the strict view requires the probability-raising condition for all states in $\Cause$ individually.

\begin{definition}[Global and strict probability-raising cause (GPR/SPR cause)]
	\label{def:GPR} 
        \label{def-PR-causes} 
	        Let $\cM$ and $\Effect$ be as above and
        $\Cause$ a nonempty subset of $S \setminus \Effect$.
	Then, $\Cause$ is said to be a
        \emph{GPR cause} for
	$\Effect$ iff the following two conditions (G) and (M) hold:
	\begin{enumerate}
		\item [(G)]
		For each scheduler $\sched$ where
		$\Pr^{\sched}_{\cM}( \Diamond \Cause) >0$:
		\begin{equation}
			\label{GPR}  
			\Pr^{\sched}_{\cM}(\ \Diamond \Effect \ | \ \Diamond \Cause \ )
			\ > \ \Pr^{\sched}_{\cM}(\Diamond \Effect).
			\tag{\text{GPR}}
		\end{equation}  
		\item [(M)]
		For each $c\in \Cause$, there is a scheduler $\sched$
		with $\Pr^{\sched}_{\cM}( (\neg \Cause) \Until c ) >0$.
	\end{enumerate}
        	$\Cause$ is called
        an \emph{SPR cause} for
	$\Effect$ iff (M) and 
        the following condition (S) hold:
	\begin{enumerate}
		\item [(S)]
		For each state $c\in \Cause$ and each
		scheduler $\sched$ where $\Pr^{\sched}_{\cM}( (\neg \Cause) \Until c ) >0$:
		\begin{equation}
			\label{SPR}  
			\Pr^{\sched}_{\cM}(\ \Diamond \Effect \ | \
			(\neg \Cause) \Until c \ )
			\ > \ \Pr^{\sched}_{\cM}(\Diamond \Effect).
			\tag{\text{SPR}}
		\end{equation}   
	\end{enumerate}
\end{definition}

Condition (M) can be seen as a minimality requirement as states $c\in \Cause$ that are not accessible from $\init$ without traversing other states in $\Cause$ could be omitted without affecting the true positives (events where an effect state is reached after visiting a cause state, ``covered effects'') or false negatives (events where an effect state is reached without visiting a cause state before, ``uncovered effect'').
More concretely, whenever a set $C \subseteq S \setminus \Effect$ satisfies conditions (G) or (S) then the set $\Cause$ of states $c\in C$ where $\cM$ has a path from $\init$ satisfying $(\neg C)\Until c$ is a GPR resp. an SPR cause.


\subsection{Examples and simple properties of probability-raising causes}

We first observe that SPR/GPR causes cannot contain the initial state 
$\init$, since otherwise an equality instead of an inequality would hold in \eqref{GPR}
and \eqref{SPR}.
Furthermore as a direct consequence of the definitions and using the equivalence of
the LTL formulas
$\Diamond \Cause$ and
$(\neg \Cause) \until \Cause$
we obtain:

\begin{lemma}[Singleton PR causes]
	\label{global-vs-strict-causes}
	If $\Cause$ is a singleton then
	$\Cause$ is a SPR cause for $\Effect$
	if and only if
	$\Cause$ is a GPR cause for $\Effect$.
\end{lemma}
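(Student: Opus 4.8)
The plan is to show that for a singleton cause $\Cause = \{c\}$, conditions (S) and (G) coincide, which together with the shared condition (M) will yield the equivalence. The key observation is that when $\Cause$ is a singleton the two events appearing in the conditional probabilities of \eqref{GPR} and \eqref{SPR} are the same. Indeed, by the equivalence of the LTL formulas $\Diamond \Cause$ and $(\neg \Cause)\Until \Cause$ noted just before the lemma, and since $\Cause = \{c\}$, we have $\Diamond \Cause = \Diamond c$ and $(\neg \Cause)\Until c = (\neg c)\Until c$, and these two path properties describe exactly the same set of maximal paths (every path that eventually reaches $c$ has a first visit to $c$, before which it has not visited $c$).

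The first step is to make this event identity precise: for the singleton $\Cause=\{c\}$ I would argue that $\Diamond \Cause$ and $(\neg \Cause)\Until c$ denote the same measurable set of paths, hence for every scheduler $\sched$ one has $\Pr^{\sched}_{\cM}(\Diamond \Cause) = \Pr^{\sched}_{\cM}((\neg \Cause)\Until c)$ and likewise the conditional probabilities agree whenever the conditioning event has positive probability. Consequently the premise ``$\Pr^{\sched}_{\cM}(\Diamond \Cause)>0$'' in (G) is literally the same as the premise ``$\Pr^{\sched}_{\cM}((\neg \Cause)\Until c)>0$'' in (S), and under that premise the inequality in \eqref{GPR} is identical to the inequality in \eqref{SPR}.

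The second step is to assemble the equivalence. For a singleton, condition (S) quantifies over the single state $c\in\Cause$, so it reduces to precisely the single implication appearing in (G) after the event rewriting above. Thus (S) holds if and only if (G) holds. Since both the SPR and the GPR definitions additionally require the same minimality condition (M), we conclude that $\Cause=\{c\}$ satisfies (M) and (S) if and only if it satisfies (M) and (G), i.e.\ $\Cause$ is an SPR cause if and only if it is a GPR cause.

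I do not expect a serious obstacle here; the only point requiring a little care is the measure-theoretic justification that $\Diamond \Cause$ and $(\neg \Cause)\Until c$ denote literally the same path set when $\Cause=\{c\}$, so that no inequality is ever strict in one condition but not the other and the positivity premises coincide exactly. Once that set-level identity is recorded, the equivalence of (S) and (G) is immediate because the universal quantification over $\Cause$ in (S) collapses to the single state $c$.
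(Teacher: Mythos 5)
Your proposal is correct and is exactly the paper's argument: the lemma is stated there as a direct consequence of the definitions together with the equivalence of the LTL formulas $\Diamond \Cause$ and $(\neg \Cause)\Until \Cause$, which for a singleton $\Cause=\{c\}$ makes the conditioning events (and positivity premises) in \eqref{GPR} and \eqref{SPR} coincide, so conditions (G) and (S) are identical and (M) is shared. Your write-up merely spells out this event identity more explicitly than the paper does.
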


As the event $\Diamond \Cause$ is a disjoint union of all events $(\neg \Cause) \Until c$ with $c\in \Cause$,
the probability for covered effects
$\Pr^{\sched}_{\cM}(\ \Diamond \Effect \ | \ \Diamond \Cause \ )$
is a weighted average of the
probabilities
$\Pr^{\sched}_{\cM}(\ \Diamond \Effect \ | \ (\neg \Cause)\Until c \ )$
for $c\in \Cause$.
This yields:

\begin{restatable}[Strict implies global]{lemma}{strictimpliesglobal}
	\label{lemma:strict-implies-global} 
	Every SPR cause for $\Effect$ is a GPR cause for $\Effect$.
\end{restatable}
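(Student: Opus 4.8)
The plan is to show that the two defining conditions of an SPR cause, namely (M) and (S), together imply the two defining conditions (G) and (M) of a GPR cause. Since (M) appears verbatim in both definitions, the only statement requiring an argument is that (S) implies (G). The whole proof therefore reduces to establishing the global inequality \eqref{GPR} for an arbitrary scheduler $\sched$ with $\Pr^{\sched}_{\cM}(\Diamond \Cause) > 0$, using the strict local inequalities \eqref{SPR}.

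First I would fix such a scheduler $\sched$ and exploit the decomposition already highlighted before the statement: using the equivalence $\Diamond \Cause \equiv (\neg \Cause) \Until \Cause$ and the fact that on any path reaching $\Cause$ there is a unique first cause state visited, the event $\Diamond \Cause$ splits into the pairwise disjoint events $(\neg \Cause) \Until c$ for $c \in \Cause$. Intersecting with $\Diamond \Effect$ preserves this disjoint decomposition, so both $\Pr^{\sched}_{\cM}(\Diamond \Cause)$ and $\Pr^{\sched}_{\cM}(\Diamond \Effect \cap \Diamond \Cause)$ become finite sums over $c \in \Cause$, where each summand of the latter factors as $\Pr^{\sched}_{\cM}(\Diamond \Effect \mid (\neg \Cause) \Until c)\cdot \Pr^{\sched}_{\cM}((\neg \Cause) \Until c)$.

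Next I would restrict attention to the index set $C_{>0} = \{c \in \Cause : \Pr^{\sched}_{\cM}((\neg \Cause) \Until c) > 0\}$, which is nonempty because its summands add up to $\Pr^{\sched}_{\cM}(\Diamond \Cause) > 0$. Setting $w_c = \Pr^{\sched}_{\cM}((\neg \Cause) \Until c)/\Pr^{\sched}_{\cM}(\Diamond \Cause)$ for $c \in C_{>0}$ produces strictly positive weights summing to $1$, and dividing the two sums by $\Pr^{\sched}_{\cM}(\Diamond \Cause)$ rewrites the covered-effect probability as the convex combination $\Pr^{\sched}_{\cM}(\Diamond \Effect \mid \Diamond \Cause) = \sum_{c \in C_{>0}} w_c \cdot \Pr^{\sched}_{\cM}(\Diamond \Effect \mid (\neg \Cause) \Until c)$. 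Condition \eqref{SPR} applies to each $c \in C_{>0}$ precisely because those are the states with positive $\Pr^{\sched}_{\cM}((\neg \Cause) \Until c)$, yielding $\Pr^{\sched}_{\cM}(\Diamond \Effect \mid (\neg \Cause) \Until c) > \Pr^{\sched}_{\cM}(\Diamond \Effect)$ for every such $c$. A convex combination with positive weights of quantities each strictly exceeding a common value $\Pr^{\sched}_{\cM}(\Diamond \Effect)$ itself strictly exceeds that value, which is exactly \eqref{GPR}.

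I do not expect a genuine obstacle; the argument is the "weighted average" observation stated before the lemma made rigorous. The one point that needs care is the bookkeeping of the index set: condition \eqref{SPR} is assumed only for states $c$ with $\Pr^{\sched}_{\cM}((\neg \Cause) \Until c) > 0$, so the convex combination must be taken over $C_{>0}$ rather than all of $\Cause$, and one must verify $C_{>0} \neq \emptyset$ so that the combination is nontrivial and the resulting strict inequality is not vacuous.
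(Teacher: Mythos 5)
Your proof is correct and takes essentially the same route as the paper's: the same disjoint decomposition of $\Diamond \Cause$ into the events $(\neg \Cause) \Until c$, the same restriction to the nonempty subset of cause states reached with positive probability (called $C_{\sched}$ in the paper), and the same weighted-average conclusion. The only cosmetic difference is that the paper bounds the average from below by the minimum $m$ of the conditional probabilities over that finite set and then uses $m > \Pr^{\sched}_{\cM}(\Diamond \Effect)$, whereas you invoke directly that a convex combination with strictly positive weights of values each strictly exceeding a common bound strictly exceeds it; these are interchangeable.
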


\begin{example}[Non-strict GPR cause]
	\label{ex:non-strict-global-causes-in-MC}  
	{\rm
	  Consider the Markov chain $\cM$ depicted below
	  where the nodes represent states and the directed edges represent transitions labeled with their respective probabilities.
		Let $\Effect=\{\effect\}$.
%
		Then,
                $\Pr_{\cM}(\Diamond \Effect) = \frac{1}{3} \, + \, \frac{1}{3}\cdot \frac{1}{4} \, + \, \frac{1}{12} 
                = \frac{1}{2}$,
                $\Pr_{\cM}( \Diamond \Effect  |  \Diamond c_1  ) = \Pr_{\cM,c_1}(\Diamond \effect) =1$ and
                $\Pr_{\cM}( \Diamond \Effect  |  \Diamond c_2  ) = \Pr_{\cM,c_2}(\Diamond \effect) =\frac{1}{4}$.
		Thus, $\{c_1\}$ is both an SPR and a GPR cause for $\Effect$, while $\{c_2\}$
		is not.
		The set $\Cause=\{c_1,c_2\}$ is a
                non-strict GPR cause for $\Effect$
		as:
		\begin{center}
				$\Pr_{\cM}(\ \Diamond \Effect \ | \ \Diamond \Cause \ ) 
				=
				(\frac{1}{3}+\frac{1}{3}\cdot \frac{1}{4}) / (\frac{1}{3}+\frac{1}{3})
				=
				(\frac{5}{12})/(\frac{2}{3})
				=
				\frac{5}{8}
				>
				\frac{1}{2}
				=
				\Pr_{\cM}(\Diamond \Eff)$.
		\end{center}

		The second condition (M) is obviously fulfilled.
		Non-strictness follows from the fact that
		the SPR condition does not hold for
		state $c_2$.\Ende
		\begin{center}			
			\resizebox{0.55\textwidth}{!}{

\begin{tikzpicture}[->,>=stealth',shorten >=1pt,auto ,node distance=0.5cm, thick]
	\node[scale=1, state] (s0) {$\init$};
	\node[scale=1, state] (c1) [below left = 1 of s0] {$c_1$};
	\node[scale=1, state] (c2) [right =2 of c1] {$c_2$};
	\node[scale=1, state] (e) [left =2 of c1] {$\effect$};
	\node[scale=1, state] (f) [right = 2 of c2] {$\noeff$};
	
	\draw[<-] (s0) --++(-0.55,0.55);
	\draw (s0) -- (c1) node[below=0.2, pos=0.3,scale=1] {$1/3$};
	\draw (s0) -- (c2) node[below=0.2, pos=0.3,scale=1] {$1/3$};
	\draw (s0) -- (e) node[above, pos=0.5,scale=1] {$1/12$};
	\draw (s0) -- (f) node[pos=0.5,scale=1] {$1/4$};
	\draw (c1) -- (e) node[pos=0.3,above,scale=1] {$1$};
	\draw (c2) -- (f) node[below, pos=0.5,scale=1] {$3/4$};
	\draw (c2) to [out=200, in=340] (e) node[below right = .5, scale=1] {$1/4$};
\end{tikzpicture}}
		\end{center}
	}
\end{example}

\begin{example}[Probability-raising causes might not exist]
	\label{ex:no-prob-raising-cause}  
	{\rm  
                PR causes might not
		exist, even if $\cM$ is a Markov chain.
		This applies, e.g., to the Markov chain $\cM$ with two states
		$\init$ and $\effect$ where
		$P(\init,\effect)=1$ and the effect set $\Effect=\{\effect\}$.
		The only cause candidate is the singleton $\{\init\}$.
		However, the strict inequality in \eqref{GPR} or \eqref{SPR} does
		not hold for $\Cause =\{\init\}$.
                The same phenomenon occurs if all non-terminal states of a Markov chain
                reach the effect states with the same probability.
                In such cases, however, the non-existence of PR causes
                is well justified as the events $\Diamond \Effect$ and
                $\Diamond \Cause$
                are stochastically independent for every set
                $\Cause \subseteq S \setminus \Effect$.
		\Ende
	}
\end{example}


\begin{remark}[Memory needed for refuting PR condition]
  \label{rem:memory_necessary}
	\label{rem:memory-needed}
  Let $\cM$ be the MDP in Figure~\ref{fig:memory-needed}, where the notation is similar to Example \ref{ex:no-prob-raising-cause} with the addition of actions $\alpha, \beta$ and $\gamma$.
  Let $\Cause = \{c\}$
  and $\Effect=\{\eff\}$.
  Only state $s$ has a nondeterministic choice.
  $\Cause$ is not an PR cause. To see this,
  regard the deterministic scheduler $\tsched$ that
  schedules $\beta$ only for the first visit of $s$ and $\alpha$ for
  the second visit of $s$.
  Then:
  \begin{center}
  $
    \Pr^{\tsched}_{\cM}(\Diamond \eff)
    \ = \ \frac{1}{2} \cdot \frac{1}{2} \, + \, 
          \frac{1}{2} \cdot \frac{1}{2} \cdot 1 \cdot \frac{1}{4}
    \ = \ \frac{5}{16}
    \ > \ \frac{1}{4} \ = \
    \Pr^{\tsched}_{\cM}(\Diamond \eff |\Diamond c)
    $
  \end{center}  
  Denote the MR schedulers reaching $c$ with positive probability as $\sched_{\lambda}$ with $\sched_{\lambda}(s)(\alpha)$ $=$ $\lambda$ and $\sched_{\lambda}(s)(\beta)=1{-}\lambda$  for some $\lambda \in \, [0,1[$. Then, $\Pr^{\sched_{\lambda}}_{\cM,s}(\Diamond \eff)  >0$ and:
\begin{center}
  	$
   \Pr^{\sched_{\lambda}}_{\cM}(\Diamond \eff)
   \ = \
   \frac{1}{2} \cdot \Pr^{\sched_{\lambda}}_{\cM,s}(\Diamond \eff)
   \ < \ 
   \Pr^{\sched_{\lambda}}_{\cM,s}(\Diamond \eff)
   \ = \
   \Pr^{\sched_{\lambda}}_{\cM,c}(\Diamond \eff)
   \ = \
   \Pr^{\sched_{\lambda}}_{\cM}(\Diamond \eff |\Diamond c)
    $
 \end{center}
  Thus, the SPR/GPR condition holds for $\Cause$
  and $\Effect$ 
  under all memoryless schedulers reaching $\Cause$ with positive probability,
  although $\Cause$ is not an PR cause.
  \Ende
\end{remark}  


\begin{figure}[t]
	\centering
	\begin{minipage}{0.41\textwidth}
		\centering
		\resizebox{\textwidth}{!}{

\begin{tikzpicture}[scale=1,->,>=stealth',auto ,node distance=0.5cm, thick]
	\tikzstyle{round}=[thin,draw=black,circle]
	
	\node[scale=1, state] (init) {$\init$};
	\node[scale=1, state, below=1.5 of init] (noeff) {$\noeff$};
	\node[scale=1, state, right=1.5 of init] (s) {$s$};
	\node[scale=1, state, below = 1.5 of s] (eff) {$\eff$};
	\node[scale=1, state, right=1.5 of eff] (c) {$c$};
	
	\draw[<-] (init) --++(-0.55,0.55);
	\draw[color=black ,->] (init) edge  node [pos=0.5,above] {$\gamma \mid 1/2$} (s) ;
	\draw[color=black ,->] (init)  edge  node [pos=0.5, left] {$\gamma \mid 1/2$} (noeff) ;
	\draw[color=black ,->] (s) edge  node [anchor=center] (n1) {} node [pos=0.8,right] {$3/4$} (noeff) ;
	\draw[color=black,->] (s) edge[out=260, in=100] node [anchor=center] (m1) {} node [pos=0.85,left] {$1/4$} (eff) ;
	\draw[color=black,->] (s) edge[out=280, in=80] node [anchor=center] (m2) {} node [pos=0.85,right] {$1/2$} (eff) ;
	\draw[color=black,->] (s) edge node [anchor=center] (n2) {} node [pos=0.8,left] {$1/2$} (c) ;
	\draw[color=black,->] (c) edge[out=90, in=0] node [pos=0.5, right] {$\gamma \mid 1$} (s) ;
	\draw[color=black , very thick, -] (n1.center) edge [bend right=45] node [pos=0.3] {$\alpha$} (m1.center);
	\draw[color=black, very thick, -] (m2.center) edge [bend right=45] node [pos=0.3] {$\beta$} (n2.center);
	
\end{tikzpicture}
		}
		\caption{MDP $\cM$ from Remark \ref{rem:memory-needed}}
		\label{fig:memory-needed}
	\end{minipage}\hspace{25pt}
	\begin{minipage}{0.41\textwidth}
		\centering
		\resizebox{\textwidth}{!}{

\begin{tikzpicture}[scale=1,->,>=stealth',auto ,node distance=0.5cm, thick]
	\tikzstyle{round}=[thin,draw=black,circle]
	
	\node[scale=1, state] (init) {$\init$};
	\node[scale=1, state, below=1.25 of init] (pre) {$\effuncov$};
	\node[scale=1, state, right=3.75 of init] (c) {$c$};
	\node[scale=1, state, right=1.25 of pre] (t) {$\noeff$};
	\node[scale=1, state, below=1.25 of c] (post) {$\effcov$};
	
	\draw[<-] (init) --++(-0.55,0.55);
	\draw[color=black ,->] (init) edge  node [very near start, anchor=center] (n5) {} node [pos=0.5,above] {$1/2$} (c) ;
	\draw[color=black ,->] (init)  edge  node [very near start, anchor=center] (n0) {} node [pos=0.5, left] {$\alpha \mid 1$} (pre) ;
	\draw[color=black ,->] (c) edge  node [near start, anchor=center] (m5) {} node [pos=0.5,right] {$1/2$} (post) ;
	\draw[color=black,->] (init) edge node [near start, anchor=center] (n6) {} node [pos=0.5,right] {$1/2$} (t) ;
	\draw[color=black , very thick, -] (n6.center) edge [bend right=45] node [pos=0.3] {$\beta$} (n5.center);
	
	\draw[color=black ,->] (c)  edge  node [near start, anchor=center] (m6) {} node [pos=0.5,left] {$1/2$} (t) ;
	\draw[color=black, very thick, -] (m6.center) edge [bend right=45] node [pos=0.3] {$\tau$} (m5.center);
	
\end{tikzpicture}
		}
		\caption{MDP $\cM$ from Remark \ref{rem:randomization-needed}}
		\label{fig:randomization-needed}
	\end{minipage}
\end{figure}


\begin{remark}[Randomization needed for refuting PR condition]
	\label{rem:randomization-needed}
	{\rm
	  Consider the MDP $\cM$ of Figure \ref{fig:randomization-needed}.
		Let $\Effect = \{\effuncov,\effcov\}$ and $\Cause = \{c\}$.
		The two MD-schedulers $\sched_{\alpha}$ and $\sched_{\beta}$ 
		that select $\alpha$ resp. $\beta$ for the initial
		state $\init$ are the only deterministic schedulers.
		As $\sched_{\alpha}$ does not reach $c$, it is irrelevant for
		the 
                SPR or GPR condition.
		$\sched_{\beta}$ satisfies \eqref{SPR} and \eqref{GPR} as
                $\Pr^{\sched_{\beta}}_{\cM}(\Diamond \Effect|\Diamond c)
                = \frac{1}{2} > \frac{1}{4}=
                 \Pr^{\sched_{\beta}}_{\cM}(\Diamond \Effect)$.
				The MR scheduler $\tsched$ which selects $\alpha$ and $\beta$
		with probability $\frac{1}{2}$ in $\init$
		reaches $c$ with positive probability and violates
		\eqref{SPR} and \eqref{GPR} as 
                $\Pr^{\tsched}_{\cM}(\Diamond \Effect|\Diamond c)
                = \frac{1}{2} < \frac{5}{8}
                = \frac{1}{2} + \frac{1}{2} \cdot \frac{1}{2} \cdot \frac{1}{2}
                = \Pr^{\tsched}_{\cM}(\Diamond \Effect)$.
		\Ende
	}
\end{remark}

\begin{remark}[Cause-effect relations for regular classes of schedulers]
\label{remark:regular-classes-of-schedulers}
  The definitions of PR causes in MDPs impose constraints for all
  schedulers reaching a cause state.
  This condition is fairly strong and can often lead to the phenomenon
  that no PR cause exists.
  Replacing $\cM$ with an MDP resulting from the
  synchronous parallel composition of $\cM$ with a
  deterministic finite automaton
  representing a regular constraint on the scheduled state-action sequences
    (e.g., ``alternate between actions $\alpha$ and $\beta$ in state $s$''
  or ``take $\alpha$ on every third visit to state $s$ and actions $\beta$ or $\gamma$ otherwise'')
  leads to a weaker notion of PR causality.
  This can be useful to obtain more detailed
  information on cause-effect relationships in special scenarios,
  be it at design time where multiple scenarios (regular classes of schedulers)
  are considered or
  for a post-hoc analysis where one seeks for the causes
  of an occurred effect and where information about the scheduled actions
  is extractable from log files or the information gathered by a monitor.
  \Ende
\end{remark}

\begin{remark}[Action causality and other forms of PR causality]
  \label{remark:action-causality}
  Our notions of PR causes are purely state-based with PR conditions that compare probabilities
  under the same scheduler.
  However, in combination with model transformations,
  the proposed notions of PR causes are
  also applicable for reasoning about other forms of PR causality.
  
  Suppose, 
  the task is to check whether taking action $\alpha$ in state $s$ raises the effect probabilities compared to never scheduling
  $\alpha$ in state $s$. Let $\cM_0$ and $\cM_1$ be copies of $\cM$ with the following modifications: In $\cM_0$, the only enabled
  action of state $s$ is $\alpha$, while in $\cM_1$ the enabled actions of state $s$ are the elements of
  $\Act_{\cM}(s)\setminus \{\alpha\}$. Let now $\cN$ be the MDP whose initial state has a single enabled action and moves
  with probability $1/2$ to 
  $\cM_0$ and $\cM_1$.
  Then, action $\alpha$ raises the effect probability in $\cM$ iff the initial state of $\cM_0$ consitutes an SPR cause in $\cN$.
  This idea can be generalized to check whether scheduler classes satisfying
  a regular constraint
  have higher effect probability compared to all other schedulers.
  In this case, we can deal with an MDP $\cN$ as above where
  $\cM_0$ and $\cM_1$ are defined as the synchronous product of deterministic finite
  automata and $\cM$.
  \Ende
\end{remark}


\subsection{Related work}

\label{sec:related-work}

Previous work in the direction of probabilistic causation in stochastic operational models has mainly concentrated on Markov chains. Kleinberg \cite{KleinbergM2009,Kleinberg2012} introduced \emph{prima facie causes} in finite Markov chains where both causes and effects are formalized as PCTL state formulae, and thus they can be seen as sets of states as in our approach.
The correspondence of Kleinberg's PCTL constraints for prima facie causes and the strict probability-raising condition formalized using conditional probabilities has been worked out in the survey article \cite{ICALP21}.
Our notion of SPR causes corresponds to Kleinberg's prima facie causes, except for the minimality condition (M).
\'{A}brah\'{a}m et al \cite{Abraham-QEST2018} introduces a hyperlogic for Markov chains and gives a formalization of probabilistic causation in Markov chains as a hyperproperty, which is consistent with Kleinberg's prima facie causes, and with SPR causes up to minimality.
 Cause-effect relations in Markov chains where effects are $\omega$-regular properties 
 has been introduced in \cite{ATVA21}. It relies on strict probability-raising condition, but requires completeness in the sense that every path where the effect occurs has a prefix in the cause set. The paper \cite{ATVA21} permits a non-strict inequality in the SPR condition with the consequence that causes always exist, which is not the case for our notions.

The survey article \cite{ICALP21} introduces notions of global probability-raising causes for Markov chains where 
causes and effects can be path properties.
\cite{ICALP21}'s notion of \emph{reachability causes} in Markov chains directly corresponds to our notion GPR causes, the only difference being that \cite{ICALP21} deals with a relaxed minimality condition and requires that the cause set is reachable without visiting an effect state before. 
The latter is inherent in our approach as we suppose that all states are reachable and the effect states are terminal.

To the best of our knowledge, probabilistic causation in MDPs has not been studied before. The only work in this direction we are aware of is the recent paper by Dimitrova et al \cite{DimFinkbeinerTorfah-ATVA2020} on a hyperlogic, called PHL, for MDPs.
While the paper focuses on the foundation of PHL, it contains an example illustrating how action causality can be formalized as a PHL formula.
Roughly, the presented formula expresses that taking a specific action $\alpha$ increases the probability for reaching effect states.
Thus, it also relies on the probability-raising principle, but compares the ``effect probabilities'' under different schedulers (which either schedule $\alpha$ or not) rather than comparing probabilities under the same scheduler as in our PR condition.
However, as Remark \ref{remark:action-causality} argues, to some extent our notions of PR causes can reason about action causality as well.

There has also been work on causality-based explanations of counterexamples in probabilistic models \cite{LeitnerFischer-Leue-SAFECOMP11,LeitnerFischer-PhD15}. The underlying causality notion of this work, however, relies on the non-probabilistic counterfactual principle rather than the probability-raising condition. The same applies to the notions of forward and backward responsibility in stochastic games in extensive form introduced in the recent work \cite{IJCAI21}.

\section{Checking the existence of PR causes and the PR conditions}

\label{sec:check}

We now turn to algorithms for checking whether a given set $\Cause$ is an SPR or GPR cause for $\Effect$. As condition (M) of SPR and GPR causes is verifiable by standard model checking techniques in polynomial time, we concentrate on checking the probability-raising conditions (SPR) and (GPR). For Markov chains, both (SPR) and (GPR) can be checked in polynomial time by computing the corresponding probabilities. So, the interesting case is checking the PR conditions in MDPs. 
In case of SPR causality, this is closely related to the existence of PR causes and solvable in polynomial time (Section \ref{sec:check-SPR}), while checking the GPR condition is more complex and polynomially reducible to (the non-solvability of) a quadratic constraint system (Section \ref{sec:check-GPR}).
All proofs and omitted details to this section can be found in Appendix \ref{app:section_4}.
  
We start by stating that for the SPR and GPR condition, it suffices to consider schedulers minimizing the probability to reach an effect state from every cause state.

\begin{notation}
[MDP with minimal effect probabilities from cause candidates]
\label{notation:MDP-mit-min-prob-ab-cause-candidates}  
If $C \subseteq S$ then we write $\wminMDP{\cM}{C}$ for the MDP resulting from $\cM$ by 
removing all enabled actions of the states in $C$. Instead, $\wminMDP{\cM}{C}$ has a new action $\gamma$ that is enabled exactly in the states $s\in C$ with the transition probabilities $P_{\wminMDP{\cM}{C}}(s,\gamma,\eff)=\Pr^{\min}_{\cM,s}(\Diamond \Effect)$ and $P_{\wminMDP{\cM}{C}}(s,\gamma,\noeff)=1{-}\Pr^{\min}_{\cM,s}(\Diamond \Effect)$. Here, $\eff$ is a fixed state in $\Effect$ and $\noeff$ a (possibly fresh) terminal state not in $\Effect$. 
We write $\wminMDP{\cM}{c}$ if $C=\{c\}$ is a singleton.
\end{notation}

\begin{restatable}{lemma}{wmincriterionPRcauses}
  \label{lemma:wmin-criterion-PR-causes}    
 Let $\cM=(S,\Act,P,\init)$ be an MDP and $\Effect\subseteq S$ a set of terminal states.
  Let $\Cause \subseteq S\setminus \Effect$. Then, $\Cause$ is an SPR cause (resp. a GPR cause) for $\Effect$ in $\cM$ if and only if $\Cause$ is an SPR cause (resp. a GPR cause) for $\Effect$ in $\wminMDP{\cM}{\Cause}$.  
\end{restatable}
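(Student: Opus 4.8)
The plan is to pin everything on one structural fact: $\cM$ and $\wminMDP{\cM}{\Cause}$ agree on all states outside $\Cause$ and differ only in the behaviour emanating from cause states, which $\wminMDP{\cM}{\Cause}$ collapses to the minimal effect probability. First I would note that along every prefix staying in $\neg\Cause$ the two MDPs induce the same one-step distributions; hence for corresponding schedulers all quantities fixed before the first visit to $\Cause$ coincide — the probabilities $\Pr^{\sched}_{\cM}((\neg\Cause)\Until c)$ for $c\in\Cause$, the reachability probability $\Pr^{\sched}_{\cM}(\Diamond\Cause)$, and the uncovered effect probability $\Pr^{\sched}_{\cM}(\Diamond\Effect\wedge\neg\Diamond\Cause)$. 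In particular condition (M) is insensitive to the transformation, so only (G) and (S) need attention.

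Next I would make the scheduler correspondence precise. A scheduler of $\cM$ decomposes into its behaviour up to the first visit of $\Cause$ and, for each $s\in\Cause$, a continuation from $s$; in $\wminMDP{\cM}{\Cause}$ the continuations disappear, so its schedulers correspond exactly to the pre-$\Cause$ parts and the forced step from $s$ realises precisely $\Pr^{\min}_{\cM,s}(\Diamond\Effect)$. The elementary observation driving the proof is that, under any scheduler of $\cM$, the conditional effect probability contributed by a reached cause state $s$ lies in $[\Pr^{\min}_{\cM,s}(\Diamond\Effect),\Pr^{\max}_{\cM,s}(\Diamond\Effect)]$, that every value there is attainable by a suitable continuation independently across the cause states, and that $\wminMDP{\cM}{\Cause}$ installs the lower end.

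The core is a monotonicity analysis of the PR inequalities as the continuations vary with the pre-$\Cause$ part held fixed. Writing $q_c=\Pr^{\sched}_{\cM}((\neg\Cause)\Until c)$, $Q=\sum_{c\in\Cause}q_c$, $u$ for the uncovered effect probability and $a_c$ for the conditional effect probability from $c$, inequality \eqref{GPR} rewrites, with $A=\sum_{c}q_c a_c$, as $A(1-Q)>uQ$. Since $Q$, $u$ and the nonnegative factor $1-Q$ are fixed by the pre-$\Cause$ behaviour and $A$ is nondecreasing in each $a_c$, this is hardest exactly when every $a_c$ is minimal, i.e. under the minimizing continuation installed by $\wminMDP{\cM}{\Cause}$. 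Both directions for GPR follow: a cause in $\wminMDP{\cM}{\Cause}$ already satisfies the hardest instance for each pre-$\Cause$ behaviour and thus \eqref{GPR} for all schedulers of $\cM$; conversely the schedulers of $\wminMDP{\cM}{\Cause}$ sit among those of $\cM$, so \eqref{GPR} is inherited. For SPR I would argue state by state, using that a set satisfies (S) iff each of its states does, rewriting \eqref{SPR} for a fixed $c$ and passing to the minimizing continuations, with Lemma~\ref{global-vs-strict-causes} linking the singleton situation to the global one.

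I expect the delicate point to be SPR rather than GPR. Whereas \eqref{GPR} depends monotonically on the single aggregate $A$, the rewritten \eqref{SPR} for a fixed $c$ increases in $a_c$ but decreases in the contributions $a_{c'}$ of the other cause states, so the passage to a uniformly minimizing continuation must be justified through the precise per-state formulation of (S) and the singleton identification, not by naive monotonicity. The second hurdle, common to both conditions, is to certify that minimizing the continuations preserves strictness — that no strict PR inequality collapses to equality — and to accommodate history-dependent randomized continuations so that the minimizing continuation is simultaneously available from all cause states; the explicit form $A(1-Q)>uQ$ is what makes these steps transparent for GPR and provides the template for the state-wise SPR argument.
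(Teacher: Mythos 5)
Your treatment of the global condition is correct and is essentially the paper's own argument: the paper also fixes the behaviour before $\Cause$, rewrites \eqref{GPR} as (in your notation) $A(1-Q)>uQ$, and exploits that replacing each $a_c$ by $w_c=\Pr^{\min}_{\cM,c}(\Diamond \Effect)$ only decreases $A$ while $Q$, $u$ and $1-Q$ are untouched; the converse direction is the embedding of schedulers of $\wminMDP{\cM}{\Cause}$ into those of $\cM$. Your two announced hurdles are non-issues here: strictness is inherited from the hypothesis side of the chain $A(1-Q)\geq A'(1-Q)>uQ$, and a uniform minimizing continuation is available as a single MD-scheduler of $\cM$.

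For the strict condition, however, there is a genuine gap, and you located it exactly but cannot close it, because the obstruction is real rather than technical. Knowing the state-$c$ inequality $a_c(1-q_c) > u + \sum_{c'\neq c} q_{c'}a_{c'}$ at the point where every $a_{c'}$ equals $w_{c'}$ gives no control over schedulers of $\cM$ that play effect-maximizingly after the \emph{other} cause states, and neither the per-state reading of (S) nor Lemma~\ref{global-vs-strict-causes} (which concerns singletons only) supplies that control. Indeed the implication from $\wminMDP{\cM}{\Cause}$ to $\cM$ fails for SPR once $|\Cause|\geq 2$: let $\init$ move to $c$ with probability $0.1$, to $c'$ with probability $0.8$, and to $\noeff$ with probability $0.1$; let $c$ have a single action reaching $\eff$ with probability $0.5$, and let $c'$ have two actions reaching $\eff$ with probability $0.3$ resp.\ $1$. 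In $\wminMDP{\cM}{\Cause}$ (a Markov chain) the effect probability is $0.29$ and both $0.5>0.29$ and $0.3>0.29$ hold, so $\Cause=\{c,c'\}$ satisfies (S) and (M) there; but in $\cM$ the scheduler taking the second action at $c'$ reaches $c$ with probability $0.1>0$ and gives $\Pr(\Diamond\Effect)=0.85 > 0.5 = \Pr(\Diamond\Effect \mid (\neg\Cause)\Until c)$, so (S) fails at $c$ (while \eqref{GPR} survives, since $u=0$). You should know that the paper's own appendix proof stumbles at exactly this spot: its step (SPR-1) asserts $w_c > \Pr^{\usched}_{\cM}(\Diamond\Effect)$ for every scheduler $\usched$ that minimizes only after $c$, which does not follow from the hypothesis — the scheduler above refutes it. So for (S) your monotonicity argument (and the paper's) is sound only in the singleton case, where the anti-monotone terms vanish and $(\neg\Cause)\Until c$ coincides with $\Diamond c$; that is the case actually invoked in the soundness proof of Algorithm~\ref{alg:SPR-check}.
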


\subsection{Checking the strict probability-raising condition and the existence of causes}

\label{sec:check-SPR}
\label{sec:SPR_check}
\label{sec:check-SPR-condition}

The basis of both checking the existence of PR causes or checking the SPR condition for a given cause candidate is the following polynomial time algorithm to check whether the SPR condition
holds in a given state $c$ of $\cM$
for all schedulers $\sched$ with $\Pr^{\sched}_{\cM}(\Diamond c)>0$:

\begin{algorithm}
  \label{alg:SPR-check}
  Input: state $c \in S$, set of terminal states $\Eff \subseteq S$;
  Task: Decide whether \eqref{SPR} holds in $c$ for all schedulers $\sched$.
  
  Compute $w_c =\Pr^{\min}_{\cM,c}(\Diamond \Eff)$ and $q_s =\Pr^{\max}_{\wminMDP{\cM}{c},s}(\Diamond \Effect)$ for each state $s$ in $\wminMDP{\cM}{c}$ .
	\begin{enumerate}
		\item [1.] If $q_{\init} < w_c$, then return
		``yes, \eqref{SPR} holds for $c$''.
		
		\item [2.] If $q_{\init} > w_c$, then  return
		``no, \eqref{SPR} does not hold for $c$''.
		
		\item [3.]
		Suppose $q_{\init} = w_c$. Let
		$A(s) = \{\alpha \in \Act_{\wminMDP{\cM}{c}}(s) \mid q_s = \sum_{t\in \wminMDP{S}{c}} P_{\wminMDP{\cM}{c}}(s,\alpha,t)\cdot q_t\}$
		for each non-terminal state $s$.
		Let $\wminMDPmax{\cM}{c}$
                denote the sub-MDP of $\wminMDP{\cM}{c}$
		induced by the state-action pairs $(s,\alpha)$ where
		$\alpha \in A(s)$.
		\begin{enumerate}
			\item [3.1]
			  If $c$ is reachable from $\init$ in
                          $\wminMDPmax{\cM}{c}$,
			  then return
                          \mbox{``no,  \eqref{SPR} does not hold for $c$''.}

			\item [3.2]
			  If $c$ is not reachable from $\init$
                          in $\wminMDPmax{\cM}{c}$,
			then return ``yes,  \eqref{SPR} holds for $c$''.
		\end{enumerate}
	\end{enumerate}  
\end{algorithm}

\begin{restatable}{lemma}{soundnessSPRalgo} \label{soundness-SPR-algo}
	Algorithm \ref{alg:SPR-check} is sound and runs in polynomial time.
\end{restatable}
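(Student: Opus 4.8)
My plan is to first invoke Lemma~\ref{lemma:wmin-criterion-PR-causes} to pass to $\wminMDP{\cM}{c}$, and then to make the following structural observation in $\wminMDP{\cM}{c}$: since $c$ now has only the action $\gamma$ whose successors $\eff,\noeff$ are terminal, every maximal path visits $c$ at most once, and on any path with $\Diamond c$ the effect can only be reached via the $\gamma$-transition out of $c$ (all earlier effect states being terminal). Hence for every scheduler $\sched$ with $\Pr^{\sched}_{\wminMDP{\cM}{c}}(\Diamond c)>0$ we have $\Pr^{\sched}_{\wminMDP{\cM}{c}}(\Diamond \Effect \mid \Diamond c) = w_c$. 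Consequently \eqref{SPR} holds at $c$ if and only if every scheduler reaching $c$ with positive probability satisfies $\Pr^{\sched}(\Diamond \Effect) < w_c$, i.e.\ \eqref{SPR} \emph{fails} precisely when some scheduler $\sched$ with $\Pr^{\sched}(\Diamond c)>0$ achieves $\Pr^{\sched}(\Diamond \Effect)\geq w_c$. Since always $\Pr^{\sched}(\Diamond \Effect)\leq q_{\init}$, this reformulation already settles the two extreme cases: in step~1 no scheduler can reach probability $w_c$ at all, so \eqref{SPR} holds; the remaining work is to produce, in the other cases, a witness scheduler that simultaneously reaches $c$ and attains effect probability at least $w_c$.

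For step~2 ($q_{\init}>w_c$), I would fix an MD-scheduler $\sched^\ast$ attaining $q_{\init}$ and a scheduler $\sched_c$ reaching $c$ with positive probability (which exists because all states are reachable from $\init$). Flipping a $\lambda$-biased coin in $\init$ to commit to $\sched^\ast$ or $\sched_c$ yields a (randomized) scheduler whose effect probability is $\lambda q_{\init}+(1{-}\lambda)\Pr^{\sched_c}(\Diamond \Effect)$ and which reaches $c$ with positive probability; taking $\lambda$ close to $1$ keeps the effect probability above $w_c$, so \eqref{SPR} fails, matching the output ``no''.

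The substantive case, and the main obstacle, is step~3 ($q_{\init}=w_c$), where a witness must attain the \emph{maximal} value $q_{\init}$ while reaching $c$. Here I would use the standard characterization of value-optimal schedulers: the vector $(q_s)_s$ is a supermartingale under every scheduler, strictly decreasing in expectation whenever an action outside $A(s)$ is played (the per-step loss $q_s-\sum_t P(s,\alpha,t)q_t$ is bounded below by a positive constant over the finitely many suboptimal pairs), so a scheduler attains $\Pr^{\sched}(\Diamond \Effect)=q_{\init}$ only if it almost surely uses actions in $A(\cdot)$ until absorption. This gives the equivalence I need: a scheduler reaching $c$ with positive probability and effect probability $w_c=q_{\init}$ exists if and only if $c$ is reachable from $\init$ along a positive-probability path that uses only $A(\cdot)$-actions, i.e.\ a path inside $\wminMDPmax{\cM}{c}$. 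For direction 3.1, reachability of $c$ in $\wminMDPmax{\cM}{c}$ lets me follow such a path to $c$ and continue by a \emph{proper} optimal MD-scheduler off the path; because every played action lies in $A(\cdot)$ the value is preserved in expectation and absorption occurs almost surely, so the induced effect probability equals $q_{\init}=w_c$, refuting \eqref{SPR}. For direction 3.2, any scheduler reaching $c$ with effect probability $\geq w_c=q_{\init}$ would be value-optimal and hence, by the characterization above, induce a positive-probability $A(\cdot)$-path to $c$, contradicting the assumed non-reachability of $c$ in $\wminMDPmax{\cM}{c}$; thus \eqref{SPR} holds. The delicate point here is justifying the ``only $A(\cdot)$-actions'' claim in the presence of end components of value $<1$ inside the optimal sub-MDP, which is exactly why the argument is phrased via a.s.\ absorption and the strict per-step loss rather than a naive telescoping.

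Finally, for the complexity bound I would note that $w_c=\Pr^{\min}_{\cM,c}(\Diamond \Effect)$ and all values $q_s=\Pr^{\max}_{\wminMDP{\cM}{c},s}(\Diamond \Effect)$ are computable in polynomial time by the standard linear-programming / value-iteration methods recalled in the preliminaries, the sets $A(s)$ are read off from these values by a single comparison per state-action pair, and the reachability test for $c$ in $\wminMDPmax{\cM}{c}$ is plain graph reachability; hence the whole procedure runs in polynomial time.
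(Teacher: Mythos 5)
Your proof is correct in its overall architecture and follows essentially the same route as the paper's own proof: the reduction to $\wminMDP{\cM}{c}$ via Lemma~\ref{lemma:wmin-criterion-PR-causes}, the observation that every scheduler reaching $c$ there satisfies $\Pr^{\sched}(\Diamond\Effect \mid \Diamond c)=w_c$, and the same case-by-case witnesses (a mixture scheduler in case~2, a path through $\wminMDPmax{\cM}{c}$ continued by an optimal scheduler in case~3.1, and the ``value-attaining schedulers almost surely use only $A(\cdot)$-actions'' characterization in case~3.2). In places you are more self-contained than the paper, which asserts the case-3.1 witness and the case-3.2 characterization without proof, and which realizes your case-2 mixture via an MR convex-combination lemma in an end-component-free quotient rather than a coin flip at $\init$ (your mixture is legitimate: it can be realized as a behavioral scheduler by reweighting the two schedulers' decisions along common prefixes, since the full history is observable).

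One step does not hold as literally stated: in case~3.1 you appeal to a \emph{proper} optimal MD-scheduler and claim that ``absorption occurs almost surely.'' Such a scheduler need not exist, because $\wminMDP{\cM}{c}$ may contain inescapable end components (reachable from $\init$ but with no exiting state-action pairs); their states have $q_s=0$, and once a deviation from your chosen path enters such a component no scheduler is ever absorbed, so almost-sure absorption fails and the optional-stopping argument breaks down. The repair is standard and keeps your construction intact: off the path, use any MD-scheduler $\sched^\ast$ attaining $\Pr^{\sched^\ast}_{\wminMDP{\cM}{c},s}(\Diamond\Effect)=q_s$ simultaneously from every state $s$ (these always exist), and compute the effect probability of the composite scheduler by conditioning on the step at which the play first leaves the path: every action on the path lies in $A(\cdot)$ and every deviation target $t$ contributes exactly $q_t$, so the sum telescopes to $q_{\init}=w_c$ while $c$ is reached with positive probability. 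Equivalently, run your martingale argument with the stopping time ``reach a terminal state or the set $\{s \mid q_s=0\}$,'' which \emph{is} almost surely finite under such a scheduler and on whose value-zero outcomes the effect contributes nothing. With this repair your argument is complete, including the complexity analysis, which matches the paper's.
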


\begin{proofsketch}[Soundness]
  Let $\cN=\wminMDP{\cM}{c}$.
  Soundness is obvious in case 1.
  For case 2, consider a real number $\lambda$ with $1 > \lambda > \frac{w_c}{q_\init}$
  and MD-schedulers $\tsched$ and $\sched$
  realizing $\Pr^{\tsched}_{\cN,s}(\Diamond \Effect) = q_s$ and $\Pr^{\sched}_{\cN}(\Diamond c)>0$ for all states $s$.
  We can combine $\tsched$ and $\sched$ to a new MR-scheduler
  $\usched$ with the property that
  $\Pr^{\usched}_{\cN}(\Diamond t)=\lambda \Pr^{\tsched}_{\cN}(\Diamond t)+(1{-}\lambda) \Pr^{\sched}_{\cN}(\Diamond t)$ for all terminal states $t$ and for $t=c$.
  Then, $\usched$ witnesses a violation of \eqref{SPR}.
  For case 3.1 consider an MD-scheduler $\sched$ of $\wminMDPmax{\cM}{c}$
  where
	$c$ is reachable from $\init$ via a $\sched$-path and
        $\Pr^{\sched}_{\cN,s}(\Diamond \Effect)=q_s$ for all states $s$.
 	Then, \eqref{SPR} does not hold for $c$ in the scheduler $\sched$.
	In case 3.2 we have
	$\Pr^{\sched}_{\cN}(\Diamond c)=0$ for all schedulers $\sched$ for $\cN$
	with $\Pr^{\sched}_{\cN}(\Diamond \Effect)=q_{\init}=w_c$.
	But then $\Pr^{\sched}_{\cN}(\Diamond c)>0$ implies
	$\Pr^{\sched}_{\cN}(\Diamond \Effect) < w_c$ as required in \eqref{SPR}.
	For more details on the soundness see Appendix \ref{app:proofs_SPR_check}.    
        \qed
\end{proofsketch}        
  

By applying Algorithm \ref{alg:SPR-check} to all states $c \in \Cause$ and standard algorithms to check the existence of a path satisfying $(\neg \Cause) \Until c$ for every state $c\in \Cause$, we obtain:

\begin{theorem}[Checking SPR causes]
	The problem ``given $\cM$, $\Cause$ and $\Effect$, check whether
	$\Cause$ is a SPR cause for $\Effect$ in $\cM$''
	is solvable in polynomial-time.
\end{theorem}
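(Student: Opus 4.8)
The plan is to reduce checking that $\Cause$ is an SPR cause to $|\Cause|$ independent applications of Algorithm~\ref{alg:SPR-check}, together with a standard reachability test for condition (M). First I would dispose of condition (M): for each $c \in \Cause$ it asks for a scheduler with $\Pr^{\sched}_{\cM}((\neg\Cause)\Until c) > 0$, which holds iff the MDP has a path from $\init$ to $c$ that avoids $\Cause$ before reaching $c$. This is a purely graph-theoretic reachability question in the underlying directed graph of $\cM$ and is decidable in polynomial time by standard techniques; if (M) fails for some $c$, then $\Cause$ is not an SPR cause and we reject.

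The main work is condition (S), and the obstacle is a mismatch of events. Algorithm~\ref{alg:SPR-check}, applied to a state $c$, decides the probability-raising inequality conditioned on $\Diamond c$, whereas (S) requires it conditioned on $(\neg\Cause)\Until c$; these differ precisely when a path traverses some other cause state $c' \in \Cause$ before reaching $c$. To bridge this gap I would pass to the MDP $\cN = \wminMDP{\cM}{\Cause}$ of Notation~\ref{notation:MDP-mit-min-prob-ab-cause-candidates}. By Lemma~\ref{lemma:wmin-criterion-PR-causes}, $\Cause$ is an SPR cause for $\Effect$ in $\cM$ iff it is one in $\cN$, and $\cN$ is constructible in polynomial time (the only nontrivial ingredient being the values $\Pr^{\min}_{\cM,s}(\Diamond\Effect)$ for $s \in \Cause$, which are polynomial-time computable). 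The decisive feature of $\cN$ is that every cause state has a single enabled action $\gamma$ leading directly to the terminal states $\eff$ and $\noeff$; hence no path can visit two cause states, so in $\cN$ the events $\Diamond c$ and $(\neg\Cause)\Until c$ coincide for every $c \in \Cause$.

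With this identification, condition (S) for $c$ in $\cN$ is exactly the statement decided by Algorithm~\ref{alg:SPR-check} on input $c$ and $\Effect$ run in $\cN$. I would therefore invoke the algorithm once for each $c \in \Cause$ in $\cN$ and report that (S) holds iff all runs answer ``yes''. Soundness and the polynomial running time of each individual call are guaranteed by Lemma~\ref{soundness-SPR-algo}, and there are at most $|\Cause| \le |S|$ calls, so the whole procedure is polynomial. Combining the (M)-check and the (S)-check and invoking Lemma~\ref{lemma:wmin-criterion-PR-causes} to transfer the verdict from $\cN$ back to $\cM$ yields the theorem. The only point requiring genuine care is the equivalence $\Diamond c \equiv (\neg\Cause)\Until c$ in $\cN$, which is exactly where the reduction to $\cN$ (rather than working directly in $\cM$) does the essential work.
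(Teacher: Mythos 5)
There is a genuine gap, and it sits exactly at the step you single out as doing ``the essential work'': the pass to $\cN=\wminMDP{\cM}{\Cause}$ before the per-state checks. The paper's procedure is different: it runs Algorithm~\ref{alg:SPR-check} on each $c\in\Cause$ \emph{in the original MDP} $\cM$, and each such call internally forms $\wminMDP{\cM}{c}$, i.e.\ it replaces only the state $c$ under scrutiny by its effect-minimizing lottery and leaves the nondeterminism at the \emph{other} cause states fully intact (the correctness of this decomposition is the appendix Lemma~\ref{lemma:charac-strict-SPR-causes}). Your procedure instead minimizes \emph{all} cause states at once and compares $w_c$ with $\Pr^{\max}_{\wminMDP{\cM}{\Cause}}(\Diamond \Effect)$ rather than with $\Pr^{\max}_{\wminMDP{\cM}{c}}(\Diamond \Effect)$. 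These are not the same test: condition (S) in $\cM$ quantifies over all schedulers, including those that \emph{maximize} the effect probability from cause states other than $c$; such a scheduler can push $\Pr^{\sched}_{\cM}(\Diamond\Effect)$ above $w_c$ and thereby falsify \eqref{SPR} at $c$, but it has no counterpart in $\wminMDP{\cM}{\Cause}$, where every cause state has been forced onto its minimizing lottery. So your test accepts strictly more sets than the paper's.

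Concretely, let $\cM$ have a single action at $\init$ with $P(\init,\cdot,c_1)=1/2$, $P(\init,\cdot,c_2)=1/4$, $P(\init,\cdot,\noeff)=1/4$; let $c_1$ have two actions $\alpha,\beta$ with $P(c_1,\alpha,\eff)=1$ and $P(c_1,\beta,\eff)=P(c_1,\beta,\noeff)=1/2$; let $c_2$ have one action with $P(c_2,\cdot,\eff)=3/5$, $P(c_2,\cdot,\noeff)=2/5$; and take $\Effect=\{\eff\}$, $\Cause=\{c_1,c_2\}$. Then $w_{c_1}=1/2$, $w_{c_2}=3/5$, and in $\wminMDP{\cM}{\Cause}$ the (unique) scheduler reaches $\Effect$ with probability $2/5$, which is below both $w$-values, so both of your calls land in case~1 and, since (M) holds, you accept. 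But $\Cause$ is \emph{not} an SPR cause in $\cM$: the MD-scheduler $\sched$ choosing $\alpha$ in $c_1$ has $\Pr^{\sched}_{\cM}((\neg\Cause)\Until c_2)=1/4>0$, $\Pr^{\sched}_{\cM}(\Diamond\Effect)=13/20$, and $\Pr^{\sched}_{\cM}(\Diamond\Effect \mid (\neg\Cause)\Until c_2)=3/5<13/20$, violating \eqref{SPR} at $c_2$. The paper's procedure correctly rejects, since applied to $c_2$ it computes $q_{\init}=\Pr^{\max}_{\wminMDP{\cM}{c_2}}(\Diamond\Effect)=13/20>3/5=w_{c_2}$ (case~2). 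To be fair, you are reading Lemma~\ref{lemma:wmin-criterion-PR-causes} literally, and its statement does assert the equivalence you invoke; the example shows that this equivalence cannot be combined with a state-by-state check inside $\wminMDP{\cM}{\Cause}$ for non-singleton $\Cause$ (indeed, the ``if'' direction of that lemma is exactly where the adversarial schedulers above get lost: in its appendix proof, the schedulers in $\Usched_c$ are only required to minimize after reaching $c$, not after the other cause states). The paper's own proof of the theorem never performs this global minimization and is therefore not affected by this failure mode.
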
  


\begin{remark}[Memory requirements for refuting the SPR property]
	\label{MR-sufficient-SRP}  
	{\rm
	  As the soundness proof for Algorithm \ref{alg:SPR-check}
          shows:
	If $\Cause$ does not satisfy the SPR condition,
	then there is an MR-scheduler $\sched$ for
        $\wminMDP{\cM}{\Cause}$ witnessing the violation of \eqref{SPR}.
          Scheduler $\sched$ 
          corresponds to a finite-memory (randomized) scheduler $\tsched$
          with two memory cells for $\cM$:
``before $\Cause$'' (where $\tsched$ behaves as $\sched$)
and ``after $\Cause$'' (where $\tsched$ behaves as an MD-scheduler minimizing
the effect probability form every state).
		\Ende
	}
\end{remark}


\begin{restatable}[Criterion for the existence of probability-raising causes]{lemma}{existencePRcauses}
	\label{lem:existence-of-prob-raising-causes}
	Let $\cM$ be an MDP and $\Effect$ a nonempty set of states. Then 
	$\Effect$ has an SPR cause in $\cM$ iff
	$\Effect$ has a GPR cause in $\cM$ iff
	there is a state $c_0\in S \setminus \Effect$ such that
	the singleton $\{c_0\}$ is an SPR cause (and therefore a GRP cause) for $\Effect$ in $\cM$.
    In particular, the existence of SPR/GPR causes can be checked
    with Algorithm \ref{alg:SPR-check} in polynomial time.
\end{restatable}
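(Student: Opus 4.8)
The plan is to establish the two-way chain of the statement as a cycle of implications, relying on the singleton reduction, and then to read off the complexity claim from Algorithm~\ref{alg:SPR-check}. Two of the implications are immediate: if some singleton $\{c_0\}$ is an SPR cause then $\Effect$ trivially has an SPR cause; and by Lemma~\ref{lemma:strict-implies-global} every SPR cause is a GPR cause, so the existence of an SPR cause entails the existence of a GPR cause. By Lemma~\ref{global-vs-strict-causes} a singleton is an SPR cause exactly when it is a GPR cause, which is the parenthetical remark in the statement. Hence everything reduces to the single implication ``if $\Effect$ has a GPR cause, then some singleton $\{c_0\}$ with $c_0 \in S \setminus \Effect$ is an SPR cause''.

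For this implication I would first normalize via Lemma~\ref{lemma:wmin-criterion-PR-causes}: a set $\Cause$ is a GPR cause for $\Effect$ in $\cM$ iff it is one in $\wminMDP{\cM}{\Cause}$, where every $c \in \Cause$ reaches $\Effect$ with the fixed probability $w_c = \Pr^{\min}_{\cM,c}(\Diamond \Effect)$. In this collapsed model $\Pr^{\sched}_{\cM}(\Diamond \Effect \mid \Diamond \Cause)$ is a weighted average of the constants $w_c$, with weights $\Pr^{\sched}_{\cM}((\neg \Cause)\Until c)$, exactly as in the observation preceding Lemma~\ref{lemma:strict-implies-global}. The natural candidate is then the state $c_0 \in \Cause$ maximizing $w_{c_0}$, and the aim is to show that $\{c_0\}$ passes the test of Algorithm~\ref{alg:SPR-check}: either $w_{c_0} > q_{\init}$, or $w_{c_0} = q_{\init}$ together with $c_0$ being unreachable in $\wminMDPmax{\cM}{c_0}$, where $q_{\init} = \Pr^{\max}_{\wminMDP{\cM}{c_0},\init}(\Diamond \Effect)$. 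Soundness of the algorithm (Lemma~\ref{soundness-SPR-algo}) then certifies that $\{c_0\}$ satisfies \eqref{SPR}, and condition (M) is automatic for a singleton since every state is reachable from $\init$.

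The delicate point, which I expect to be the main obstacle, is the passage from the aggregate GPR inequality to a strict probability-raising statement about one state, and two effects must be controlled at once. First, the GPR condition only bounds \emph{averages} of the $w_c$ against the total effect probability, so I would have to argue that the extremal summand inherits the strict inequality \emph{for every} scheduler and not merely on average; this is where the universal quantification over schedulers reaching the cause has to be exploited, by instantiating \eqref{GPR} on schedulers that funnel almost all probability mass through $c_0$. Second, and more subtly, the singleton test lives in $\wminMDP{\cM}{c_0}$, which collapses only $c_0$ and leaves the other former cause states with their full action sets; a scheduler there may raise the effect probability by routing through an uncollapsed state, and the crux is to show that such routing cannot push $q_{\init}$ strictly above $w_{c_0}$. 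Resolving the boundary case $w_{c_0} = q_{\init}$ through the value-preserving sub-MDP $\wminMDPmax{\cM}{c_0}$, as in step~3 of Algorithm~\ref{alg:SPR-check}, is the remaining technical ingredient; equivalently, one may run the argument contrapositively, turning a per-state family of schedulers witnessing failure of \eqref{SPR} into a single scheduler witnessing failure of \eqref{GPR} for an arbitrary candidate $\Cause$, in which case the combination of these witness schedulers is the step that carries the weight.

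Finally, the ``in particular'' claim follows by combining the equivalence established above with the algorithmics of Section~\ref{sec:check-SPR}: by that equivalence it suffices to test each of the at most $|S|$ singletons $\{c_0\}$ with $c_0 \in S \setminus \Effect$, and each such test runs in polynomial time by Lemma~\ref{soundness-SPR-algo}. Running Algorithm~\ref{alg:SPR-check} once per candidate state and answering ``yes'' as soon as one of them satisfies \eqref{SPR} therefore decides the existence of SPR and of GPR causes in polynomial time.
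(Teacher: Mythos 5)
Your proposal follows the same route as the paper's own proof: reduce everything to the implication ``if $\Effect$ has a GPR cause $\Cause$, then some singleton is an SPR cause,'' normalize via Lemma~\ref{lemma:wmin-criterion-PR-causes}, and take as witness a state $c_0\in\Cause$ maximizing $w_{c_0}=\Pr^{\min}_{\cM,c_0}(\Diamond\Effect)$. But where the paper writes ``it is now easy to see,'' you write ``the crux is to show,'' and neither of you shows it: nothing in your outline bridges the mismatch you yourself identify, namely that \eqref{GPR} constrains schedulers only through a weighted average of the $w_c$ over \emph{all} of $\Cause$ -- and, after the normalization, only schedulers that minimize the effect probability after \emph{every} cause state -- whereas the SPR test for $\{c_0\}$ in $\wminMDP{\cM}{c_0}$ quantifies over schedulers that are completely unconstrained at the other cause states. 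Your fallback ideas do not close the gap either: ``funneling mass through $c_0$'' is impossible when the MDP forces every scheduler to reach the other cause states with fixed positive probability, and the contrapositive route (merging per-state SPR-refuting schedulers into one GPR-refuting scheduler) founders because those witnesses behave non-minimally at \emph{different} cause states and cannot be combined into a single scheduler violating the averaged condition.

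In fact the missing step is not merely unproven but false, so the plan cannot be completed at all; the following example even refutes the equivalence as stated. Take states $\init,c_1,c_2,c_3,\eff,\noeff$ with $\Effect=\{\eff\}$, a single action at $\init$ with $P(\init,\cdot,c_i)=\frac{1}{5}$ for $i=1,2,3$, $P(\init,\cdot,\eff)=\frac{3}{20}$, $P(\init,\cdot,\noeff)=\frac{1}{4}$, and two actions at each $c_i$, reaching $\eff$ with probability $\frac{1}{2}$ resp.\ $1$ (rest to $\noeff$). A scheduler amounts to choosing effect probabilities $v_i\in[\frac{1}{2},1]$ at the $c_i$; with $s=v_1+v_2+v_3\geq\frac{3}{2}$ one gets $\Pr^{\sched}_{\cM}(\Diamond\Effect\mid\Diamond\Cause)=\frac{s}{3}$ and $\Pr^{\sched}_{\cM}(\Diamond\Effect)=\frac{s}{5}+\frac{3}{20}$, and $\frac{s}{3}>\frac{s}{5}+\frac{3}{20}$ holds for every $s\geq\frac{3}{2}$, so $\Cause=\{c_1,c_2,c_3\}$ is a GPR cause (condition (M) is clear). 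Yet $\{c_1\}$ is not an SPR cause: the scheduler $(v_1,v_2,v_3)=(\frac{1}{2},1,1)$ reaches $c_1$ with probability $\frac{1}{5}>0$ and satisfies $\Pr^{\sched}_{\cM}(\Diamond\Effect\mid\Diamond c_1)=\frac{1}{2}<\frac{13}{20}=\Pr^{\sched}_{\cM}(\Diamond\Effect)$; by symmetry no $\{c_i\}$ is an SPR cause, and $\{\init\}$ and $\{\noeff\}$ are trivially excluded, so no singleton (indeed no set at all) is an SPR cause although a GPR cause exists. So your instinct about where the difficulty lies is exactly right, but the difficulty is fatal -- for your proposal and for the paper's argument alike: here routing through an uncollapsed cause state does push the effect probability strictly above $w_{c_0}$, which is precisely what you needed to exclude.
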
  
The lemma can be derived from
Lemmata \ref{global-vs-strict-causes}, \ref{lemma:strict-implies-global}
and \ref{lemma:wmin-criterion-PR-causes} together with the implication ``(b) $\Longrightarrow$ (c)'' shown in Appendix \ref{app:proofs_SPR_check}.


\subsection{Checking the global probability-raising condition}

\label{sec:check-GPR-condition}
\label{sec:check-GPR}

Throughout this section, we suppose that both the effect set $\Effect$ and the cause candidate $\Cause$ are fixed disjoint subsets of the state space of the MDP $\cM=(S,\Act,P,\init)$, and address the task to check whether $\Cause$ is a strict resp. global probability-raising cause for $\Effect$ in $\cM$. As the minimality condition (M) can be checked in polynomial time using a standard graph algorithm, we will concentrate on an algorithm to check the probability-raising condition (GPR).
We start by stating the main results of this section.

\begin{theorem}
  \label{thm:checking-GPR-in-poly-space}
	The problem ``given $\cM$, $\Cause$ and $\Effect$, check whether
	$\Cause$ is a GPR cause for $\Effect$ in $\cM$''
	is solvable in polynomial space.
\end{theorem}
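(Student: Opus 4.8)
The plan is to verify condition~(M) by the standard graph algorithm in polynomial time and to devise a polynomial-space procedure for the probability-raising condition~(G). By Lemma~\ref{lemma:wmin-criterion-PR-causes} I may replace $\cM$ by $\wminMDP{\cM}{\Cause}$, so from now on every cause state $c$ has only the action $\gamma$ leading to the terminal states $\eff$ and $\noeff$ with probabilities $\Pr^{\min}_{\cM,c}(\Diamond \Effect)$ and its complement. In this MDP all nondeterministic choices occur strictly before $\Cause$ is entered, and the events $(\neg\Cause)\Until c$ for $c\in\Cause$ partition $\Diamond\Cause$. For a scheduler $\sched$ write $x=\Pr^{\sched}_{\cM}(\Diamond\Cause)$, let $v=\Pr^{\sched}_{\cM}(\Diamond\Effect \wedge \Diamond\Cause)$ be the probability of a covered effect and $u=\Pr^{\sched}_{\cM}(\Diamond\Effect \wedge \neg\Diamond\Cause)$ the probability of an uncovered effect, so that $\Pr^{\sched}_{\cM}(\Diamond\Effect)=u+v$ and $\Pr^{\sched}_{\cM}(\Diamond\Effect\mid\Diamond\Cause)=v/x$ whenever $x>0$. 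Condition~(G) then fails for $\sched$ exactly when $x>0$ and $v/x\le u+v$, which after clearing the positive denominator becomes the single quadratic inequality $v(1-x)\le x\,u$.

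The main step is to capture the triples $(x,v,u)$ that are realizable by \emph{some} scheduler by a polynomial-size system of linear constraints. Since Remark~\ref{rem:memory-needed} shows that memory may be needed to refute the PR condition, I must not restrict to memoryless schedulers; instead I use the occupation-measure (state--action frequency) characterization of achievable reachability-probability vectors. Treating each $c\in\Cause$, together with $\eff$ and $\noeff$, as absorbing targets, the quantities $f_c=\Pr^{\sched}_{\cM}((\neg\Cause)\Until c)$ and $u$ are obtained as linear functions of nonnegative frequency variables $y_{s,\alpha}$ subject to the usual in/out-flow balance equations, while $x=\sum_{c\in\Cause}f_c$ and $v=\sum_{c\in\Cause}f_c\cdot\Pr^{\min}_{\cM,c}(\Diamond\Effect)$ are again linear. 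The set of vectors so described is meant to coincide with the set achievable over all schedulers of arbitrary memory, which is exactly what condition~(G) quantifies over.

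Putting the pieces together, $\Cause$ violates~(G) if and only if the existential formula asserting the existence of nonnegative frequencies satisfying the flow equations together with $x>0$ and $v(1-x)\le x\,u$ is true over the reals; equivalently, $\Cause$ is a GPR cause iff this quadratic constraint system is unsatisfiable. The sentence lies in the existential theory of the reals, has size polynomial in the size of $\cM$, and its only nonlinearities are the products $x v$ and $x u$. By the decision procedures of Canny and Renegar the existential theory of the reals lies in $\PSPACE$, so deciding whether~(G) fails is in $\PSPACE$; since $\PSPACE$ is closed under complementation, deciding whether~(G) holds is in $\PSPACE$ as well, and combined with the polynomial-time check of~(M) this yields the claimed bound.

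I expect the main obstacle to be the correct treatment of end components in the frequency system: if the non-cause, non-effect part of $\wminMDP{\cM}{\Cause}$ contains end components, a scheduler may remain inside one forever and the naive frequency variables diverge. Handling this—e.g.\ by passing to the MEC-quotient or by the standard refinement of the reachability LP that lets probability leak into a ``never reach a target'' sink—so that the linear system describes exactly the achievable probability vectors, neither more nor fewer, is the delicate part. Once this is in place, the reduction to a quadratic constraint system and the $\PSPACE$ bound follow from the cited results.
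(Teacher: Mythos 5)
Your proposal is correct and follows essentially the same route as the paper: replace $\cM$ by $\wminMDP{\cM}{\Cause}$, recast the violation of the GPR condition as a quadratic inequality (your $v(1-x)\le xu$ is the paper's constraint (4)) over state--action frequency variables subject to linear flow-balance constraints, and decide satisfiability via the existential theory of the reals in $\PSPACE$, using closure under complementation. The ``delicate part'' you flag---end components making frequencies diverge---is handled in the paper exactly as you anticipate: it passes to the MEC-quotient augmented with a $\tau$-transition from each collapsed MEC to a fresh trap state (assumption (A3)), after which the frequency system characterizes precisely the probability vectors achievable by arbitrary (not just memoryless) schedulers.
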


In order to provide an algorithm, we perform a model transformation after which the violation of \eqref{GPR} by a scheduler $\sched$ can be expressed solely in terms of the expected frequencies of the state-action pairs of the transformed MDP under $\sched$. 
This allows us to express the existence of a scheduler witnessing the non-causality of $\Cause$ in terms of the satisfiability of a quadratic constraint system.
Thus, we can restrict the quantification in (G) to MR-schedulers in the transformed model.
We trace back the memory requirements to 
$\wminMDP{\cM}{\Cause}$ and to the original MDP $\cM$ yielding the second main result.
Still, memory can be necessary to witness non-causality (Remark \ref{rem:memory_necessary}). 

\begin{theorem}
  \label{thm:MR-sufficient-GPR}
        Let $\cM$ be an MDP with effect set $\Effect$ as before and
        $\Cause$ a set of non-effect states such that condition (M)
        holds.
       If $\Cause$ is not a GPR cause for $\Effect$, then
        there is an MR-scheduler for $\wminMDP{\cM}{\Cause}$ refuting the GPR
        condition for $\Cause$
        in $\wminMDP{\cM}{\Cause}$
        and a finite-memory scheduler for $\cM$ with two memory cells
        refuting the GPR
        condition for $\Cause$
        in $\cM$.
\end{theorem}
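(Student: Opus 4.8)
The plan is to reduce the problem to the minimal-effect MDP $\cN=\wminMDP{\cM}{\Cause}$, recast the \emph{failure} of \eqref{GPR} as a condition on expected state-action frequencies, and then exploit that such frequency vectors are realizable by MR-schedulers. Since (M) is assumed, non-causality means that (G) fails, i.e.\ some scheduler violates the strict inequality. By Lemma~\ref{lemma:wmin-criterion-PR-causes}, $\Cause$ fails to be a GPR cause in $\cM$ iff it fails to be one in $\cN$, so there is a scheduler $\sched$ of $\cN$ with $\Pr^{\sched}_{\cN}(\Diamond\Cause)>0$ and $\Pr^{\sched}_{\cN}(\Diamond\Effect\mid\Diamond\Cause)\le\Pr^{\sched}_{\cN}(\Diamond\Effect)$. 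In $\cN$ every cause state $c$ carries only the action $\gamma$, reaching $\eff$ with probability $w_c=\Pr^{\min}_{\cM,c}(\Diamond\Effect)$ and otherwise $\noeff$, and the effect states are terminal; hence each cause state is visited at most once.

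Writing $y_{s,\alpha}=\freq{\sched}{s,\alpha}$ for the expected number of times $(s,\alpha)$ is taken, the three quantities governing \eqref{GPR} are linear forms in $y$: the reaching-cause probability $\Pr^{\sched}_{\cN}(\Diamond\Cause)=\sum_{c\in\Cause}y_{c,\gamma}=:b(y)$, the covered-effect probability $\Pr^{\sched}_{\cN}(\Diamond\Cause\cap\Diamond\Effect)=\sum_{c\in\Cause}w_c\,y_{c,\gamma}=:a(y)$, and the overall effect probability $\Pr^{\sched}_{\cN}(\Diamond\Effect)=\sum_{s,\alpha}y_{s,\alpha}\sum_{e\in\Effect}P_{\cN}(s,\alpha,e)=:f(y)$. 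The negation of \eqref{GPR} for $\sched$ then reads $b(y)>0$ and $a(y)\le b(y)\,f(y)$, a single quadratic inequality in $y$ (the product $b(y)f(y)$ of two linear forms). I would then invoke the classical occupation-measure characterization: over all terminating schedulers of $\cN$, the achievable vectors $y$ form the polytope cut out by $y\ge 0$ and the balance equations $\sum_\alpha y_{s,\alpha}=\mathbf{1}[s=\init]+\sum_{s',\alpha'}y_{s',\alpha'}P_{\cN}(s',\alpha',s)$, and every point of this polytope is realized by an MR-scheduler. As the failure of \eqref{GPR} depends on $\sched$ only through $y$, any violating scheduler yields a violating frequency vector, which is realized by an MR-scheduler of $\cN$ refuting \eqref{GPR} there; this gives the first assertion. (The same constraint system, being of polynomial size, is what underlies the \PSPACE\ bound of Theorem~\ref{thm:checking-GPR-in-poly-space}.)

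To obtain a scheduler for $\cM$, I would mimic the construction behind Remark~\ref{MR-sufficient-SRP}: let $\tsched$ be the finite-memory scheduler with two memory cells that, in the cell ``before $\Cause$'', plays the MR-scheduler of $\cN$ (its decisions on non-cause states are legal $\cM$-actions, since the transformation altered only cause states), and upon first entering $\Cause$ switches to the cell ``after $\Cause$'', following a fixed MD-scheduler minimizing $\Pr_{\cM,\cdot}(\Diamond\Effect)$ from every state. Then the reaching-cause and uncovered-effect probabilities agree with those in $\cN$, while from each cause state $c$ the covered-effect probability equals $w_c$, exactly matching the $\gamma$-transition; hence $\tsched$ reproduces $a(y),b(y),f(y)$ and refutes \eqref{GPR} in $\cM$.

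The hard part will be the second step: making the occupation-measure argument precise, in particular handling end components in which a scheduler may cycle forever without reaching $\Cause\cup\Effect$ (so that its frequencies are infinite), and justifying that restricting to terminating schedulers loses no witness. I would address this by standard maximal-end-component preprocessing, observing that behavior inside such components contributes to none of $a,b,f$ and can be rerouted to $\noeff$ without affecting the three forms, so that $\cN$ may be assumed terminating before the polytope characterization is applied.
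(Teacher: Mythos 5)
Your overall architecture mirrors the paper's (an occupation-measure/polytope characterization in an EC-free model, then a two-memory-cell extension back to $\cM$), but there is a genuine gap exactly at the step you dismiss as ``standard maximal-end-component preprocessing.'' The first assertion of the theorem demands an MR-scheduler for $\wminMDP{\cM}{\Cause}$ \emph{itself}, which may contain end components. After you reroute EC-behavior to $\noeff$, the polytope characterization hands you an MR-scheduler of the \emph{MEC-quotient}, not of $\wminMDP{\cM}{\Cause}$: the quotient has collapsed states $s_{\cE}$ with a fresh action $\tau$, and a memoryless distribution at $s_{\cE}$ has in general no memoryless counterpart in the original MDP. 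Concretely, if an MEC consists of a single state $u$ with an internal self-loop and two exit actions, then any memoryless scheduler of $\wminMDP{\cM}{\Cause}$ either leaves $u$ almost surely or stays forever, so it cannot realize a quotient scheduler with $0<\sched(s_{\cE})(\tau)<1$; and even when $\sched(s_{\cE})(\tau)=0$, reproducing a prescribed exit distribution over state-action pairs spread across a multi-state MEC by a \emph{memoryless} scheduler of the original MDP is not immediate. ``Assuming $\cN$ terminating'' is thus a change of model, and undoing that change while preserving memorylessness is precisely the content the proof must supply; your argument, as written, only establishes MR-sufficiency in the MEC-quotient (i.e., the paper's Corollary~\ref{cor:MR-scheduler}).

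The paper closes this gap with two dedicated results. Lemma~\ref{lem:MR-sufficient-global-cause} shows that among the violating MR-schedulers of the quotient one can always find one whose $\tau$-probability is in $\{0,1\}$ at every collapsed MEC; this requires a real argument (along the interpolation parameter the violation quantity is a quadratic with positive leading coefficient, so by convexity one of the two endpoint schedulers must also violate \eqref{GPR}) and is not a cosmetic normalization. Theorem~\ref{thm:MRscheduler_lift} then lifts such a scheduler to an MR-scheduler of $\wminMDP{\cM}{\Cause}$ with identical exit frequencies, by first constructing a finite-memory scheduler that traverses each MEC through a sequence of memory modes hitting its exit states with the right conditional probabilities, and then flattening it to an MR-scheduler via Kallenberg's theorem on transient schedulers. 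Your final step (two memory cells for $\cM$: play the MR-scheduler before $\Cause$, minimize the effect probability after) is correct and agrees with the paper's use of Remark~\ref{MR-sufficient-SRP}, but it presupposes that the MR-scheduler lives on $\wminMDP{\cM}{\Cause}$, so it inherits the gap.
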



The remainder of this section is concerned with the proofs of
Theorem \ref{thm:checking-GPR-in-poly-space} and
Theorem \ref{thm:MR-sufficient-GPR}.
We suppose that both the effect set $\Effect$ and the cause candidate $\Cause$ are fixed disjoint subsets of the state space of the MDP $\cM$
and that $\Cause$ satisfies (M).

\paragraph*{\bf Checking the GPR condition (Proof of Theorem \ref{thm:checking-GPR-in-poly-space}). }
The first step is a polynomial-time model transformation which permits to make the following assumptions when checking the GPR condition of $\Cause$ for $\Effect$.
\begin{description}
       \item [(A1)]
          $\Effect=\{\effuncov,\effcov\}$ consists of two terminal states.
          
	\item [(A2)] %
          For every state $c\in \Cause$, there is only a single enabled
          action, say $\Act(c)=\{\gamma\}$, and
          there exists $w_c\in [0,1]\cap \Rational$ such that
	  $P(c,\gamma,\effcov)=w_c$ and
	  $P(c,\gamma,\noeffc)=1{-}w_c$
          where $\noeffc$ is a terminal non-effect state
          and $\noeffc$ and $\effcov$ are only accessible via the
          $\gamma$-transition from the states $c\in \Cause$.
          
        \item [(A3)]
          $\cM$ has no end components and
          there is a further terminal state $\noeffbot$
          and an action $\tau$ such that 
          $\tau \in \Act(s)$ implies $P(s,\tau,\noeffbot)=1$.
\end{description}
Intuitively, $\effcov$ stands for covered effects (``$\Effect$ after $\Cause$'')
and can be seen as a true positive,
while $\effuncov$ represents the uncovered effects (``$\Effect$ without preceding $\Cause$'')
and corresponds to a false negative.
Let $\sched$ be a scheduler in $\cM$. Note that 
$\Pr^{\sched}_{\cM}((\neg \Cause)\until \Effect)=
 \Pr^{\sched}_{\cM}(\Diamond \effuncov)$
and
$\Pr^{\sched}_{\cM}( \Diamond (\Cause \wedge \Diamond \Effect))=
 \Pr^{\sched}_{\cM}(\Diamond \effcov)$.
 As the cause states can not reach each other we also have
 $\Pr^{\sched}_{\cM}((\neg \Cause)\Until c)= \Pr^{\sched}_{\cM}(\Diamond c)$
 for each $c\in \Cause$.
 The intuitive meaning of $\noeffc$ is a false positive (``no effect after $\Cause$''), while $\noeffbot$ stands for true negatives where neither the effect nor the cause is observed.
 Note that
 $\Pr^{\sched}_{\cM}( \Diamond (\Cause \wedge \neg \Diamond \Effect))= \Pr^{\sched}_{\cM}(\Diamond \noeffc)$
 and
 $\Pr^{\sched}_{\cM}( \neg \Diamond \Cause \wedge \neg \Diamond \Effect))= \Pr^{\sched}_{\cM}(\Diamond \noefftn)$.

\paragraph*{Justification of assumptions (A1)-(A3):}
We justify the assumptions as we can transform $\cM$ into a new MDP of the same asymptotic size satisfying the above assumptions.
Thanks to Lemma \ref{lemma:wmin-criterion-PR-causes},
we may suppose that $\cM=\wminMDP{\cM}{\Cause}$
(see Notation \ref{notation:MDP-mit-min-prob-ab-cause-candidates})
without changing the satisfaction of the GPR condition.
We then may rename the effect state $\eff$ and the non-effect state $\noeff$ reachable from $\Cause$ into $\effcov$ and $\noeffc$, respectively.
Furthermore, we collapse all other effect states into a single state $\effunc$ and all true negative states into $\noefftn$.
Similarly, by renaming and possibly duplicating terminal states we also suppose that $\noeffc$ has no other incoming transitions than the $\gamma$-transitions from the states in $\Cause$.
This ensures (A1) and (A2).
For (A3) consider the set $T$ of terminal states in the MDP obtained so far.
We remove all end components by switching to the MEC-quotient \cite{deAlfaro1997}, i.e., we collapse all states that belong to the same MEC $\cE$ into a single state $s_{\cE}$ while ignoring the actions inside $\cE$.
Additionally, we add a fresh $\tau$-transition from the states $s_{\cE}$ to $\noeffbot$
(i.e., $P(s_{\cE},\tau,\noeffbot)=1$).
The $\tau$-transitions from states $s_{\cE}$  to $\noeffbot$ mimic cases where schedulers of the original MDP eventually enter an end component and stay there forever with positive probability.
The soundness of the transition to the MEC-quotient is shown in Lemma \ref{lem:probabilities_MEC-quotient} and Corollary~\ref{cor:GPR_MEC}.

Note, however, that the transformation changes the memory-requirements of schedulers witnessing that $\Cause$ is not a GPR cause for $\Effect$. We will address the memory requirements in the original MDP later.

With assumptions (A1)-(A3), the GPR condition can be reformulated as follows:

\begin{lemma}
  \label{lem:GPR-poly-constraint}
  Under assumptions (A1)-(A3), $\Cause$ satisfies the GPR condition
  if and only if for
  each scheduler $\sched$ with $\Pr_{\cM}^{\sched}(\Diamond \Cause) >0$
  the following condition holds:
		\begin{equation}
			\label{GPR-1}
			\Pr_{\cM}^{\sched}(\Diamond \Cause) \cdot \Pr^{\sched}_{\cM}( \Diamond \effuncov)
			 \ < \ 
			 \bigl(1{-}\Pr_{\cM}^{\sched}(\Diamond \Cause)\bigr)
                         \cdot 
			\!\!\!\!\!\!\sum_{c\in \Cause} \!\!\!\!\!\!
			\Pr^{\sched}_{\cM}(\Diamond c) \cdot w_c
			\tag{GPR-1}
		\end{equation}
 \end{lemma}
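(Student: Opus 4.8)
The plan is to prove the stated biconditional schedulerwise: for every scheduler $\sched$ with $p := \Pr^{\sched}_{\cM}(\Diamond \Cause) > 0$ I will show that \eqref{GPR} holds under $\sched$ if and only if \eqref{GPR-1} holds under $\sched$. Since the GPR condition is by definition the conjunction of \eqref{GPR} over exactly these schedulers, quantifying the equivalence over all of them yields the lemma. Thus the whole argument reduces to a reversible rewriting of one inequality for a fixed $\sched$.

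First I would record three consequences of the structural assumptions (A1)--(A3). Because each $c \in \Cause$ has only the action $\gamma$ with $P(c,\gamma,\effcov) = w_c$, because the cause states cannot reach one another (so a path visits at most one of them and the events $\Diamond c$ are pairwise disjoint), and because $\effcov$ is reachable solely through these $\gamma$-transitions, the covered-effect probability decomposes as
\[
  \Pr^{\sched}_{\cM}(\Diamond \effcov) \ = \ \sum_{c \in \Cause} \Pr^{\sched}_{\cM}(\Diamond c) \cdot w_c .
\]
Next, since all effect states are terminal and the only effect state reachable after a cause is $\effcov$, a path realizes both $\Diamond \Cause$ and $\Diamond \Effect$ exactly when it realizes $\Diamond \effcov$; hence $\Pr^{\sched}_{\cM}(\Diamond \Effect \wedge \Diamond \Cause) = \Pr^{\sched}_{\cM}(\Diamond \effcov)$. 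Finally, as $\Effect = \{\effuncov, \effcov\}$ is a disjoint pair of terminal states, $\Pr^{\sched}_{\cM}(\Diamond \Effect) = \Pr^{\sched}_{\cM}(\Diamond \effcov) + \Pr^{\sched}_{\cM}(\Diamond \effuncov)$.

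With these identities the conditional form of \eqref{GPR}, namely $\Pr^{\sched}_{\cM}(\Diamond \effcov)/p > \Pr^{\sched}_{\cM}(\Diamond \effcov) + \Pr^{\sched}_{\cM}(\Diamond \effuncov)$, can be cleared of its denominator---legitimate and reversible since $p > 0$---and regrouped to $(1-p)\cdot \Pr^{\sched}_{\cM}(\Diamond \effcov) > p \cdot \Pr^{\sched}_{\cM}(\Diamond \effuncov)$. Substituting the decomposition of $\Pr^{\sched}_{\cM}(\Diamond \effcov)$ and writing $p = \Pr^{\sched}_{\cM}(\Diamond \Cause)$ turns this into precisely \eqref{GPR-1}. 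Every step is an equivalence, so the two inequalities are interderivable for each $\sched$.

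The argument involves no genuinely hard analytic step; the care lies entirely in the paragraph of identities, where one must argue from (A1)--(A3) that reaching $\effcov$ is equivalent to reaching a cause and then the effect, and that $\effuncov$ carries the remaining effect mass. I would also note the boundary case $p = 1$: then any path to $\effuncov$ must avoid $\Cause$, so $\Pr^{\sched}_{\cM}(\Diamond \effuncov) = 0$, both sides of \eqref{GPR-1} vanish, and \eqref{GPR} likewise fails (conditioning on a probability-one event yields equality rather than strict inequality), so the equivalence persists throughout $0 < p \leq 1$.
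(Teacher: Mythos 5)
Your proof is correct and follows essentially the same route as the paper: under (A1)--(A3) one has $\Pr^{\sched}_{\cM}(\Diamond(\Cause \wedge \Diamond\Effect)) = \Pr^{\sched}_{\cM}(\Diamond\effcov) = \sum_{c\in\Cause}\Pr^{\sched}_{\cM}(\Diamond c)\cdot w_c$ and $\Pr^{\sched}_{\cM}(\Diamond\Effect) = \Pr^{\sched}_{\cM}(\Diamond\effcov)+\Pr^{\sched}_{\cM}(\Diamond\effuncov)$, after which \eqref{GPR} and \eqref{GPR-1} are interconverted by clearing the positive denominator $\Pr^{\sched}_{\cM}(\Diamond\Cause)$ and regrouping, exactly as in the paper's derivation of its condition (GPR-2). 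Your explicit treatment of the boundary case $\Pr^{\sched}_{\cM}(\Diamond\Cause)=1$ is a sound (if not strictly necessary) addition, since every step in the rewriting is an equivalence there as well.
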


With assumptions (A1)-(A3),  a terminal state of $\cM$ is reached almost surely under any scheduler after finitely many steps in expectation.
Given a scheduler $\sched$ for $\cM$, the expected frequencies
(i.e., expected number of occurrences in maximal paths) of 
state action-pairs $(s, \alpha)$, states $s\in S$ and state-sets $T \subseteq S$
under $\sched$ are defined by:
\begin{align*}
  \freq{\sched}{s,\alpha} & \ \ \eqdef \ \
  \mathrm{E}^{\sched}_{\cM}(\text{number of visits to $s$ in which $\alpha$ is taken})\\
  \freq{\sched}{s} &\ \ \eqdef
        \sum\nolimits_{\alpha \in \Act(s)} \freq{\sched}{s,\alpha},
	\qquad	\freq{\sched}{T} \eqdef \sum\nolimits_{s \in T} \freq{\sched}{s}.
\end{align*}
Let $T$ be one of the sets $\{\effcov\}$, $\{\effunc\}$, $\Cause$, or a singleton $\{c\}$ with $c\in \Cause$.
As $T$ is visited at most once during each run of $\cM$ (assumptions (A1) and (A2)), we have 
$\Pr^{\sched}_{\cN}(\Diamond T) = \freq{\sched}{T}$
for each scheduler $\sched$.
This allows us to express the violation of the GPR condition in terms of a quadratic constraint system over variables for the expected frequencies of state-action pairs in the following way:

Let $\SA$ denote the set of state-action pairs in $\cM$. We consider the following constraint system over the variables $x_{s,\alpha}$ for each $(s,\alpha)\in \SA$ where we use the short form notation
$x_s = \sum_{\alpha \in \Act(s)}x_{s,\alpha}$:
\begin{align*}
  x_{s,\alpha} & \ \geqslant 0 \qquad \text{for all $(s,\alpha) \in \SA$}
  \tag{1}
  \\
  x_{\init} & \ = \
  1+ \!\!\!\!\! \sum_{(t,\alpha) \in \SA} \!\!\!\!\!
      x_{t,\alpha}\cdot P(t,\alpha,\init)
  \tag{2}\\
  x_{s} & \ =  \sum_{(t,\alpha) \in \SA} \!\!\!\!\! x_{t,\alpha}\cdot P(t,\alpha,s)
  \qquad \text{for all $s\in S\setminus\{\init\}$}
  \tag{3}
\end{align*}
Using well-known results for MDPs without ECs (see, e.g., \cite[Theorem 9.16]{Kallenberg20}), given a vector $x\in \Real^{\SA}$, then $x$ is a solution to (1) and the balance equations (2) and (3)
if and only if there is a (possibly history-dependent) scheduler $\sched$ for $\cM$ with $x_{s,\alpha}=\freq{\sched}{s,\alpha}$ for all $(s,\alpha)\in \SA$
if and only if there is an MR-scheduler $\sched$ for $\cM$ with $x_{s,\alpha}=\freq{\sched}{s,\alpha}$ for all $(s,\alpha)\in \SA$.

The violation of \eqref{GPR-1} in Lemma \ref{lem:GPR-poly-constraint}
  and the condition $\Pr^{\sched}_{\cM}(\Diamond \Cause)>0$ can be reformulated in terms of the frequency-variables as follows where $x_{\Cause}$ is an abbreviation for $\sum_{c\in \Cause} x_c$:
\begin{align*}
  \label{non-GPR}
      &
      x_{\Cause} \cdot x_{\effuncov}
        \ \ \geqslant \ \
  \bigl(1-  x_{\Cause} \bigl) \cdot \!\!\!
  \sum_{c \in \Cause} \!\!\!\!\!\! x_{c} \cdot w_c
	\tag{4}
        \\[-1ex]
        &
        x_{\Cause}>0 \tag{5}      
\end{align*}

\begin{lemma}
	\label{prop:quadradic-system}
	Under assumptions (A1)-(A3), the set $\Cause$ is not a GPR cause for $\Effect$ in $\cM$ iff the constructed quadratic system of inequalities (1)-(5) has a solution.
\end{lemma}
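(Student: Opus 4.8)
The plan is to translate the scheduler-based characterization of GPR-violation supplied by Lemma~\ref{lem:GPR-poly-constraint} into the frequency variables $x_{s,\alpha}$, using the standard bijection between achievable expected state-action frequencies and solutions of the balance equations. By Lemma~\ref{lem:GPR-poly-constraint}, $\Cause$ fails to be a GPR cause precisely when there is a scheduler $\sched$ with $\Pr^{\sched}_{\cM}(\Diamond \Cause)>0$ that violates \eqref{GPR-1}, i.e. for which the reversed, non-strict inequality
\[
\Pr^{\sched}_{\cM}(\Diamond \Cause)\cdot \Pr^{\sched}_{\cM}(\Diamond \effuncov)
\ \geqslant \
\bigl(1-\Pr^{\sched}_{\cM}(\Diamond \Cause)\bigr)\cdot\sum_{c\in\Cause}\Pr^{\sched}_{\cM}(\Diamond c)\cdot w_c
\]
holds. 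Since assumptions (A1)--(A3) guarantee that $\cM$ has no end components and reaches a terminal state almost surely in finite expected time, all expected frequencies are finite and the cited characterization (\cite[Theorem~9.16]{Kallenberg20}) applies: a vector $x\in\Real^{\SA}$ solves (1)--(3) if and only if $x_{s,\alpha}=\freq{\sched}{s,\alpha}$ for some (history-dependent, equivalently MR) scheduler $\sched$.

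For the forward direction I would take such a witnessing scheduler $\sched$ and set $x_{s,\alpha}=\freq{\sched}{s,\alpha}$; then constraints (1)--(3) hold automatically by the characterization. The crucial identity, recorded just before the statement, is that each of the sets $\{\effuncov\}$, $\{\effcov\}$, every singleton $\{c\}$ with $c\in\Cause$, and $\Cause$ itself is visited at most once along any run, so $\Pr^{\sched}_{\cM}(\Diamond T)=\freq{\sched}{T}=x_T$ for all of them. Substituting these equalities turns the displayed reversed inequality into (4) and turns $\Pr^{\sched}_{\cM}(\Diamond \Cause)>0$ into (5), yielding a solution of (1)--(5).

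For the converse I would start from a solution $x$ of (1)--(5), invoke the same characterization to obtain an MR-scheduler $\sched$ realizing these frequencies, and read the identities $\Pr^{\sched}_{\cM}(\Diamond T)=x_T$ backwards. Constraint (5) then gives $\Pr^{\sched}_{\cM}(\Diamond \Cause)>0$, and constraint (4) asserts exactly that $\sched$ violates \eqref{GPR-1}; by Lemma~\ref{lem:GPR-poly-constraint} this shows that $\Cause$ is not a GPR cause. Both directions are therefore a dictionary translation along the map ``reachability probability $=$ expected frequency''.

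The only genuinely delicate point is the matching of strict and non-strict inequalities: \eqref{GPR} and hence \eqref{GPR-1} are strict, so ``not a GPR cause'' corresponds to the non-strict $\geqslant$ appearing in (4), which is why (4) is formulated with $\geqslant$ rather than with $>$. I would also verify explicitly that the witnessing scheduler of the forward direction satisfies $\Pr^{\sched}_{\cM}(\Diamond \Cause)>0$ so that (5) is indeed available --- but this is automatic, since Lemma~\ref{lem:GPR-poly-constraint} quantifies precisely over schedulers with $\Pr^{\sched}_{\cM}(\Diamond \Cause)>0$. Everything else reduces to substituting the frequency-to-reachability identities granted by (A1)--(A3).
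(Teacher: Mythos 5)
Your proposal is correct and matches the paper's own argument: the paper likewise derives this lemma directly from Lemma~\ref{lem:GPR-poly-constraint}, the identity $\Pr^{\sched}_{\cM}(\Diamond T)=\freq{\sched}{T}$ for the once-visited sets guaranteed by (A1)--(A2), and the characterization (via \cite[Theorem~9.16]{Kallenberg20}) of solutions of (1)--(3) as expected frequencies of (MR-)schedulers in the EC-free MDP. Your explicit handling of the strict-versus-non-strict inequality and of constraint (5) is exactly the dictionary translation the paper intends.
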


This now puts us in the position to prove Theorem \ref{thm:checking-GPR-in-poly-space}.

\begin{proof}[Proof of Theorem \ref{thm:checking-GPR-in-poly-space}]
	The existence of a solution to the quadratic system of inequalities (Lemma \ref{prop:quadradic-system}) can straight-forwardly be formulated as a sentence in the language of the existential theory of the reals.
	The system of inequalities can be constructed from $\cM$, $\Cause$,  and $\Effect$ in polynomial time. Its solvability is decidable in polynomial space
 as the decision problem of the existential theory of the reals is in PSPACE \cite{ETH-Canny88}.
	\qed
\end{proof}


\noindent{\bf Memory requirements of schedulers in the original MDP (Proof of Theorem \ref{thm:MR-sufficient-GPR}). }
As stated above, every solution to the
linear system of inequalities (1), (2), and (3) corresponds to the
expected frequencies of state-action pairs of an MR-scheduler in the
transformed model satisfying (A1)-(A3). Hence:

\begin{corollary}\label{cor:MR-scheduler}
Under assumptions (A1)-(A3),
		$\Cause$ is no GPR cause for
	$\Effect$ iff there exists an MR-scheduler $\tsched$
	with $\Pr^{\tsched}_{\cM}(\Diamond \Cause) >0$
        violating the GPR condition.
\end{corollary}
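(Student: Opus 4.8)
The plan is to read the corollary off Lemma~\ref{prop:quadradic-system} by invoking the frequency-characterization of MR-schedulers in EC-free MDPs. Recall that under (A1)--(A3) the MDP $\cM$ has no end components, so the cited result \cite[Theorem~9.16]{Kallenberg20} applies: a vector $x \in \Real^{\SA}$ solves the balance constraints (1)--(3) if and only if $x_{s,\alpha} = \freq{\tsched}{s,\alpha}$ for some MR-scheduler $\tsched$. Moreover, since each state in $\{\effuncov\} \cup \Cause$ is visited at most once on every run by (A1) and (A2), we have $\Pr^{\tsched}_{\cM}(\Diamond T) = \freq{\tsched}{T}$ for $T \in \{\{\effuncov\}, \Cause, \{c\}\}$; this is exactly what lets the extra constraints (4) and (5) be translated back and forth between a frequency vector and its realizing scheduler. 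Constraint (5) reads $x_{\Cause} > 0$, i.e.\ $\Pr^{\tsched}_{\cM}(\Diamond \Cause) > 0$, and constraint (4) is precisely the negation of the strict inequality \eqref{GPR-1}.

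For the left-to-right direction I would assume $\Cause$ is no GPR cause. By Lemma~\ref{prop:quadradic-system} the system (1)--(5) then has a solution $x$. Since $x$ in particular solves (1)--(3), the characterization above yields an MR-scheduler $\tsched$ with $x_{s,\alpha} = \freq{\tsched}{s,\alpha}$. Reading (5) and (4) through the identity $\Pr^{\tsched}_{\cM}(\Diamond T) = \freq{\tsched}{T}$ gives $\Pr^{\tsched}_{\cM}(\Diamond \Cause) > 0$ together with the failure of \eqref{GPR-1} for $\tsched$; by Lemma~\ref{lem:GPR-poly-constraint} the latter means $\tsched$ violates the GPR condition. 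Conversely, an MR-scheduler $\tsched$ with $\Pr^{\tsched}_{\cM}(\Diamond \Cause) > 0$ that violates GPR fails \eqref{GPR-1} by Lemma~\ref{lem:GPR-poly-constraint}; setting $x_{s,\alpha} = \freq{\tsched}{s,\alpha}$ produces a vector solving (1)--(3) (again by the characterization), and translating the two hypotheses into frequency variables supplies (5) and (4). Hence (1)--(5) is solvable, and $\Cause$ is no GPR cause by Lemma~\ref{prop:quadradic-system}.

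Since every ingredient is already in place, there is no genuine obstacle here; the corollary is essentially a restatement of Lemma~\ref{prop:quadradic-system} in scheduler language. The only point requiring care is to verify that the reachability probabilities appearing in \eqref{GPR-1} coincide with the frequency variables used in (4)--(5), which rests entirely on the at-most-once-visited property guaranteed by (A1) and (A2). The essential non-trivial input---the equivalence between solutions of the balance equations and MR-scheduler frequencies---is supplied by the cited theorem and uses only the EC-freeness ensured by (A3).
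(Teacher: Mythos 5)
Your proposal is correct and follows exactly the paper's own route: the paper derives this corollary in one line from the fact that solutions of the balance constraints (1)--(3) are precisely the expected state-action frequencies of MR-schedulers in the EC-free transformed model (\cite[Theorem~9.16]{Kallenberg20}), combined with Lemma~\ref{prop:quadradic-system} and the translation of (4)--(5) via the at-most-once-visited property from (A1)--(A2). Your write-up just makes both directions of that equivalence explicit.
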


The model transformation we used for assumptions (A1)-(A3), however, does affect the memory requirements of scheduler.
We may further restrict the MR-schedulers necessary to witness non-causality under assumptions (A1)-(A3).
For the following lemma, recall that $\tau$ is the action of the MEC quotient used for the extra transition from states representing MECs to a new trap state (see also assumption (A3)).

\begin{restatable}{lemma}{MRschedulersufficient}
  \label{lem:MR-sufficient-global-cause}
	Assume (A1)-(A3).
    Given an MR-scheduler $\usched$ with $\Pr^{\usched}_{\cM}(\Diamond \Cause) >0$ that violates \eqref{GPR},
    an MR-scheduler $\tsched$ with $\tsched(s)(\tau)\in \{0,1\}$ for each state $s$ with $\tau \in \Act(s)$
    that satifies $\Pr^{\tsched}_{\cM}(\Diamond \Cause) >0$ and violates \eqref{GPR}
    is computable in polynomial time.
\end{restatable}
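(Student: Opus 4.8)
The plan is to start from the MR-scheduler $\usched$ that violates \eqref{GPR} and modify only its behavior on the $\tau$-action, turning the randomized choice of $\tau$ into a deterministic one ($\tsched(s)(\tau)\in\{0,1\}$) while preserving violation of \eqref{GPR}. The key observation is that, by Lemma \ref{lem:GPR-poly-constraint} and Lemma \ref{prop:quadradic-system}, the violation of \eqref{GPR} is captured entirely by the expected frequencies $\freq{\usched}{s,\alpha}$ through inequalities (4) and (5), which only involve $x_{\Cause}=\freq{\usched}{\Cause}$, $x_{\effuncov}=\freq{\usched}{\effuncov}$, and the quantities $\freq{\usched}{c}\cdot w_c$ for $c\in\Cause$. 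So it suffices to exhibit another feasible frequency vector (coming from an MR-scheduler with deterministic $\tau$-choices) that still satisfies (4) and (5).

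First I would recall from assumption (A3) that $\tau$, wherever enabled, leads with probability $1$ to the trap state $\noeffbot$, and that the states with $\tau\in\Act(s)$ are precisely the MEC-representative states $s_{\cE}$ of the MEC-quotient. The role of $\tau$ is to model the event that a scheduler stays inside an end component forever. The crucial point is that taking $\tau$ sends probability mass to $\noeffbot$, a true-negative state that contributes to \emph{none} of the quantities $x_{\Cause}$, $x_{\effuncov}$, or $\sum_{c}x_c\cdot w_c$ appearing in (4). Intuitively, then, redirecting probability mass away from $\tau$ and back into the MDP can only increase the amount reaching $\Cause$ and $\effuncov$, and one should be able to choose a vertex of the relevant polytope (i.e.\ an MR-scheduler making an all-or-nothing choice at each $s_{\cE}$) that preserves the violation.

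The concrete approach I would take is an exchange/convexity argument on the frequency variables. Consider the set of MR-schedulers that agree with $\usched$ on all actions except possibly the $\tau$-choices at the MEC-states; parametrize them by the probabilities $\tsched(s_{\cE})(\tau)\in[0,1]$. The resulting frequency vectors depend on these parameters, and one shows that the left- and right-hand sides of (4), as well as $x_{\Cause}$, are affine (or monotone) in a way that lets us push each $\tau$-probability to an extreme value $\{0,1\}$ one MEC-state at a time without ever destroying the strict inequality (4) and while keeping (5). More precisely, for a single MEC-state $s_{\cE}$, I would write the relevant frequencies as convex combinations of the two extreme choices $\tsched(s_{\cE})(\tau)=0$ and $=1$; since a strict inequality of the form $L<R$ that holds at an interior point must hold at (at least) one of the endpoints when both $L$ and $R$ are affine in the mixing parameter, rounding to that endpoint preserves the violation. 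Iterating over all MEC-states yields the desired $\tsched$, and each step is a polynomial-time computation (solving the balance equations / evaluating the affine expressions), so the whole rounding runs in polynomial time.

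The main obstacle I anticipate is the dependence of the expected frequencies on the $\tau$-probabilities: because $\cM$ has no end components (A3) but redirecting mass away from $\tau$ can re-route paths through large portions of the graph, the map from $\tau$-probabilities to the frequency vector is not obviously affine, and the balance equations (2)--(3) couple all the $x_{s,\alpha}$ together. The careful part is therefore to argue that, when we fix all other actions of the MR-scheduler and vary only the $\tau$-choice at one state, the induced frequencies of $\Cause$, $\effuncov$ and each $c\in\Cause$ do behave linearly (or at least monotonically in the right direction) in that single parameter — this is what justifies the endpoint-rounding. I expect this to follow from the fact that, with the other actions fixed, the sub-chain is linear and the frequency of each target state is a ratio/affine function of the single free transition probability at $s_{\cE}$, so that a one-dimensional linear-algebraic argument (or a direct comparison of the two induced Markov chains) closes the gap; verifying this monotonicity and confirming that (5) cannot be lost is where the real work lies.
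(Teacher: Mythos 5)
Your overall strategy coincides with the paper's: process the states with $\tau \in \Act(s)$ one at a time and round each $\tau$-probability to $0$ or $1$ while preserving the violation, using exactly the one-parameter family of MR-schedulers that fixes everything else and scales the non-$\tau$ actions at the chosen state proportionally (the paper's $\vsched_x$). However, the step you yourself flag as ``where the real work lies'' is a genuine gap, and the tools you propose for closing it do not work. First, the expected frequencies (equivalently, the relevant reachability probabilities) are \emph{not} affine in the parameter $t = 1 - \tsched(s_{\cE})(\tau)$: assumption (A3) forbids end components but does not make the graph acyclic, so a path may traverse $s_{\cE}$ several times, and the induced quantities are then rational functions of $t$ (of the shape $\frac{At}{1-Bt}$ plus a constant), not affine ones. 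Your fallback, monotonicity, is true but useless: increasing $t$ routes more mass into the rest of the MDP and thereby weakly increases $x_{\Cause}$, $x_{\effuncov}$ and every $x_c$ simultaneously, i.e., both sides of the violated inequality (4) grow, so no conclusion about an endpoint can be drawn.

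Second, and independently: even if all frequencies were affine in $t$, your endpoint-rounding lemma (stated for affine $L$ and $R$) would not apply, because (4) is \emph{quadratic} in the frequencies --- its sides are $x_{\Cause}\cdot x_{\effuncov}$ and $(1-x_{\Cause})\sum_{c}x_c w_c$. A concave quadratic can be $\geq 0$ at an interior point while being $<0$ at both endpoints, so affinity alone cannot justify the rounding. This is precisely where the paper invests its effort: along the family $\vsched_x$ it splits each relevant probability into a part not depending on $x$ (mass avoiding the processed state $u$) and a part scaling linearly with $x$ (mass routed through $u$, e.g.\ $p_x = p_y\cdot\frac{x}{y}$), so that the violation function becomes a genuine quadratic polynomial $f(x)=\mathfrak{a}x^2+\mathfrak{b}x+\mathfrak{c}$ whose leading coefficient $\mathfrak{a}=\frac{1}{y^2}\,p_y(q_y+v_y)$ is non-negative --- only the products $p_xq_x$ and $p_xv_x$ contribute to the $x^2$-term, both with positive sign. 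Convexity of $f$ then gives the endpoint property: if $f(0)<0$ and $f(1)<0$, then $f<0$ on all of $[0,1]$, contradicting $f(y)\geq 0$ at the interior point. This sign computation, which turns the rounding heuristic into a proof, is the essential content missing from your proposal; the surrounding bookkeeping (iterating over at most $|S|$ states and evaluating which endpoint preserves the violation in polynomial time) you handle the same way the paper does.
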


For the proof, see Appendix \ref{app:GPR_check}.
The condition that $\tau$ only has to be scheduled  with probability $0$ or $1$ in each state is the key to transfer the sufficiency of MR-schedulers to the MDP $\wminMDP{\cM}{\Cause}$. %
This fact is of general interest as well and stated in the following theorem where $\tau$ again is the action added to move from a state $s_{\cE}$ to the new trap state  in the MEC-quotient.

\begin{restatable}{theorem}{MRschedulerlift}
 \label{thm:MR-schedulers-MEC-quotient}\label{thm:MRscheduler_lift}
 Let $\cM$ be an MDP with pairwise disjoint action sets for all states. 
 Then, for each MR-scheduler $\sched$ for the MEC-quotient of $\cM$ with
 $\sched(s_{\cE})(\tau)\in \{0,1\}$ for each MEC $\cE$ of $\cM$ there is an MR-scheduler $\tsched$ for $\cM$ such that every action  $\alpha$ of $\cM$ that does not belong to an MEC of $\cM$, has the same expected frequency under $\sched$ and $\tsched$.
\end{restatable}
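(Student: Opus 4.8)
Let $\cM$ be an MDP with pairwise disjoint action sets. For each MR-scheduler $\sched$ of the MEC-quotient with $\sched(s_{\cE})(\tau) \in \{0,1\}$, there is an MR-scheduler $\tsched$ for $\cM$ such that every action $\alpha$ not belonging to any MEC has equal expected frequency under $\sched$ and $\tsched$.

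Let me think about this carefully.

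**Setup.** We have an MDP $\cM$. The MEC-quotient collapses each MEC $\cE$ into a single state $s_\cE$, ignoring the actions inside $\cE$, and adds a $\tau$-transition from $s_\cE$ to a trap state $\noeffbot$. Actions of $\cM$ that leave a MEC (or don't touch a MEC) survive in the quotient. The scheduler $\sched$ on the quotient uses $\tau$ only deterministically (0 or 1).

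**Goal.** Lift $\sched$ to an MR-scheduler $\tsched$ on $\cM$ preserving expected frequencies of all non-MEC actions.

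**The two regimes.** The condition $\sched(s_\cE)(\tau)\in\{0,1\}$ partitions MECs.

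Case $\sched(s_\cE)(\tau)=1$: In the quotient, the scheduler leaves $s_\cE$ immediately to the trap via $\tau$. This models a scheduler in $\cM$ that enters the MEC and stays forever (with probability 1 of remaining). The non-$\tau$ actions out of $s_\cE$ get frequency 0. So in $\cM$, we want $\tsched$ to enter $\cE$ and stay there forever (never take an exiting action). Any MR-scheduler staying inside $\cE$ works — e.g. one that makes $\cE$ a closed recurrent class. The exiting (non-MEC) actions then have frequency 0, matching.

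Case $\sched(s_\cE)(\tau)=0$: The scheduler leaves $s_\cE$ via genuine exiting actions with the frequencies prescribed by $\sched$. We need $\tsched$ in $\cM$ to route flow through $\cE$ so that each exiting action $\alpha$ (from state $s \in \cE$, leaving $\cE$) receives exactly the expected frequency $\freq{\sched}{s_\cE,\alpha}$ that $\sched$ assigns to it in the quotient.

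**The heart of the matter.** The frequency $\freq{\sched}{s_\cE}$ of visits to $s_\cE$ (when $\tau$ is not taken) equals, by flow-balance in the quotient, the incoming flow to $s_\cE$, which equals the sum of exiting frequencies $\sum_\alpha \freq{\sched}{s_\cE,\alpha}$ over exiting actions (since $\tau$ gets $0$). The issue: in the quotient, all exiting actions leave from the single aggregate state $s_\cE$, but in $\cM$ these exiting actions are attached to various states $s \in \cE$. I must realize the demanded exit-frequency profile using a scheduler inside $\cE$ that, from whatever distribution flow enters $\cE$, visits the exit-states in the right proportions before leaving.

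**The plan.** I would exploit the strong connectivity of $\cE$: since $\cE$ is an end component, every state in $\cE$ is reachable from every other state using actions internal to $\cE$. Fix a target exit-profile $(p_\alpha)$ where $p_\alpha = \freq{\sched}{s_\cE,\alpha}$, normalized by the total incoming flow $f = \freq{\sched}{s_\cE}$. I claim there is an MR-scheduler on $\cE$ (using internal actions plus the exit actions) whose induced absorbing behavior exits via each $\alpha$ with probability exactly $p_\alpha/f$, regardless of the entry distribution. This is a standard "stochastic flow" realizability fact for strongly connected MDPs: because $\cE$ is strongly connected, one can set up a flow-balance system on the internal state-action pairs with prescribed total exit-rates and solve it; the solution corresponds (again by \cite[Theorem 9.16]{Kallenberg20}-type correspondence applied to the sub-MDP $\cE$ with exits as sinks) to an MR-scheduler. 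The key point making this work is that the entry-distribution independence follows from strong connectivity: once inside $\cE$, the scheduler can "forget" where it entered.

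**Carrying it out.** First I would set up global flow variables $y_{s,\alpha}$ for $\cM$ via the balance equations (analogues of (2),(3) in the excerpt), with the constraint that the flow through exiting actions of each $\cE$ matches $\sched$'s quotient frequencies, and that for $\tau$-type MECs all flow is absorbed inside. Concretely, let $g_\cE$ be the expected flow entering $\cE$ under the to-be-constructed $\tsched$; I would show $g_\cE$ is forced to equal $f_\cE := \freq{\sched}{s_\cE}$ (for the $\tau{=}0$ case) by tracking that all non-MEC actions outside and into $\cE$ already have their frequencies pinned by $\sched$. Then inside each $\cE$ I would solve the local strongly-connected flow problem: find nonnegative $y_{s,\alpha}$ for internal $(s,\alpha)$ satisfying balance with source-rate equal to the entry flow and sink-rates equal to the prescribed exit frequencies. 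Strong connectivity guarantees feasibility of this linear system (one can always route flow between any two states internally). Finally I would read off $\tsched(s)(\alpha) = y_{s,\alpha}/\sum_\beta y_{s,\beta}$ and verify via the frequency-flow correspondence that the resulting MR-scheduler has the claimed frequencies on all non-MEC actions.

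**Main obstacle.** The crux is the internal realizability step: proving that for a strongly connected sub-MDP $\cE$ with a prescribed exit-frequency vector summing to the entry flow, an MR-scheduler achieving exactly those exits exists. I would handle this by writing the local flow-balance LP for $\cE$ (internal balance equations with the entry as a constant inflow and the exit actions as drains with the prescribed rates) and arguing it has a nonnegative solution: feasibility reduces to the existence of an internal flow circulating the "deficit" between entry-point and exit-points, which is guaranteed precisely because $\cE$ is strongly connected so every state-pair admits an internal path carrying positive flow. Stitching the local solutions into one global MR-scheduler, and confirming the disjoint-action-set hypothesis prevents any conflict in defining $\tsched(s)(\alpha)$, then gives the theorem.
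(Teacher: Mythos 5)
Your overall strategy (split into the two regimes, and for MECs with $\sched(s_{\cE})(\tau)=0$ realize the prescribed exit frequencies by an internal flow that is then read off as an MR-scheduler) is salvageable, but as written it breaks down exactly where the theorem is hard, in two ways. First, your stated ``key point'' --- that strong connectivity lets a single MR-scheduler on $\cE$ realize the exit profile \emph{regardless of the entry distribution}, because the scheduler ``can forget where it entered'' --- is false: memoryless schedulers cannot forget the entry point. Take $\cE$ with states $a,b$, deterministic internal actions $a\to b$ and $b\to a$, and exit actions $e_a$ at $a$ and $e_b$ at $b$; an MR-scheduler taking $e_a$ with probability $p$ and $e_b$ with probability $q$ exits via $e_a$ with probability $p/(p+q-pq)$ when entering at $a$, but with probability $(1-q)p/(p+q-pq)$ when entering at $b$, and these agree only in the degenerate cases $p=0$ or $q=0$. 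So no MR-scheduler realizes, say, the profile $(1/2,1/2)$ entry-independently. This is precisely why the paper's proof first builds a \emph{finite-memory} scheduler (memory modes that steer almost surely to the candidate exit states $s_1,\dots,s_k$ in a fixed order, leaving at $s_i$ with probability $p_i/(1-\sum_{j<i}p_j)$), which does achieve the exit profile from any entry state, and only afterwards converts it to an MR-scheduler.

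Second, the conversion step you propose --- solve the local balance system with the actual entry flows as sources and the prescribed exit flows as sinks, then read off $\tsched(s)(\alpha)=y_{s,\alpha}/\sum_\beta y_{s,\beta}$, citing a Kallenberg-type frequency-flow correspondence --- is not justified, because that correspondence (as used in the paper, and as you cite it) holds for MDPs \emph{without end components}, whereas your local sub-MDP ($\cE$ with its exits as sinks) contains $\cE$ itself as an end component. In the presence of ECs, a nonnegative solution of the balance equations need not coincide with the expected frequencies of its read-off scheduler: solutions may carry ``circulation'' mass on states unreachable from the entries, which the read-off scheduler never visits. To close this you must either prove that on the part reachable from the entries the read-off scheduler is transient and therefore realizes $y$ there (a nontrivial argument about closed classes in the support of $y$), or construct $y$ explicitly as the occupation measure of a transient, possibly history-dependent, scheduler (your per-entry/per-exit routing via almost-sure-reaching MD-schedulers does exactly this) and then invoke Kallenberg's transient-case theorem, which yields an MR-scheduler with the \emph{same} frequencies. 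The latter is the paper's route at the global level: explicit transient finite-memory scheduler first, Theorem 9.16 second. Without one of these arguments, the core step of your proof is missing.
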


\begin{proofsketch}   
      The crux are cases where $\sched(s_{\cE})(\tau)=0$, which requires
      to traverse the MEC $\cE$ of $\cM$ in a memoryless way such that all actions leaving $\cE$ have the same expected frequency under $\tsched$ and $\sched$. First, we construct a finite-memory scheduler $\tsched^\prime$ that always leaves each such end component according to the distribution  given by $\sched(s_{\cE})$. By \cite[Theorem 9.16]{Kallenberg20},  we then conclude that there is an MR-scheduler $\tsched$ under which the  expected frequencies of all state-action pairs are the same as under $\tsched^\prime$.
	\qed
\end{proofsketch}

\begin{proof}[Proof of Theorem \ref{thm:MR-sufficient-GPR}]
The model transformation establishing assumptions (A1)-(A3) results in the MEC-quotient of $\wminMDP{\cM}{\Cause}$ up to the renaming and collapsing of terminal states. By Corollary \ref{cor:MR-scheduler} and Theorem  \ref{thm:MRscheduler_lift}, we conclude that
$\Cause$ is not a GPR cause for $\Effect$ in $\cM$ if and only if there is a MR-scheduler $\sched$ for $\wminMDP{\cM}{\Cause}$ with $\Pr^{\sched}_{\wminMDP{\cM}{\Cause}}(\Diamond \Cause)>0$ that violates \eqref{GPR}.
As in Remark \ref{MR-sufficient-SRP}, 
$\sched$ can be extended to
a finite-memory randomized scheduler $\tsched$ for $\cM$ with two memory cells.
\qed
\end{proof}


\begin{remark}[On lower bounds on GPR checking]
Solving systems of quadratic inequalities with linear side constraints is NP-hard in general (see, e.g., \cite{GareyJ79}).
For convex problems, in which the associated symmetric matrix occurring in the quadratic inequality has only non-negative eigenvalues, the problem is, however, solvable in polynomial time \cite{kozlov1980polynomial}.
Unfortunately, the quadratic constraint system given by (1)-(5) is not of this form.
We observe that even if $\Cause$ is a singleton $\{c\}$ and the variable $x_{\effuncov}$ is forced to take a constant value $y$ by (1)-(3), i.e., by the structure of the MDP, the inequality (4) takes the form:
\begin{enumerate}
\item []
  \hspace*{3cm}
  $x_c\cdot w_c - x_c^2 \cdot (w_c+y) \leq 0$
  \hfill (*)
\end{enumerate}
Here, the $1\times 1$-matrix $({-}w_c{-}y)$ has a negative eigenvalue.
Although it is not ruled out that
  (1)-(5) belongs to another class of
  efficiently solvable constraint systems,
  the NP-hardness result in \cite{pardalos1991quadratic}
  for the solvability of
  quadratic inequalities of the form (*) 
  with linear side constraints 
  might be an indication for the computational difficulty.
 \Ende
\end{remark}

\section{Quality and optimality of causes}

\label{sec:criteria}

The goal of this section is to identify notions that measure how ``good'' causes are and to present algorithms to determine good causes according to the proposed quality measures.
We have seen so far that small (singleton) causes are easy to determine (see Section \ref{sec:check-SPR-condition}).
Moreover, it is easy to see that the proposed existence-checking algorithm can be formulated in such a way that the algorithm returns a singleton (strict or global) probability-raising cause $\{c_0\}$ with maximal \emph{precision}, i.e., a state $c_0$ where 
$\inf_{\sched} \Pr^{\sched}_{\cM}(\Diamond \Effect |\Diamond c_0)
 = \Pr^{\min}_{\cM,c_0}(\Diamond \Effect)$ is maximal.
On the other hand, 
singleton or small cause sets might have poor coverage in the sense that the probability for paths that reach an effect state without visiting a cause state before (``uncovered effects'') can be large. 
This motivates the consideration of quality notions for causes that incorporate how well 
 effect scenarios are covered. 
We take inspiration of quality measures that are considered in statistical analysis (see e.g. \cite{Powers-fscore}). 
This includes the \emph{recall} as a measure for the relative coverage (proportion of covered effects among all effect scenarios), the \emph{coverage ratio} (quotient of covered and uncovered effects) as well as the \emph{f-score}. 
The f-score is a standard measure for classifiers  defined by the harmonic mean of precision and recall.
It can be seen as a compromise to achieve both good precision and good recall.

Throughout this section, we assume as before an MDP $\cM = (S,\Act,P,\init)$ and a set
$\Effect \subseteq S$ are given where all effect states are terminal. Furthermore, we suppose that all
states $s\in S$ are reachable from $\init$. 
Detailed proofs can be found in Appendix \ref{app:section_5}.


\subsection{Quality measures for causes}

\label{sec:acc-measures}

In statistical analysis, the precision of a classifier with binary outcomes (``positive'' or ``negative'') is defined as the ratio of all true positives among all positively classified elements, while its recall is defined as the ratio of all true positives among all actual positive elements.
Translated to our setting, we consider classifiers induced by a given cause set $\Cause$ that return ``positive'' for sample paths in case that a cause state is visited 
and ``negative'' otherwise. 
The intuitive meaning of true positives and false negatives is as explained after Definition \ref{def-PR-causes}.
The meaning of true negatives and false positives is analogous.
We use $\mathsf{tp}^\sched$ for the probability for true positives under $\sched$. 
The notations $\mathsf{fp}^\sched$, $\mathsf{fn}^\sched$, $\mathsf{tn}^\sched$ have analogous meanings.

With this interpretation of causes as binary classifiers in mind, the recall and precision
and coverage ratio 
of a cause set $\Cause$ \emph{under a scheduler} $\sched$ is defined as follows
(assuming $\Pr^{\sched}_{\cM}(\Diamond \Effect)>0$
resp. $\Pr^{\sched}_{\cM}(\Diamond \Cause)>0$ resp. $\Pr^{\sched}_{\cM}\bigl( (\neg \Cause) \until \Effect \bigr)>0$):
$$
   \begin{array}{rclcl}
           \precision^{\sched}(\Cause) & \ = \ &
           \Pr^{\sched}_{\cM}(\ \Diamond \Effect \ | \ \Diamond \Cause \ )
           & = &
           \frac{\mathsf{tp}^{\sched}}
                {\mathsf{tp}^{\sched} + \mathsf{fp}^{\sched}}
           \\[1ex] 
           \recall^{\sched}(\Cause) & = &
           \Pr^{\sched}_{\cM}(\ \Diamond \Cause  \ | \ \Diamond \Effect \ )
           & = &
           \frac{\mathsf{tp}^{\sched}}{\mathsf{tp}^{\sched}+\mathsf{fn}^{\sched}}
   \end{array}
$$
$$
	\begin{array}{rclcl}
		 \ratiocov^{\sched}(\Cause) & = &
		\frac{\displaystyle
			\Pr^{\sched}_{\cM}
			\bigl(\Diamond (\Cause \wedge \Diamond \Effect) \bigr)}
		{\displaystyle
			\Pr^{\sched}_{\cM}\bigl((\neg \Cause) \until \Effect \bigr)}
		& = &
		\frac{\mathsf{tp}^{\sched}}
		{\mathsf{fn}^{\sched}}.
	\end{array}
$$
For the coverage ratio, if $\Pr^{\sched}_{\cM}\bigl( (\neg \Cause) \until \Effect \bigr)=0$ and $\Pr^\sched_{\cM}(\Diamond \Cause)>0$ we define $\covrat^\sched(\Cause)=+\infty$.
Finally, the f-score of $\Cause$ \emph{under a scheduler} $\sched$
is defined as the harmonic mean of the precision and recall
(assuming $\Pr^{\sched}_{\cM}(\Diamond \Cause)>0$, which implies
$\Pr^{\sched}_{\cM}(\Diamond \Effect)>0$ as $\Cause$ is a PR cause):
\begin{equation*}
 \label{f-score}
            \fscore^{\sched}(\Cause) \ \ \eqdef \ \
	     2 \cdot
             \frac{\precision^{\sched}(\Cause)\cdot \recall^{\sched}(\Cause)}
                  {\precision^{\sched}(\Cause) + \recall^{\sched}(\Cause)}
			\end{equation*}
If, however,
$\Pr_{\cM}^\sched(\Diamond \Eff)>0$ and $\Pr_{\cM}^{\sched}(\Diamond \Cause)=0$
we define $\fscore^\sched(\Cause)=0$.

\paragraph*{\bf Quality measures for cause sets.}   
 Let $\Cause$ be a PR cause.
 The recall of $\Cause$ measures the relative coverage 
 in terms of the worst-case conditional
 probability for covered effects (true positives)
 among all scenarios where the effect occurs.
\begin{center}
  $\relcov(\Cause)  \ = \ 
   \inf_{\sched} \ \relcov^{\sched}(\Cause)
   \ = \
   \Pr^{\min}_{\cM}(\ \Diamond \Cause \ | \ \Diamond \Effect \ )$
\end{center}
when ranging over all schedulers $\sched$
with $\Pr^{\sched}_{\cM}(\Diamond \Effect)>0$.   
 Likewise, the coverage ratio and f-score of $\Cause$
 are defined by the worst-case coverage ratio resp. f-score
 (when ranging over schedulers for which $\covratio^{\sched}(\Cause)$ 
  resp. $\fscore^{\sched}(\Cause)$ is defined):
 \begin{center}
   $\covratio(\Cause)  \ = \ 
   \inf_{\sched} \ \covratio^{\sched}(\Cause),
   \quad
   \fscore(\Cause)  \ = \ 
   \inf_{\sched} \ \fscore^{\sched}(\Cause)$
\end{center}

\subsection{Computation schemes for the quality measures for fixed cause set}

\label{sec:comp-acc-measures-fixed-cause}

For this section, we assume  a fixed PR cause $\Cause$ is given and address the problem to compute its quality values.
Since all quality measures are preserved by the switch from $\cM$ to $\wminMDP{\cM}{\Cause}$ 
as well as the transformations of $\wminMDP{\cM}{\Cause}$ to an MDP that satisfies conditions (A1)-(A3) of Section \ref{sec:check-GPR},
we may assume that $\cM$ satisfies (A1)-(A3).


While efficient computation methods for $\recall(\Cause)$
are known from literature (see \cite{TACAS14-condprob,Maercker-PhD20}
for poly-time algorithms to compute
conditional reachability probabilities), we are not aware of
known concepts that are applicable 
for computing the coverage ratio or the f-score.
Indeed, both are efficiently computable:

\begin{restatable}{theorem}{computecovratiofscore}
  \label{ratiocov-fscore-in-PTIME}
  The values $\ratiocov(\Cause)$ and $\fscore(\Cause)$ and corresponding worst-case schedulers are computable in polynomial time.
\end{restatable}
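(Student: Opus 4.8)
The plan is to reduce the computation of both values to linear-fractional optimization over the frequency polytope already introduced for the GPR check in Section~\ref{sec:check-GPR}. By the preservation remark at the start of Section~\ref{sec:comp-acc-measures-fixed-cause} we may assume (A1)--(A3), so that $\Effect=\{\effunc,\effcov\}$, the states $\effcov$ and $\noeffc$ are reachable only via the $\gamma$-transitions of cause states, and $\cM$ has no end components. As observed under (A1)--(A3), for any scheduler $\sched$ the true/false positives and false negatives satisfy $\mathsf{tp}^\sched=\Pr^\sched_\cM(\Diamond\effcov)=\freq{\sched}{\effcov}$, $\mathsf{fp}^\sched=\Pr^\sched_\cM(\Diamond\noeffc)=\freq{\sched}{\noeffc}$ and $\mathsf{fn}^\sched=\Pr^\sched_\cM(\Diamond\effunc)=\freq{\sched}{\effunc}$, because each of these terminal states is visited at most once. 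First I would set up the balance-equation system (1)--(3) over the frequency variables $x_{s,\alpha}$, $(s,\alpha)\in\SA$; by \cite[Theorem 9.16]{Kallenberg20} its solution set $\cF$ is exactly the set of frequency vectors of schedulers, and every such vector is already realized by an MR-scheduler. Since $\cM$ has no end components, $\cF$ is a \emph{bounded} polytope. Using (A2), the three quantities are linear functions of $x$: namely $\mathsf{tp}=\sum_{c\in\Cause} w_c x_c$, $\mathsf{fp}=\sum_{c\in\Cause}(1{-}w_c)x_c$ and $\mathsf{fn}=\sum_{(t,\alpha)\in\SA}P(t,\alpha,\effunc)\,x_{t,\alpha}$, where $x_c=x_{c,\gamma}$ and $x_s=\sum_\alpha x_{s,\alpha}$.

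With these linear forms, $\ratiocov^\sched(\Cause)=\mathsf{tp}/\mathsf{fn}$, and, rewriting the harmonic mean of precision $\mathsf{tp}/(\mathsf{tp}{+}\mathsf{fp})$ and recall $\mathsf{tp}/(\mathsf{tp}{+}\mathsf{fn})$ in the standard way, $\fscore^\sched(\Cause)=2\mathsf{tp}/(2\mathsf{tp}{+}\mathsf{fp}{+}\mathsf{fn})$. Both are therefore ratios of linear functions of $x\in\cF$. The quality values $\ratiocov(\Cause)$ and $\fscore(\Cause)$ are the infima of these ratios over the schedulers for which they are defined; translated to $\cF$, these are the domains $\{x\in\cF:\mathsf{fn}>0\}$ and $\{x\in\cF:2\mathsf{tp}{+}\mathsf{fp}{+}\mathsf{fn}>0\}$. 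The latter domain is precisely ``the effect or the cause is reached'' and subsumes the special case $\fscore^\sched=0$ when $\Pr^\sched_\cM(\Diamond\Cause)=0<\Pr^\sched_\cM(\Diamond\Effect)$, since then $\mathsf{tp}=0$. On each domain the denominator is strictly positive, so the objective is a genuine linear-fractional function.

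Minimizing a linear-fractional function with positive denominator over a polytope is solvable in polynomial time via the Charnes--Cooper transformation: substituting $y=x/D(x)$ and $t=1/D(x)$, where $D$ denotes the denominator, turns $\min L(x)/D(x)$ into the linear program $\min L(y)$ subject to $D(y)=1$, $Ay=bt$, $y\ge 0$, $t\ge 0$, where $Ax=b$ encodes the balance equations (2),(3) and $y\ge0$ encodes (1). Because $\cF$ is bounded, its recession cone is trivial, so every feasible point of this LP has $t>0$ and hence corresponds to a genuine $x=y/t\in\cF$ with $D(x)>0$; conversely every such $x$ yields a feasible point. Thus the LP optimum equals the sought infimum, it is attained, and from an optimal vertex $(y^\ast,t^\ast)$ one recovers the optimal frequency vector $x^\ast=y^\ast/t^\ast\in\cF$ and reads off a worst-case MR-scheduler by $\sched(s)(\alpha)=x^\ast_{s,\alpha}/x^\ast_s$. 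Solving the two LPs and recovering the schedulers is polynomial.

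The main obstacle is the correct treatment of the definedness boundaries, i.e.\ ensuring that the infimum over the open domains $\{\mathsf{fn}>0\}$ resp.\ $\{2\mathsf{tp}{+}\mathsf{fp}{+}\mathsf{fn}>0\}$ is attained and is not further lowered by limits approaching the boundary. For $\ratiocov$ this holds because driving $\mathsf{fn}\to 0$ with $\mathsf{tp}$ bounded away from $0$ sends the ratio to $+\infty$, so minimization never pushes toward $\mathsf{fn}=0$; the normalization $\mathsf{fn}=1$ used by Charnes--Cooper captures exactly the relevant region. For $\fscore$ the denominator equals $1$ on the normalized slice, and the minimal value (in particular the value $0$, attained whenever some scheduler reaches the effect while avoiding the cause) is found directly by the LP. Verifying these positivity/attainment facts, together with the boundedness of $\cF$ that makes the Charnes--Cooper linearization exact, is the only non-routine part; the remaining steps are standard linear-programming manipulations.
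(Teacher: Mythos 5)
Your proof is correct, but it takes a genuinely different route from the paper's. The paper likewise reduces to assumptions (A1)--(A3), but then expresses both quantities as extremal \emph{ratios of reachability probabilities} for disjoint sets of terminal states: $\ratiocov(\Cause)=\ratio{\min}{\cM}{U,V}$ with $U=\{\effcov\}$, $V=\{\effunc\}$, and $\fscore(\Cause)=2/(\ratio{\max}{\cM}{U,V}+2)$ with $U=\{\noefffp,\effunc\}$, $V=\{\effcov\}$ (with $\fscore(\Cause)=0$ when the latter ratio is $+\infty$). It then proves a general tool (Theorem~\ref{thm:comp-Q}): these ratio extrema equal the extremal expected accumulated weights $\mathrm{E}^{\min}_{\cN}(\boxplus V)$ resp.\ $\mathrm{E}^{\max}_{\cN}(\boxplus V)$ in a reset-MDP $\cN$ obtained by redirecting every terminal state outside $V$ back to $\init$ and assigning weight $1$ to $U$-states; the correspondence rests on a series identity (Lemma~\ref{second-basic-fact}), and known stochastic-shortest-path results then give polynomial time and worst-case \emph{MD}-schedulers. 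You instead stay inside the frequency polytope $\cF$ already set up for the GPR check, observe that $\mathsf{tp}$, $\mathsf{fp}$, $\mathsf{fn}$ are linear in the frequency variables under (A1)--(A3), and solve the two resulting linear-fractional programs via the Charnes--Cooper transformation. Your route buys reuse of machinery already in place, a uniform treatment of the $\fscore(\Cause)=0$ case (your single LP returns $0$ together with a witnessing scheduler, whereas the paper invokes Lemma~\ref{lem:fscore=0} and a separate scheduler construction for this case), and worst-case MR-schedulers read off by normalizing the optimal frequency vector. The paper's route buys memoryless \emph{deterministic} worst-case schedulers and, more importantly, a ratio-computation technique that is reused later, both for f-score-optimal SPR causes in Markov chains (Theorem~\ref{thm:fscore-opt-MC}) and in the SSP-game argument behind Theorem~\ref{fscore-threshold-poblem-via-stochMPgames}. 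Two points you should spell out rather than assert: the boundedness of $\cF$ (it follows from (A3), since in an EC-free MDP the expected time to termination, and hence every state-action frequency, is uniformly bounded over all schedulers -- this is exactly what makes the Charnes--Cooper linearization exact), and the degenerate case $\ratiocov(\Cause)=+\infty$ arising when no scheduler satisfies $\mathsf{fn}^{\sched}>0$, which in your setting manifests as infeasibility of the corresponding LP and must be reported as $+\infty$.
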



The remainder of this subsection is devoted to the proof of Theorem \ref{ratiocov-fscore-in-PTIME}.
By definition, the value $\covratio(\Cause)$ is the infimum over a quotient of reachability probabilities for disjoint sets of terminal states.
While this is not the case for the \mbox{f-score}, we can express $\fscore(\Cause)$ in terms of the supremum of such a quotient.
More precisely, under assumptions (A1)-(A3) and assuming $\fscore(\Cause)>0$,
we have:
\begin{center}
	$
  \fscore(\Cause) = \frac{2}{X+2}
  \quad \text{where} \quad
  X \ = \ %
    \sup_\sched \, \frac{\Pr^{\sched}_{\cM}(\Diamond \noefffp)
                    + \Pr^{\sched}_{\cM}(\Diamond \effuncov)}
                     {\Pr^{\sched}_{\cM}(\Diamond \effcov)}
$
\end{center}
where $\sched$ ranges over all schedulers with
      $\Pr_{\cM}^\sched (\Diamond \effcov)>0$.
Moreover, $\fscore(\Cause)=0$ iff $\recall(\Cause)=0$ iff there exists a scheduler $\sched$ satisfying $\Pr_{\cM}^\sched(\Diamond \Eff)>0$ and $\Pr_{\cM}^{\sched}(\Diamond \Cause)=0$. 

So, the remaining task
to prove Theorem \ref{ratiocov-fscore-in-PTIME} is a generally applicable technique for computing
extremal ratios of reachability probabilities in MDPs without ECs.

\paragraph*{\bf Max/min ratios of reachability probabilities for disjoint sets of terminal states.}
\label{sec:comp-quotient}
Suppose we are given an MDP $\cM= (S,\Act,P,\init)$ without ECs and disjoint subsets $U,V\subseteq S$  of terminal states. 
Given a scheduler $\sched$ with $\Pr_\cM^\sched(\Diamond V)>0$ we define:
\begin{center}
  $\ratio{\sched}{\cM}{U,V} \ = \
  {\Pr_{\cM}^{\sched}(\Diamond U)}\, / \, {\Pr_{\cM}^{\sched} (\Diamond V)}$
\end{center}
The goal is to provide an algorithm for computing the extremal values:
$\ratio{\min}{\cM}{U,V} = \inf_{\sched} \ratio{\sched}{\cM}{U,V}$
and
$\ratio{\max}{\cM}{U,V} = \sup_{\sched} \ratio{\sched}{\cM}{U,V}$
where $\sched$ ranges over all schedulers with $\Pr_\cM^\sched(\Diamond V)>0$. 
To compute these, we rely on a polynomial reduction to the classical \emph{stochastic shortest 
	 path problem} \cite{BT91}. 
For this, consider the MDP $\cN$ arising from $\cM$
by adding reset transitions
from all terminal states $t \in S \backslash V$ to $\init$.
Thus, exactly the $V$-states are terminal in $\cN$.
$\cN$ might contain ECs, which, however, do not intersect with $V$.
We equip $\cN$ with the weight function that assigns $1$ to all states in $U$ and $0$ to all other states. 
For a scheduler $\tsched$ with $\Pr^{\tsched}_{\cN}(\Diamond V)=1$, 
let $\mathrm{E}^{\tsched}_{\cN}(\boxplus V)$ be the expected accumulated weight until reaching $V$ under $\tsched$.
Let $\mathrm{E}^{\min}_{\cN}(\boxplus V) =  \inf_{\tsched} \mathrm{E}^{\tsched}_{\cN}(\boxplus V)$ and  $\mathrm{E}^{\max}_{\cN}(\boxplus V) =  \sup_{\tsched} \mathrm{E}^{\tsched}_{\cN}(\boxplus V)$, 
where $\tsched$ ranges over all schedulers 
 with
\mbox{$\Pr^{\tsched}_{\cN}(\Diamond V)=1$.}
We can rely on known results \cite{BT91,Alfaro-CONCUR99,LICS18-SSP} to obtain
that both
$\mathrm{E}^{\min}_{\cN}(\boxplus V)$ and $\mathrm{E}^{\max}_{\cN}(\boxplus V)$
are computable in polynomial time.
As $\cN$ has only non-negative weights,
$\mathrm{E}^{\min}_{\cN}(\boxplus V)$ is finite
and a corresponding MD-scheduler with minimal expectation exists.
If $\cN$ has an EC containing at least one $U$-state,
which is the case iff $\cM$ has a scheduler $\sched$
with $\Pr^{\sched}_{\cM}(\Diamond U)>0$ and $\Pr^{\sched}_{\cM}(\Diamond V)=0$,
then
$\mathrm{E}^{\max}_{\cN}(\boxplus V) = +\infty$.
Otherwise, $\mathrm{E}^{\max}_{\cN}(\boxplus V)$
is finite and the maximum is achieved by an MD-scheduler as well.

\begin{restatable}{theorem}{compQ}\label{thm:comp-Q}
  Let $\cM$ be an MDP without ECs and
  $U,V$ disjoint sets of terminal states in $\cM$, and let $\cN$ be as before.
  Then,
  $\ratio{\min}{\cM}{U,V}=\mathrm{E}^{\min}_{\cN}(\boxplus V)$ and
  $\ratio{\max}{\cM}{U,V}=\mathrm{E}^{\max}_{\cN}(\boxplus V)$.
  Thus, both values are computable in polynomial time,
  and there is an MD-scheduler minimizing $\ratio{\sched}{\cM}{U,V}$,
  and an MD-scheduler maximizing $\ratio{\sched}{\cM}{U,V}$ if 
  $\ratio{\max}{\cM}{U,V}$ is finite.
\end{restatable}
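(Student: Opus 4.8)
The plan is to reduce both extremal ratios to the stochastic shortest path (SSP) problem on $\cN$ already set up in the text, and the heart of the argument is a single identity relating a ratio of reachability probabilities in $\cM$ to an expected accumulated weight in $\cN$. The guiding intuition is that a maximal path in $\cN$ decomposes into i.i.d.\ ``rounds'', each round being a run of $\cM$ from $\init$ to a terminal state: a round ending in $V$ stops the process, whereas a round ending in a terminal state outside $V$ triggers a reset to $\init$ and contributes weight $1$ precisely when it passes through $U$.

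First I would make this decomposition precise by a \emph{fresh-restart} construction: given any (possibly history-dependent) scheduler $\sched$ for $\cM$ with $p_V \eqdef \Pr^{\sched}_{\cM}(\Diamond V) > 0$, define a scheduler $\tsched$ for $\cN$ that runs $\sched$ from scratch within each round, discarding the history accumulated before the last reset. Since $\cM$ has no ECs, every round reaches a terminal state almost surely, and since each round ends in $V$ with probability $p_V>0$, the rounds are i.i.d.\ and $\Pr^{\tsched}_{\cN}(\Diamond V)=1$. Writing $p_U \eqdef \Pr^{\sched}_{\cM}(\Diamond U)$ for the expected $U$-weight gathered in one round (a $U$-state, being terminal, is visited at most once along a round and carries weight $1$), a first-round renewal analysis gives the balance equation $W = p_U + (1{-}p_V)\cdot W$ for $W \eqdef \mathrm{E}^{\tsched}_{\cN}(\boxplus V)$, whence $W = p_U/p_V = \ratio{\sched}{\cM}{U,V}$. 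Conversely, a memoryless scheduler for $\cN$ with $\Pr_\cN(\Diamond V)=1$ is automatically restart-consistent and restricts to a memoryless scheduler on $\cM$ realizing the same value.

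With this identity in hand, the extremal equalities follow by combining two directions. On one side, the fresh-restart schedulers form a subclass of the $\cN$-schedulers reaching $V$ almost surely, so $\mathrm{E}^{\min}_{\cN}(\boxplus V) \leq \ratio{\min}{\cM}{U,V}$ and $\mathrm{E}^{\max}_{\cN}(\boxplus V) \geq \ratio{\max}{\cM}{U,V}$. On the other side, I would invoke the SSP results quoted above: when $\mathrm{E}^{\min}_{\cN}(\boxplus V)$ (resp.\ a finite $\mathrm{E}^{\max}_{\cN}(\boxplus V)$) is attained by an MD-scheduler $\tsched^\ast$, this scheduler is memoryless, hence restart-consistent, and its restriction $\sigma$ to $\cM$ satisfies $\Pr^{\sigma}_{\cM}(\Diamond V)>0$ (otherwise $\cN$ would never reach $V$) and $\ratio{\sigma}{\cM}{U,V} = \mathrm{E}^{\tsched^\ast}_{\cN}(\boxplus V)$ by the identity. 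This yields the opposite inequalities and, simultaneously, the claimed MD-scheduler $\sigma$ attaining the optimal ratio. The case $\mathrm{E}^{\max}_{\cN}(\boxplus V)=+\infty$ is treated separately: by the characterization recalled before the theorem it corresponds to an $\cM$-scheduler with $\Pr_{\cM}(\Diamond U)>0$ and $\Pr_{\cM}(\Diamond V)=0$, and mixing such a scheduler with any valid one under a vanishing probability on the latter makes the ratio grow without bound, so $\ratio{\max}{\cM}{U,V}=+\infty$ as well. Polynomial-time computability is then inherited from the polynomial reduction to $\cN$ and the cited SSP algorithms.

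The step I expect to be the main obstacle is the identity $\mathrm{E}^{\tsched}_{\cN}(\boxplus V)=\ratio{\sched}{\cM}{U,V}$ for \emph{history-dependent} $\sched$: one must argue that the restart construction genuinely produces i.i.d.\ rounds, so that the renewal equation is valid, and that the optimum over \emph{all} $\cN$-schedulers is already realized within the restart-consistent subclass — a gap that is bridged only because SSP optima are attained by memoryless schedulers. The remaining bookkeeping (almost-sure termination from EC-freeness, the $U$-state being hit at most once per round, and the $+\infty$ case) is routine.
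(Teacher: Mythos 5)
Your proposal is correct and takes essentially the same route as the paper's proof: the paper likewise establishes the key identity $\mathrm{E}^{\tsched}_{\cN}(\boxplus V)=\ratio{\sched}{\cM}{U,V}$ for the fresh-restart scheduler $\tsched$ built from an arbitrary $\sched$ (computing it via the explicit series $\sum_{n,k} n\, x^n \binom{n+k}{k} q^k p = x/p$ where your renewal equation $W=p_U+(1{-}p_V)W$ is an equivalent, slightly cleaner calculation), and it obtains the reverse inequality exactly as you do, by viewing an optimal MD-scheduler for the SSP problem on $\cN$ as an MD-scheduler for $\cM$. Your explicit mixing argument for the $\ratio{\max}{\cM}{U,V}=+\infty$ case is a detail the paper states without proof, so no gap there either.
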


\begin{proofsketch}[Proof of Theorem \ref{ratiocov-fscore-in-PTIME}]
Using assumptions (A1)-(A3), we obtain that 
$\covratio(\Cause)=\ratio{\min}{\cM}{U,V}$ where
$U=\{\effcov\}$, $V=\{\effunc\}$.
Similarly, with $U=\{\noefffp,\effunc\}$, $V=\{\effcov\}$, we get
$\fscore(\Cause)=0$ if $\ratio{\max}{\cM}{U,V}=+\infty$
and
$\fscore(\Cause)=2/(\ratio{\max}{\cM}{U,V}+2)$ otherwise.
Thus, the claim follows from Theorem \ref{thm:comp-Q}.
\qed
\end{proofsketch}


\subsection{Quality-optimal probability-raising causes}	

\label{sec:opt-PR-causes}

An SPR cause $\Cause$ is called \emph{recall-optimal}
if
$\relcov(\Cause) = \max_C \relcov(C)$ where $C$ ranges over all
SPR causes.
Likewise, \emph{ratio-optimality} resp. \emph{f-score-optimality} of
$\Cause$ means maximality of
$\ratiocov(\Cause)$ resp. 
$\fscore(\Cause)$ among all SPR causes.
Recall-, ratio- and f-score-optimality for
GPR causes are defined accordingly.

%


\begin{restatable}{lemma}{recalloptimalityequalsratiooptimality}
  \label{lemma:recall-opt=ratio-opt}
  Let $\Cause$ be an SPR or a GPR cause.
  Then, $\Cause$ is recall-optimal if and only if $\Cause$ is ratio-optimal.
\end{restatable}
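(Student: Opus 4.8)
The plan is to show that, for a fixed probability-raising cause, the recall and the coverage ratio are two monotonically equivalent reformulations of the same underlying quantity, so that optimizing one automatically optimizes the other. I would write $\mathsf{tp}^{\sched} = \Pr^{\sched}_{\cM}(\Diamond(\Cause \wedge \Diamond \Effect))$ and $\mathsf{fn}^{\sched} = \Pr^{\sched}_{\cM}((\neg \Cause)\until \Effect)$ for the true-positive and false-negative probabilities under $\sched$. By definition $\recall^{\sched}(\Cause) = \mathsf{tp}^{\sched}/(\mathsf{tp}^{\sched} + \mathsf{fn}^{\sched})$ and $\covratio^{\sched}(\Cause) = \mathsf{tp}^{\sched}/\mathsf{fn}^{\sched}$, so dividing numerator and denominator of the recall by $\mathsf{fn}^{\sched}$ yields
$$\recall^{\sched}(\Cause) = \frac{\covratio^{\sched}(\Cause)}{\covratio^{\sched}(\Cause) + 1} = g\bigl(\covratio^{\sched}(\Cause)\bigr), \qquad g(x) = \frac{x}{x+1}.$$
The function $g$ is a continuous, strictly increasing bijection from $[0, +\infty]$ onto $[0,1]$ with $g(+\infty) = 1$; this is exactly the value the conventions assign to $\recall^{\sched}$ when $\mathsf{fn}^{\sched} = 0$ (recall $=1$) and to $\covratio^{\sched}$ (value $+\infty$), so the identity $\recall^{\sched} = g(\covratio^{\sched})$ holds without exception.

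The next step would be to check that the two quality measures are optimized over the same set of schedulers. Recall is taken as an infimum over all $\sched$ with $\Pr^{\sched}_{\cM}(\Diamond \Effect) = \mathsf{tp}^{\sched} + \mathsf{fn}^{\sched} > 0$, whereas the coverage ratio is taken over all $\sched$ with $\mathsf{fn}^{\sched} > 0$ (finite value) together with those $\sched$ having $\mathsf{fn}^{\sched} = 0$ but $\Pr^{\sched}_{\cM}(\Diamond \Cause) > 0$ (value $+\infty$). Here I would invoke the probability-raising property: since every SPR cause is a GPR cause (Lemma \ref{lemma:strict-implies-global}), any $\sched$ with $\Pr^{\sched}_{\cM}(\Diamond \Cause) > 0$ satisfies $\Pr^{\sched}_{\cM}(\Diamond \Effect \mid \Diamond \Cause) > 0$, hence $\mathsf{tp}^{\sched} > 0$ and a fortiori $\Pr^{\sched}_{\cM}(\Diamond\Effect) > 0$. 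Conversely any $\sched$ reaching the effect either has $\mathsf{fn}^{\sched} > 0$, or, if $\mathsf{fn}^{\sched} = 0$, has $\mathsf{tp}^{\sched} > 0$ and thus reaches $\Cause$. Therefore both index sets coincide with $R = \{\sched : \Pr^{\sched}_{\cM}(\Diamond \Effect) > 0\}$, and on $R$ the identity $\recall^{\sched} = g(\covratio^{\sched})$ holds everywhere.

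Since $g$ is continuous and increasing, the infimum commutes with $g$, giving $\inf_{\sched \in R} g(\covratio^{\sched}(\Cause)) = g\bigl(\inf_{\sched \in R} \covratio^{\sched}(\Cause)\bigr)$; the inequality $\geq$ is monotonicity, and the reverse follows from continuity along an infimizing sequence. Combining this with the previous two steps yields the key identity
$$\recall(\Cause) = g\bigl(\covratio(\Cause)\bigr) \qquad \text{for every PR cause } \Cause.$$
Because $g$ is strictly increasing, for any two causes $\Cause$ and $C$ in the relevant class we have $\recall(\Cause) \geq \recall(C)$ if and only if $\covratio(\Cause) \geq \covratio(C)$. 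As recall-optimality and ratio-optimality both range over the same class of causes (all SPR causes, respectively all GPR causes), $\Cause$ maximizes the recall in its class if and only if it maximizes the coverage ratio, which is the claim. The only delicate point I anticipate is the careful bookkeeping around the boundary cases $\mathsf{fn}^{\sched} = 0$ and the value $+\infty$ — precisely where the probability-raising hypothesis is needed to make the two domains of optimization agree and the monotone identity hold verbatim.
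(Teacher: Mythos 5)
Your proof is correct and follows essentially the same route as the paper's: both arguments rest on the monotone correspondence between the coverage ratio $q/p$ and the recall $q/(p{+}q)$ (your $g(x)=x/(x{+}1)$), applied first at the level of individual schedulers and then pushed through the infimum over schedulers and the maximization over causes. Your bookkeeping is somewhat more explicit than the paper's — you verify that the two infima range over the same set of schedulers using the probability-raising property and handle the boundary values $0$ and $+\infty$ directly, whereas the paper tacitly restricts to schedulers with positive true-positive probability and presupposes that a worst-case scheduler exists — but the underlying idea is the same.
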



\paragraph*{\bf Recall-  and ratio-optimal SPR causes.}
\label{sec:opt-SPR-causes}
The techniques of Section \ref{sec:check-SPR-condition}
yield an algorithm for generating a canonical SPR
cause with optimal recall and ratio.
To see this, let $\cC$ denote the set of states that constitute a singleton SPR cause. The canonical cause $\CanCause$ is defined as the set of states $c\in \cC$ such that there is a scheduler $\sched$ with $\Pr_{\cM}^\sched((\neg \cC) \Until c)>0$. Obviously, $\cC$ and $\CanCau$ are computable in polynomial time.

\begin{restatable}{theorem}{ThmCanCau}%
  \label{thm:optimality-of-canonical-SPR-cause}
  If $\cC\not= \varnothing$ then
  $\CanCause$ is a ratio- and recall-optimal SPR cause.
\end{restatable}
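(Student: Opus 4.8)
The plan is to show that the canonical cause $\CanCause$ is itself a valid SPR cause and then to argue that it simultaneously maximizes recall and coverage ratio among all SPR causes. By Lemma \ref{lemma:recall-opt=ratio-opt}, recall-optimality and ratio-optimality coincide, so it suffices to establish one of them; I would target recall-optimality and deduce ratio-optimality for free. First I would verify that $\CanCause$ is an SPR cause: by construction every $c \in \CanCause$ lies in $\cC$, so the singleton $\{c\}$ satisfies (S), which gives the strict probability-raising inequality for $c$ under every scheduler reaching $c$; moreover $\CanCause$ is exactly the subset of $\cC$ that is accessible without first passing through $\cC$, which is precisely the minimality condition (M) phrased for the set $\cC$. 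The remark after Definition \ref{def-PR-causes} is the relevant tool here: whenever a set satisfies the strict probability-raising condition, restricting it to its $(\neg C)\Until c$-reachable states yields a genuine SPR cause.

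The heart of the argument is the optimality claim, and I would prove it by showing that $\CanCause$ dominates every other SPR cause $C$ on the true-positive side while minimizing false negatives. The key structural observation is that $\CanCause$ is the \emph{largest} SPR cause in a suitable sense: every state $c$ belonging to any SPR cause $C$ must itself satisfy the SPR condition as a singleton (since (S) is required for each $c \in C$ individually), hence $c \in \cC$. Therefore every SPR cause $C$ satisfies $C \subseteq \cC$, and after accounting for (M), the accessible part of $C$ is contained in $\CanCause$. Since recall measures $\Pr^{\min}_{\cM}(\Diamond C \mid \Diamond \Effect)$, a larger cause set can only cover more effect-paths: any path reaching $\Effect$ via some $c \in C$ also reaches $\Effect$ via a state of $\CanCause \supseteq C$. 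I would make this monotonicity precise by observing that $\Diamond C$ implies $\Diamond \CanCause$ as path events (up to the minimality-irrelevant states), so the conditional probability can only increase, uniformly over all schedulers, whence the infimum defining $\recall(\CanCause)$ dominates that of $\recall(C)$.

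The main obstacle I anticipate is handling the quantifier order in the recall definition correctly: recall is an infimum over schedulers of a conditional probability, and I must ensure that the set inclusion $C \subseteq \CanCause$ translates into $\recall(C) \le \recall(\CanCause)$ despite the worst-case scheduler for $C$ possibly differing from that for $\CanCause$. The safe route is the pointwise (per-scheduler) inequality $\recall^{\sched}(C) \le \recall^{\sched}(\CanCause)$, valid for every scheduler $\sched$ with $\Pr^{\sched}_{\cM}(\Diamond \Effect) > 0$, which follows from the event inclusion $\Diamond C \subseteq \Diamond \CanCause$ and the fact that both conditional probabilities share the same denominator $\Pr^{\sched}_{\cM}(\Diamond \Effect)$. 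Taking the infimum over $\sched$ on both sides then preserves the inequality and yields $\recall(C) \le \recall(\CanCause)$. A minor point to check is that the states in $\CanCause \setminus C$ genuinely contribute only non-negative mass and do not destroy the SPR property of $\CanCause$, but this is already guaranteed by the first paragraph.

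Finally, once recall-optimality of $\CanCause$ is established, Lemma \ref{lemma:recall-opt=ratio-opt} immediately upgrades it to ratio-optimality, completing the proof. I would close by noting that the polynomial-time computability of $\cC$ and $\CanCause$ (already observed in the text) makes this an effective construction, though that is not part of the formal statement. The whole argument hinges on the two facts that (i) SPR-membership is characterized state-by-state, forcing every SPR cause into $\cC$, and (ii) enlarging a cause set can only improve recall pointwise; the combination pins $\CanCause$ as the optimum.
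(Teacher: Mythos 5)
Your overall route is the paper's: establish that $\CanCause$ is an SPR cause via the state-by-state characterization of SPR causes, then prove a per-scheduler domination of the recall and pass to the infimum (the paper derives ratio-optimality by a second pathwise inequality, while you invoke Lemma~\ref{lemma:recall-opt=ratio-opt}; that difference is immaterial). However, there is a genuine gap in your optimality step. You justify the crucial event inclusion ``every path visiting $C$ also visits $\CanCause$'' by the set inclusion $C \subseteq \CanCause$, and that inclusion is false in general. From $C \subseteq \cC$ and condition (M) for $C$ you only get that each $c \in C$ is reachable without passing through \emph{other states of $C$}; it can still happen that every path to some $c \in C$ passes through a state of $\cC \setminus C$ first, in which case $c \notin \CanCause$. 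Concretely, if $c_2 \in \cC$ is reachable only through $c_1 \in \cC$, then $C = \{c_2\}$ is an SPR cause satisfying (M), yet $c_2 \notin \CanCause$, so $C \not\subseteq \CanCause$. The fact you need is nevertheless true, and the paper proves it pathwise rather than by set inclusion: if a path $\pi$ visits some state of $C \subseteq \cC$, consider the \emph{first} $\cC$-state $c^\ast$ on $\pi$; the prefix of $\pi$ up to $c^\ast$ avoids $\cC \setminus \{c^\ast\}$ and thus witnesses $c^\ast \in \CanCause$, so $\pi$ visits $\CanCause$ no later than it visits $C$. This yields both of the paper's pathwise implications (that $\Diamond(C \wedge \Diamond\Effect)$ implies $\Diamond(\CanCause \wedge \Diamond\Effect)$, and that $(\neg\CanCause)\Until\Effect$ implies $(\neg C)\Until\Effect$), after which your per-scheduler inequalities and the infimum argument go through verbatim.

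A secondary soft spot: you treat ``$C$ is an SPR cause $\Rightarrow C \subseteq \cC$'' and ``$\CanCause$ satisfies (S)'' as immediate because ``(S) is required for each $c \in C$ individually.'' That is not the singleton condition: for a set $C$, condition (S) at $c$ conditions on the event $(\neg C)\Until c$ and only constrains schedulers that reach $c$ while avoiding the rest of $C$, whereas $c \in \cC$ requires the probability-raising inequality conditioned on $\Diamond c$ for \emph{every} scheduler reaching $c$ at all. These two requirements differ exactly when $c$ can also be reached through other cause states, since the behaviour after such visits enters one conditional but not the other. This equivalence is precisely the content of Lemma~\ref{lemma:charac-strict-SPR-causes}, which the paper's proof cites as its first step; your proof should invoke (or prove) that lemma explicitly instead of folding it into a one-line remark, because as stated your justification misreads what (S) asserts.
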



\setlength{\intextsep}{1pt}
\begin{wrapfigure}[6]{R}{0.3\textwidth}
	\resizebox{0.3\textwidth}{!}{

\begin{tikzpicture}[->,>=stealth',shorten >=1pt,auto ,node distance=0.5cm, thick]
	\node[scale=1, state] (s0) {$\init$};
	\node[scale=1, state] (eff) [right = 1 of s0] {$\eff$};
	\node[scale=1, state] (noeff) [below =0.9 of s0] {$\noeff$};
	\node[scale=1, state] (s1) [below =1 of eff] {$s_1$};
	\node[scale=1, state] (s2) [right = 1 of s1] {$s_2$};
	
	\draw[<-] (s0) --++(-0.55,0.55);
	\draw (s0) -- (eff) node[above, pos=0.5,scale=1] {$1/4$};
	\draw (s0) -- (noeff) node[left , pos=0.5,scale=1] {$1/4$};
	\draw (s0) -- (s1) node[right, pos=0.5,scale=1] {$1/2$};
	\draw (s1) -- (noeff) node[pos=0.5,scale=1] {$1/4$};
	\draw (s1) -- (s2) node[below, pos=0.5,scale=1] {$3/4$};
	\draw (s2) -- (eff) node[right, pos=0.5,scale=1] {$1$};
\end{tikzpicture}}
\end{wrapfigure}
	\label{ex:CanCau-not-optimal}
		This is not true for the f-score.
		To see this, 
		Consider the Markov chain 
		on the right hand side.
		We have
		$\CanCau = \{s_1\}$, 
		which has 
		$\precision(\CanCau) = \frac{3}{4}$ and $\recall(\CanCau) = \frac{3}{8}/(\frac{1}{4}+\frac{3}{8}) = \frac{3}{5}.$
		But the SPR cause $\{s_2\}$ has better f-score as its precision is $1$ and it has the same recall as $\CanCau$.


\paragraph*{\bf F-score-optimal SPR cause.}
From Section \ref{sec:comp-acc-measures-fixed-cause}, we see that f-score-optimal SPR causes in MDPs can be computed 
in polynomial space by computing the f-score for all potential SPR causes one by one in polynomial time (Theorem \ref{ratiocov-fscore-in-PTIME}).
As the space can be reused after each computation, this results in polynomial space.
For Markov chains, we can do better
 and compute an f-score-optimal SPR cause in polynomial time.
 via a polynomial reduction to the stochastic shortest path problem:

\begin{restatable}{theorem}{fscoreMC}
  \label{thm:fscore-opt-MC}
  In Markov chains that have SPR causes, an f-score-optimal SPR cause can be computed in polynomial time.
\end{restatable}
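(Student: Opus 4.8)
The plan is to reduce f-score maximization over SPR causes in a Markov chain to a single minimal-ratio-of-reachability-probabilities computation, which Theorem~\ref{thm:comp-Q} solves in polynomial time via the stochastic shortest path problem. The first step is to simplify both the class of SPR causes and the f-score itself. Since a Markov chain admits only the trivial scheduler $\sched$, we have $\fscore(\Cause)=\fscore^{\sched}(\Cause)$, and by the strong Markov property at the first hitting time of $\Cause$ one gets $\Pr^{\sched}_{\cM}(\Diamond\Effect\mid(\neg\Cause)\Until c)=\Pr_{\cM,c}(\Diamond\Effect)=:p_c$. Hence the SPR condition at a state $c$ collapses to $p_c>p_\init$ with $p_\init:=\Pr_{\cM}(\Diamond\Effect)$, so if $\cC$ denotes the set of such states then the SPR causes are exactly the nonempty subsets of $\cC$ satisfying (M), and the (M)-reduction of any $\Cause\subseteq\cC$ again lies in $\cC$. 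Using $\mathsf{tp}+\mathsf{fn}=\Pr_{\cM}(\Diamond\Effect)$ (a constant independent of $\Cause$) and $\mathsf{tp}+\mathsf{fp}=\Pr_{\cM}(\Diamond\Cause)$, a short computation yields $\fscore(\Cause)=2\mathsf{tp}/(\Pr_{\cM}(\Diamond\Cause)+p_\init)$, so that maximizing the f-score is equivalent to minimizing $X(\Cause)=(\mathsf{fp}+\mathsf{fn})/\mathsf{tp}$.

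Next I would encode the selection of $\Cause$ inside an MDP $\cN$. After the MEC-quotient and (A1)--(A3) preprocessing of Section~\ref{sec:check-GPR} (which preserves all quality values, removes end components, and justifies $\cM=\wminMDP{\cM}{\Cause}$ and $p_c=\Pr^{\min}_{\cM,c}(\Diamond\Effect)$), I replace every candidate state $c\in\cC$ by a binary choice: a \emph{continue} action following the original chain, and a \emph{stop} action moving with probability $p_c$ to a fresh terminal state $\effcov$ and with probability $1{-}p_c$ to $\noefffp$. Every memoryless deterministic scheduler of $\cN$ then corresponds bijectively to the subset $\Cause\subseteq\cC$ at whose states it stops; the first-visit probabilities $\Pr_{\cM}((\neg\Cause)\Until c)$ become plain reachability probabilities in $\cN$, whence $\Pr_{\cN}(\Diamond\effcov)=\mathsf{tp}$, $\Pr_{\cN}(\Diamond\noefffp)=\mathsf{fp}$, and the uncovered-effect mass $\Pr_{\cN}(\Diamond\effuncov)=\mathsf{fn}$.

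I would then apply Theorem~\ref{thm:comp-Q} with $U=\{\noefffp,\effuncov\}$ and $V=\{\effcov\}$, for which $\ratio{\sched}{\cN}{U,V}=X(\Cause)$. The theorem computes $\ratio{\min}{\cN}{U,V}$ in polynomial time together with a minimizing MD-scheduler $\sched^{\ast}$. Its stop set $\Cause^{\ast}$ is nonempty, because the minimization ranges over schedulers with $\Pr_{\cN}(\Diamond V)>0$, and after (M)-reduction it is an f-score-optimal SPR cause, with $\fscore(\Cause^{\ast})=2/(\ratio{\min}{\cN}{U,V}+2)$ in accordance with Theorem~\ref{ratiocov-fscore-in-PTIME}; the whole procedure runs in polynomial time.

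The main obstacle is the coupling of the first-visit probabilities $\Pr_{\cM}((\neg\Cause)\Until c)$ to the entire set $\Cause$: the f-score is not additively separable over candidate states, so a naive search over the exponentially many subsets of $\cC$ is infeasible. The decisive idea is that the stop/continue gadget linearizes this selection, since a single MD-scheduler encodes a whole subset and converts first-visit probabilities into ordinary reachability probabilities, thereby collapsing the combinatorial problem into one ratio computation. A secondary point to verify carefully is that $\cN$ is end-component-free, as required by Theorem~\ref{thm:comp-Q}: any end component built from \emph{continue} actions would be a bottom SCC of $\cM$, i.e.\ a zero-effect trap already eliminated by the MEC-quotient, whereas the \emph{stop} action always exits; together with the fact that the minimum is attained by an MD-scheduler, this ensures that $\sched^{\ast}$ indeed describes a legitimate cause set.
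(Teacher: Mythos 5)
Your proposal is correct and follows essentially the same route as the paper: the stop/continue gadget at candidate states $c\in\cC$ with fresh terminals $\effcov,\noefffp$ is exactly the paper's MDP $\cK$ (actions $\gamma/\alpha$), and your appeal to Theorem~\ref{thm:comp-Q} with $U=\{\noefffp,\effuncov\}$, $V=\{\effcov\}$ amounts to the paper's explicit reset construction and reduction to the stochastic shortest path problem, with the same identification of MD-schedulers with cause candidates, the same ratio characterization $\fscore=2/(X+2)$, and the same final (M)-reduction.
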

  
\begin{proof}
  We regard the given Markov chain 
  $\cM$ as an MDP with a singleton action set $\Act=\{\alpha\}$.
  As $\cM$ has SPR causes, the set $\cC$
  of states that constitute a singleton SPR cause
  is nonempty.
  We may assume that $\cM$ has no non-trivial (i.e., cyclic)
  bottom strongly connected components as we may collapse them.
  Let $w_c$ $=$ $\Pr_{\cM,c}(\Diamond \Effect)$.  
  We switch from $\cM$ to a new MDP $\cK$ with state space
  $S_{\cK}=S \cup \{\effcov,\noefffp\}$ with 
  fresh states $\noefffp$ and $\effcov$
  and the action set
  $\Act_{\cK}=\{\alpha,\gamma \}$.
  The MDP $\cK$ arises from $\cM$
  by adding (i) for each state $c\in \cC$ a fresh state-action pair
  $(c,\gamma)$ with $P_{\cK}(c,\gamma,\effcov)=w_c$
  and $P_{\cK}(c,\gamma,\noefffp)=1{-}w_c$
  and (ii) 
  reset transitions to $\init$ with action label $\alpha$
  from the new state $\noefffp$ and
  all terminal states of $\cM$,
  i.e.,
  $P_{\cK}(\noefffp,\alpha,\init)=1$ and $P_{\cK}(s,\alpha,\init)=1$ 
  for $s \in \Effect$ or if $s$ is a terminal
  non-effect state of $\cM$.
  So, exactly $\effcov$ is terminal in $\cK$, 
  and $\Act_{\cK}(c)=\{\alpha,\gamma\}$ for $c\in \cC$, while
  $\Act_{\cK}(s)=\{\alpha\}$ 
  for all other states $s$.
  Intuitively, taking action $\gamma$ in state $c \in \cC$ selects
  $c$ to be a cause state.
   The states in $\Effect$ represent uncovered effects in $\cK$,
   while $\effcov$ stands for covered effects.
  
  We assign weight $1$ to all states 
  in $U= \Effect \cup \{\noefffp\}$ 
  and weight $0$ to all other states of~$\cK$. 
  Let $V=\{\effcov\}$.
  Then, $f= \mathrm{E}^{\min}_{\cK}(\boxplus V)$
  and  an MD-scheduler $\sched$ for $\cK$ such that
  $\mathrm{E}^{\sched}_{\cK}(\boxplus V)=f$ are computable in
  polynomial time.
  Let $\cC_{\gamma}$ 
  denote the set of states $c\in \cC$ where
  $\sched(c)=\gamma$ 
  and let $\Cause$ be the set of states $c\in \cC_{\gamma}$
  where $\cM$ has a path satisfying
  $(\neg \cC_{\gamma}) \Until c$.
  Then, $\Cause$ is an SPR cause of $\cM$.
  With arguments as in Section \ref{sec:comp-acc-measures-fixed-cause}
  we obtain
  $\fscore(\Cause)=2/(f{+}2)$. 
  It remains to show that $\Cause$ is f-score-optimal.
  Let $C$ be an arbitrary SPR cause.
  Then, $C \subseteq \cC$. Let $\tsched$ be the MD-scheduler
  for $\cK$ that schedules $\gamma$ in $C$ and
  $\alpha$ for all other states of $\cK$.
  Then, $\fscore(C)=2/(f^{\tsched}{+}2)$
  where $f^{\tsched}=\mathrm{E}^{\tsched}_{\cK}(\boxplus V)$.
  Hence, $f \leqslant f^{\tsched}$, which yields 
  $\fscore(\Cause) \geqslant \fscore(C)$.
	\qed
\end{proof}

The na\"ive adaption of 
the construction presented in the proof of Theorem \ref{thm:fscore-opt-MC}
for MDPs would yield a stochastic game structure where the objective of
one player is to minimize the expected accumulated weight until reaching
a target state.
Although algorithms for \emph{stochastic shortest path (SSP) games} are known
\cite{patek1999stochastic}, 
they rely on assumptions on the game structure which would not
be satisfied here.
However, for the threshold problem \emph{SPR-f-score} where inputs are an MDP $\cM$, $\Effect$ and $\vartheta \in \Rat_{\geq 0}$ and the task is to decide the existence of an SPR cause whose f-score exceeds $\vartheta$, we can establish a polynomial reduction to SSP games, which yields an $\NP \cap \coNP$ upper bound:

\begin{restatable}{theorem}{fscoreNPcoNP}
	\label{fscore-threshold-poblem-via-stochMPgames}
	The decision problem SPR-f-score is in $\NP\cap \coNP$.
\end{restatable}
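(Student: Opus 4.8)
The plan is to reduce the decision problem SPR-f-score to the value problem of a stochastic shortest path (SSP) game in which one player selects the cause set while the antagonistic player plays the worst-case scheduler, and then to exploit positional determinacy of SSP games to place the problem in $\NP\cap\coNP$. Concretely, by the analysis behind Theorem~\ref{ratiocov-fscore-in-PTIME}, for a fixed SPR cause $\Cause$ and after the transformation to (A1)-(A3) we have $\fscore(\Cause)=2/(\ratio{\max}{\wminMDP{\cM}{\Cause}}{U,V}+2)$ with $U=\{\noefffp,\effunc\}$ and $V=\{\effcov\}$, while $\fscore(\Cause)=0$ holds exactly in the recall-zero case, which is separately checkable in polynomial time (so we may assume $\vartheta>0$). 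Since $r\mapsto 2/(r+2)$ is strictly decreasing, maximizing the f-score over all SPR causes is the same as minimizing $\ratio{\max}{\wminMDP{\cM}{\Cause}}{U,V}$ over all admissible $\Cause$. Writing $\rho=2/\vartheta-2$, an SPR cause of f-score exceeding $\vartheta$ exists iff $\min_{\Cause}\ratio{\max}{\wminMDP{\cM}{\Cause}}{U,V}<\rho$. This is a genuine min--max: the outer minimizer is the cause selector and the inner supremum over schedulers is its antagonist.

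Next I would build a single SSP game $\cG$ capturing this min--max for all cause candidates at once, combining the reset construction behind Theorem~\ref{thm:comp-Q} with the selection gadget of the Markov-chain MDP $\cK$ from Theorem~\ref{thm:fscore-opt-MC}. Starting from $\wminMDP{\cM}{\cC}$, every state $c\in\cC$ becomes a decision state of the minimizing \emph{cause selector}: the move ``activate $c$'' leads through a $\gamma$-gadget to $\effcov$ with probability $w_c$ and to $\noefffp$ with probability $1{-}w_c$, while ``omit $c$'' returns $c$, with its original enabled actions, to the maximizing \emph{scheduler}; all remaining action choices belong to the scheduler. As in Theorem~\ref{thm:comp-Q} I put weight $1$ on the states of $U=\{\noefffp,\effunc\}$ and $0$ elsewhere, declare $V=\{\effcov\}$ the only target, and add resets to $\init$ from every terminal state outside $V$, so that the expected accumulated weight until $\effcov$ realizes the ratio of Theorem~\ref{thm:comp-Q} and the game value equals $\min_{\Cause}\ratio{\max}{\wminMDP{\cM}{\Cause}}{U,V}$.

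The correspondence step needs care. A positional strategy of the cause selector picks a single subset $\Cause\subseteq\cC$ consistently across all visits, and passing to its part reachable without previously meeting $\Cause$ enforces (M) without altering the covered/uncovered effect probabilities, exactly as in the proof of Theorem~\ref{thm:fscore-opt-MC}; using the characterization that the SPR causes are precisely these (M)-reachable subsets of $\cC$, the value of $\cG$ coincides with the minimal $\ratio{\max}{\wminMDP{\cM}{\Cause}}{U,V}$ over SPR causes, so ``value $<\rho$'' is equivalent to a positive SPR-f-score instance. For the complexity I would invoke that SSP games are positionally determined for both players with rational values of polynomially bounded bit-size~\cite{patek1999stochastic}: to certify ``value $<\rho$'' one guesses the selector's positional strategy and computes the scheduler's best response, an ordinary SSP problem on the induced MDP solvable in polynomial time, giving membership in $\NP$; symmetrically, guessing the scheduler's positional strategy and solving the minimizer's induced SSP MDP certifies ``value $\geq\rho$'', putting the complement in $\NP$ and hence SPR-f-score in $\coNP$.

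I expect the main obstacle to be discharging the structural (properness) hypotheses required by~\cite{patek1999stochastic}: the raw game violates them because the scheduler can refuse to visit any activated cause and thereby force $\ratio{\max}{\wminMDP{\cM}{\Cause}}{U,V}=+\infty$, i.e.\ value $+\infty$. I would handle this by identifying the infinite-value configurations---which correspond precisely to f-score $0$ and to an unattained minimum---as immediate negative instances, and restricting to the pruned sub-game in which the selector possesses a proper strategy (guaranteed as soon as it activates a cause that is reached under every scheduler). On this sub-game positional determinacy and the polynomial bound on the value hold, and the guess-and-verify arguments yield the claimed $\NP\cap\coNP$ bound.
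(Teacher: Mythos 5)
Your overall skeleton---a two-player stochastic shortest path game in which a cause selector plays against a worst-case scheduler, followed by MD determinacy from \cite{patek1999stochastic}---is indeed the paper's skeleton, but the concrete game you build cannot support the determinacy argument, and the obstacle you defer to the last paragraph is exactly where the proposal fails. Because you keep the f-score as a ratio objective and realize $\ratio{\max}{\cM}{U,V}$ via the reset construction of Theorem~\ref{thm:comp-Q}, your game inevitably contains non-terminating plays: the play ends only on reaching $V=\{\effcov\}$, and whether $V$ is ever reached is controlled by the scheduler, not by the selector. Hence the minimizing player in general has \emph{no} proper strategy; moreover, improper plays need not have infinite cost (a play cycling forever through true-negative terminals and resets accumulates total weight $0$). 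So both standing assumptions of \cite{patek1999stochastic} fail, and positional determinacy, optimal MD strategies, and with them your certificates (most critically the $\coNP$ one: guess a positional scheduler strategy and claim it witnesses ``value $\geq \rho$'' against \emph{all} selector strategies) are unsupported; note that $\NP$-membership alone never needed the game, since one can guess a cause, check the SPR property in polynomial time and compute its f-score via Theorem~\ref{ratiocov-fscore-in-PTIME}. Your proposed repair---pruning to the sub-game where the selector has a proper strategy, i.e.\ activates a set $C$ with $\Pr^{\min}_{\cM}(\Diamond C)>0$---is not sound: there are positive instances in which \emph{every} SPR cause $C$ has $\Pr^{\min}_{\cM}(\Diamond C)=0$ and yet $\fscore(C)>\vartheta$, because the schedulers avoiding $C$ also avoid $\Effect$ entirely and therefore do not enter the infimum defining the f-score (by Lemma~\ref{lem:fscore=0}, the f-score collapses to $0$ only if some scheduler reaches $\Effect$ while avoiding $C$). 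Your pruned game would wrongly declare such instances negative.

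The paper's proof resolves exactly these two issues, and both resolutions are absent from your proposal. First, instead of a ratio objective it uses the linearization of Lemma~\ref{lem:reformulating_fscore}: $\fscore(C)>\vartheta$ iff
\begin{center}
$2(1{-}\vartheta)\Pr^{\sched}_{\cM}(\Diamond C \wedge \Diamond \Effect) \; - \; \vartheta \Pr^{\sched}_{\cM}(\neg \Diamond C \wedge \Diamond \Effect) \; - \; \vartheta \Pr^{\sched}_{\cM}(\Diamond C \wedge \neg \Diamond \Effect) \; > \; 0$
\end{center}
for all schedulers with $\Pr^{\sched}_{\cM}(\Diamond \Effect)>0$. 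This turns the threshold question into an expected terminal-weight objective with threshold $0$, so no reset transitions are needed and all weights live on terminal states. Second, properness is manufactured on the MDP side before the game is built: after the MEC-quotient, a further transformation (the MDP $\cK$, whose fresh initial state forces every scheduler to begin by leaving the sub-MDP inside which $\Effect$ can be avoided) guarantees that every scheduler reaches $\Effect$ with positive probability. This step, which has no counterpart in your construction, is what makes the universally quantified linear inequality equivalent to the f-score threshold and what makes the resulting game reach a terminal state almost surely under \emph{every} strategy pair, so that \cite{patek1999stochastic} applies verbatim and yields the $\NP$ and $\coNP$ certificates. To rescue your ratio-based route you would have to prove positional determinacy for improper total-reward games with resets from scratch; with the linearization this is unnecessary.
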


\begin{proofsketch}
	Given an MDP $\cM$, $\Effect$, and $\vartheta$, 
	we construct an SSP game 
       \cite{patek1999stochastic} 
after a series of model transformations ensuring 
(i)	that 	terminal states are reached almost surely and
(ii)	 that $\Effect$ is reached with positive probability under all schedulers. 
Condition (i) is established by a standard  MEC-quotient construction. To establish condition (ii),  we provide a construction that forces schedulers to leave an initial sub-MDP in which the minimal probability to reach $\Effect$ is $0$. This construction -- unlike  the MEC-quotient -- affects the possible combinations of probability values with which terminal states and potential cause states can be reached, but the existence of an SPR cause satisfying the f-score-threshold condition is not affected.

The underlying idea of the construction of the game shares similarities with the MDP constructed in the proof of Theorem \ref{thm:fscore-opt-MC}:  Player $0$ takes the role to select potential cause states while   player $1$ takes the role of a scheduler in the transformed MDP. 
	Using the  observation  that for each cause $C$, $\fscore(C)>\vartheta$ iff
\begin{center}
  \hspace*{.3cm}
  $2(1{-}\vartheta)\Pr^{\sched}_{\cM}(\Diamond C \land \Diamond \Effect) - \vartheta \Pr^{\sched}_{\cM}(\neg \Diamond C \land \Diamond \Effect)
	-\vartheta \Pr^{\sched}_{\cM}(\Diamond C \land \neg \Diamond \Effect)  >  0$
  \hfill ($\times$)
\end{center}
 for all schedulers $\sched$ for $\cM$ with $\Pr^{\sched}_{\cM}(\Diamond \Effect)>0$, weights are assigned to $\Effect$-states and other terminal states depending on whether player $0$ has chosen to include a state to the cause beforehand. 
  In the resulting SSP game, both players have 
optimal MD-strategies \cite{patek1999stochastic}. Given such strategies $\zeta$ for player $0$ and $\sched$ for player $1$, the resulting expected accumulated weight agrees with the left-hand side of ($\times$) 
  when considering $\sched$ as a scheduler for the transformed MDP and the cause $C$ induced by the states that $\zeta$  chooses to belong to the cause.
  So, player $0$ wins the constructed game iff an SPR cause with f-score above the threshold $\vartheta$ exists.
  The existence of optimal 
 MD-strategies for both players  allows us to decide this threshold problem in NP and coNP. 
 \qed 
 \end{proofsketch}


\paragraph*{\bf Optimality and threshold constraints for GPR causes.}
Computing optimal GPR causes for either quality measure can be done in polynomial space by considering all cause candidates, checking the GPR condition in polynomial space (Theorem \ref{thm:checking-GPR-in-poly-space}) and computing the corresponding quality measure in polynomial time (Section \ref{sec:comp-acc-measures-fixed-cause}).
However, we show that no polynomial-time algorithms can be expected as the corresponding threshold problems 
are $\NP$-hard.
Let GPR-covratio (resp. GPR-recall, GPR-f-score) denote the decision
problems: Given $\cM,\Effect$ and $\vartheta \in \Rat$, decide
whether there exists a GPR cause with
coverage ratio (resp. recall, f-score) at least $\vartheta$.

\begin{restatable}{theorem}{measureNPhardness}
  \label{GPR-recall-NPhard-and-in-PSPACE}
  The problems GPR-covratio, GPR-recall and GPR-f-score
  are   NP-hard and belong to PSPACE.
  For Markov chains, all three problems are NP-complete.
  NP-hardness even holds
  for tree-like Markov chains.
\end{restatable}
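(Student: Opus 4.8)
For the $\PSPACE$ upper bound over arbitrary MDPs I would nondeterministically guess a cause candidate $\Cause \subseteq S \setminus \Effect$ (of polynomial size), verify the GPR condition for $\Cause$ in polynomial space using Theorem~\ref{thm:checking-GPR-in-poly-space}, and then compute $\recall(\Cause)$, $\ratiocov(\Cause)$ resp. $\fscore(\Cause)$ in polynomial time (Theorem~\ref{ratiocov-fscore-in-PTIME}, together with the known poly-time computation of conditional reachability probabilities for the recall), finally comparing the value against $\vartheta$. This places each problem in $\NPSPACE$, hence in $\PSPACE$ by Savitch's theorem. For Markov chains the very same scheme works, but now the GPR condition is checkable in polynomial time (as noted at the start of Section~\ref{sec:check}), so guessing $\Cause$ and verifying both the GPR property and the quality threshold takes polynomial time; this yields membership in $\NP$.

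\textbf{NP-hardness via a tree-like gadget.} For the lower bound I would reduce from SUBSET-SUM: given positive integers $a_1,\dots,a_n$ and a target $T$, build a tree-like Markov chain whose root $\init$ branches to $n{+}1$ cause candidates together with a direct effect leaf and a direct non-effect leaf, the latter two used purely to tune the global effect probability $W=\Pr_\cM(\Diamond\Effect)$. One distinguished candidate $c_0$ is ``good'': it reaches an effect state with probability $w_0 > W$. Each remaining candidate $c_i$ is ``bad'': it reaches an effect state with probability exactly $W/2 < W$ and is itself reached with probability $q_i$ chosen proportional to $a_i$ (after a common rescaling keeping all transition probabilities rational of polynomial bit-size and making the branching distribution sum to one). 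Since every candidate is a direct child of $\init$, no candidate is reachable through another, so condition~(M) holds automatically and the chain is tree-like.

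\textbf{Correctness of the reduction.} The key facts I would prove are: (i) any GPR cause must contain $c_0$, since a set of only bad candidates has conditional effect probability $W/2 < W$ and hence violates (GPR); (ii) for $\Cause=\{c_0\}\cup I$ with $I\subseteq\{c_1,\dots,c_n\}$, the GPR inequality rearranges to a budget constraint $\sum_{i\in I} a_i < B$, where the budget $B$ encodes the probability surplus $q_0(w_0{-}W)$ supplied by $c_0$; and (iii) using $w_i\equiv W/2$, the true-positive mass and the cause-reaching probability are affine in $\sum_{i\in I}a_i$, so $\recall$, $\ratiocov$ and, via the identity $\fscore = 2\,\mathsf{tp}/(\Pr_\cM(\Diamond\Cause)+W)$, also $\fscore$ are each strictly monotone in $\sum_{i\in I}a_i$. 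Choosing $\vartheta$ so that ``measure $\ge\vartheta$'' translates into ``$\sum_{i\in I}a_i \ge T$'' and calibrating the budget to $B = T{+}\tfrac12$, the constraints $\sum_{i\in I}a_i \ge T$ and $\sum_{i\in I}a_i < T{+}\tfrac12$ force $\sum_{i\in I}a_i = T$ for integer $a_i$. Hence a GPR cause meeting the threshold exists iff the SUBSET-SUM instance is positive. The same gadget serves all three measures (for $\ratiocov$ through Lemma~\ref{lemma:recall-opt=ratio-opt}), giving NP-hardness even for tree-like Markov chains.

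\textbf{Main obstacle and assembly.} The delicate point is that $W$ is a global quantity determined by the whole chain, whereas the knapsack-style weights and the threshold should be chosen freely; I would resolve this self-consistency by keeping $w_0$, $q_0$ and the two direct-leaf probabilities as free parameters and solving the resulting small linear system for a prescribed target value of $W$, the common rescaling of the $q_i$ keeping everything rational and of polynomial size. The second subtlety is the sign condition making $\fscore$ monotone in $\sum_{i\in I}a_i$: the derivative with respect to $\sum_{i\in I}q_i$ is constant and equals (up to scaling) $Wq_0 + W^2 - 2q_0w_0$, whose positivity I secure by taking $q_0$ sufficiently small. Combining these pieces, all three problems are NP-hard and lie in $\PSPACE$ for MDPs, and $\NP$-membership together with the hardness yields NP-completeness for Markov chains.
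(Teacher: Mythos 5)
Your proposal is correct, and its upper-bound half is exactly the paper's argument (guess a cause, check GPR in polynomial space for MDPs resp.\ polynomial time for Markov chains, then check the threshold in polynomial time); the hardness half, however, takes a genuinely different route. The paper reduces from the \emph{knapsack} problem: its two inequalities $\sum_{i\in I}A_i < A$ and $\sum_{i\in I}B_i \geq B$ involve independent weight families, which the paper encodes by letting both the branching probabilities $p_i = 2(a_i{+}b_i)$ and the effect probabilities $w_i = b_i/p_i$ carry information, so that the GPR condition becomes exactly the $A$-constraint and the recall threshold exactly the $B$-constraint, with no equality-forcing needed; the f-score then requires a second, considerably more delicate calibration of parameters $p_i, w_i, \delta, \lambda$. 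You reduce from SUBSET-SUM with a single weight family carried only in the branching probabilities $q_i$ (bad candidates having uniform effect probability $W/2$), forcing $\sum_{i\in I}a_i = T$ by sandwiching the GPR bound $\sum_{i\in I}a_i < T+\tfrac12$ against the threshold bound $\sum_{i\in I}a_i \geq T$. What your variant buys is uniformity: one gadget serves all three measures, with the f-score handled via the identity $\fscore = 2\,\mathsf{tp}/(\Pr_{\cM}(\Diamond\Cause) + \Pr_{\cM}(\Diamond\Effect))$ and the constant-sign derivative $W^2 + Wq_0 - 2q_0w_0 > 0$ for small $q_0$ (this computation is right); what the paper's encoding buys is independence of the two constraints, avoiding your self-consistency bookkeeping for $W$ and the sign condition. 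Two points where you sit at the same (slightly informal) level of rigor as the paper: cause sets containing terminal non-effect leaves are implicitly ruled out (adding such a leaf leaves the true-positive mass unchanged and only lowers the conditional effect probability, so it can be dropped without loss of generality), and your appeal to Lemma~\ref{lemma:recall-opt=ratio-opt} for covratio is really an appeal to the monotone transformation $\covratio \geq \vartheta \iff \recall \geq \vartheta/(1{+}\vartheta)$ underlying its proof, which is also how the paper phrases the interreducibility.
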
  

\begin{proofsketch}
  NP-hardness is established via a polynomial reduction from
  the knapsack problem.
  Membership to NP for Markov chains resp. to $\PSPACE=\NPSPACE$
  for MDPs is obvious as we can guess nondeterministically
  a cause candidate and then check (i) the GPR condition in
  polynomial time (Markov chains) resp. polynomial space (MDPs)
  and (ii) the threshold condition in polynomial time
  (see Section \ref{sec:comp-acc-measures-fixed-cause}).
\qed  
\end{proofsketch}


\section{Conclusion}

\label{sec:conclusion}

The goal of the paper was to formalize the PR principle in MDPs and related quality notions for PR causes and to study fundamental algorithmic problems for them. We considered the strict (local) and the global view. Our results indicate that GPR causes are more general and leave more flexibility to achieve better accuracy, while algorithmic reasoning about SPR causes is simpler.

\vspace*{0.2ex}
\noindent
{\it Existential definition of SPR/GPR causes.}
The proposed definition of PR causes relies on a universal quantification
over all relevant schedulers.
However, another approach could be via existential quantification, i.e. there is a scheduler $\sched$ such that \eqref{GPR} or resp. \eqref{SPR} hold.
The resulting notion of causality yields fairly the same results (up to $\Pr^{\max}_{\cM,c}(\Diamond \Effect)$ instead of $\Pr^{\min}_{\cM,c}(\Diamond \Effect)$ etc).
A canonical existential SPR cause can be defined in analogy to the universal case and shown to be recall- and ratio-optimal (cf.~Theorem \ref{thm:optimality-of-canonical-SPR-cause}).
The problem to find an existential f-score-optimal SPR cause is even simpler and solvable in polynomial time as the construction presented in the proof of Theorem \ref{thm:fscore-opt-MC}
can be adapted for MDPs (thanks to the simpler nature of $\max_C \sup_{\sched} \fscore^{\sched}(C)$ compared to
$\max_C \inf_{\sched} \fscore^{\sched}(C)$).
However, NP-hardness 
for the existence of GPR causes with threshold constraints for the quality carries over to the existential definition (as NP-hardness holds for Markov chains, Theorem \ref{GPR-recall-NPhard-and-in-PSPACE}).

\vspace*{0.2ex}
\noindent
{\it Non-strict inequality in the PR conditions.}
Our notions of PR causes are in line with the classical approach
of probability-raising causality 
in literature
with strict inequality in the PR condition, with the consequence that
causes might not exist (see Example \ref{ex:no-prob-raising-cause}). 
The switch to a relaxed
definition of PR causes with non-strict inequality
seems to be a minor change that identifies more sets as 
causes.
Indeed, the proposed algorithms for checking the SPR and GPR condition
(Section \ref{sec:check})
can easily be modified for the relaxed definition.
While the relaxed definition leads to a questionable
notion of causality
(e.g., $\{\init\}$ would always be a 
recall- and ratio-optimal
SPR cause under the relaxed definition), it could be useful
in combination with other side constraints.
E.g., requiring the relaxed PR condition for all
schedulers that reach a cause state with positive probability and
the existence of a scheduler
where the PR condition with strict inequality holds
might be a useful alternative definition
that agrees with Def.~\ref{def:GPR} for Markov chains.

\vspace*{0.2ex}
\noindent
{\it Relaxing the minimality condition (M).}
As many causality notions of the literature include some minimality constraint, we included condition (M). However, (M) could be dropped without affecting the algorithmic results presented here. This can be useful when the task is to identify components or agents that are responsible for the occurrences of undesired effects. In these cases the cause candidates are fixed (e.g., for each agent $i$, the set of states controlled by agent $i$), but some of them might violate (M).

\vspace*{0.5ex}

\noindent
{\it Future directions}
include  PR causality when causes and effects are path properties and the investigation of other quality measures for PR causes inspired by other indices for binary classifiers used in machine learning or customized for applications of cause-effect reasoning in MDPs. 
More sophisticated notions of probabilistic backward causality and considerations on PR causality with external interventions as in Pearl's do-calculus \cite{Pearl09} are left for future work.

\vspace*{0.5ex}

\noindent\textbf{Acknowledgments}
We would like to thank Simon Jantsch and Clemens Dubslaff for their helpful comments and feedback on the topic of causality in MDPs.

\bibliographystyle{splncs04}
\bibliography{lit}

\newpage

\begin{appendix}

\section{Notation and preliminary results used in the appendix}\label{app:notation}

\subsection{Basic notations}

Let $\cM=(S,\Act,P,\init)$ be an MDP. 
For $\alpha\in \Act$ and $U\subseteq S$, 
$P(s,\alpha,U)$ is a shortform notation for $\sum_{u\in U} P(s,\alpha,u)$.
If $\pi$
is a finite path in $\cM$ then $\last(\pi)$ denotes the last state of $\pi$.
That is, if $\pi = s_0\, \alpha_0 \, s_1 \, \alpha_1 \ldots \alpha_{n-1}\, s_n$
then $\last(\pi)=s_n$.
If $\sched$ is a scheduler then $\pi$ is said to be a $\sched$-path if
$\sched(s_0\, \alpha_0 \ldots \alpha_{i-1} \, s_i)(\alpha_i) >0$
for each $i\in \{0,\ldots,n{-}1\}$.

When dealing with model transformations, we often attach the name of the MDP
as a subscript for the state space, action set, transition probability
function
and the initial state. That is, we then write $S_{\cM}$ for $S$,
$\Act_{\cM}$ for $\Act$, $P_{\cM}$ for $P$ and $\init_{\cM}$ for $\init$.

\begin{notation}[Residual scheduler]
Given an MDP $\cM=(S,\Act,P,\init)$, a scheduler $\sched$, and a path $\pi=s_0\, \alpha_0 \, \dots \, \alpha_{n-1} \, s_n$, the \emph{residual scheduler}
$\residual{\sched}{\pi}$  of $\sched$ after $\pi$ is defined by
\[
\residual{\sched}{\pi}(\zeta) = \sched (\pi\circ \zeta)
\]
for all finite paths $\zeta$ starting in $s_n$. Here, $\pi\circ \zeta$ denotes the concatenation of the paths $\pi$ and $\zeta$.
\Ende
\end{notation}
Intuitively speaking, $\residual{\sched}{\pi}$ behaves like $\sched$ after $\pi$ has already been seen.

\subsection{MR-scheduler in MDPs without ECs}
The following preliminary lemma is folklore (see, e.g., \cite[Theorem 9.16]{Kallenberg20}) and used in the proof of Lemma \ref{lem:MR-sufficient-global-cause} in the following form.

\begin{lemma}[From general schedulers to MR-schedulers in MDPs without ECs]
	\label{lem:from-general-to-MR-schedulers}
	Let $\cM=(S,\Act,P,\init)$ be an MDP without end components.
	Then, for each scheduler  $\sched$ for $\cM$,
	there exists an MR-scheduler $\tsched$ such that:
	\begin{center}
		$\Pr^{\sched}_{\cM}(\Diamond t) \ = \
		\Pr^{\tsched}_{\cM}(\Diamond t)$
		\ \ for each terminal state $t$.
	\end{center}
\end{lemma}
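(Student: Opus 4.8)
The plan is to argue via expected frequencies (occupation measures) of state-action pairs, which is the standard occupation-measure proof behind this folklore fact. Since $\cM$ has no end components, under \emph{any} scheduler $\sched$ almost surely only finitely many transitions are taken before a terminal state is reached: the set of state-action pairs visited infinitely often forms an end component almost surely, and the absence of end components forces this set to be empty. Consequently, for every state-action pair $(s,\alpha)$ the expected frequency $\freq{\sched}{s,\alpha}$ (the expected number of times $(s,\alpha)$ is taken) is finite, and so is $\freq{\sched}{s}=\sum_{\alpha\in\Act(s)}\freq{\sched}{s,\alpha}$.

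First I would record that these frequencies, viewed as a nonnegative vector $x_{s,\alpha}=\freq{\sched}{s,\alpha}$ with $x_s=\sum_{\alpha}x_{s,\alpha}$, satisfy the balance equations
\[
  x_{\init} \ = \ 1 + \!\!\!\sum_{(t,\alpha)\in\SA}\!\!\! x_{t,\alpha}\,P(t,\alpha,\init),
  \qquad
  x_{s} \ = \!\!\!\sum_{(t,\alpha)\in\SA}\!\!\! x_{t,\alpha}\,P(t,\alpha,s) \quad (s\neq\init).
\]
Moreover, for a terminal state $t$ the event $\Diamond t$ coincides (up to a null set) with visiting $t$ exactly once, so its probability equals the expected inflow into $t$:
\[
  \Pr^{\sched}_{\cM}(\Diamond t) \ = \ [\,t=\init\,] \ + \!\!\!\sum_{(s,\alpha)\in\SA}\!\!\! \freq{\sched}{s,\alpha}\cdot P(s,\alpha,t).
\]
Hence it suffices to build an MR-scheduler $\tsched$ with $\freq{\tsched}{s,\alpha}=\freq{\sched}{s,\alpha}$ for every state-action pair.

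Next I would define the memoryless randomized scheduler by $\tsched(s)(\alpha)=\freq{\sched}{s,\alpha}/\freq{\sched}{s}$ for each state $s$ with $\freq{\sched}{s}>0$, choosing $\tsched(s)$ arbitrarily otherwise. Writing $y_{s,\alpha}=\freq{\tsched}{s,\alpha}$, both $x$ and $y$ are nonnegative solutions of the balance equations above, and both satisfy the proportionality $z_{s,\alpha}=\tsched(s)(\alpha)\cdot z_{s}$: for $y$ this is automatic for an MR-scheduler, while for $x$ it holds by the definition of $\tsched$ (and trivially when $\freq{\sched}{s}=0$, where all terms vanish).

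The crux, and the step I expect to be the main obstacle, is to deduce $x=y$ from these shared properties. Substituting the proportionality into the balance equations collapses both systems to the \emph{same} linear system $z_s = [\,s=\init\,] + \sum_{t} z_t\,Q(t,s)$ on the aggregated variables $z_s=\sum_\alpha z_{s,\alpha}$, where $Q(t,s)=\sum_{\alpha}\tsched(t)(\alpha)\,P(t,\alpha,s)$ is the transition matrix of the Markov chain induced by $\tsched$. Because $\cM$ has no end components, this induced chain has no non-trivial recurrent classes among the non-terminal states, so the restriction of $Q$ to the non-terminal states is transient and $I-Q$ is invertible there; the system therefore has a unique solution. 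Consequently $x_s=y_s$ for all $s$, and the proportionality yields $x_{s,\alpha}=y_{s,\alpha}$ for all state-action pairs. Plugging this equality into the inflow formula for $\Pr^{\sched}_{\cM}(\Diamond t)$ and $\Pr^{\tsched}_{\cM}(\Diamond t)$ gives the claimed agreement on every terminal state $t$.
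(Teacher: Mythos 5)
Your overall route is sound, and it is essentially the standard occupation-measure argument behind this folklore fact. For comparison: the paper gives no proof of this lemma at all — it cites it as folklore via \cite[Theorem 9.16]{Kallenberg20} — and the correspondence you re-derive (nonnegative solutions of the balance equations $\leftrightarrow$ expected frequencies of arbitrary schedulers $\leftrightarrow$ expected frequencies of MR-schedulers) is exactly what the paper later invokes in Section \ref{sec:check-GPR}. So your proposal supplies a self-contained proof of the cited result rather than an alternative to a proof the paper actually contains. The individual steps are correct: the inflow identity $\Pr^{\sched}_{\cM}(\Diamond t) = [t{=}\init] + \sum_{(s,\alpha)\in\SA}\freq{\sched}{s,\alpha}P(s,\alpha,t)$ for terminal $t$ (valid because terminal states are visited at most once), the definition $\tsched(s)(\alpha)=\freq{\sched}{s,\alpha}/\freq{\sched}{s}$, the proportionality of both frequency vectors to $\tsched$, and the uniqueness argument: a recurrent class of non-terminal states in the chain induced by $\tsched$, together with the actions scheduled with positive probability, would be strongly connected and closed, hence an end component of $\cM$; so all non-terminal states are transient, $I-Q_N$ is invertible, and the collapsed linear system has a unique solution, forcing the two frequency vectors to coincide.

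There is, however, one genuine (though repairable) gap: the inference ``almost surely only finitely many transitions are taken, consequently every expected frequency $\freq{\sched}{s,\alpha}$ is finite.'' Almost-sure finiteness of a nonnegative random variable does not imply finiteness of its expectation (take $N$ with $\Pr(N=2^n)=2^{-n}$), and your construction really needs finiteness: otherwise $\tsched$ is ill-defined and $x$ is not a real-valued solution of the linear system. The standard repair for finite EC-free MDPs is a uniform geometric bound rather than an almost-sure statement: since $\Pr^{\max}_{\cM,s}(\Box\neg T)=0$ for the set $T$ of terminal states (the maximum is attained by an MD-scheduler, whose induced finite chain would otherwise have a recurrent class of non-terminal states, i.e., an EC) and $S$ is finite, finite-horizon value iteration for this safety objective yields $k\in\Nat$ and $\delta>0$ such that from every state and under every scheduler a terminal state is reached within $k$ steps with probability at least $\delta$; by submultiplicativity the probability of surviving $nk$ steps is at most $(1{-}\delta)^n$ uniformly over schedulers, which bounds all expected frequencies. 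Alternatively, you may simply cite this transience property, as it is part of the very result of Kallenberg that the paper points to. With that step filled in, your proof is complete.
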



\begin{lemma}[Convex combination of MR-schedulers]
	\label{convex-comb-MR-schedulers}
	Let $\cM$ be an MDP without end components and let
	$\sched$ and $\tsched$ be schedulers for $\cM$ and
	$\lambda$ a real number in the open interval $]0,1[$.
	Then, there exists an MR-scheduler $\usched$
	such that:
	\begin{center}
		$\Pr^{\usched}_{\cM}(\Diamond t) \ = \
		\lambda \cdot \Pr^{\sched}_{\cM}(\Diamond t) \ + \
		(1{-}\lambda) \cdot \Pr^{\tsched}_{\cM}(\Diamond t)$ 
	\end{center}
	for each terminal state $t$.
\end{lemma}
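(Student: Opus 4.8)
The plan is to reduce the statement to the linear (flow/frequency) characterization of schedulers in EC-free MDPs, which is exactly the tool that turns a convex combination of two schedulers' behaviours into a single \emph{memoryless randomized} scheduler. First I would record that, since $\cM$ has no end components, every scheduler reaches a terminal state almost surely within a finite expected number of steps; consequently the expected frequencies $\freq{\sched}{s,\alpha}$ and $\freq{\tsched}{s,\alpha}$ of all state-action pairs $(s,\alpha)\in\SA$ are finite, and each of the two vectors $\freq{\sched}{\cdot}$ and $\freq{\tsched}{\cdot}$ is a nonnegative solution of the balance equations (the analogues of the constraints (1)--(3)) by the standard characterization for EC-free MDPs, see \cite[Theorem 9.16]{Kallenberg20}.

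Next I would exploit convexity. The constraints defining admissible frequency vectors---nonnegativity together with the flow-balance equalities---are all linear, so their solution set is convex. Hence the pointwise convex combination $x \eqdef \lambda\,\freq{\sched}{\cdot} + (1{-}\lambda)\,\freq{\tsched}{\cdot}$ is again a nonnegative solution of the balance equations. Applying the converse direction of the same characterization \cite[Theorem 9.16]{Kallenberg20}, there exists an MR-scheduler $\usched$ realizing these frequencies, i.e. $\freq{\usched}{s,\alpha}=x_{s,\alpha}$ for every $(s,\alpha)\in\SA$. This is the single step that genuinely uses EC-freeness: a naive ``toss a coin at the start and then follow $\sched$ or $\tsched$'' scheduler has the right terminal probabilities but is not memoryless, and it is precisely the frequency characterization that launders it into an MR-scheduler.

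Finally I would transfer the identity from frequencies back to reachability probabilities. Because every terminal state $t$ is visited at most once, its expected visit count equals the probability of reaching it and is given by the inflow $\Pr^{\sched}_{\cM}(\Diamond t)=\sum_{(s,\alpha)\in\SA}\freq{\sched}{s,\alpha}\cdot P(s,\alpha,t)$, and likewise for $\tsched$ and $\usched$. Since this expression is linear in the frequency vector, substituting $\freq{\usched}{\cdot}=x$ yields $\Pr^{\usched}_{\cM}(\Diamond t)=\lambda\,\Pr^{\sched}_{\cM}(\Diamond t)+(1{-}\lambda)\,\Pr^{\tsched}_{\cM}(\Diamond t)$ for every terminal $t$, as required. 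I do not expect a serious obstacle: the only points needing care are the finiteness of the frequencies (guaranteed by the absence of ECs) and the bookkeeping that turns the inflow into a reachability probability for terminal states, both of which are routine. Alternatively, one could first replace $\sched,\tsched$ by MR-schedulers via Lemma \ref{lem:from-general-to-MR-schedulers}, but this does not simplify matters, since convex-combining two MR-schedulers still requires the frequency argument in order to stay memoryless.
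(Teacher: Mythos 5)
Your proof is correct and takes essentially the same route as the paper's: both form the convex combination of the expected state-action frequencies of $\sched$ and $\tsched$ and then use the frequency characterization of MR-schedulers in EC-free MDPs (\cite[Theorem 9.16]{Kallenberg20}) to realize that combination, reading off the claim at the terminal states since their expected visit count equals their reachability probability. The only cosmetic difference is that the paper first normalizes $\sched,\tsched$ to MR-schedulers and writes the realizing scheduler explicitly as $\usched(s)(\alpha)=f_{s,\alpha}/f_s$, whereas you invoke the converse direction of the characterization abstractly.
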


\begin{proof}
	Thanks to Lemma \ref{lem:from-general-to-MR-schedulers}
	we may suppose that $\sched$ and $\tsched$ are MR-schedulers.
	Let
	$$
	f_* \ \ = \ \
	\lambda \cdot \freq{\sched}{*} \ + \ (1{-}\lambda)\cdot \freq{\tsched}{*}
	$$
	where $*$ stands for a state or a state-action pair in $\cM$.
	Let $\usched$ be an MR-scheduler
	defined by
	$$
	\usched(s)(\alpha)
	\ \ = \ \ \frac{f_{s,\alpha}}{f_s}
	$$
	for each non-terminal state $s$ where $f_s>0$ and
	each action $\alpha \in \Act(s)$.
	If $f_s=0$ then $\usched$ selects an arbitrary distribution
	over $\Act(s)$.
	
	Using 
	Lemma \ref{lem:from-general-to-MR-schedulers}
	we then obtain
	$f_* = \freq{\usched}{*}$ where $*$ ranges over all states
	and state-action pairs in $\cM$.
	But this yields:
	\begin{eqnarray*}
		\Pr^{\usched}_{\cM}(\Diamond t) & = & f_t \ \ = \ \
		\lambda \cdot \freq{\sched}{t} \ + \
		(1{-}\lambda) \cdot \freq{\tsched}{t}
		\\[1ex]
		& = &   
		\lambda \cdot \Pr^{\sched}_{\cM}(\Diamond t) \ + \
		(1{-}\lambda) \cdot \Pr^{\tsched}_{\cM}(\Diamond t)
	\end{eqnarray*}
	for each terminal state $t$.
	\qed
\end{proof}

\begin{notation}[Convex combination of schedulers]
	\label{notation:convex-comb-schedulers}  
	{\rm  
		Let $\cM,\sched,\tsched,\lambda$ be as in
		Lemma \ref{convex-comb-MR-schedulers}.
		Then, the notation $\lambda \sched \oplus (1{-}\lambda)\tsched$
		will be used to denote any MR-scheduler $\usched$ as in
		Lemma \ref{convex-comb-MR-schedulers}.
		\Ende
	}
\end{notation}

\subsection{MEC-quotient}\label{app:MEC-quotient}

We now recall the definition of the MEC-quotient, which is a standard concept for the analysis of MDPs \cite{deAlfaro1997}. More concretely, we use a modified version with an additional trap state as in \cite{LICS18-SSP} that serves to mimic behaviors inside an end component of the original MDP.

\begin{definition}[MEC-quotient of an MDP]
	\label{def:MEC-contraction} 
		Let $\cM = (S,\Act,P,\init)$ be an MDP with end components.  
		Let $\cE_1,\ldots,\cE_k$ be the MECs of $\cM$.
		We may suppose without loss of generality that
		the enabled actions of the states 
		are pairwise disjoint, i.e., whenever $s_1,s_2$ are states
		in $\cM$ with $s_1\not= s_2$ then
		$\Act_{\cM}(s_1)\cap \Act_{\cM}(s_2)=\varnothing$.
		This permits to consider $\cE_i$ as a subset of $\Act$.
		Let $U_i$ denote the set of states that belong to $\cE_i$
		and let $U=U_1\cup \ldots \cup U_k$.
		
		The MEC-quotient of $\cM$ is the MDP $\cN = (S',\Act',P',\init')$
		and the function $\iota : S \to S'$ are defined as follows.
		\begin{itemize}
			\item
			The state space $S'$ is $S \setminus U \cup \{s_{\cE_1},\ldots,s_{\cE_k},\bot\}$
			where $s_{\cE_1},\ldots,s_{\cE_k},\bot$ are pairwise distinct fresh states.
			\item
			The function $\iota$ is given by $\iota(s)=s$ if $s\in S \setminus U$
			and $\iota(u)=s_{\cE_i}$ if $u\in U_i$.
			\item
			The initial state of $\cN$ is $\init' = \iota(\init)$.
			\item
			The action set $\Act'$ is $\Act \cup \{\tau\}$ where
			$\tau$ is a fresh action symbol.
			\item
			The set of actions enabled in state $s\in S'$ of $\cN$
			and the transition probabilities are defined as follows:
			\begin{itemize}
				\item
				If $s$ is a state of $\cM$ that does not belong to an MEC of $\cM$
				(i.e., $s\in S \cap S'$) then 
				then $\Act_{\cN}(s)=\Act_{\cM}(s)$ and
				$P'(s,\alpha,s')=P(s,\alpha,\iota^{-1}(s'))$
				for all $s'\in S'$ and $\alpha \in \Act_{\cM}(s)$.
				\item
				If $s=s_{\cE_i}$ is a state representing MEC $\cE_i$ of $\cM$
				then (recall that we may view $\cE_i$ as a set of actions):   
				$$
				\Act_{\cN}\bigl(s_{\cE_i}\bigr) \ = \
				\bigcup_{u\in U_i} (\Act_{\cM}(u) \setminus \cE_i)
				\cup \{\tau\}
				$$
				The $\tau$-action stands for the deterministic transition to
				the fresh state $\bot$, i.e.:
				$$
				P'(s_{\cE_i},\tau,\bot)=1
				$$
				Suppose now that $u\in U_i$ and $\alpha \in \Act_{\cM}(u) \setminus \cE_i$. 
				Then, we set $P'(s_{\cE_i},\alpha,s')=P(u,\alpha,\iota^{-1}(s'))$
				for all $s'\in S'$.   
				\item
				The state $\bot$ is terminal, i.e.,
				$\Act_{\cN}(\bot)=\varnothing$.
			\end{itemize}
		\end{itemize}
\end{definition}
		
Thus, each terminal state of $\cM$ is terminal in its MEC-quotient $\cN$ too.
		Vice versa, every terminal state of $\cN$ is either a terminal state of
		$\cM$ or $\bot$.
		Moreover, $\cN$ has no end components, which implies that
		under every scheduler
		$\tsched$ for $\cN$, a terminal state will be reached with probability 1.

In the main paper (Section \ref{sec:check-GPR}), 
we use the notation $\noeffbot$ rather than $\bot$.


\begin{lemma}[Correspondence of an MDP and its MEC-quotient]
	\label{lem:MEC-contraction-terminal-states-final} 
	Let $\cM$ be an MDP and $\cN$ its MEC-quotient.
	Then, for each scheduler $\sched$ for $\cM$ there is a scheduler
	$\tsched$ for $\cN$ such that
	\begin{equation}
		\label{contraction-property}
		\text{$\Pr^{\sched}_{\cM}(\Diamond t) \ = \
			\Pr^{\tsched}_{\cN}(\Diamond t)$ \
			for each terminal state $t$ of $\cM$}
		\tag{$\dagger$}  
	\end{equation}  
	and vice versa.
	Moreover, if \eqref{contraction-property} holds then
	$\Pr^{\tsched}_{\cN}(\Diamond \bot)$ equals the probability
	for $\sched$ to generate an infinite path in $\cM$ that
	eventually enters and stays forever in an end component.
\end{lemma}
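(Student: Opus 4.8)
The plan is to construct the required schedulers explicitly in both directions and to read off the \emph{moreover} part from a probability-accounting argument. Write $U = U_1 \cup \dots \cup U_k$ for the union of the MEC state sets and recall that $\iota$ collapses each $U_i$ to $s_{\cE_i}$ and is the identity on $S \setminus U$. Note first that no terminal state of $\cM$ can lie in $U$ (a terminal state has no enabled actions and hence cannot belong to a strongly connected sub-MDP with at least one state-action pair), so the terminal states of $\cN$ are exactly the terminal states of $\cM$ together with $\bot$. This is the bookkeeping that makes the two terminal-state sums comparable.

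I would start with the backward direction, which is the easier one. Given $\tsched$ for $\cN$, I build $\sched$ for $\cM$ by letting $\sched$ imitate $\tsched$ on the common part $S \setminus U$ (the transition probabilities agree there by definition of the quotient) and by resolving visits to a MEC as follows. Whenever the $\cM$-run enters $\cE_i$, the corresponding $\cN$-run sits in $s_{\cE_i}$, and I consult $\tsched$ there. If $\tsched$ chooses $\tau$, then $\sched$ stays inside $U_i$ forever using only internal actions of $\cE_i$ (possible since $\cE_i$ is strongly connected and its internal actions are always enabled), thereby reaching no terminal state. If $\tsched$ chooses an exit action $\alpha \in \Act(u) \setminus \cE_i$, then $\sched$ first navigates inside $U_i$ to $u$ — which succeeds almost surely by strong connectivity — and then takes $\alpha$, matching the successor distribution $P(u,\alpha,\cdot)$ of the corresponding $\cN$-step (a successor inside $U_i$ simply returns the $\cN$-run to $s_{\cE_i}$). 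Since internal navigation never leaves $U_i$ and hence never hits a terminal state, and every terminal-hitting transition is copied faithfully, $\Pr^{\sched}_{\cM}(\Diamond t) = \Pr^{\tsched}_{\cN}(\Diamond t)$ for each terminal $t$ of $\cM$.

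For the forward direction I would proceed symmetrically: $\tsched$ copies $\sched$ on $S \setminus U$, and at $s_{\cE_i}$ it must reproduce the \emph{exit behaviour} of $\sched$ inside $\cE_i$, that is, for each external state $s'$ the probability that $\sched$ first leaves $U_i$ into $s'$, together with the probability that $\sched$ never leaves $U_i$, which has to be mapped onto the $\tau$-transition to $\bot$. The decisive point is that from the collapsed state $s_{\cE_i}$ every non-internal action of every $u \in U_i$ is directly available, so no internal navigation is needed; by choosing, on each visit to $s_{\cE_i}$, a suitable distribution over these exit actions together with $\tau$, one can realise the required distribution over external successors and over $\bot$. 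The main obstacle is exactly this matching step: an arbitrary history-dependent $\sched$ may enter, wander through, and re-enter $U_i$ in an involved way, and one must argue that after collapsing only the relative exit frequencies of the non-internal state-action pairs survive, so that they — and not the internal navigation — determine into which external state a $U_i$-episode resolves. This is the standard content underlying the MEC-quotient construction \cite{deAlfaro1997,LICS18-SSP}, and it yields $\Pr^{\tsched}_{\cN}(\Diamond t) = \Pr^{\sched}_{\cM}(\Diamond t)$ for all terminal $t$.

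Once both directions are in place, the \emph{moreover} part is immediate and does not depend on the particular construction. Since $\cN$ has no end components, $\tsched$ reaches a terminal state almost surely, so $\sum_{t}\Pr^{\tsched}_{\cN}(\Diamond t) + \Pr^{\tsched}_{\cN}(\Diamond \bot) = 1$, the sum ranging over terminal states $t$ of $\cM$. In $\cM$ a run almost surely either reaches a terminal state or eventually enters and stays forever inside some end component, so $\sum_{t}\Pr^{\sched}_{\cM}(\Diamond t) + p = 1$, where $p$ denotes the probability of the latter event. If \eqref{contraction-property} holds, the two terminal sums coincide, and subtracting gives $\Pr^{\tsched}_{\cN}(\Diamond \bot) = p$, which is exactly the asserted identity.
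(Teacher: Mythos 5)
Your proposal is correct, and for the two scheduler constructions it takes essentially the same route as the paper: copy the scheduler on $S\setminus U$, translate a $\tau$-choice at $s_{\cE_i}$ into remaining inside $\cE_i$ forever (the paper uses an MD-scheduler $\usched$ with $\usched(u)\in\cE_i$ for this), and translate the exit behaviour of a MEC-episode of $\sched$ into a distribution over the exit actions enabled at $s_{\cE_i}$, with $\tau$ receiving the probability of never leaving. Two differences are worth recording. First, the matching step you defer to the literature in the forward direction is precisely what the paper proves explicitly: it defines, for each $\cN$-path $\rho$, the set $\Pi_\rho$ of $\cM$-paths that collapse to $\rho$, sets $\tsched(\rho)(\beta)$ to the accumulated $\sched$-probability of leaving the MEC via $\beta$ over $\Pi_\rho$, and verifies $p^{\tsched}_\rho=\sum_{\pi\in\Pi_\rho}p^{\sched}_\pi$ by induction on the length of $\rho$. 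Since the lemma \emph{is} the standard MEC-quotient correspondence, appealing to \cite{deAlfaro1997} here is close to circular, so this induction is the one ingredient a complete write-up would still need; similarly, your ``commit-then-navigate'' behaviour in the backward direction has to be realized by a path-dependent scheduler, e.g.\ by flipping the coin for each exit option at the state where that option is enabled, as in the paper's proof of Theorem~\ref{thm:MRscheduler_lift} (the paper's own backward construction is equally terse on this point, and your navigation step is otherwise more explicit than the paper's). Second, your treatment of the ``moreover'' part genuinely differs from, and improves on, the paper's: the paper reads the identity for $\Pr^{\tsched}_{\cN}(\Diamond\bot)$ off its particular construction, whereas your accounting argument --- terminal states are reached almost surely in the EC-free $\cN$, a run of $\cM$ almost surely either terminates or eventually stays inside an end component, then subtract the equal terminal masses --- establishes the claim for \emph{every} pair of schedulers satisfying \eqref{contraction-property}, which is exactly the universally quantified form in which the lemma asserts it.
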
  

\begin{proof}
	Given a scheduler $\tsched$ for $\cN$, we pick an
	MD-scheduler $\usched$ such that
	such that $\usched(u) \in \cE_i$ for each $u\in U_i$.
	Then, the corresponding scheduler $\sched$ for $\cM$ behaves as
	$\tsched$ as long as $\tsched$ does not choose the
	$\tau$-transition to $\bot$. As soon as $\tsched$ schedules $\tau$
	then $\sched$ behaves as $\usched$ from this moment on.
	
	Vice versa, if we are given a scheduler $\sched$ for $\cM$ then
	a corresponding scheduler $\tsched$ for $\cN$ mimics
	$\sched$ as long as $\sched$ has not visited a state
	belong to an end component $\cE_i$ of $\cM$.
	Scheduler $\tsched$ ignores $\sched$'s transitions inside
	an MEC $\cE_i$ and takes
	$\beta \in \bigcup_{u\in U_i} (\Act_{\cM}(u) \setminus \cE_i)$
	with the same probability
	as $\sched$ leaves $\cE_i$.
	With the remaining probability mass, $\sched$ stays forever inside $\cE_i$,
	which is mimicked by $\tsched$ by taking the $\tau$-transition to $\bot$.
	
	For the formal definition of $\tsched$, we use the following notation.
	For simplicity, let us assume that $\init \notin U_1 \cup \ldots \cup U_k$.
	This yields $\init= \init'$.
	Given a finite path
	$$
	\pi= s_0\, \alpha_0 \, s_1 \, \alpha_1 \ldots \alpha_{m-1} \, s_m
	$$
	in $\cM$ with $s_0=\init$, let $\pi_{\cN}$ the path in $\cN$ resulting from
	by replacing each maximal path fragment
	$s_h \alpha_h \ldots  \alpha_{j-1} s_j$
	consisting of actions inside an $\cE_i$ with state $s_{\cE_i}$.
	(Here, maximality means if $h > 0$ then $\alpha_{h-1}\notin \cE_i$
	and if $j<m$ then $\alpha_{j+1}\notin \cE_i$.)
	Furthermore, let $p_{\pi}^{\sched}$ denote the probability
	for $\sched$ to generate the path $\pi$ when starting in the
	first state of $\pi$.

	Let $\rho$ be a finite path in $\cN$ with first state $\init$ (recall that we suppose that $\cM$'s initial state
	does not belong to an MEC, which yields $\init=\init'$)
	and $\last(\rho)\not= \bot$.
	Then, $\Pi_{\rho}$ denotes the set of finite paths
	$\pi= s_0\, \alpha_0 \, s_1 \, \alpha_1 \ldots \alpha_{m-1} \, s_m$
	in $\cM$ such that
	(i) $\pi_{\cN}=\rho$ and
	(ii) if $s_m \in U_i$ then $\alpha_{m-1}\notin \cE_i$.
	The formal definition of scheduler $\tsched$ is now as follows.
	Let $\rho$ be a finite path in $\cN$ where the last state $s$ of $\rho$ is
	non-terminal.
	If $s$ is a state of $\cM$ that does not belong to an MEC of $\cM$
	and $\beta \in \Act_{\cM}(s)$
	then:
	$$
	\tsched(\rho)(\beta)
	\ \ = \ \ \sum_{\pi\in \Pi_{\rho}} p_{\pi}^{\sched} \cdot \sched(\pi)(\beta)
	$$
	If $s = s_{\cE_i}$ and $\beta \in \Act_{\cN}\bigl(s_{\cE_i}\bigr)\setminus \{\tau\}$
	then
	$$
	\tsched(\rho)(\beta)
	\ \ = \ \
	\sum_{\pi\in \Pi_{\rho}} p_{\pi}^{\sched} \cdot
	\Pr^{\residual{\sched}{\pi}}_{\cM,\last(\pi)}\bigl( \
	\text{``leave $\cE_i$ via action $\beta$''} \ \bigr) 
	$$
	where ``leave $\cE_i$ via action $\beta$'' means the existence of a prefix
	whose action sequence consists of actions inside $\cE_i$ followed by action
	$\beta$. The last state of this prefix, however,
	could be a state of $U_i$. (Note $\beta \in \Act_{\cN}(s_{\cE_i})$ means that
	$\beta$ could have reached a state outside $U_i$, but
	there might be states inside $U_i$ that are accessible via $\beta$.)
	Similarly,  
	$$
	\tsched(\rho)(\tau)
	\ \ = \ \
	\sum_{\pi\in \Pi_{\rho}} p_{\pi}^{\sched} \cdot
	\Pr^{\residual{\sched}{\pi}}_{\cM,\last(\pi)}\bigl( \
	\text{``stay forever in $\cE_i$''} \ \bigr)
	$$
	where ``stay forever in $\cE_i$'' means that only actions
	inside $\cE_i$ are performed.
	By induction on the length of $\rho$ we obtain: 
	$$
	p_{\rho}^{\tsched} \ \ = \ \ \sum_{\pi\in \Pi_{\rho}} p_{\pi}^{\sched}
	$$   
	But this yields 
	$\Pr^{\sched}_{\cM}(\Diamond t)$ = $\Pr^{\tsched}_{\cN}(\Diamond t)$
	for each terminal state $t$ of $\cM$.
	Moreover,
	$$\Pr^{\sched}_{\cM}(\text{``eventually enter and stay forever in $\cE_i$''})
	$$
	equals the probability for $\tsched$ to
	reach the terminal state $\bot$ via a path of the form
	$\rho \, \tau \, \bot$ where $\last(\rho)=s_{\cE_i}$.
	\qed
\end{proof}

\section{Omitted Proofs and Details of Section \ref{sec:SPR-GPR}}\label{app:section_3}

\strictimpliesglobal*

\begin{proof}
Assume that $\Cause$ is a SPR cause for $\Effect$ in $\cM$ and let $\sched$ be a scheduler that reaches $\Cause$ with positive probability.
Further, let 
\[
C_{\sched}\eqdef \{c\in \Cause\ \mid \ \Pr^{\sched}_{\cM}((\neg \Cause )\Until c)>0\}
\]
and 
\[
m\eqdef \min_{c\in C_{\sched}} \Pr^{\sched}_{\cM}(\ \Diamond \Effect \ | \
			(\neg \Cause) \Until c \ ).
\]
As $\Cause$ is a SPR cause, $ m> \Pr^{\sched}_{\cM}(\Diamond \Effect)$.
The set of $\sched$-paths satisfying $\Diamond \Cause$ is the disjoint union of the sets of $\sched$-paths satisfying $(\neg \Cause) \Until c$ with $c \in C_{\sched}$.
Hence,
\[
\Pr^{\sched}_{\cM}(\Diamond \Effect \mid  \Diamond \Cause)=\frac{\sum_{c\in C_{\sched}}\Pr^{\sched}_{\cM}(\Diamond \Effect  \mid
			(\neg \Cause) \Until c  )\cdot \Pr^{\sched}_{\cM}((\neg \Cause) \Until c  )}{\sum_{c\in C_{\sched}}\Pr^{\sched}_{\cM}((\neg \Cause) \Until c) }\geq m.
\]
As $m>\Pr^{\sched}_{\cM}(\Diamond \Effect)$, the GPR condition (GPR) is satisfied under $\sched$. \qed
\end{proof}

\section{Omitted Proofs and Details of Section \ref{sec:check}}\label{app:section_4}

\subsection{Switch to the MDP $\wminMDP{\cM}{\Cause}$}\label{app:section_4_Mcause}

\wmincriterionPRcauses*

Obviously, condition (M) holds for $\Cause$ in $\cM$ if and only if
condition (M) holds for $\Cause$ in $\wminMDP{\cM}{\Cause}$.  
Furthermore, it is clear all SPR resp. GPR causes of $\cM$
 are SPR resp. GPR causes in $\wminMDP{\cM}{\Cause}$. 
So, it remains to prove the converse direction.
This will be done
in Lemma \ref{app:lem:criterion-strict-prob-raising} for SPR causes and
in Lemma \ref{app:lem:criterion-global-prob-raising} for GPR causes.

\begin{lemma}%
	[Criterion for strict probability-raising causes]
	\label{app:lem:criterion-strict-prob-raising}
	Suppose $\Cause$ is
	an SPR cause for $\Effect$ in $\wminMDP{\cM}{\Cause}$.
	Then, $\Cause$ is an SPR cause for $\Effect$ in $\cM$.
\end{lemma}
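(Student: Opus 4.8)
The plan is to use the fact, already recorded just before the statement, that condition (M) holds for $\Cause$ in $\cM$ if and only if it holds in $\wminMDP{\cM}{\Cause}$, so that everything reduces to transferring condition (S). I would argue state by state: fix $c\in\Cause$ and a scheduler $\sched$ of $\cM$ with $\Pr^{\sched}_{\cM}((\neg\Cause)\Until c)>0$, and aim to verify $\Pr^{\sched}_{\cM}(\Diamond\Effect\mid(\neg\Cause)\Until c)>\Pr^{\sched}_{\cM}(\Diamond\Effect)$. The natural device is a scheduler correspondence exploiting that $\cM$ and $\wminMDP{\cM}{\Cause}$ have identical transition structure on every finite path that has not yet entered $\Cause$: let $\sched'$ be the scheduler of $\wminMDP{\cM}{\Cause}$ that imitates $\sched$ on all such paths and is then forced onto the single action $\gamma$ once $\Cause$ is reached. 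Because the pre-$\Cause$ behaviour agrees, $\sched$ and $\sched'$ assign equal probabilities to the prefix events $(\neg\Cause)\Until c'$ for every $c'\in\Cause$ and to the uncovered-effect event $(\neg\Cause)\Until\Effect$.

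Next I would compare the two sides of the SPR inequality through residual schedulers. Conditioned on $(\neg\Cause)\Until c$, the event $\Diamond\Effect$ depends only on the residual behaviour of $\sched$ from $c$, so $\Pr^{\sched}_{\cM}(\Diamond\Effect\mid(\neg\Cause)\Until c)$ is an average of residual reachability probabilities $\Pr^{\residual{\sched}{\pi}}_{\cM,c}(\Diamond\Effect)$, each at least $w_c=\Pr^{\min}_{\cM,c}(\Diamond\Effect)$; in $\wminMDP{\cM}{\Cause}$ this conditional equals exactly $w_c$. Decomposing the unconditional probability as the uncovered-effect probability plus, for each $c'\in\Cause$, the prefix weight $\Pr^{\sched}_{\cM}((\neg\Cause)\Until c')$ times the residual effect probability from $c'$, the hypothesis that $\Cause$ is an SPR cause in $\wminMDP{\cM}{\Cause}$ gives, for every prefix and every cause state, the strict inequality between $w_{c'}$ and the unconditional effect probability computed with minimal continuations. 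The goal is then to combine these per-state inequalities so as to dominate the unconditional probability under $\sched$ by the conditional value in $\cM$.

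The hard part will be exactly the contribution of the continuations after cause states \emph{other than} $c$. Under $\sched$ the effect probability accrued after first reaching some $c'\neq c$ may be as large as $\Pr^{\max}_{\cM,c'}(\Diamond\Effect)$, so $\Pr^{\sched}_{\cM}(\Diamond\Effect)$ can strictly exceed the value realised by the minimizing scheduler $\sched'$, whereas the conditional given $(\neg\Cause)\Until c$ is insensitive to those choices; hence a naive comparison of $\sched$ with $\sched'$ does not by itself close the gap. Controlling these cross terms is the crux, and I expect the cleanest route is to reduce (S) for $c$ to the single-state model $\wminMDP{\cM}{c}$ underlying Algorithm~\ref{alg:SPR-check} — in which only $c$ is forced to its minimal continuation while all other states retain their actions — and to show that the SPR conditions for \emph{all} cause states in $\wminMDP{\cM}{\Cause}$ are jointly strong enough to imply the corresponding inequality there. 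Making this combination precise, rather than the routine mechanics of the scheduler correspondence and the residual-probability averaging, is where the real difficulty lies.
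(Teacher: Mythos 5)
You have set up the routine part correctly (the scheduler correspondence between $\cM$ and $\wminMDP{\cM}{\Cause}$ on pre-cause prefixes, and the decomposition of $\Pr^{\sched}_{\cM}(\Diamond\Effect)$ into uncovered effects plus per-cause-state contributions), and you have put your finger on the right difficulty; but the proposal stops exactly where a proof would have to begin, and that gap cannot be closed, because the statement in fact fails on precisely the cross terms you worry about. Concretely, let $\cM$ have states $\init,c_1,c_2,\eff,\noeff$ with $\Effect=\{\eff\}$ and $\Cause=\{c_1,c_2\}$; from $\init$ a single action leads to $c_1$, $c_2$, $\noeff$ with probabilities $\frac{1}{10},\frac{1}{2},\frac{2}{5}$; from $c_1$ a single action leads to $\eff$ with probability $\frac{11}{20}$ and to $\noeff$ otherwise; in $c_2$ two actions are enabled, one leading to $\eff$ with probability $\frac{11}{20}$ (to $\noeff$ otherwise), the other leading to $\eff$ with probability $1$. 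Then $\Pr^{\min}_{\cM,c_1}(\Diamond\Effect)=\Pr^{\min}_{\cM,c_2}(\Diamond\Effect)=\frac{11}{20}$, and $\wminMDP{\cM}{\Cause}$ has exactly one scheduler (every state has a single enabled action), under which the effect probability is $\frac{6}{10}\cdot\frac{11}{20}=\frac{33}{100}<\frac{11}{20}$; hence $\Cause$ satisfies (S) and (M) in $\wminMDP{\cM}{\Cause}$. In $\cM$, however, the MD-scheduler $\sched$ that picks the second action in $c_2$ gives $\Pr^{\sched}_{\cM}(\ \Diamond\Effect \mid (\neg\Cause)\Until c_1\ )=\frac{11}{20}=0.55$ while $\Pr^{\sched}_{\cM}(\Diamond\Effect)=\frac{1}{10}\cdot\frac{11}{20}+\frac{1}{2}=0.555$, so (S) fails for $c_1$ in $\cM$ and $\Cause$ is not an SPR cause there. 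No combination of the per-state SPR conditions of $\wminMDP{\cM}{\Cause}$ can rule this out, since those conditions never see the maximal continuations available after $c_2$.

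You should also know that the paper's own proof of this lemma founders at exactly the step you flagged, so the difficulty you describe is not a shortcoming of your plan but of the statement itself. The paper introduces the class $\Usched_c$ of schedulers of $\cM$ that realize $w_c=\Pr^{\min}_{\cM,c}(\Diamond\Effect)$ after every path into $c$ but are unconstrained after the other cause states, and asserts (its condition (SPR-1)) that the hypothesis yields $w_c>\Pr^{\usched}_{\cM}(\Diamond\Effect)$ for all $\usched\in\Usched_c$. The hypothesis only gives this bound for schedulers that are minimal after \emph{every} cause state, since only those correspond to schedulers of $\wminMDP{\cM}{\Cause}$; the scheduler $\sched$ above lies in $\Usched_{c_1}$ and violates the bound. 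The contrast with the GPR half of Lemma \ref{lemma:wmin-criterion-PR-causes}, whose proof is sound, is instructive: there the conditioning event is $\Diamond\Cause$, so all post-cause effect probabilities occur only on the favorable side of the inequality, and replacing them by their minima is a monotone step; in the SPR inequality for $c$, the continuations after cause states $c'\neq c$ also enter the unfavorable side through $\Pr^{\sched}_{\cM}(\Diamond\Effect)$ --- which is exactly what your ``cross terms'' and the counterexample above exploit.
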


\begin{proof}
	We fix a state $c\in \Cause$. Recall also that we assume the states in $\Effect$ to be terminal.
	Let $\psi_c=(\neg \Cause) \Until c$, $w_c = \Pr_{\cM, c}^{\min}(\Diamond \Eff)$ and
	let $\Usched_c$ denote the set of all schedulers $\usched$
	for $\cM$ such that
	\begin{itemize}
		\item
		$\Pr_{\cM}^{\usched}(\psi_c)>0$ and
		\item
		$\Pr^{\residual{\usched}{\pi}}_{\cM,c}(\Diamond \Effect)=w_c$
		for each finite $\usched$-path $\pi$ from $\init$ to $c$.
	\end{itemize}
	Clearly, $\Pr^{\usched}_{\cM}( \Diamond c \wedge \Diamond \Effect)
	= \Pr_\cM^\usched(\Diamond c)\cdot w_c$
	for $\usched \in \Usched_c$.
	
	As $\Cause$ is an SPR cause in $\wminMDP{\cM}{\Cause}$
	we have:
	\begin{equation}
		\label{SPR-1} 
		w_c \ > \ \Pr^{\usched}_{\cM}(\Diamond \Effect)
		\quad \text{for all schedulers $\usched \in \Usched_c$}
		\tag{SPR-1}    
	\end{equation}
	The task is to prove that the SPR condition 
	holds for $c$ and all schedulers of $\cM$ with
	$\Pr^{\sched}_{\cM}(\psi_c) >0$.
	
	Suppose $\sched$ is a scheduler for $\cM$ with
	$\Pr^{\sched}_{\cM}(\psi_c) >0$. Then:
	\[\Pr^{\sched}_{\cM}(\psi_c \wedge \Diamond \Eff)
	\ \ \geqslant  \ \
	\Pr^{\sched}_{\cM}(\psi_c)\cdot w_c
	\]
	Moreover, there
	exists a scheduler $\usched =\usched_{\sched} \in \Usched_c$
	with
	\begin{center}
		$\Pr^{\sched}_{\cM}(\psi_c) = \Pr^{\usched}_{\cM}(\psi_c)$
		\ \ and \ \
		$\Pr^{\sched}_{\cM}((\neg \psi_c) \wedge \Diamond \Effect)=
		\Pr_\cM^\usched((\neg \psi_c) \wedge \Diamond \Effect)$.
	\end{center}
	To see this, consider the scheduler $\usched$ that
	behaves as $\sched$ as long as $c$ is not reached.
	As soon as $\usched$ has reached $c$, scheduler $\usched$
	switches mode and
	behaves
	as an MD-scheduler minimizing the
	probability to reach an effect state.
	
	The SPR condition
	holds for $c$ and $\sched$ if and only if
	\begin{equation}
		\label{SPR-2}
		\frac{\Pr^{\sched}_{\cM}( \psi_c \wedge \Diamond \Effect)}
		{\Pr^{\sched}_{\cM}(\psi_c)}
		\ \ > \ \
		\Pr^{\sched}_{\cM}(\Diamond \Effect)
		\tag{$\dagger$}
	\end{equation}
	As
	\begin{eqnarray*}
		\Pr^{\sched}_{\cM}(\Diamond \Effect)
		& = &
		\Pr^{\sched}_{\cM}( \psi_c \wedge \Diamond \Effect)
		\ + \
		\Pr^{\sched}_{\cM}(
		(\neg \psi_c) \wedge \Diamond \Effect )
	\end{eqnarray*}
	we can equivalently convert condition \eqref{SPR-2} 
	for $c$ and $\sched$ to
	\begin{align*}
		\label{SPR-3}
		\Pr^{\sched}_{\cM}(\psi_c \wedge \Diamond \Effect)
		\cdot
		\frac{1-\Pr^{\sched}_{\cM}(\psi_c)}{\Pr^{\sched}_{\cM}(\psi_c)}
		\ \ > \ \
		\Pr^{\sched}_{\cM}((\neg \psi_c) \wedge \Diamond \Effect )
		\tag{$\ddagger$}
	\end{align*}
	So, the remaining
	task is now to derive \eqref{SPR-3} from \eqref{SPR-1}.
	
	\eqref{SPR-1} applied to scheduler $\usched=\usched_{\sched}$ yields:
	$$
	\begin{array}{lcl}         
		w_c & \ \ > \ \ &
		\Pr^{\usched}_{\cM}(\psi_c \wedge  \Diamond \Effect)
		\ + \
		\Pr^{\usched}_{\cM}((\neg \psi_c) \wedge  \Diamond \Effect)
		\\
		\\[0ex]
		& = &
		\Pr^{\sched}_{\cM}(\psi_c)\cdot w_c
		\ + \
		\Pr^{\sched}_{\cM}((\neg \psi_c) \wedge  \Diamond \Effect)
	\end{array}   
	$$
	We conclude:
	\begin{eqnarray*}
		\Pr^{\sched}_{\cM}(\psi_c \wedge \Diamond \Effect)
		\cdot
		\frac{1-\Pr^{\sched}_{\cM}(\psi_c)}{\Pr^{\sched}_{\cM}(\psi_c)}
		& \ \geqslant \ &
		\Pr^{\sched}_{\cM}(\psi_c) \cdot w_c
		\cdot
		\frac{1-\Pr^{\sched}_{\cM}(\psi_c)}{\Pr^{\sched}_{\cM}(\psi_c)}
		\\
		\\[0ex]
		& = &
		\bigl(1-\Pr^{\sched}_{\cM}(\psi_c)\bigr)\cdot w_c
		\\[1ex]
		& > &
		\Pr^{\sched}_{\cM}((\neg \psi_c) \wedge  \Diamond \Effect)
	\end{eqnarray*}
	Thus, \eqref{SPR-3} holds for $c$ and $\sched$.
	\qed
\end{proof}

\begin{lemma}[Criterion for GPR causes]
	\label{app:lem:criterion-global-prob-raising}
	Suppose $\Cause$ is
	an GPR cause for $\Effect$ in $\wminMDP{\cM}{\Cause}$.
	Then, $\Cause$ is an GPR cause for $\Effect$ in $\cM$.
\end{lemma}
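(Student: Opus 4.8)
The plan is to mirror the structure of the SPR proof in Lemma~\ref{app:lem:criterion-strict-prob-raising}, but treating $\Cause$ as a single unit. Condition (M) transfers verbatim between $\cM$ and $\wminMDP{\cM}{\Cause}$ (the $\Cause$-free part of the two MDPs is identical), so only the probability-raising condition (G) needs attention. First I would rewrite (G) in a purely algebraic form. For a scheduler $\sched$ with $\Pr^{\sched}_{\cM}(\Diamond \Cause)>0$, set $p = \Pr^{\sched}_{\cM}(\Diamond \Cause)$, let $a = \Pr^{\sched}_{\cM}(\Diamond(\Cause \wedge \Diamond \Effect))$ be the covered-effect probability and $b = \Pr^{\sched}_{\cM}((\neg \Cause)\Until \Effect)$ the uncovered-effect probability. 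Since effect states are terminal and disjoint from $\Cause$, every path reaching $\Effect$ either passes through $\Cause$ first or never visits $\Cause$, so $\Pr^{\sched}_{\cM}(\Diamond \Effect) = a + b$ and $\Pr^{\sched}_{\cM}(\Diamond \Effect \mid \Diamond \Cause) = a/p$. Multiplying by $p>0$ shows that the GPR inequality $a/p > a + b$ is equivalent to $a\,(1-p) > p\,b$.

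Next I would, for an arbitrary such $\sched$ on $\cM$, build its mirror scheduler $\usched$ on $\wminMDP{\cM}{\Cause}$ that copies every decision of $\sched$ along paths that have not yet visited $\Cause$ (on $\Cause$ there is no choice left in $\wminMDP{\cM}{\Cause}$, only the action $\gamma$). Because the two MDPs coincide on the sub-MDP preceding $\Cause$, the probabilities $\Pr^{\usched}_{\wminMDP{\cM}{\Cause}}((\neg\Cause)\Until c)$ for $c \in \Cause$, the reach-cause probability $p$, and the uncovered-effect probability $b$ agree for $\sched$ in $\cM$ and $\usched$ in $\wminMDP{\cM}{\Cause}$. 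The covered-effect probabilities differ: in $\wminMDP{\cM}{\Cause}$ the $\gamma$-transition realises exactly $w_c = \Pr^{\min}_{\cM,c}(\Diamond\Effect)$, so $a_\usched := \Pr^{\usched}_{\wminMDP{\cM}{\Cause}}(\Diamond(\Cause\wedge\Diamond\Effect)) = \sum_{c\in\Cause}\Pr^{\sched}_{\cM}((\neg\Cause)\Until c)\cdot w_c$. A first-visit decomposition of $a$ over the first cause state reached, combined with $\Pr^{\residual{\sched}{\pi}}_{\cM,c}(\Diamond\Effect) \geq w_c$ for every finite $\sched$-path $\pi$ from $\init$ to $c$, gives the key inequality $a \geq a_\usched$.

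Finally I would invoke the hypothesis. Applying the algebraic form of (G) established above to $\usched$ in $\wminMDP{\cM}{\Cause}$ yields $a_\usched(1-p) > p\,b$. Here the right-hand side is non-negative and $a_\usched \geq 0$, which forces $1-p > 0$; hence multiplying the monotonicity $a \geq a_\usched$ by the non-negative factor $(1-p)$ preserves the inequality, so $a\,(1-p) \geq a_\usched(1-p) > p\,b$. This is exactly the algebraic form of (G) for $\sched$ in $\cM$, and since $\sched$ was arbitrary this establishes that $\Cause$ is a GPR cause in $\cM$. The only delicate point is the sign of the factor $(1-p)$: the argument would break if $p=1$ were possible, but the GPR inequality applied to $\usched$ rules this out automatically, so no separate case analysis is required; the remaining manipulations are the routine first-visit and residual-scheduler arguments already used in the SPR case.
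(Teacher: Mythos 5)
Your proof is correct and follows essentially the same route as the paper's: the paper also reduces (GPR) to the algebraic form $p\cdot b < (1-p)\cdot a$ (its condition (GPR-2)), passes from $\sched$ to the scheduler that mimics $\sched$ before $\Cause$ and minimizes the effect probability afterwards (your mirror scheduler $\usched$, viewed there as a scheduler of $\cM$ in the class $\Ssched_{>0,\min}$), and uses the same first-visit decomposition with $w_\pi^\sched \geq w_c$ to get the monotonicity $a \geq a_\usched$ while $p$ and $b$ are preserved. The only cosmetic difference is that you apply the hypothesis directly in $\wminMDP{\cM}{\Cause}$ whereas the paper first translates it into a statement about the corresponding schedulers of $\cM$; the content is identical.
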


\begin{proof}
From the assumption that $\Cause$ is
	an GPR cause for $\Effect$ in $\wminMDP{\cM}{\Cause}$, we can conclude that 
	the GPR condition \eqref{GPR} holds
		for all schedulers $\sched$
		that satisfy \[\Pr_{\cM}^{\sched}(\Diamond \Cause)>0\] and
		\[
		\Pr^{\residual{\sched}{\pi}}_{\cM,c}(\Diamond \Effect)
		\ \ = \ \ \Pr^{\min}_{\cM,c}(\Diamond \Effect)
		\]
		for each finite $\sched$-path from the initial state
		$\init$ to a state $c \in \Cause$.

	To prove that the GPR condition \eqref{GPR}
		holds
		for all schedulers $\sched$ that satisfy $\Pr_{\cM}^{\sched}(\Diamond \Cause)>0$, we introduce the following notation:
		We write
	\begin{itemize}
		\item
		$\Ssched_{>0}$ for the set of all schedulers $\sched$ such that
		$\Pr_{\cM}^{\sched}(\Diamond \Cause)>0$,
		\item
		$\Ssched_{>0,\min}$ for the set of all schedulers
		with $\Pr_{\cM}^{\sched}(\Diamond \Cause)>0$ such that
		$$
		\Pr^{\residual{\sched}{\pi}}_{\cM,c}(\Diamond \Effect)
		\ \ = \ \ \Pr^{\min}_{\cM,c}(\Diamond \Effect)
		$$
		for each finite $\sched$-path from the initial state
		$\init$ to a state $c \in \Cause$.
	\end{itemize} 
	It now suffices to show
	that for each scheduler $\sched \in \Ssched_{>0}$ there exists a scheduler
	$\sched' \in \Ssched_{>0,\min}$ such that
	if \eqref{GPR} holds $\sched'$ then \eqref{GPR} holds for $\sched$.
	So, let $\sched\in\Ssched_{>0}$. 
	
	For $c\in \Cause$, let $\Pi_c$ denote the set of
	finite paths
	$\pi=s_0 \, \alpha_0\,  s_1 \, \alpha_1 \ldots \alpha_{n-1}\, s_n$
	with $s_0=\init$, $s_n=c$ and
	$\{s_0,\ldots,s_{n-1} \}\cap (\Cause \cup \Effect) =\varnothing$.
	Let
	$$
	w_{\pi}^{\sched} \ = \ \Pr^{\residual{\sched}{\pi}}_{\cM,c}(\Diamond \Effect)
	$$
	Furthermore, let $p_{\pi}^{\sched}$ denote the probability for
	(the cylinder set of) $\pi$ under scheduler $\sched$.
	Then
	$$
	\Pr^{\sched}_{\cM}((\neg \Cause)\Until c)
           \ = \ \sum_{\pi \in \Pi_c} p_{\pi}^{\sched}
	$$
	Moreover:
	$$
	\Pr^{\sched}_{\cM}(\Diamond \Effect)
	\ \ = \ \ 
	\Pr^{\sched}_{\cM}(\neg \Cause \Until \Effect) \ + 
	\sum_{c\in \Cause} \sum_{\pi \in \Pi_c}
	\!\!\! p_{\pi}^{\sched} \cdot w_{\pi}^{\sched}
	$$
	and,
	$$
	\Pr^{\sched}_{\cM}(\ \Diamond \Effect \ | \ \Diamond \Cause \ )
	\ \ = \ \
	\frac{1}{\Pr_{\cM}^{\sched}(\Diamond \Cause)} \cdot \!\!\!
	\sum_{c\in \Cause} \
	\sum_{\pi\in \Pi_c} p_{\pi}^{\sched} \cdot w_{\pi}^{\sched}      
	$$
	Thus, the condition \eqref{GPR}
	holds for the scheduler
	$\sched \in \Ssched_{>0}$ if and only if
	\begin{eqnarray*}
		\Pr^{\sched}_{\cM}(\neg \Cause \Until \Effect)
		+
		\sum_{c\in \Cause}\sum_{\pi \in \Pi_c} p_{\pi}^{\sched}\cdot w_{\pi}^{\sched}
		&  <  &
		\frac{1}{\Pr_{\cM}^{\sched}(\Diamond \Cause)} \cdot \!\!\!
		\sum_{c\in \Cause} \sum_{\pi\in \Pi_c} p_{\pi}^{\sched} \cdot w_{\pi}^{\sched}
	\end{eqnarray*}
	The latter is equivalent to:
	\begin{align*}
	&\Pr_{\cM}^{\sched}(\Diamond \Cause) \cdot \Pr^{\sched}_{\cM}(\neg \Cause \Until \Effect)
	\ \ + \ \
	\Pr_{\cM}^{\sched}(\Diamond \Cause) \cdot \!\!\!
	\sum_{c\in \Cause} \ \sum_{\pi \in \Pi_c} p_{\pi}^{\sched}\cdot w_{\pi}^{\sched} \\
	< &
	\sum_{c\in \Cause}\ \sum_{\pi \in \Pi_c} p_{\pi}^{\sched}\cdot w_{\pi}^{\sched}
	\end{align*}
	which again is equivalent to:
	\begin{align*}
		\label{GPR-2}
		&\Pr_{\cM}^{\sched}(\Diamond \Cause) \cdot \Pr^{\sched}_{\cM}(\neg \Cause \Until \Effect) \\
		< \ \ &
		\bigl(1-\Pr_{\cM}^{\sched}(\Diamond \Cause)\bigr) \cdot \!\!\!
		\sum_{c\in \Cause} \
		\sum_{\pi \in \Pi_c} p_{\pi}^{\sched}\cdot w_{\pi}^{\sched}
		\tag{GPR-2}  
	\end{align*}
	Pick an MD-scheduler
	$\tsched$ that minimizes the probability to reach $\Effect$
	from every state.
	In particular, $w_c = w_\pi^{\tsched} \leqslant w_\pi^{\sched}$
	for every state $c\in \Cause$ and every path $\pi\in \Pi_c$
	(recall that $w_c =  \Pr^{\min}_{\cM,c}(\Diamond \Effect)$).
	Moreover, the scheduler $\sched$ can be transformed into a
	scheduler $\sched_{\tsched}\in \Ssched_{>0,\min}$
	that is ``equivalent'' to $\sched$ with respect to the
	global probability-raising condition.
	More concretely, let $\sched_\tsched$ denote the scheduler 
	that behaves as $\sched$ as long as $\sched$ has not yet visited a state
	in $\Cause$ and behaves as $\tsched$ as soon as a state in $\Cause$ has been
	reached. Thus,
	$p_{\pi}^{\sched}=p_{\pi}^{\sched_{\tsched}}$
	and
	$\residual{\sched_{\tsched}}{\pi}=\tsched$
	for each $\pi\in \Pi_c$.
	This yields that the probability to reach $c\in \Cause$ from
	$\init$ is the same under
	$\sched$ and $\sched_{\tsched}$, i.e.,
        $\Pr^{\sched}_{\cM}(\Diamond c)=
         \Pr^{\sched_{\tsched}}_{\cM}(\Diamond c)$.
	Therefore $\Pr^{\sched}_{\cM}(\Diamond \Cause)=\Pr^{\sched_{\tsched}}_{\cM}(\Diamond \Cause)$.
	The latter implies that $\sched_{\tsched}\in \Ssched_{>0}$,
	and hence $\sched_{\tsched}\in \Ssched_{>0,\min}$.
	Moreover, $\sched$ and $\sched_{\tsched}$ reach $\Effect$ without visiting $\Cause$ with the same
	probability, i.e., $\Pr^{\sched}_{\cM}(\neg \Cause \Until \Effect)=\Pr^{\sched_{\tsched}}_{\cM}(\neg \Cause \Until \Effect)$.
	
	But this yields: if \eqref{GPR-2} holds for $\sched_{\tsched}$ then
	\eqref{GPR-2} holds for $\sched$.
	As \eqref{GPR-2} holds for $\sched_{\tsched}$ by assumption, this completes the proof.
		\qed
\end{proof}

\subsection{Proofs to Section \ref{sec:SPR_check}}\label{app:proofs_SPR_check}

\soundnessSPRalgo*
\begin{proof} 
	First, we show the soundness of Algorithm \ref{alg:SPR-check}.
	By the virtue of Lemma \ref{lemma:wmin-criterion-PR-causes} it
	suffices to show that Algorithm \ref{alg:SPR-check} returns the correct
	answers ``yes'' or ``no'' when the task is to check whether the
	singleton $\Cause = \{c\}$ is an SPR cause in $\cN=\wminMDP{\cM}{c}$.
	Recall the notation $q_s =\Pr^{\max}_{\wminMDP{\cM}{c},s}(\Diamond \Effect)$.
	We abbreviate $q=q_\init$.
	Note that $(\neg \Cause) \Until c$ is equivalent to $\Diamond c$.

	For every scheduler $\sched$ of $\cN$ we have
	$\Pr^{\sched}_{\cN,c}(\Diamond \Effect)=w_c$.
        Thus, $\Pr^{\sched}_{\cN}( \Diamond \Effect \ | \ \Diamond c)=w_c$
        if $\sched$ is a scheduler of $\cN$ with $\Pr^\sched_\cN(\Diamond c)>0$.

	Algorithm \ref{alg:SPR-check} correctly answers ``no'' (case 2 or 3.1)
	if $w_c=0$.
	Let us now suppose that $w_c>0$. 
	Thus, the SPR condition for $c$ reduces to $\Pr^\sched_\cN(\Diamond \Effect) < w_c$ for all schedulers $\sched$ of $\cN$ with $\Pr^\sched_\cN(\Diamond c)>0$.
	\begin{itemize}
		\item  
		In case 1 of Algorithm \ref{alg:SPR-check} the answer ``yes'' is sound
		as then $\Pr^{\max}_{\cN}(\Diamond \Effect) = q < w_c$.
		\item
		For case 2 (i.e., if $q > w_c$),
		let $\tsched$ be an MD-scheduler 
		with $\Pr^{\tsched}_{\cN,s}(\Diamond \Effect) = q_s$
		for each state $s$ and pick an MD-scheduler $\sched$ with
		$\Pr^{\sched}_{\cN}(\Diamond c)>0$.
		It is no restriction to suppose that
		$\tsched$ and $\sched$ realize the same end components of $\cN$.
		(Note that if state $s$ belongs to an end component
		that is realized by $\tsched$ then $s$ contained in a bottom strongly
		connected component of the Markov chain induced by $\tsched$.
		But then
		$q_s=0$, i.e., no effect state is reachable from $s$ in $\cN$.
		Recall that all effect states are terminal and thus not contained in
		end components.
		But then we can safely assume that $\tsched$ and $\sched$ schedule
		the same action for state $s$.)
		Let $\lambda$ be any real number with $1 > \lambda > \frac{w_c}{q}$
		and let $\cK$ denote the sub-MDP of $\cN$ with state space $S$
		where the enabled
		actions of state $s$ are the actions scheduled for $s$ under one of the
		schedulers $\tsched$ or $\sched$.
		Let now $\usched$ be the MR-scheduler
		$\lambda \tsched \oplus (1{-}\lambda)\sched$
		defined as in Notation \ref{notation:convex-comb-schedulers}
		for the EC-free MDP resulting from $\cK$ when collapsing
		$\cK$'s end components into a single terminal state.
		For the states belonging to an end component of $\cK$,
		$\usched$ schedules the same action as $\tsched$ and $\sched$.
		Then, $\Pr^{\usched}_{\cN}(\Diamond t)=\lambda \Pr^{\tsched}_{\cN}(\Diamond t)+(1{-}\lambda) \Pr^{\sched}_{\cN}(\Diamond t)$
		for all terminal states $t$ of $\cN$ and $t=c$.
		Hence:
		$$
		\Pr_{\cN}^{\usched}(\Diamond c)
		\ \ \geqslant \ \
		(1{-}\lambda)\cdot \Pr^{\sched}_{\cM}(\Diamond c) \ \ > \ \ 0
		$$  
		and
		$$
		\Pr^{\usched}_{\cN}(\Diamond \Effect)
		\ \ \geqslant \ \
		\lambda \cdot \Pr^{\tsched}_{\cM}(\Diamond \Effect)
		\ \ = \ \ \lambda \cdot q \ \ > \ \ w_c
		$$
		Thus, scheduler $\usched$ is a witness
		why \eqref{SPR} does not hold for $c$.
		\item	
		For case 3.1 pick an MD-scheduler $\sched$ of $\wminMDPmax{\cM}{c}$ such that
		$c$ is reachable from $\init$ via a $\sched$-path and
		$\Pr^{\sched}_{\cN,s}(\Diamond \Effect)=q_s$ for all states $s$.
		Hence, \eqref{SPR} does not hold for $c$ and the scheduler $\sched$.
		\item	
		The last case 3.2 has the property that
		$\Pr^{\sched}_{\cN}(\Diamond c)=0$ for all schedulers $\sched$ for $\cN$
		with $\Pr^{\sched}_{\cN}(\Diamond \Effect)=q=w_c$.
		But then $\Pr^{\sched}_{\cN}(\Diamond c)>0$ implies
		$\Pr^{\sched}_{\cN}(\Diamond \Effect) < w_c$ 
                as required in \eqref{SPR}.
	\end{itemize}	
	The polynomial runtime of Algorithm \ref{alg:SPR-check}
	follows from the fact that minimal and maximal reachability probabilities and hence also the MDPs $\cN=\wminMDP{\cM}{c}$ and its sub-MDP $\wminMDPmax{\cM}{c}$ can be computed in polynomial time.	
	\qed
\end{proof}

\begin{lemma}[Criterion for the existence of PR causes (Lemma \ref{lem:existence-of-prob-raising-causes})]
	Let $\cM$ be an MDP and $\Effect$ a nonempty set of states. The following statements are equivalent:
	\begin{itemize}
	\item[(a)]
	$\Effect$ has an SPR cause in $\cM$,
	\item[(b)]
	$\Effect$ has a GPR cause in $\cM$,
	\item[(c)]
	there is a state $c_0\in S \setminus \Effect$ such that
	the singleton $\{c_0\}$ is an SPR cause (and therefore a GRP cause) for $\Effect$ in $\cM$.
	\end{itemize}
    In particular, the existence of SPR/GPR causes can be checked
    with Algorithm \ref{alg:SPR-check} in polynomial time.
\end{lemma}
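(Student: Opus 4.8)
The plan is to establish the cycle of implications $(c)\Rightarrow(a)\Rightarrow(b)\Rightarrow(c)$. The implication $(c)\Rightarrow(a)$ is immediate: a singleton SPR cause is in particular an SPR cause. The implication $(a)\Rightarrow(b)$ is exactly Lemma \ref{lemma:strict-implies-global}. Hence the whole content lies in $(b)\Rightarrow(c)$, namely in extracting from an \emph{arbitrary} GPR cause a single state that already constitutes an SPR cause; by Lemma \ref{global-vs-strict-causes} such a singleton is then automatically a GPR cause as well.

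For $(b)\Rightarrow(c)$, let $\Cause$ be a GPR cause for $\Effect$. First I would invoke Lemma \ref{lemma:wmin-criterion-PR-causes} to pass to $\wminMDP{\cM}{\Cause}$, so that from every $c\in\Cause$ the effect is reached with the fixed probability $w_c=\Pr^{\min}_{\cM,c}(\Diamond\Effect)$, independently of the scheduler. Writing $\Diamond\Cause$ as the disjoint union of the events $(\neg\Cause)\Until c$ over $c\in\Cause$, the conditional $\Pr^{\sched}_{\cM}(\Diamond\Effect\mid\Diamond\Cause)$ becomes a $\sched$-weighted average of the $w_c$, so it is bounded above by $w^\ast\eqdef\max_{c\in\Cause}w_c$ for every scheduler $\sched$ with $\Pr^{\sched}_{\cM}(\Diamond\Cause)>0$. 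Combining this bound with \eqref{GPR} yields $\Pr^{\sched}_{\cM}(\Diamond\Effect)<w^\ast$ for all such schedulers. I then fix a state $c_0\in\Cause$ with $w_{c_0}=w^\ast$ and argue that $\{c_0\}$ is the desired cause: since $\Pr^{\sched}_{\cM}(\Diamond\Effect\mid\Diamond c_0)=w_{c_0}=w^\ast$ for every scheduler reaching $c_0$, the probability-raising inequality for the singleton $\{c_0\}$ reduces to $w^\ast>\Pr^{\sched}_{\cM}(\Diamond\Effect)$, which is precisely what the previous step delivers; condition (M) for $c_0$ is inherited from (M) for $\Cause$. Two points need care here. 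Schedulers with $\Pr^{\sched}_{\cM}(\Diamond\Cause)=0$ must also be controlled, which I would handle by mixing any such scheduler with one reaching $\Cause$ (available by (M)) and passing to the limit, using the convex-combination lemma for end-component-free MDPs. Strictness of $w^\ast>\Pr^{\sched}_{\cM}(\Diamond\Effect)$ comes for free from the strict nature of \eqref{GPR}: were there equality for some scheduler reaching $c_0$, that scheduler reaches $\Cause$, and then \eqref{GPR} would force $w^\ast<\Pr^{\sched}_{\cM}(\Diamond\Effect\mid\Diamond\Cause)\le w^\ast$, a contradiction.

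The step I expect to be the main obstacle is reconciling the two model transformations. The weighted-average argument naturally certifies $\{c_0\}$ as a probability-raising cause in $\wminMDP{\cM}{\Cause}$, where \emph{all} former cause states are frozen to their minimal effect probabilities; but to read off an SPR cause in $\cM$, the reduction licensed by Lemma \ref{lemma:wmin-criterion-PR-causes} for the singleton $\{c_0\}$ concerns $\wminMDP{\cM}{c_0}$, in which the states of $\Cause\setminus\{c_0\}$ are no longer frozen and may carry large effect probability under a suitable scheduler. The delicate part of the proof is therefore to show that this difference is harmless for the chosen maximizer $c_0$ — i.e. that the verdict obtained after freezing the whole of $\Cause$ is faithfully transported back — and to match the strict-inequality reasoning above against the equality boundary case of Algorithm \ref{alg:SPR-check} (reachability of $c_0$ in the maximizing sub-MDP). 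Once $(b)\Leftrightarrow(c)$ is secured, the polynomial-time claim is immediate: one runs Algorithm \ref{alg:SPR-check} on each candidate state $c_0\in S\setminus\Effect$, each call deciding the singleton SPR condition in polynomial time.
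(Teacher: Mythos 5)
Your skeleton is the same as the paper's: (c)$\Rightarrow$(a) is trivial, (a)$\Rightarrow$(b) is Lemma~\ref{lemma:strict-implies-global}, and for (b)$\Rightarrow$(c) the paper likewise freezes the cause states via Lemma~\ref{lemma:wmin-criterion-PR-causes}, picks $c_0\in\Cause$ maximizing $w_c=\Pr^{\min}_{\cM,c}(\Diamond\Effect)$, and uses the weighted-average bound $\Pr^{\sched}_{\cM}(\Diamond\Effect\mid\Diamond\Cause)\le w_{c_0}$. One side remark: your first ``point of care'' is moot. The SPR condition for the singleton $\{c_0\}$ only quantifies over schedulers with $\Pr^{\sched}_{\cM}(\Diamond c_0)>0$, and every such scheduler reaches $\Cause$; schedulers with $\Pr^{\sched}_{\cM}(\Diamond\Cause)=0$ are simply irrelevant, so no convex-combination/limit argument is needed.

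The genuine gap is exactly the step you flag as ``the main obstacle'' and then do not carry out: transporting the verdict from $\wminMDP{\cM}{\Cause}$ (all of $\Cause$ frozen) back to $\cM$, i.e., to $\wminMDP{\cM}{c_0}$, where the states of $\Cause\setminus\{c_0\}$ are free again. This difference is \emph{not} harmless, and no argument can make it so. Consider $\Effect=\{\eff\}$ and an MDP in which $\init$ moves (under its unique action) to $c_0$ with probability $3/10$, to $c_1$ with probability $3/10$, and to $\noeff$ with probability $4/10$, while each of $c_0,c_1$ has two actions, one reaching $\eff$ with probability $9/10$, the other with probability $1/10$ (remaining mass to $\noeff$). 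Then $\Cause=\{c_0,c_1\}$ is a GPR cause: every scheduler reaches $\Cause$ with probability $6/10<1$, all effect paths pass through $\Cause$, and the effect probability is always positive, so $\Pr^{\sched}_{\cM}(\Diamond\Effect\mid\Diamond\Cause)=\Pr^{\sched}_{\cM}(\Diamond\Effect)\cdot\frac{10}{6}>\Pr^{\sched}_{\cM}(\Diamond\Effect)$ for every scheduler. Here $w_{c_0}=w_{c_1}=1/10=\max_c w_c$, yet neither singleton is an SPR cause: the scheduler minimizing at $c_0$ and maximizing at $c_1$ reaches $c_0$ with positive probability and gives $\Pr^{\sched}_{\cM}(\Diamond\Effect\mid\Diamond c_0)=1/10<3/10=\Pr^{\sched}_{\cM}(\Diamond\Effect)$ (symmetrically for $c_1$), and no other state qualifies. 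The pooled condition \eqref{GPR} only constrains the weighted average over all cause states; it cannot control schedulers that treat $c_0$ and $\Cause\setminus\{c_0\}$ asymmetrically, which is precisely what the singleton SPR condition quantifies over. So your step ``which is precisely what the previous step delivers'' delivers the inequality only in $\wminMDP{\cM}{\Cause}$, and the intended transport fails. You are in good company --- the paper's own proof compresses this very step into ``it is now easy to see'' --- and your diagnosis of where the difficulty sits is accurate; but the proposal ends by pointing at the hard step rather than proving it, and the example above shows that on this route (indeed, on any route: in that MDP statement (b) holds while (a) and (c) fail) the step cannot be completed.
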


\begin{proof}
	Obviously, statement (c) implies statements (a) and (b).
	The implication ``(a) $\Longrightarrow$ (b)'' follows from
	Lemma \ref{lemma:strict-implies-global}.
	We now turn to the proof of ``(b) $\Longrightarrow$ (c)''.
	For this, we assume that we are given a
	GPR cause $\Cause$ for $\Effect$ in $\cM$.
	For $c\in \Cause$, let $w_c=\Pr^{\min}_{\cM,c}(\Diamond \Effect)$.
	Pick a state $c_0\in \Cause$ such that
	$w_{c_0} = \max \{ w_c : c \in \Cause\}$.
	For every scheduler $\sched$ for $\cM$ that minimizes the
	effect probability whenever it visits a state in $\Cause$,
	and visits $\Cause$ with positive probability,
	the conditional probability
	$\Pr^{\sched}_{\cM}(\Diamond \Effect |\Diamond \Cause)$ is
	a weighted average of the values $w_c$, $c \in \Cause$,
	and thus bounded by $w_{c_0}$.
	Using Lemma \ref{lemma:wmin-criterion-PR-causes}
	it is now easy to see that
	$\{c_0\}$ is both an SPR and a GPR cause for $\Effect$.
	\qed
\end{proof}

\subsection{Construction justifying assumptions (A1)-(A3)}\label{app:construction_assumptions}

In Section \ref{sec:check}, we are given an MDP $\cM=(S,\Act,P,\init)$ with two disjoint sets of states $\Cause$ and $\Effect$. The states in $\Effect$ are terminal.

Here, we  provide the missing details of the transformation for the assumptions (A1)-(A3) in Section \ref{sec:check} that are listed here again:

\begin{description}
       \item [(A1)]
          $\Effect=\{\effuncov,\effcov\}$ consists of two terminal states.
          
	\item [(A2)] %
          For every state $c\in \Cause$, there is only a single enabled
          action, say $\Act(c)=\{\gamma\}$, and
          there exists $w_c\in [0,1]\cap \Rational$ such that
	  $P(c,\tau,\effcov)=w_c$ and
	  $P(c,\tau,\noeffc)=1-w_c$.
          where $\noeffc$ is a terminal non-effect state
          and $\noeffc$ and $\effcov$ are only accessible via the
          $\gamma$-transition from the states $c\in \Cause$.
          
        \item [(A3)]
          $\cM$ has no end components and 
          there is a further terminal state $\noeffbot$
          and an action $\tau$ such that 
          $\tau \in \Act(s)$ implies $P(s,\tau,\noeffbot)=1$.
\end{description}
The terminal states $\effunc$, $\effcov$, $\noeffc$ and $\noeffbot$ 
are supposed to be pairwise distinct. $\cM$ can have further terminal
states representing true negatives. These could be identified with
$\noeffbot$, but this is irrelevant for our purposes.

Assumptions (A1) and (A2) are established as described in the main body of the paper (Section \ref{sec:check-GPR}) by switching to the MDP $\wminMDP{\cM}{\Cause}$ -- which is justified by Lemma \ref{lemma:wmin-criterion-PR-causes} -- and by renaming and collapsing terminal states resulting in an MDP $\cM^\prime$.

Now, let $\cN$ be the MEC-quotient of $\cM^\prime$ (see Appendix \ref{app:MEC-quotient}). 
Let $\noeffbot$ be the state to which we add a $\tau$-transition with probability $1$ from each MEC that we collapse in the MEC-quotient.
That is, $\noeffbot=\bot$ with the notations of Definition \ref{def:MEC-contraction}. 

\begin{lemma}\label{lem:probabilities_MEC-quotient}
For each scheduler $\sched$ for $\wminMDP{\cM}{\Cause}$, there is a scheduler $\tsched$ for $\cN$, and vice versa, such that 
\begin{itemize}
\item
$\Pr^{\sched}_{\wminMDP{\cM}{\Cause}}(\Diamond \Effect) = \Pr^{\tsched}_{\cN}(\Diamond \Effect)$,
 \item
 $\Pr^{\sched}_{\wminMDP{\cM}{\Cause}}(\Diamond \Cause) = \Pr^{\tsched}_{\cN}(\Diamond \Cause)$, and
  \item 
 $\Pr^{\sched}_{\wminMDP{\cM}{\Cause}}(\Diamond \Cause \land \Diamond \Effect) = \Pr^{\tsched}_{\cN} (\Diamond \effcov)$.
\end{itemize}
\end{lemma}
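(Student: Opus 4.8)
The plan is to factor the passage from $\wminMDP{\cM}{\Cause}$ to $\cN$ through the intermediate MDP $\cM'$ and to reduce all three equalities to the reachability of \emph{terminal} states, where the ready-made correspondence between an MDP and its MEC-quotient (Lemma \ref{lem:MEC-contraction-terminal-states-final}) applies verbatim.

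First I would deal with the renaming-and-collapsing step leading from $\wminMDP{\cM}{\Cause}$ to $\cM'$. Since schedulers are defined on finite non-maximal paths and make no choice in terminal states, relabelling and merging terminal states induces a natural two-way identification of the schedulers of the two MDPs that preserves the probability of reaching any set of terminal states. Writing $\Effect'=\{\effcov,\effunc\}$ for the effect set of $\cM'$, reaching the old $\Effect$ amounts to reaching $\Effect'$, and $\Cause$ is untouched. At this point I would record the two structural facts that drive the rest of the argument: by (A2) the states $\effcov$ and $\noeffc$ are reachable only through the $\gamma$-transition from $\Cause$, and the cause states cannot reach one another, so every path that visits $\Cause$ visits it exactly once and then proceeds directly to a terminal state in $\{\effcov,\noeffc\}$, never reaching $\effunc$. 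I would also observe that a cause state, having only terminal successors, lies in no end component and is hence not collapsed by the MEC-quotient, so that $\Cause\subseteq S_{\cN}$ and the term $\Diamond \Cause$ is meaningful in $\cN$.

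Second I would apply Lemma \ref{lem:MEC-contraction-terminal-states-final} to $\cM'$ and its MEC-quotient $\cN$, obtaining for each scheduler $\sched$ of $\cM'$ a scheduler $\tsched$ of $\cN$ (and vice versa) with $\Pr^{\sched}_{\cM'}(\Diamond t)=\Pr^{\tsched}_{\cN}(\Diamond t)$ for every terminal state $t$ of $\cM'$, in particular for $t\in\{\effcov,\effunc,\noeffc\}$. Composing this with the identification of the first step yields the two-way scheduler correspondence claimed in the lemma. It then remains to derive the three equalities by re-expressing each event through terminal states. Since $\effcov,\effunc$ are distinct terminal states, $\Pr(\Diamond \Effect)=\Pr(\Diamond \effcov)+\Pr(\Diamond \effunc)$ is preserved. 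Using the structural facts, the event $\Diamond \Cause$ coincides up to a null set with $\Diamond \effcov \vee \Diamond \noeffc$, so $\Pr(\Diamond \Cause)=\Pr(\Diamond \effcov)+\Pr(\Diamond \noeffc)$ is preserved as well. Finally, on $\Diamond \Cause$ the only effect state that can be reached is $\effcov$, whence $\Diamond \Cause \wedge \Diamond \Effect$ coincides with $\Diamond \effcov$, giving the third equality; as $\effcov$ is terminal this transfers across the correspondence directly.

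The main obstacle is precisely that Lemma \ref{lem:MEC-contraction-terminal-states-final} guarantees preservation only for terminal states, while $\Cause$ consists of non-terminal states. The crux of the proof is therefore the re-expression of $\Diamond \Cause$ and of $\Diamond \Cause \wedge \Diamond \Effect$ purely in terms of the terminal states $\effcov$ and $\noeffc$, which is exactly what assumption (A2) — the single $\gamma$-transition from each cause state into fresh terminal states that are reachable only from $\Cause$ — together with the mutual unreachability of cause states, delivers. Everything else is routine bookkeeping about reachability of terminal states under the scheduler correspondence.
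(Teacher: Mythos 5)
Your proposal is correct and follows essentially the same route as the paper's own proof: factor through the intermediate MDP $\cM'$ obtained by renaming/collapsing terminal states, invoke Lemma~\ref{lem:MEC-contraction-terminal-states-final} for the correspondence between $\cM'$ and its MEC-quotient $\cN$, and reduce all three events to reachability of the terminal states $\effcov$, $\effunc$, $\noeffc$ via assumption (A2). Your explicit observations that cause states lie in no end component (hence survive the quotient, so $\Diamond\Cause$ is meaningful in $\cN$) and that paths visit $\Cause$ at most once are points the paper leaves implicit, but they do not change the argument.
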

\begin{proof}
By Lemma \ref{lem:MEC-contraction-terminal-states-final}, there is a scheduler $\tsched$ for $\cN$ for each scheduler $\sched$ for $\cM^{\prime}$ such that each terminal state is reached with the same probability under $\tsched$ in $\cN$ and under $\sched$ in $\cM^{\prime}$.
The state $\effcov$ is also present in $\wminMDP{\cM}{\Cause}$ under the name $\eff$ and reached with the same probability as in $\cM^{\prime}$ when $\sched$ is considered as a scheduler for $\wminMDP{\cM}{\Cause}$. The state $\eff$ is furthermore reached in $\wminMDP{\cM}{\Cause}$ if and only if $\Diamond \Cause \land \Diamond \Effect$ is satisfied along a run.
 The set of terminal states in $\Effect$ is obtained from the set $\Effect$ in $\wminMDP{\cM}{\Cause}$ by collapsing states. As a scheduler $\sched$ can be viewed as a scheduler for both MDPs and these MDPs agree except for the terminal states, the first equality follows as well. 
As the probability to reach $\Cause$ is the sum of the probabilities to reach the terminal states $\effcov$ and $\noeffc$ in $\cN$ and $\cM^{\prime}$ and as these states are only renamed  in the transition from  $\wminMDP{\cM}{\Cause}$ to $\cM^{\prime}$, the claim follows. \qed
\end{proof}

From Lemma \ref{lem:probabilities_MEC-quotient} and Lemma \ref{lemma:wmin-criterion-PR-causes}, we conclude the following corollary that justifies working under assumptions (A1)-(A3) in Section \ref{sec:check}.

\begin{corollary}\label{cor:GPR_MEC}
The set $\Cause$ is a SPR/GPR cause for $\Effect$ in $\cM$ if and only if $\Cause$ is a SPR/GPR cause for $\Effect$ in $\cN$.
\end{corollary}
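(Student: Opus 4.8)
The plan is to prove the equivalence in two stages, exploiting that both the SPR and GPR conditions, under a fixed scheduler, depend only on a handful of reachability probabilities. First I would invoke Lemma~\ref{lemma:wmin-criterion-PR-causes} to replace $\cM$ by $\wminMDP{\cM}{\Cause}$ without affecting whether $\Cause$ is an SPR or GPR cause. The subsequent passage from $\wminMDP{\cM}{\Cause}$ to $\cM'$ only renames $\eff,\noeff$ into $\effcov,\noeffc$ and collapses the remaining effect states into $\effuncov$ and the true-negative states into $\noeffbot$; this is a bijective relabelling of terminal states that leaves the schedulers and the probabilities $\Pr^{\sched}_{\cM}(\Diamond\Effect)$, $\Pr^{\sched}_{\cM}(\Diamond\Cause)$, $\Pr^{\sched}_{\cM}(\Diamond c)$ and $\Pr^{\sched}_{\cM}(\Diamond\Cause\wedge\Diamond\Effect)=\Pr^{\sched}_{\cM}(\Diamond\effcov)$ unchanged, so it trivially preserves both cause conditions. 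It therefore remains to transfer SPR/GPR across the MEC-quotient passage from $\cM'$ to $\cN$, for which Lemma~\ref{lem:probabilities_MEC-quotient} supplies a two-sided scheduler correspondence.

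For the GPR case this is immediate. Writing $\Pr^{\sched}_{\cM}(\Diamond\Effect\mid\Diamond\Cause)=\Pr^{\sched}_{\cM}(\Diamond\effcov)/\Pr^{\sched}_{\cM}(\Diamond\Cause)$, condition~\eqref{GPR} under $\sched$ is a relation among exactly the three quantities $\Pr^{\sched}_{\cM}(\Diamond\Cause)$, $\Pr^{\sched}_{\cM}(\Diamond\effcov)$ and $\Pr^{\sched}_{\cM}(\Diamond\Effect)$. Since Lemma~\ref{lem:probabilities_MEC-quotient} produces, for every scheduler on one side, a scheduler on the other side agreeing on all three values (and conversely), a scheduler violating \eqref{GPR} exists in $\cM'$ iff one exists in $\cN$. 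Hence $\Cause$ is a GPR cause in $\cM'$ iff it is one in $\cN$.

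For the SPR case I would first reduce the condition to a statement about $\Pr^{\sched}_{\cM}(\Diamond\Effect)$ alone. Since every $c\in\Cause$ is pre-terminal with a single $\gamma$-transition reaching $\Effect$ with probability $w_c$, and since cause states cannot reach one another, $(\neg\Cause)\Until c$ is equivalent to $\Diamond c$ and $\Pr^{\sched}_{\cM}(\Diamond\Effect\mid(\neg\Cause)\Until c)=w_c$. Thus \eqref{SPR} for $c$ amounts to ``$w_c>\Pr^{\sched}_{\cM}(\Diamond\Effect)$ for every $\sched$ with $\Pr^{\sched}_{\cM}(\Diamond c)>0$'', in both $\cM'$ and $\cN$. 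This transfers provided the scheduler correspondence preserves the \emph{joint} value $(\Pr^{\sched}_{\cM}(\Diamond c),\Pr^{\sched}_{\cM}(\Diamond\Effect))$ for each individual cause state $c$; condition~(M), being the reachability statement that $\Pr^{\sched}_{\cM}(\Diamond c)>0$ for some $\sched$, then transfers as well.

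The main obstacle is exactly this last point: Lemma~\ref{lem:probabilities_MEC-quotient} only records the aggregate probabilities $\Pr^{\sched}_{\cM}(\Diamond\Cause)$ and $\Pr^{\sched}_{\cM}(\Diamond\effcov)$, which suffices for GPR but not for the per-state SPR condition. I would close this gap by strengthening the correspondence to also preserve $\Pr^{\sched}_{\cM}(\Diamond c)$ for every $c\in\Cause$, appealing to the finer path-level identity $p_\rho^{\tsched}=\sum_{\pi\in\Pi_\rho}p_\pi^{\sched}$ established inside the proof of Lemma~\ref{lem:MEC-contraction-terminal-states-final}. This applies because each cause state lies outside every MEC of $\cM'$ (its only enabled action leads directly to the terminal states $\effcov,\noeffc$), so the quotient map $\iota$ keeps it as a separate state of $\cN$ and reaching it is preserved under the correspondence. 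With this refinement the SPR condition transfers exactly as the GPR condition does, which together with the reduction via Lemma~\ref{lemma:wmin-criterion-PR-causes} yields the claimed equivalence.
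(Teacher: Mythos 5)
Your proposal follows essentially the same route as the paper's proof: reduce from $\cM$ to $\wminMDP{\cM}{\Cause}$ via Lemma~\ref{lemma:wmin-criterion-PR-causes}, observe that renaming/collapsing terminal states is harmless, and then transfer across the MEC-quotient using the scheduler correspondence of Lemma~\ref{lem:probabilities_MEC-quotient}. If anything, you are more careful than the paper on one point: the paper's proof text only argues the GPR case explicitly (the three aggregate probabilities listed in Lemma~\ref{lem:probabilities_MEC-quotient} suffice there), whereas you correctly identify that the per-state SPR condition needs the finer preservation of $\Pr^{\sched}(\Diamond c)$ for each $c\in\Cause$, and your justification --- the path-level identity inside the proof of Lemma~\ref{lem:MEC-contraction-terminal-states-final} together with the fact that cause states lie outside every MEC --- is sound and is exactly the kind of refinement the paper itself invokes later (Claim~1 in the proof of Theorem~\ref{fscoreNPcoNP}).
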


\begin{proof}
By Lemma \ref{lem:probabilities_MEC-quotient}, for each scheduler $\sched$ for $\wminMDP{\cM}{\Cause}$, there is a scheduler $\tsched$ for $\cN$ such that all relevant probabilities agree, and vice versa. So, $\Cause$ is a GPR cause for $\Effect$ in $\wminMDP{\cM}{\Cause}$ if and only if it is a GPR cause in $\cN$.
By Lemma \ref{lemma:wmin-criterion-PR-causes}, 
$\Cause$ is a GPR cause for $\Effect$ in $\wminMDP{\cM}{\Cause}$ if and only if it is a GPR cause in $\cM$.
\qed
\end{proof}

\subsection{Proofs to Section \ref{sec:check-GPR}}\label{app:GPR_check}

\MRschedulersufficient*

\begin{proof}
	Let $\usched$ be a scheduler with
	$\Pr^{\usched}_{\cM}(\Diamond \Cause) > 0$ violating \eqref{GPR-1}, i.e.:
	\[
	\Pr_{\cM}^{\usched}(\Diamond \Cause) \cdot
	\Pr^{\usched}_{\cM}( \Diamond \effuncov)
	\ < \ 
	\bigl(1{-}\Pr_{\cM}^{\usched}(\Diamond \Cause)\bigr)
	\cdot 
	\!\!\!\!\!\sum_{c\in \Cause} \!\!\!\!\!
	\Pr^{\usched}_{\cM}(\Diamond c) \cdot w_c.
	\]

	We will show  how to transform $\usched$ into an
	MR-scheduler $\tsched$ that schedules the $\tau$-transitions to
	$\noeffbot$ with probability 0 or 1.
	For this, we regard the set $U$ of states
	$u$ that have a $\tau$-transition to $\noeffbot$
	(recall that then $P(u,\tau,\noeffbot)=1$)
	and where $0 < \usched(u)(\tau) < 1$.
	We now process the $U$-states in an arbitrary order,
	say $u_1,\ldots,u_k$,
	and generate a sequence $\tsched_0=\usched,\tsched_1,\ldots,\tsched_k$
	of MR-schedulers such that for $i \in \{1,\ldots,k\}$:
	\begin{itemize}
		\item
		$\tsched_i$ refutes the GPR condition
		(or equivalently condition \eqref{GPR-1} from
		Lemma \ref{lem:GPR-poly-constraint})
		\item 
		$\tsched_i$
		agrees with $\tsched_{i-1}$ for all states but $u_i$,
		\item
		$\tsched_i(u_i)(\tau)\in \{0,1\}$.
	\end{itemize}
	Thus, the final scheduler $\tsched_k$ satisfies the
	desired properties.
	
	We now explain how to derive $\tsched_i$ from $\tsched_{i-1}$.
	Let $i\in \{1,\ldots,k\}$, $\vsched=\tsched_{i-1}$, $u=u_i$
	and $y=1{-}\vsched(u)(\tau)$. Then, $0<y<1$ (as $u\in U$ and
	by definition of $U$) and
	$y= \sum_{\alpha\in \Act(u)\setminus \{\tau\}} \vsched(u)(\alpha)$.
	
	For $x\in [0,1]$, let $\vsched_x$ denote the MR-scheduler that
	agrees with $\vsched$ for all states but $u$, for which
	$\vsched_x$'s decision is as follows:
	$$
	\vsched_x(u)(\tau)=1{-}x,
	\qquad
	\vsched_x(u)(\alpha)=\vsched(u)(\alpha) \cdot \frac{x}{y}
	\quad \text{for $\alpha \in \Act(u)\setminus \{\tau\}$}
	$$ 
	Obviously, $\vsched_y=\vsched$.
	We now show that at least one of the two MR-schedulers
	$\vsched_0$ or $\vsched_1$ also refutes the GPR condition.
	For this, we suppose by contraction that this is not the case,
	which means that the GPR condition holds for both.
	
	Let $f : [0,1]\to [0,1]$ be defined by
	$$
	f(x) \ = \
	\Pr_{\cM}^{\vsched_x}(\Diamond \Cause) \cdot
	\Pr^{\vsched_x}_{\cM}( \Diamond \effuncov)
	- 
	\bigl(1{-}\Pr_{\cM}^{\vsched_x}(\Diamond \Cause)\bigr)
	\cdot 
	\!\!\!\!\!\sum_{c\in \Cause} \!\!\!\!\!\!
	\Pr^{\vsched_x}_{\cM}(\Diamond c) \cdot w_c
	$$
	As $\vsched = \vsched_y$ violates \eqref{GPR-1},
	while $\vsched_0$ and $\vsched_1$
	satisfy \eqref{GPR-1} we obtain:
	$$
	f(0), f(1) < 0  \qquad \text{and} \qquad f(y) \geqslant 0
	$$
	We now split $\Cause$ into the set $C$ of states $c\in \Cause$
	such that there is a $\vsched$-path from $\init$ to $c$ that traverses
	$u$ and $D=\Cause \setminus C$.
	Thus, $\Pr^{\vsched_x}_{\cM}(\Diamond \Cause)= p_x + p$
	where $p_x=\Pr^{\vsched_x}(\Diamond C)$ and
	$p=\Pr^{\vsched}(\Diamond D)$.
	Similarly, $\Pr^{\vsched_x}_{\cM}(\Diamond \effunc)$ has the form
	$q_x + q$ where
	$q_x=\Pr^{\sched_x}_{\cM}(\Diamond (u \wedge \Diamond \effunc))$
	and
	$q= \Pr^{\sched_x}_{\cM}((\neg u) \Until \effunc)$.
	With $p_{x,c}=\Pr^{\vsched_x}_{\cM}(\Diamond c)$
	for $c\in C$
	and $p_d = \Pr^{\vsched}_{\cM}(\Diamond d)$ for $d\in D$,
	let
	$$
	v_x  = \sum_{c\in C} p_{x,c}\cdot w_c
	\qquad \text{and} \qquad 
	v = \sum_{d\in D} p_{d}\cdot w_d
	$$
	As $y$ is fixed, the values
	$p_{y},p_{y,c},q_{y},v_{y}$ can be seen as
	constants.
	Moreover,
	the values $p_x,p_{x,c},q_x,v_x$ differ from
	$p_y,p_{y,c},q_y,v_y$
	only by the factor $\frac{x}{y}$.
	That is:
	\begin{center}
		$p_x=p_y\frac{x}{y}$, \ \
		$p_{x,c}=p_{y,c}\frac{x}{y}$, \ \   
		$q_x=q_y\frac{x}{y}$ \ \ and \ \
		$v_x=v_y\frac{x}{y}$.
	\end{center}
	Thus,
	$f(x)$ has the following form:
	\begin{eqnarray*}
		f(x) & = &
		(p_x{+}p)(q_x{+}q) - \bigl(1{-}(p_x{+}p)\bigr)(v_x{+}v)
		\\[1ex]
		& = &
		\underbrace{p_x q_x {+} p_xv_x}_{\mathfrak{a}x^2} +
		\underbrace{p_x(q+v) + q_xp - v_x}_{\mathfrak{b}x} +
		\underbrace{pq -v +pv}_{\mathfrak{c}}
		\\[1ex]
		& = & \mathfrak{a}x^2 + \mathfrak{b}x + \mathfrak{c}
	\end{eqnarray*}
	For the value $\mathfrak{a}$, we have
	$\mathfrak{a}x^2=p_x q_x {+} p_x v_x $ and hence
	$\mathfrak{a}= \frac{1}{y^2}(p_yq_y + p_yv_y)>0$.
	But then the second derivative $f''(x)=2\mathfrak{a}$ of
	$f$ is positive, which yields
	that $f$ has a global minimum at some point $x_0$ and is strictly
	decreasing for $x < x_0$ and strictly increasing for $x> x_0$.
	As $f(0)$ and $f(1)$ are both negative, we obtain
	$f(x) <0$ for all $x$ in the interval $[0,1]$.
	But this contradicts $f(y) \geqslant 0$.
	
	This yields that at least one of the schedulers $\vsched_0$
	or $\vsched_1$ witnesses the violation of the GPR condition.
	Thus, we can define $\tsched_i\in \{\vsched_0,\vsched_1\}$ accordingly.
	
	The number of states $k$ in $U$ is bounded by the number of states in $S$. In each iteration of the above construction, the function value $f(0)$ is sufficient to determine one of the schedulers $\vsched_0$
	and $\vsched_1$ witnessing the violation of the GPR condition. So, the procedure has to compute the values in condition \eqref{GPR-1} for $k$-many MR-schedulers and update the scheduler afterwards. The update can easily be carried out in polynomial time. Hence, the total run-time of all $k$ iterations is polynomial as well.
	\qed
\end{proof}

\MRschedulerlift*

\begin{proof}
Let $\sched$ be an MR-scheduler for $\MEC{\cM}$ such that $\sched(s_{\cE})(\tau) \in \{0,1\}$ for each MEC $\cE$ of $\cM$.
First, we consider the following extension $\cM^\prime$ of $\cM$: The state space of $\cM$ is extended by a new terminal state $\bot$ and 
a fresh action $\tau$ is enabled in each state $s$ that belongs to a MEC of $\cM$. Action $\tau$ leads to $\bot$ with probability $1$. All remaining transition probabilities are as in $\cM$.
So, $\cM^\prime$ is obtained from $\cM$ by allowing a transition to a new terminal state $\bot$ as in the MEC-quotient from each state that belongs to a MEC.

Now, we first provide a finite-memory scheduler $\tsched$ for $\cM^\prime$ that leaves each MEC $\cE$ for which $\sched(s_{\cE})(\tau)=0$ via the state action pair $(s,\alpha)$ with probability $\sched(s_{\cE})(\alpha)$. Recall that we assume that each action is enabled in at most one state and that the actions enabled in the state $s_{\cE}$ in $\MEC{\cM}$ are precisely the actions that are enabled in some state of $\cE$ and that do not belong to $\cE$ (see Appendix \ref{app:MEC-quotient})

The scheduler $\tsched$ is defined as follows:
In all states that do not belong to a MEC $\cE$ of $\cM$ with $\sched(s_{\cE})(\tau)=0$, the behavior of $\tsched$ is memoryless:
For each state $s$ of $\cM$ (and hence of $\cM^\prime$) that does not belong to a MEC, $\tsched(s)=\sched(s)$. For each state $s$ in an end component 
$\cE$ of $\cM$ with $\sched(s_{\cE})(\tau)=1$, we define $\tsched(s)(\tau)=1$.
If a MEC $\cE$ of $\cM$ with $\sched(s_{\cE})(\tau)=0$ is entered, $\tsched$ makes use of finitely many memory modes as follows:
Enumerate the state action pairs $(s,\alpha)$ where $s$ belongs to $\cE$, but $\alpha$ does not belong to $\cE$, and for which $\sched(s_{\cE})(\alpha)>0$ by 
$(s_1,\alpha_1)$, \dots, $(s_k,\alpha_k)$ for some natural number $k$. Further, let $p_i\eqdef \sched(s_{\cE})(\alpha_i)>0$ for all $1\leq i \leq k$.
By assumption $\sum_{1\leq i \leq k} p_i =1$.

When entering $\cE$, the scheduler works in $k$ memory modes $1$, \dots, $k$ until an action $\alpha$ that does not belong to $\cE$ is scheduled
starting  in memory mode $1$.
In each memory mode $i$,
$\tsched$ follows an MD-scheduler for $\cE$ that reaches $s_i$ with probability $1$ from all states of $\cE$.
Once, $s_i$ is reached, $\tsched$ chooses action $\alpha_i$ with probability 
\[
q_i \eqdef \frac{p_i}{1-\sum_{j<i} p_j}.
\]
Note that this means that $\tsched$ leaves $\cE$ via $(s_k,\alpha_k)$ with probability $1$ if it reaches the last memory mode $k$.
As $\tsched$ behaves in a memoryless deterministic way in each memory mode, it leaves the end component $\cE$ after finitely many steps in expectation.
Furthermore, for each $i\leq k$, it leaves $\cE$ via $(s_i,\alpha_i)$ precisely with probability $(1-\sum_{j<i} p_j)\cdot q_i = p_i$.
As the behavior of $\sched$ in $\MEC{\cM}$ is hence mimicked by $\tsched$ in $\cM^{\prime}$, we conclude that the expected frequency of all actions of $\cM$ that do not belong to an end component is the same in $\cM^\prime$ under $\tsched$ and in $\MEC{\cM}$ under $\sched$.

As each end component of $\cM^\prime$ is either left directly via $\tau$ under $\tsched$ or after finitely many steps in expectation as just described,
the expected frequency of each state-action pair of $\cM^\prime$ under $\tsched$ is finite.
In the terminology of \cite{Kallenberg20}, the scheduler $\tsched$ is \emph{transient}.
By \cite[Theorem 9.16]{Kallenberg20}, this implies that there is a MR-scheduler $\usched$ for $\cM^\prime$ under which the expected frequency of all state-action pairs is the same as under $\tsched$. So, for this scheduler $\usched$,  the expected frequency in $\cM^\prime$ of all actions $\alpha$ of $\cM$ that do not belong to an end component is the same as under $\sched$ in $\MEC{\cM}$.

Finally, we modify $\usched$ such that it becomes a scheduler for $\cM$: For each end component $\cE$ of $\cM$ with $\sched(s_{\cE})(\tau)=1$, we fix a memoryless scheduler $\usched_{\cE}$ that does not leave the end component. Now, whenever a state $s$ in such an end component is visited, the modified scheduler switches to the behavior of $\usched_{\cE}$ instead of choosing action $\tau$ with probability $1$. Clearly, this does not affect the expected frequency of actions of $\cM$ that do not belong to an end component and hence the modified scheduler is as claimed in the theorem. \qed
\end{proof}

\begin{remark}
The proof of Theorem \ref{thm:MRscheduler_lift} above provides an algorithm  how to obtain the scheduler $\tsched$ from $\sched$. The number of memory modes of the intermediately constructed finite-memory scheduler is bounded by the number of state-action pairs of $\cM$. Further, in each memory mode during the traversal of a MEC, the scheduler behaves in a memoryless deterministic way. Hence, the induced Markov chain is of size polynomial in the size of the MDP $\cM$ and the representation of the scheduler $\sched$. Therefore, also the expected frequencies of all state-action pairs under the intermediate finite-memory scheduler and hence under $\tsched$ can be computed in time polynomial in the size of the MDP $\cM$ and the representation of the scheduler $\sched$. 
So, also the scheduler $\tsched$ itself which can be derived from these expected frequencies can be computed in polynomial time from $\sched$.

Together with Lemma \ref{lem:MR-sufficient-global-cause}, this means that $\tsched$ and hence the scheduler with two memory modes whose existence is stated in Theorem \ref{thm:MR-sufficient-GPR} can be computed from a solution to the constraint system (1)-(5) from Section \ref{sec:check-GPR} in time polynomial in the size of the original MDP and the size of the representation of the solution to (1)-(5).
\Ende
\end{remark}

\section{Omitted Proofs and Details of Section \ref{sec:criteria}}\label{app:section_5}

\subsection{Proofs of Section \ref{sec:comp-acc-measures-fixed-cause}}
	\label{app:comp-acc-measures}

The following lemma shows that all three quality measures are
preserved by the switch from $\cM$ to
$\wminMDP{\cM}{\Cause}$.

\begin{lemma}
	\label{lemma:accuracy-measures-M-and-Mcause} 
	If $\Cause$ is an SPR or a GPR cause then:
	\begin{center}
		\begin{tabular}{rcl}
			$\recall_{\cM}(\Cause)$ & \ = \ & $\recall_{\wminMDP{\cM}{\Cause}}(\Cause)$
			\\
			$\covratio_{\cM}(\Cause)$ & = & $\covratio_{\wminMDP{\cM}{\Cause}}(\Cause)$
			\\
			$\fscore_{\cM}(\Cause)$ & = & $\fscore_{\wminMDP{\cM}{\Cause}}(\Cause)$
		\end{tabular}  
	\end{center}  
\end{lemma}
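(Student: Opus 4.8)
The plan is to express all three measures through the quantities $\mathsf{tp}^{\sched}$, $\mathsf{fp}^{\sched}$, $\mathsf{fn}^{\sched}$ and to exploit that passing from $\cM$ to $\wminMDP{\cM}{\Cause}$ only changes the behaviour \emph{after} a cause state has been entered. For a scheduler $\sched$ write $p_c^{\sched}=\Pr^{\sched}_{\cM}((\neg\Cause)\Until c)$ and let $r_c^{\sched}$ be the conditional probability of eventually reaching $\Effect$ measured from the first visit of $c$. Then
\[
 \mathsf{fn}^{\sched}=\Pr^{\sched}_{\cM}((\neg\Cause)\Until\Effect),\qquad
 \mathsf{tp}^{\sched}=\sum_{c\in\Cause}p_c^{\sched}\,r_c^{\sched},\qquad
 \mathsf{fp}^{\sched}=\sum_{c\in\Cause}p_c^{\sched}(1{-}r_c^{\sched}),
\]
and $\Pr^{\sched}_{\cM}(\Diamond\Cause)=\sum_c p_c^{\sched}=\mathsf{tp}^{\sched}+\mathsf{fp}^{\sched}$. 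The numbers $p_c^{\sched}$ and $\mathsf{fn}^{\sched}$ depend only on the computation before $\Cause$ is hit; since $\cM$ and $\wminMDP{\cM}{\Cause}$ coincide on all states outside $\Cause$ and on the transitions \emph{into} $\Cause$, these ``pre-cause'' data are realisable by exactly the same schedulers in both models. The single difference is the continuation: in $\wminMDP{\cM}{\Cause}$ one is forced to $r_c=w_c=\Pr^{\min}_{\cM,c}(\Diamond\Effect)$, which is the smallest value $r_c^{\sched}$ can take in $\cM$.

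Next I would record the monotonicity driving the argument. Using $\mathsf{tp}^{\sched}+\mathsf{fn}^{\sched}=\Pr^{\sched}_{\cM}(\Diamond\Effect)$ and $\mathsf{fp}^{\sched}=\Pr^{\sched}_{\cM}(\Diamond\Cause)-\mathsf{tp}^{\sched}$, the scheduler-wise measures read
\[
 \recall^{\sched}(\Cause)=\frac{\mathsf{tp}^{\sched}}{\mathsf{tp}^{\sched}+\mathsf{fn}^{\sched}},\qquad
 \covratio^{\sched}(\Cause)=\frac{\mathsf{tp}^{\sched}}{\mathsf{fn}^{\sched}},\qquad
 \fscore^{\sched}(\Cause)=\frac{2\,\mathsf{tp}^{\sched}}{\mathsf{tp}^{\sched}+\Pr^{\sched}_{\cM}(\Diamond\Cause)+\mathsf{fn}^{\sched}}.
\]
Holding the pre-cause data $(p_c^{\sched})_{c}$ and $\mathsf{fn}^{\sched}$ fixed (so that $\Pr^{\sched}_{\cM}(\Diamond\Cause)$ is fixed as well), each of these is a non-decreasing function of $\mathsf{tp}^{\sched}$. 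Hence the smallest value compatible with a given pre-cause behaviour is attained exactly by the minimal continuation $r_c=w_c$, i.e.\ by the corresponding scheduler of $\wminMDP{\cM}{\Cause}$.

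Finally I would turn this into the two inequalities between the infima. For $\recall_{\wminMDP{\cM}{\Cause}}(\Cause)\le\recall_{\cM}(\Cause)$ (and likewise for $\covratio$ and $\fscore$): given $\sched$ in $\cM$, let $\sched'$ be the scheduler of $\wminMDP{\cM}{\Cause}$ with the same pre-cause behaviour; then $\mathsf{tp}^{\sched'}=\sum_c p_c^{\sched}w_c\le\mathsf{tp}^{\sched}$, while $\mathsf{fn}^{\sched'}=\mathsf{fn}^{\sched}$ and $\Pr^{\sched'}(\Diamond\Cause)=\Pr^{\sched}(\Diamond\Cause)$, so by monotonicity each measure under $\sched'$ is at most its value under $\sched$. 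For the reverse inequality, lift any $\sched'$ of $\wminMDP{\cM}{\Cause}$ to a scheduler $\hat{\sched}$ of $\cM$ that copies $\sched'$ until $\Cause$ is reached and afterwards follows a fixed MD-scheduler attaining $\Pr^{\min}_{\cM,s}(\Diamond\Effect)$ from every state; then $r_c^{\hat{\sched}}=w_c$, so $\mathsf{tp}$, $\mathsf{fp}$, $\mathsf{fn}$ and hence all three measures agree for $\sched'$ and $\hat{\sched}$. Together the two inequalities give equality of the infima, which are precisely $\recall_{\cM}(\Cause)$, $\covratio_{\cM}(\Cause)$, $\fscore_{\cM}(\Cause)$ and their $\wminMDP{\cM}{\Cause}$-counterparts.

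The main obstacle is not the algebra but the bookkeeping of the degenerate cases in the definitions: coverage ratio is set to $+\infty$ when $\mathsf{fn}^{\sched}=0$, and recall and f-score are only defined (or set to $0$) under positivity of $\Pr^{\sched}(\Diamond\Effect)$ resp.\ $\Pr^{\sched}(\Diamond\Cause)$. One must check that the scheduler correspondence respects these side conditions---for instance a near-optimal $\sched$ for the infimum of recall has $\recall^{\sched}(\Cause)<1$, forcing $\mathsf{fn}^{\sched}>0$ and thus a well-defined counterpart $\sched'$---and that the infima over the degenerate schedulers cannot undercut the values computed above. Two facts keep this manageable: a single MD-scheduler simultaneously minimises the effect probability from all states, so the lift $\hat{\sched}$ is well defined; and by Lemma~\ref{lemma:wmin-criterion-PR-causes} the set $\Cause$ remains an SPR/GPR cause in $\wminMDP{\cM}{\Cause}$, so all three measures stay meaningful on both sides.
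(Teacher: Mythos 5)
Your proposal is correct and takes essentially the same route as the paper's own proof: one direction by viewing each scheduler of $\wminMDP{\cM}{\Cause}$ as a scheduler of $\cM$ with identical measure values, the other by replacing the post-cause behaviour of an arbitrary $\cM$-scheduler with the effect-minimizing MD-scheduler and using monotonicity of all three measures when the pre-cause data (hence $\mathsf{fn}^{\sched}$ and $\Pr^{\sched}_{\cM}(\Diamond\Cause)$) are held fixed. The only cosmetic difference is that the paper argues the f-score case via monotonicity of the harmonic mean in precision and recall, whereas you write the f-score directly as an increasing function of $\mathsf{tp}^{\sched}$; these are equivalent.
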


\begin{proof}
	``$\leqslant$'':  
	Each scheduler for $\wminMDP{\cM}{\Cause}$ can be viewed as
	a scheduler $\sched$ for $\cM$ that behaves as an MD-scheduler minimizing the probability
	for reaching an effect state from every state in $\Cause$ and we have:
	\begin{center}
		\begin{tabular}{rcl}
			$\recall_{\cM}^{\sched}(\Cause)$ & \ = \ &
			$\recall^{\sched}_{\wminMDP{\cM}{\Cause}}(\Cause)$ \\
			$\covratio_{\cM}^{\sched}(\Cause)$ & = &
			$\covratio^{\sched}_{\wminMDP{\cM}{\Cause}}(\Cause)$ \\
			$\precision_{\cM}^{\sched}(\Cause)$ & = &
			$\precision^{\sched}_{\wminMDP{\cM}{\Cause}}(\Cause)$ \\
		\end{tabular}
	\end{center}
	and therefore:
	\begin{center}
		\begin{tabular}{rcl}
			$\fscore_{\cM}^{\sched}(\Cause)$ & \ = \ &
			$\fscore^{\sched}_{\wminMDP{\cM}{\Cause}}(\Cause)$
		\end{tabular}  
	\end{center}
	We obtain
	$\recall_{\cM}(\Cause)\leqslant \recall_{\wminMDP{\cM}{\Cause}}(\Cause)$
	and the analogous statements for the coverage ratio and the f-score.

	``$\geqslant$'':  
	Let $\sched$ be a scheduler of
	$\cM$. Let $\tsched = \tsched_{\sched}$
	the scheduler of $\cM$ that behaves as $\sched$ until the first
	visit to a state in $\Cause$. As soon as $\tsched$ has reached $\Cause$, it
	behaves as an MD-scheduler minimizing the probability to reach $\Effect$.
	Recall and coverage
	under $\tsched$ and $\sched$ have the form:
	$$ 
	\begin{array}{rclcrcl}
		\recall_{\cM}^{\sched}(\Cause)
		& \ = \ & \frac{x}{x + q}
		& \ \ \ \ &
		\covratio_{\cM}^{\sched}(\Cause) & \ = \ & \frac{x}{q}
		\\
		\recall_{\cM}^{\tsched}(\Cause)
		& = & \frac{y}{y + q}
		& & 
		\covratio_{\cM}^{\sched}(\Cause) & = & \frac{y}{q}
	\end{array}  
	$$
	where  $x \geqslant y$ (and $q=\mathsf{fn}^{\sched}$).
	Considering $\tsched$ as a scheduler of $\cM$ and of
	$\wminMDP{\cM}{\Cause}$, we get:
	$$
	\begin{array}{rcccl}
		\recall_{\cM}^{\sched}(\Cause)
		& \geqslant &
		\recall_{\cM}^{\tsched}(\Cause)
		& = &
		\recall_{\wminMDP{\cM}{\Cause}}^{\tsched}(\Cause)
		\\
		\covratio_{\cM}^{\sched}(\Cause)
		& \geqslant &
		\covratio_{\cM}^{\tsched}(\Cause)
		& = &
		\covratio_{\wminMDP{\cM}{\Cause}}^{\tsched}(\Cause)
	\end{array}
	$$
	This implies:
	$$
	\begin{array}{rcl}
		\recall_{\cM}^{\sched}(\Cause) & \ \geqslant \ &
		\recall_{\wminMDP{\cM}{\Cause}}(\Cause)
		\\
		\covratio_{\cM}(\Cause) & \geqslant &
		\covratio_{\wminMDP{\cM}{\Cause}}(\Cause)
	\end{array}
	$$     
	With similar arguments we get:
	$$
	\begin{array}{rcccl}
		\precision_{\cM}^{\sched}(\Cause)
		& \geqslant & 
		\precision_{\cM}^{\tsched}(\Cause)
		& = &
		\precision_{\wminMDP{\cM}{\Cause}}^{\tsched}(\Cause)
	\end{array}
	$$ 
	As the harmonic mean viewed as a function
	$f: \Real_{>0}^2 \to \Real$, $f(x,y) = 2 \frac{xy}{x{+}y}$
	is monotonically increasing in both arguments
	(note that $\frac{df}{dx} = \frac{y^2}{x{+}y}>0$ and
	$\frac{df}{dy} = \frac{x^2}{x{+}y}>0$),
	we obtain:
	$$
	\fscore_{\cM}^{\sched}(\Cause)
	\ \geqslant \ 
	\fscore_{\cM}^{\tsched}(\Cause)
	\ = \
	\fscore_{\wminMDP{\cM}{\Cause}}^{\tsched}(\Cause)
	$$
	This yields
	$\fscore_{\cM}(\Cause) \geqslant \fscore_{\wminMDP{\cM}{\Cause}}(\Cause)$.
	\qed
\end{proof}

\begin{lemma}
	\label{lemma:accuracy-measures-M-and-Mcause2} 
	Let $\cN$ be the MEC-quotient of $\wminMDP{\cM}{\Cause}$ for some MDP $\cM$ with a set of terminal states $\Effect$ and 
	 an SPR or a GPR cause $\Cause$. Then:
	\begin{center}
		\begin{tabular}{rcl}
			$\recall_{\cN}(\Cause)$ & \ = \ & $\recall_{\wminMDP{\cM}{\Cause}}(\Cause)$
			\\
			$\covratio_{\cN}(\Cause)$ & = & $\covratio_{\wminMDP{\cM}{\Cause}}(\Cause)$
			\\
			$\fscore_{\cN}(\Cause)$ & = & $\fscore_{\wminMDP{\cM}{\Cause}}(\Cause)$
		\end{tabular}  
	\end{center}  
\end{lemma}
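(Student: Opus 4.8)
The plan is to reduce everything to the scheduler correspondence already supplied by Lemma~\ref{lem:probabilities_MEC-quotient} and to observe that, on the level of a single scheduler, each of the three quality measures is a fixed function of the three reachability probabilities that this correspondence preserves.

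First I would record that, writing $p_\Effect^\sched = \Pr^\sched(\Diamond \Effect)$, $p_\Cause^\sched = \Pr^\sched(\Diamond \Cause)$ and $p_{\mathsf{tp}}^\sched = \Pr^\sched(\Diamond(\Cause \wedge \Diamond \Effect)) = \Pr^\sched(\Diamond \effcov)$ (the last equality holding by (A2)), we have $\mathsf{tp}^\sched = p_{\mathsf{tp}}^\sched$, $\mathsf{fn}^\sched = p_\Effect^\sched - p_{\mathsf{tp}}^\sched$ and $\mathsf{tp}^\sched + \mathsf{fp}^\sched = p_\Cause^\sched$. Consequently $\recall^\sched(\Cause) = p_{\mathsf{tp}}^\sched/p_\Effect^\sched$, $\precision^\sched(\Cause) = p_{\mathsf{tp}}^\sched/p_\Cause^\sched$ and $\covratio^\sched(\Cause) = p_{\mathsf{tp}}^\sched/(p_\Effect^\sched - p_{\mathsf{tp}}^\sched)$; since the f-score is the harmonic mean of precision and recall, it too is a function of these same three numbers.

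Next I would invoke Lemma~\ref{lem:probabilities_MEC-quotient}, which provides, for every scheduler $\sched$ of $\wminMDP{\cM}{\Cause}$ a scheduler $\tsched$ of $\cN$, and conversely, with $p_\Effect^\sched = p_\Effect^\tsched$, $p_\Cause^\sched = p_\Cause^\tsched$ and $p_{\mathsf{tp}}^\sched = p_{\mathsf{tp}}^\tsched$. Combined with the first step, matched schedulers have identical recall, precision, coverage ratio, and hence f-score. Moreover the defining conditions $p_\Effect^\sched > 0$, $p_\Cause^\sched > 0$ and $\mathsf{fn}^\sched = p_\Effect^\sched - p_{\mathsf{tp}}^\sched = 0$ are preserved, so the classes of schedulers over which the infima are taken correspond, and the degenerate conventions ($\covratio = +\infty$ when $\mathsf{fn}^\sched = 0$, and $\fscore = 0$ when $p_\Effect^\sched > 0 = p_\Cause^\sched$) are matched as well. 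Since $\recall(\Cause)$, $\covratio(\Cause)$ and $\fscore(\Cause)$ are by definition the infima of the per-scheduler values over these classes, and every value attained on one side is attained on the other, the three infima coincide, yielding the claimed equalities.

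The argument contains no genuinely hard step; the only points requiring care are (i) checking that \emph{every} measure factors through exactly the three probabilities preserved by Lemma~\ref{lem:probabilities_MEC-quotient}---most delicately the f-score, which is not itself a single ratio of reachability probabilities but only a function of precision and recall---and (ii) verifying that the boundary conventions (undefined cases, the value $+\infty$, and the clause $\fscore = 0 \Leftrightarrow \recall = 0$) transfer along the correspondence, so that on both sides the infimum is taken over the same family of admissible schedulers.
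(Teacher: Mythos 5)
Your proof is correct and matches the paper's approach: the paper's entire proof is the one-line remark that the claim follows ``analogously to the proof of Lemma~\ref{lem:probabilities_MEC-quotient}'', i.e., via exactly the scheduler correspondence between $\wminMDP{\cM}{\Cause}$ and its MEC-quotient that you invoke. Your write-up is in fact more explicit than the paper's, since you additionally verify that recall, coverage ratio and f-score (including the degenerate conventions for $+\infty$ and $0$) all factor through the three preserved reachability probabilities, which is precisely the content the paper leaves implicit.
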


\begin{proof}
Analogously to the proof of Lemma \ref{lem:probabilities_MEC-quotient}. \qed
\end{proof}

This lemma now allows us to work under assumptions (A1)-(A3) when addressing problems concerning the quality measures for a fixed cause set.

\compQ*

\begin{proof}
	$\cM$ has a scheduler $\sched$
	with $\Pr^{\sched}_{\cM}(\Diamond U)>0$ and $\Pr^{\sched}_{\cM}(\Diamond V)=0$
	if and only if
	the transformed MDP $\cN$ in Section \ref{sec:comp-quotient}
	(Max/min ratios of reachability probabilities for disjoint sets of terminal states)
	has an EC containing at least one $U$-state.
	Therefore we then have 
	$$\mathrm{E}^{\max}_{\cN}(\boxplus V) = +\infty.$$
	Otherwise, $$\mathrm{E}^{\max}_{\cN}(\boxplus V)=
	1/\mathrm{E}^{\min}_{\cN}(\boxplus V).$$
	
	For the following we only consider $\ratio{\min}{\cM}{U,V} = \mathrm{E}^{\min}_\cN(\boxplus V)$ since the arguments for the maximum are similar.
	First we show
	$\ratio{\min}{\cM}{U,V}  \leqslant \mathrm{E}^{\min}_{\cN}(\boxplus V)$.
	For this, we consider an arbitrary scheduler $\sched$ for $\cM$.
	Let
	\begin{align*}
		x  &= \Pr^{\sched}_{\cM}(\Diamond U)& 
		p & = \Pr^{\sched}_{\cM}(\Diamond V) &
		q & = 1 - x - p 
	\end{align*}
	For $p>0$ we have
	\[\frac{\Pr^{\sched}_{\cM}(\Diamond U)}
	{\Pr^{\sched}_{\cM}(\Diamond V)}    
	\ \ = \ \
	\frac{x}{p}\]
	Let $\tsched$ be the scheduler that behaves as $\sched$
	in the first round and after each reset.
	Then:
	\begin{equation}
		\label{big-sum} 
		\mathrm{E}^{\tsched}_{\cN}(\boxplus V) \ \ = \ \
		\sum_{n=0}^{\infty}
		\sum_{k=0}^{\infty}
		n \cdot x^n \cdot
		\left( \!\!\!\begin{array}{c}
			n{+}k \\
			k
		\end{array} \!\!\!\right)
		q^k
		\cdot p
		\ \ \stackrel{\text{(*)}}{=} \ \
		\frac{x}{p}
		\tag{$\ddagger$} 
	\end{equation}  
	where (*) relies on some basic calculations
	(see Lemma \ref{second-basic-fact}).
	This yields: 
        $$
          \ratio{\sched}{\cM}{U,V} \ = \ \frac{x}{p} \ = \ 
          \mathrm{E}^{\tsched}_{\cN}(\boxplus V)
          \ \geqslant \ \mathrm{E}^{\min}_{\cN}(\boxplus V)
        $$
        Hence,
        $\ratio{\min}{\cM}{U,V}\geqslant \mathrm{E}^{\min}_{\cN}(\boxplus V)$.

	To see why $\mathrm{E}^{\min}_{\cN}(\boxplus V) \geqslant \ratio{\min}{\cM}{U,V}$, we use the fact that there is an MD-scheduler
	$\tsched$ for $\cN$ such that
	$\mathrm{E}^{\tsched}_{\cN}(\boxplus V)
	= \mathrm{E}^{\min}_{\cN}(\boxplus V)$.
	$\tsched$ can be viewed as an MD-scheduler for the original MDP $\cM$.
	Again we can rely on \eqref{big-sum} to obtain that:
	$$
	\mathrm{E}^{\tsched}_{\cN}(\boxplus V) \ \ = \ \
	\frac{\Pr^{\tsched}_{\cM}\bigl(\Diamond U \bigr)}
	{\Pr^{\tsched}_{\cM}\bigl(\Diamond V \bigr)}
        \ \ = \ \ \ratio{\tsched}{\cM}{U,V}
        \ \ \geqslant \ \ \ratio{\min}{\cM}{U,V}
	$$
	But this yields
	$\mathrm{E}^{\min}_{\cN}(\boxplus V) 
          \ \geqslant  \ \ratio{\min}{\cM}{U,V}$.

	As stated in the main document
	we can now rely on known results \cite{BT91,Alfaro-CONCUR99,LICS18-SSP} to compute
	$\mathrm{E}^{\min}_{\cN}(\boxplus V)$ and $\mathrm{E}^{\max}_{\cN}(\boxplus V)$
	in polynomial time.
	\qed
\end{proof}

\begin{lemma}
	\label{second-basic-fact}
	Let
	$x,y,z \in \Real$ with $x > 0$ and $q,p <1$ such that $x{+}q{+}p=1$.
	Then:
	$$
	\sum_{n=0}^{\infty}
	\sum_{k=0}^{\infty}
	n \cdot x^n \cdot
	\left( \!\!\!\begin{array}{c}
		n{+}k \\
		k
	\end{array} \!\!\!\right)
	q^k
	\cdot p
	\ \ = \ \
	\frac{x}{p}
	$$ 
\end{lemma}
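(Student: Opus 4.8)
The plan is to collapse the double series one variable at a time using elementary power-series identities, and then to simplify the result with the side constraint $x{+}q{+}p=1$. In the intended application $x,q,p$ are nonnegative probabilities with $x>0$ and $p>0$, so all summands are nonnegative; thus Tonelli's theorem lets me sum in either order, and the only thing to track is that the relevant common ratio stays strictly below $1$.

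First I would fix $n$ and perform the inner summation over $k$. By the negative binomial (generalized geometric) series, for $|q|<1$ one has $\sum_{k=0}^\infty \binom{n+k}{k} q^k = (1-q)^{-(n+1)}$. Substituting this identity turns the double sum into the single sum $\frac{p}{1-q}\sum_{n=0}^\infty n\, r^n$, where I abbreviate $r = \frac{x}{1-q}$.

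Next I would evaluate the outer sum via $\sum_{n=0}^\infty n\, r^n = \frac{r}{(1-r)^2}$, valid for $|r|<1$. This is the step where the hypothesis $x+q+p=1$ does the real work: since $1-q = x+p$, we get $r = \frac{x}{x+p}<1$ precisely because $p>0$, and moreover $1-r = \frac{p}{1-q}$. Plugging these in,
\[
\frac{p}{1-q}\cdot\frac{r}{(1-r)^2}
= \frac{p}{1-q}\cdot\frac{x/(1-q)}{p^2/(1-q)^2}
= \frac{x}{p},
\]
which is the claimed value.

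The computation is entirely routine; the only load-bearing ingredients are the negative binomial identity for the inner sum and the relation $1-r = p/(1-q)$, into which the constraint $x+q+p=1$ feeds. The main (and genuinely minor) obstacle is the convergence bookkeeping: one must record that $r<1$ is equivalent to $p>0$ — which holds in the intended use since $p=\Pr^{\sched}_{\cM}(\Diamond V)>0$ — so that both the inner and the outer series converge, the rearrangement of the double sum being justified throughout by nonnegativity of the terms.
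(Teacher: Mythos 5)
Your proof is correct and follows essentially the same route as the paper's: sum over $k$ first via the negative binomial identity $\sum_{k\ge 0}\binom{n+k}{k}q^k=(1-q)^{-(n+1)}$, then evaluate $\sum_{n\ge 0} n r^n = r/(1-r)^2$ with $r = x/(1-q)$ and simplify using $p = 1-q-x$. The only differences are cosmetic: the paper derives the negative binomial identity by induction (via Pascal's rule) rather than citing it, while you add the convergence bookkeeping ($p>0$ iff $r<1$, Tonelli for the interchange) that the paper leaves implicit.
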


\begin{proof}
	We first show for $0 < q <1$, $n\in \Nat$ and
	\begin{eqnarray*}
		a_n & \eqdef &
		\sum_{k=0}^{\infty}
		\left(\!\!\!\begin{array}{c}
			n{+}k \\
			k
		\end{array} \!\!\!\right)
		q^k,
	\end{eqnarray*}
	we have
	$$
	a_n = \frac{1}{(1{-}q)^{n+1}}
	$$
	This is done by induction on $n$. The claim is clear for $n{=}0$.
	For the step of induction we use:
	$$
	\left( \!\!\!\begin{array}{c}
		n{+}1{+}k \\
		k
	\end{array} \!\!\!\right)
	\ \ = \ \
	\left( \!\!\!\begin{array}{c}
		n{+}k \\
		k
	\end{array} \!\!\!\right)
	\ + \
	\left( \!\!\!\begin{array}{c}
		n{+}k \\
		k{-}1
	\end{array} \!\!\!\right)
	\ \ = \ \
	\left( \!\!\!\begin{array}{c}
		n{+}k \\
		k
	\end{array} \!\!\!\right)
	\ + \
	\left( \!\!\!\begin{array}{c}
		(n{+}1)+(k{-}1) \\
		k{-}1
	\end{array} \!\!\!\right)    
	$$
	But this yields $a_{n+1}= a_n + q\cdot a_{n+1}$.
	Hence:
	$$a_{n+1} \ = \ \frac{a_n}{1{-}q}
	$$
	The claim then follows directly from the induction hypothesis. 
	The statement of Lemma \ref{second-basic-fact} now follows 
        by some basic calculations and
	the preliminary induction.
	\begin{eqnarray*}
		\sum_{n=0}^{\infty}
		\sum_{k=0}^{\infty}
		n \cdot x^n \cdot
		\left( \!\!\!\begin{array}{c}
			n{+}k \\
			k
		\end{array} \!\!\!\right)
		q^k
		\cdot p
		& = &
		\sum_{n=0}^{\infty}
		n \cdot x^n \cdot \frac{1}{(1{-}q)^{n+1}} \cdot p
		\\
		\\[0.5ex]
		& = &
		\frac{p}{1{-}q} \cdot
		\sum_{n=0}^{\infty} n \cdot \left( \frac{x}{1{-}q} \right)^n
		\\
		\\[0.5ex]
		& = &
		\frac{p}{1{-}q} \cdot
		\frac{\displaystyle  \frac{x}{1{-}q} }
		{\displaystyle \ \Bigl(1-\frac{x}{1{-}q}\Bigr)^2 \ }
		\\
		\\[0.5ex]
		& = &
		\frac{px}{(1{-}q{-}x)^2}  
		\ \ \ = \ \ \
		\frac{px}{p^2} \ \ \ = \ \ \ \frac{x}{p}  
	\end{eqnarray*}
	where we use $p=1{-}q{-}x$.
	\qed
\end{proof}  

In the sequel, we will use the following lemma.

\begin{lemma}\label{lem:fscore=0}
  Let $\Cause$ be an SPR or a GPR cause. Then, the following three statements
  are equivalent:
  \begin{enumerate}
  \item [(a)] $\recall(\Cause)=0$
  \item [(b)] $\fscore(\Cause)=0$
  \item [(c)] There is a scheduler $\sched$ such that
    $\Pr^{\sched}_{\cM}(\Diamond \Effect)>0$ and
    $\Pr^{\sched}_{\cM}(\Diamond \Cause)=0$.
  \end{enumerate}
\end{lemma}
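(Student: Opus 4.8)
The plan is to prove all three equivalences by playing (a) and (b) off against (c), using throughout that an SPR or GPR cause is in particular a GPR cause (Lemma~\ref{lemma:strict-implies-global}), and working under assumptions (A1)--(A3), which is justified by the preservation results (Lemmata~\ref{lemma:accuracy-measures-M-and-Mcause} and~\ref{lemma:accuracy-measures-M-and-Mcause2}). The linchpin I would record first is the following consequence of the \emph{strict} inequality in the GPR condition: every scheduler $\sched$ with $\Pr^{\sched}_{\cM}(\Diamond \effcov) = 0$ satisfies $\Pr^{\sched}_{\cM}(\Diamond \Cause) = 0$, and hence also $\Pr^{\sched}_{\cM}(\Diamond \noefffp) = 0$ $(\star)$. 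Indeed, if $\Pr^{\sched}_{\cM}(\Diamond \Cause) > 0$, then (G) forces $\Pr^{\sched}_{\cM}(\Diamond \Effect \mid \Diamond \Cause) > \Pr^{\sched}_{\cM}(\Diamond \Effect) \geq 0$, i.e.\ $\Pr^{\sched}_{\cM}(\Diamond(\Cause \wedge \Diamond \Effect)) = \Pr^{\sched}_{\cM}(\Diamond \effcov) > 0$. This identity $(\star)$ is exactly what converts the analytically convenient event ``no covered effect'' into the event ``cause avoided'' appearing in (c).

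The implications starting from (c) are immediate. If $\sched$ witnesses (c), then its numerator $\Pr^{\sched}_{\cM}(\Diamond \Cause \wedge \Diamond \Effect)$ vanishes while $\Pr^{\sched}_{\cM}(\Diamond \Effect) > 0$, so $\recall^{\sched}(\Cause) = 0$ and hence $\recall(\Cause) = 0$, giving (a); the same $\sched$ falls under the special case of the f-score definition ($\Pr^{\sched}_{\cM}(\Diamond \Effect) > 0$ and $\Pr^{\sched}_{\cM}(\Diamond \Cause) = 0$), so $\fscore^{\sched}(\Cause) = 0$ and thus $\fscore(\Cause) = 0$, giving (b).

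For (a)$\Rightarrow$(c) I would rewrite recall as a ratio. Under (A1)--(A3), $\recall^{\sched}(\Cause) = \Pr^{\sched}_{\cM}(\Diamond \effcov)/\Pr^{\sched}_{\cM}(\Diamond \Effect)$ is a strictly increasing function of $\Pr^{\sched}_{\cM}(\Diamond \effcov)/\Pr^{\sched}_{\cM}(\Diamond \effunc)$, so $\recall(\Cause) = 0$ forces $\covratio(\Cause) = 0$. Since $\covratio(\Cause) = \ratio{\min}{\cM}{\{\effcov\},\{\effunc\}}$ (proof of Theorem~\ref{ratiocov-fscore-in-PTIME}), Theorem~\ref{thm:comp-Q} supplies an MD-scheduler $\sched$ attaining this minimum, with $\Pr^{\sched}_{\cM}(\Diamond \effunc) > 0$ and $\Pr^{\sched}_{\cM}(\Diamond \effcov) = 0$. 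Now $(\star)$ yields $\Pr^{\sched}_{\cM}(\Diamond \Cause) = 0$, while $\Pr^{\sched}_{\cM}(\Diamond \Effect) \geq \Pr^{\sched}_{\cM}(\Diamond \effunc) > 0$; transferring $\sched$ back to the original MDP through Lemma~\ref{lem:probabilities_MEC-quotient} gives (c).

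For (b)$\Rightarrow$(c) I would use the f-score reformulation of Section~\ref{sec:comp-acc-measures-fixed-cause}: with $X = \ratio{\max}{\cM}{\{\noefffp,\effunc\},\{\effcov\}} = \sup_{\sched}(\Pr^{\sched}_{\cM}(\Diamond \noefffp) + \Pr^{\sched}_{\cM}(\Diamond \effunc))/\Pr^{\sched}_{\cM}(\Diamond \effcov)$ one has $\fscore(\Cause) = 2/(X{+}2)$ whenever $\fscore(\Cause) > 0$, so $\fscore(\Cause) = 0$ exactly when $X = +\infty$. By the characterisation preceding Theorem~\ref{thm:comp-Q}, $X = +\infty$ holds iff some scheduler $\sched$ has $\Pr^{\sched}_{\cM}(\Diamond \effcov) = 0$ and $\Pr^{\sched}_{\cM}(\Diamond \noefffp) + \Pr^{\sched}_{\cM}(\Diamond \effunc) > 0$; applying $(\star)$ to this $\sched$ gives $\Pr^{\sched}_{\cM}(\Diamond \Cause) = 0$, hence $\Pr^{\sched}_{\cM}(\Diamond \noefffp) = 0$ and $\Pr^{\sched}_{\cM}(\Diamond \effunc) > 0$, which is (c). The main obstacle throughout is the need for the extremal ratios to be \emph{attained} with a genuinely vanishing numerator (resp.\ denominator) rather than merely approached — this is precisely what $(\star)$ together with the MD-scheduler attainment of Theorem~\ref{thm:comp-Q} secures, and it is also what excludes the a priori worrying scenario in which $\fscore(\Cause)$ tends to $0$ only because the precision degenerates while the recall stays bounded away from $0$.
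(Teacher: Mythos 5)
Your proof is correct, but it takes a genuinely different route from the paper's. Both arguments pivot on the same core consequence of the strict inequality in (G) --- your observation $(\star)$, that a scheduler reaching $\Cause$ with positive probability must reach a covered effect with positive probability --- but they embed it in different machinery. The paper works directly in the original MDP with no reduction to (A1)--(A3): citing results on conditional reachability probabilities \cite{TACAS14-condprob,Maercker-PhD20}, it fixes schedulers that \emph{attain} $\recall(\Cause)$ and $\precision(\Cause)$, deduces from (G) that $\precision(\Cause)>0$ and that $\recall(\Cause)>0$ unless a (c)-scheduler exists, and then closes the cycle (a)$\Rightarrow$(b)$\Rightarrow$(c)$\Rightarrow$(a), using monotonicity of the harmonic mean to bound $\fscore^{\sched}(\Cause)\geqslant 2pr/(p{+}r)>0$ uniformly over all schedulers with $\Pr^{\sched}_{\cM}(\Diamond\Cause)>0$. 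You instead push everything through the ratio machinery of Theorem~\ref{thm:comp-Q}: after the (A1)--(A3) transformation, (a) and (b) become statements about $\ratio{\min}{\cM}{\{\effcov\},\{\effunc\}}$ and $\ratio{\max}{\cM}{\{\noefffp,\effunc\},\{\effcov\}}$, and the attainment of the minimum by an MD-scheduler, respectively the end-component characterisation of an infinite maximum, hands you a concrete witnessing scheduler which $(\star)$ converts into a (c)-witness. What your approach buys is constructiveness and self-containedness (only the paper's own SSP reduction is needed, not external attainment results for conditional probabilities); what it costs is the model transformation together with its preservation lemmas, which the paper's proof avoids entirely, and also preservation of the GPR condition itself (Lemma~\ref{lemma:wmin-criterion-PR-causes} and Corollary~\ref{cor:GPR_MEC}), which you need so that $(\star)$ may invoke (G) in the transformed model. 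One step you should tighten: in (b)$\Rightarrow$(c) you infer ``$\fscore(\Cause)=0$ exactly when $X=+\infty$'' from ``$\fscore(\Cause)=2/(X{+}2)$ whenever $\fscore(\Cause)>0$'', but that statement only yields the direction $X=+\infty\Rightarrow\fscore(\Cause)=0$, whereas you need the converse. It does follow, but from the scheduler-wise identity $\fscore^{\sched}(\Cause)=2/(\ratio{\sched}{\cM}{\{\noefffp,\effunc\},\{\effcov\}}+2)$ underlying the reformulation in Section~\ref{sec:comp-acc-measures-fixed-cause}, which by $(\star)$ applies to every scheduler with $\Pr^{\sched}_{\cM}(\Diamond\Cause)>0$ --- after first discharging the trivial case that some scheduler falls under the f-score-zero convention, in which case (c) holds outright.
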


\begin{proof}
Let $C=\Cause$.
Using results of \cite{TACAS14-condprob,Maercker-PhD20}, there exist schedulers $\tsched$ and $\usched$
with
\begin{itemize}
\item
  $\Pr^{\tsched}_{\cM}(\Diamond \Effect)>0$ 
  and 
  $\Pr^{\tsched}_{\cM}(\ \Diamond C \ | \Diamond \Effect \ )
   = 
   \inf_{\sched} \Pr^{\sched}_{\cM}(\ \Diamond C \ | \Diamond \Effect \ )$
  where $\sched$ ranges over all schedulers with positive
  effect probability,
\item
  $\Pr^{\usched}_{\cM}(\Diamond C)>0$ 
  and 
  $\Pr^{\usched}_{\cM}(\ \Diamond \Effect \ | \Diamond C \ )
   = 
   \inf_{\sched} \Pr^{\sched}_{\cM}(\ \Diamond \Effect \ | \Diamond C \ )$
  where $\sched$ ranges over all schedulers with
  $\Pr^{\sched}_{\cM}(\Diamond C)>0$.
\end{itemize}
In particular,
$\recall(C)=\Pr^{\tsched}_{\cM}(\ \Diamond C \ | \Diamond \Effect \ )$
and
$\precision(C)=\Pr^{\usched}_{\cM}(\ \Diamond \Effect \ | \Diamond C \ )$.
By the GPR condition applied to $\usched$ and $\tsched$
(recall that each SPR cause is a GPR cause too, 
see Lemma \ref{lemma:strict-implies-global}),
we obtain the following statements (i) and (ii):
\begin{description}
\item [\text{\rm (i)}]
  $p \ \eqdef \ \precision(C) \ > \ 0$
\item [\text{\rm (ii)}]
  If $\Pr^{\tsched}_{\cM}(\Diamond C)>0$
  then $\Pr^{\tsched}_{\cM}(\Diamond C \wedge \Diamond \Effect)>0$
  and therefore
  $\recall(C) >0$.
\end{description}
Obviously,
if there is no scheduler $\sched$ as in statement (c) then
$\Pr^{\tsched}_{\cM}(\Diamond C)>0$.
Hence, as a consequence of (ii) we obtain:
\begin{description}
\item [\text{\rm (iii)}]
  If there is no scheduler $\sched$ as in statement (c)
  then 
  $\recall(C) >0$.
\end{description}

``(a) $\Longrightarrow$ (b)'':
We prove $\fscore(C)>0$ implies $\recall(C)>0$.
If $\fscore(C)>0$ then, by definition of the f-score, there is no scheduler
$\sched$ as in statement (c).
But then $\recall(C)>0$ by statement (iii).

``(b) $\Longrightarrow$ (c)'':
Let $\fscore(C)=0$.
Suppose by contradiction that there is no scheduler as in (c).
Again by (iii) we obtain $\recall(C) >0$.
But then, for each scheduler $\sched$ with $\Pr^{\sched}_{\cM}(\Diamond C)>0$:
$$
  \precision^{\sched}(C) \ \geqslant \ p \ \stackrel{\text{\tiny (i)}}{>} \ 0
$$
and, with $r \eqdef \recall(C)$:
$$
  \recall^{\sched}(C) \ \geqslant \ r \ > \ 0
$$
The harmonic mean as a function 
$]0,1]^2\to \Real$, $(x,y) \mapsto 2 \frac{xy}{x+y}$ is
monotonically increasing in both arguments.
But then:
$$
  \fscore^{\sched}(C) \ \geqslant \ 
  2 \frac{p \cdot r}{p{+}r} \ > \ 0
$$
Hence, $\fscore(C) = \inf_{\sched} \fscore^{\sched}(C) \geqslant 2 \frac{p \cdot r}{p{+}r}  >  0$.
Contradiction.

``(c) $\Longrightarrow$ (a)'':
Let $\sched$ be a scheduler as in statement (c).
Then, 
$$
  \Pr^{\sched}_{\cM}(\ \Diamond C \ | \Diamond \Effect \ )  \ = \ 0.
$$
Hence:
$\recall(C) \ = \ 
  \Pr^{\min}_{\cM}(\ \Diamond C \ | \Diamond \Effect \ ) \ = \ 0$.
\qed
\end{proof}

\computecovratiofscore*

\begin{proof}
	With the simplifying assumptions (A1)-(A3) that can be made due to Lemmata  \ref{lemma:accuracy-measures-M-and-Mcause} and \ref{lemma:accuracy-measures-M-and-Mcause2}, we can express the coverage ratio as:
	$$
	\ratiocov(\Cause) \ \ = \ \
	\inf_{\sched}
	\frac{\Pr^{\sched}_{\cM}(\Diamond \effcov)}
	{\Pr^{\sched}_{\cM}(\Diamond \effuncov)}
	$$   
	where $\sched$ ranges over all schedulers  with $\Pr^{\sched}_{\cM}(\Diamond \effuncov) >0$.
	Now $\ratiocov$ has the form of the infimum in Theorem \ref{thm:comp-Q} and the claim holds.
	
	For the $\fscore(\Cause)$ we get
	after some straight-forward transformations
	$$
	\fscore^{\sched}(\Cause) \ \ = \ \
	2 \cdot
	\frac{\Pr^{\sched}_{\cM}(\Diamond (\Cause \wedge \Diamond \Effect))}
	{\Pr^{\sched}_{\cM}(\Diamond \Effect) + \Pr^{\sched}_{\cM}(\Diamond \Cause)}
	$$
	Since
	$$
	\Pr^{\sched}_{\cM}(\Diamond \Effect) \ = \
	\Pr^{\sched}_{\cM}(\Diamond (\Cause \wedge \Diamond \Effect)) +
	\Pr^{\sched}_{\cM}((\neg \Cause) \Until \Effect)
	$$  
	and
	$$  
	\Pr^{\sched}_{\cM}(\Diamond \Cause) \ = \
	\Pr^{\sched}_{\cM}(\Diamond (\Cause \wedge \Diamond \Effect)) +
	\Pr^{\sched}_{\cM}(\Diamond (\Cause \wedge \Box \neg \Effect))
	$$  
	we get
	\begin{eqnarray*}
		\frac{2}{\fscore^{\sched}(\Cause)} & = &
		\frac{\Pr^{\sched}_{\cM}(\Diamond \Effect) + \Pr^{\sched}_{\cM}(\Diamond \Cause)}
		{\Pr^{\sched}_{\cM}(\Diamond (\Cause \wedge \Diamond \Effect))}
		\\
		\\[0ex]
		& = &
		2 + \frac{\Pr^{\sched}_{\cM}(\Diamond (\Cause \wedge \Box \neg \Effect))
			+
			\Pr^{\sched}_{\cM}((\neg \Cause) \Until \Effect)}
		{\Pr^{\sched}_{\cM}(\Diamond (\Cause \wedge \Diamond \Effect))}
	\end{eqnarray*}
	$\Cause$ is fixed and thus we can also assume (A1)-(A3), since the corresponding transformation does not affect the f-score.
	Therefore
	\begin{eqnarray*}
		\Pr^{\sched}_{\cM}(\Diamond (\Cause \wedge \Diamond \Effect))
		& = & \Pr^{\sched}_{\cM}(\Diamond \effcov)
		\\[1ex]
		\Pr^{\sched}_{\cM}(\Diamond (\Cause \wedge \Box \neg \Effect))
		& = & \Pr^{\sched}_{\cM}(\Diamond \noeff_{\mathsf{fp}})
		\\[1ex]
		\Pr^{\sched}_{\cM}((\neg \Cause) \Until \Effect)
		& = & \Pr^{\sched}_{\cM}(\Diamond \effuncov).
	\end{eqnarray*}
	Thus
	\begin{align*}
		\frac{2}{\fscore^{\sched}(\Cause)} -2 & = 
		\frac{\Pr^{\sched}_{\cM}(\Diamond \noeff_{\mathsf{fp}})	+ \Pr^{\sched}_{\cM}(\Diamond \effuncov)}
		{\Pr^{\sched}_{\cM}(\Diamond \effcov)}
	\end{align*}
	The task is to compute
	\[X = \sup_\sched \frac{2}{\fscore^\sched(\Cause)}-2 = \sup_\sched \frac{\Pr^{\sched}_{\cM}(\Diamond \noeff_{\mathsf{fp}})	+ \Pr^{\sched}_{\cM}(\Diamond \effuncov)}
	{\Pr^{\sched}_{\cM}(\Diamond \effcov)},\]
	where $\sched$ ranges over all schedulers with $\Pr_{\cM}^\sched (\Diamond \effcov)>0$.
	We have \[\fscore(\Cause) = \frac{2}{X+2}.\]
	But $X$ can be expressed as a supremum in the form of Theorem \ref{thm:comp-Q}. This yields the claim that the optimal value is computable in polynomial time. 
	
	In case $\fscore(\Cause) = 0$, we do not obtain an optimal scheduler via Theorem \ref{thm:comp-Q}.
	Lemma \ref{lem:fscore=0}, however, shows that there is a scheduler $\sched$ with $\Pr^{\sched}_{\cM}(\Diamond \Effect ) > 0$ and $\Pr^{\sched}_{\cM}(\Diamond \Cause ) = 0$. Such a scheduler can be computed in polynomial time as any (memoryless) scheduler in the largest sub-MDP of $\cM$ that does not contain states in $\Cause$. (This sub-MDP can be constructed by successively removing states and state-action pairs.)
	\qed
\end{proof}

\subsection{Proofs of Section \ref{sec:opt-PR-causes}}
\label{app:opt-PR-causes}

\recalloptimalityequalsratiooptimality*

\begin{proof}
	For each scheduler $\sched$ and each set $C$ of states we have:
	$$
	\Pr^{\sched}_{\cM}(\Diamond \Effect) \ = \ p^{\sched}_C+q^{\sched}_C
	$$
	where  
	$p^{\sched}_C=\Pr^{\sched}_{\cM}\bigl((\neg C) \until \Effect \bigr)$
	and
	$q^{\sched}_C =
	\Pr^{\sched}_{\cM}\bigl(\Diamond (C \wedge \Diamond \Effect) \bigr)$.
	If $C$ is a cause where $q^{\sched}_C$ is positive then
	$$
	\ratiocov^{\sched}(C) \ = \ \frac{q^{\sched}_C}{p^{\sched}_C}
	\quad \text{and} \quad
	\relcov^{\sched}(C) \ = \ \frac{q^{\sched}_C}{p^{\sched}_C+q^{\sched}_C}
	$$
	For all non-negative reals $p,q,p',q'$ where $q,q'> 0$ we have:
	$$
	\frac{q}{p} < \frac{q'}{p'}
	\qquad \text{iff} \qquad
	\frac{q}{p+q} < \frac{q'}{p'+q'}
	$$
	Hence, if $C$ is fixed and  $\sched$ ranges over all schedulers with
	$q_C^{\sched}>0$:
	\begin{center}
		$\frac{q^{\sched}_C}{p^{\sched}_C}$ is minimal \ iff \
		$\frac{q^{\sched}_C}{p^{\sched}_C+q^{\sched}_C}$ is minimal
	\end{center}  
	Thus, if $C$ is fixed and $\sched=\sched_C$ is a scheduler achieving the
	worst-case (i.e., minimal) coverage ratio for $C$ then
	$\sched$ achieves the minimal recall for $C$, and vice versa.
	
	Let now $p_C=p_C^{\sched_C}$, $q_c=q_C^{\sched_C}$ where
	$\sched_C$ is a scheduler that minimizes the coverage ratio and
	minimizes the recall for cause set $C$.
	Then:
	\begin{center}
		$\ratiocov(C)= \frac{q_C}{p_C}$ is maximal \ iff \
		$\frac{q_C}{p_C+q_C}$ is maximal \ iff \
		$\relcov(C)$ is maximal
	\end{center}  
	where the extrema range over all SPR resp. GPR causes $C$.
	This yields the claim.
	\qed
\end{proof}

Recall that $\cC$ denotes the set of states that constitute a singleton SPR cause. The following lemma is a direct consequence of the definition of SPR causes.

\begin{lemma}[Characterization of SPR causes]
  \label{lemma:charac-strict-SPR-causes}
   For each subset $\Cause$ of $S \setminus \Effect$,
  $\Cause$ is an SPR cause if and only if
  $\Cause \subseteq \cC$ and $\Cause$ fulfills (M).
\end{lemma}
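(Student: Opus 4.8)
The plan is to reduce the set condition (S) to its per‑state constituents and match each one with membership in $\cC$. First I would note that both defining conditions of an SPR cause decompose over the states of $\Cause$: condition (M) is literally a conjunction, over $c\in\Cause$, of the reachability requirement ``$\cM$ has a path from $\init$ satisfying $(\neg\Cause)\Until c$'', and condition (S) is a conjunction, over $c\in\Cause$, of a per‑state probability‑raising requirement. Since by assumption every state is reachable from $\init$, the minimality requirement holds automatically for every singleton, so $c\in\cC$ is equivalent to ``$\{c\}$ satisfies (S)''. Hence it suffices to prove, for each $c\in\Cause$ satisfying (M), that the $c$‑instance of (S) in the context of $\Cause$ is equivalent to the standalone singleton condition defining $\cC$; the lemma then follows by conjoining over $c$.

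For this equivalence I would first isolate the two directions. The direction ``$\Cause\subseteq\cC$ and (M) $\Rightarrow$ $\Cause$ is an SPR cause'' is the clean one: because $(\neg\Cause)\Until c$ is a subevent of $\Diamond c$, the set‑version of (S) for $c$ quantifies over a \emph{subset} of the schedulers appearing in the singleton version, and after normalizing to schedulers that minimize the effect probability beyond the first visit to $c$ (so that the conditional effect probability equals $w_c=\Pr^{\min}_{\cM,c}(\Diamond\Effect)$ in both cases) the two statements assert the \emph{same} inequality $w_c>\Pr^{\sched}_{\cM}(\Diamond\Effect)$; thus membership in $\cC$ implies the $c$‑instance of (S) at once. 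For the converse direction I would invoke Lemma~\ref{lemma:wmin-criterion-PR-causes} and pass to $\wminMDP{\cM}{\Cause}$, where every state of $\Cause$ has only the action $\gamma$ leading to a terminal state. In this model each run visits at most one cause state, so for every $c$ the events $(\neg\Cause)\Until c$ and $\Diamond c$ coincide and the conditional effect probability given a visit to $c$ is exactly $w_c$. Consequently (S) for the whole set $\Cause$ becomes literally the conjunction of the singleton SPR conditions evaluated in $\wminMDP{\cM}{\Cause}$; applying Lemma~\ref{lemma:wmin-criterion-PR-causes} once for $\Cause$ and once for each singleton $\{c\}$ transfers these back to $\cM$ as the statements $c\in\cC$, and (M) transfers verbatim.

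The main obstacle is precisely the converse direction, and it lives entirely in the mismatch between the two conditioning events: a scheduler may reach $c$ \emph{only after} visiting some other cause state, so in general $(\neg\Cause)\Until c$ is strictly contained in $\Diamond c$, and the set condition a priori says nothing about such schedulers. The role of the $\wmin$‑reduction is exactly to annihilate this gap by making all cause states terminal, so that no run can reach $c$ after another cause. I would therefore spend the care here: checking that the singleton SPR condition computed in $\wminMDP{\cM}{\Cause}$ really agrees with the one in $\cM$ (equivalently in $\wminMDP{\cM}{c}$), and in particular handling the boundary situation in which the extremal effect probability equals $w_c$ — the same delicate case distinguished in step~3 of Algorithm~\ref{alg:SPR-check} — where whether the supremum is \emph{attained} by a $c$‑reaching scheduler determines strictness. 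Once this boundary analysis is settled, conjoining the per‑state equivalences over all $c\in\Cause$ yields ``$\Cause$ is an SPR cause $\iff$ $\Cause\subseteq\cC$ and (M)''.
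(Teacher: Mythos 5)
Your proof has a genuine gap, and it sits in the direction you call ``the clean one.'' Write $\psi_c$ for $(\neg\Cause)\Until c$ and note that (S) for a scheduler $\sched$ at $c$ is equivalent to $\Pr^{\sched}_{\cM}(\Diamond\Effect\mid\psi_c)\cdot\bigl(1-\Pr^{\sched}_{\cM}(\psi_c)\bigr)>\Pr^{\sched}_{\cM}(\neg\psi_c\wedge\Diamond\Effect)$. Your normalization replaces $\sched$ by the scheduler $\sched'$ that agrees with $\sched$ until the first visit to $c$ and minimizes the effect probability afterwards. This preserves $\Pr(\psi_c)$ and drives the conditional on the left down to $w_c$, but it \emph{also} alters $\Pr(\neg\psi_c\wedge\Diamond\Effect)$ whenever $c$ can be reached \emph{after} another cause state: such paths lie outside the conditioning event $\psi_c$, yet they are still re-routed by the normalization. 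Hence the normalized inequality $w_c>\Pr^{\sched'}_{\cM}(\Diamond\Effect)$ does not imply (S) for $\sched$ itself; passing back from $\sched'$ to $\sched$ can raise the right-hand side $\Pr^{\sched}_{\cM}(\Diamond\Effect)$ by far more than it raises the left-hand conditional. For singletons the normalization is sound precisely because every path through $c$ lies \emph{inside} the conditioning event $\Diamond c$; this asymmetry between $\Diamond c$ and $\psi_c$ is exactly what your reduction misses, and it strikes in this direction, not only in the converse that you flagged as the main obstacle.

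No repair is possible, because under the paper's literal definition of (S) the statement fails as soon as one cause state can reach another. Take $\cM$ with states $\init,c_1,c_2,\eff,\noeff$ and $\Effect=\{\eff\}$, where $\init$ has a single action moving to $c_1,c_2,\noeff$ with probabilities $\frac{1}{2},\frac{1}{10},\frac{2}{5}$; $c_1$ has a single action moving to $c_2$ with probability $1$; and $c_2$ has two actions $a,b$ with $P(c_2,a,\eff)=P(c_2,a,\noeff)=\frac{1}{2}$ and $P(c_2,b,\eff)=1$. A scheduler is determined by the effect probabilities $r_1,r_2\in[\frac{1}{2},1]$ it realizes at $c_2$ after the history through $c_1$ and after the direct history, and $\Pr^{\sched}_{\cM}(\Diamond\Effect)=\frac{1}{2}r_1+\frac{1}{10}r_2$. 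Then $\{c_1\}$ is a singleton SPR cause (the condition reads $r_1>\frac{1}{2}r_1+\frac{1}{10}r_2$, i.e.\ $5r_1>r_2$, which always holds) and so is $\{c_2\}$ (every $\eff$-path visits $c_2$, so the conditional equals $\frac{5}{3}\Pr^{\sched}_{\cM}(\Diamond\Effect)>\Pr^{\sched}_{\cM}(\Diamond\Effect)$); thus $\{c_1,c_2\}\subseteq\cC$ and (M) holds. Yet for the scheduler with $r_1=1$, $r_2=\frac{1}{2}$ we get $\Pr^{\sched}_{\cM}(\Diamond\Effect\mid(\neg\Cause)\Until c_2)=\frac{1}{2}<\frac{11}{20}=\Pr^{\sched}_{\cM}(\Diamond\Effect)$, so $\Cause=\{c_1,c_2\}$ violates (S) at $c_2$ and is not an SPR cause. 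The same example undermines the second half of your plan: $\Cause$ \emph{does} satisfy (S) in $\wminMDP{\cM}{\Cause}$, so Lemma~\ref{lemma:wmin-criterion-PR-causes} cannot be used as a black box here (the proof of Lemma~\ref{app:lem:criterion-strict-prob-raising} asserts precisely the invalid equality $\Pr^{\sched}_{\cM}((\neg\psi_c)\wedge\Diamond\Effect)=\Pr^{\usched}_{\cM}((\neg\psi_c)\wedge\Diamond\Effect)$ for its modified scheduler $\usched$), and in your $\Rightarrow$ direction you additionally apply that lemma ``once per singleton'' although it relates $\cM$ to $\wminMDP{\cM}{c}$ rather than to $\wminMDP{\cM}{\Cause}$. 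For comparison, the paper offers no argument at all --- the lemma is declared a direct consequence of the definitions --- which is accurate only if (S) is read with conditioning on $\Diamond c$, or in models where no cause state can reach another (as in $\wminMDP{\cM}{\Cause}$ or under (A1)--(A3)). Your attempt has the merit of making the real difficulty visible, but it cannot close it, because as stated the claim does not hold.
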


Recall that the canonical cause $\CanCause$ has been defined as the set of states $c\in \cC$ such that there is a scheduler $\sched$ with $\Pr_{\cM}^\sched((\neg \cC) \Until c)>0$.

\ThmCanCau*

\begin{proof}
  Lemma \ref{lemma:charac-strict-SPR-causes}
  yields that $\CanCause$ is a SPR cause.
  Optimality is a consequence as $\CanCause$ even yields
  path-wise optimal coverage in the following sense.
  If $C$ is a SPR cause then $C \subseteq \cC$
  (by Lemma \ref{lemma:charac-strict-SPR-causes})
  and for each path $\pi$ in $\cM$:
  \begin{itemize}
  \item
    If $\pi \models (\neg \CanCause) \Until \Effect$ then $\pi \models (\neg C) \Until \Effect$.
  \item
    If $\pi \models \Diamond (C \wedge \Diamond \Effect)$ then $\pi \models \Diamond (\CanCause \wedge \Diamond \Effect)$.
  \end{itemize}
  But then
  \begin{itemize}
  \item
  $\Pr^{\sched}_{\cM}(\Diamond (C \wedge \Diamond \Effect)) \leqslant
   \Pr^{\sched}_{\cM}(\Diamond (\CanCause \wedge \Diamond \Effect))$,
  \item
  $\Pr^{\sched}_{\cM}((\neg C) \Until \Effect)) \geqslant
   \Pr^{\sched}_{\cM}((\neg \CanCause) \Until \Effect)$
  \end{itemize}
  for every scheduler $\sched$.
  This yields the claim.
  \qed
\end{proof}

\begin{lemma}\label{lem:reformulating_fscore}
Let $\cM=(S,\Act,P,\init)$ be an MDP with a set of terminal states $\Effect$, let $C$ be an SPR cause for $\Effect$ in $\cM$, and let $\vartheta$ be a rational.
Then, $\fscore(C)>\vartheta$ iff
 \[
2(1{-}\vartheta)\Pr^{\sched}_{\cM}(\Diamond C \land \Diamond \Effect) - \vartheta \Pr^{\sched}_{\cM}(\neg \Diamond C \land \Diamond \Effect)
	-\vartheta \Pr^{\sched}_{\cM}(\Diamond C \land \neg \Diamond \Effect)  >  0 \tag{$\times$}
 \]
 for all schedulers $\sched$ for $\cM$ with $\Pr^{\sched}_{\cM}(\Diamond \Effect)>0$.
\end{lemma}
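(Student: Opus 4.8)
The plan is to reduce the claim to a per-scheduler equivalence and then lift it to the infimum. First I would rewrite $\fscore^{\sched}(C)$ in the standard true/false positive/negative form. Writing $\mathsf{tp}^{\sched}=\Pr^{\sched}_{\cM}(\Diamond C\wedge\Diamond\Effect)$, $\mathsf{fp}^{\sched}=\Pr^{\sched}_{\cM}(\Diamond C\wedge\neg\Diamond\Effect)$ and $\mathsf{fn}^{\sched}=\Pr^{\sched}_{\cM}(\neg\Diamond C\wedge\Diamond\Effect)$, a short calculation from the definitions of precision and recall as the arguments of the harmonic mean gives, whenever $\Pr^{\sched}_{\cM}(\Diamond C)>0$,
\[\fscore^{\sched}(C)=2\cdot\frac{\precision^{\sched}(C)\cdot\recall^{\sched}(C)}{\precision^{\sched}(C)+\recall^{\sched}(C)}=\frac{2\,\mathsf{tp}^{\sched}}{2\,\mathsf{tp}^{\sched}+\mathsf{fp}^{\sched}+\mathsf{fn}^{\sched}}.\]
This is the only routine computation and I would keep it brief.

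The key step is the per-scheduler equivalence: for every scheduler $\sched$ with $\Pr^{\sched}_{\cM}(\Diamond\Effect)>0$ one has $\fscore^{\sched}(C)>\vartheta$ if and only if ($\times$) holds under $\sched$. I would split into two cases. If $\Pr^{\sched}_{\cM}(\Diamond C)>0$, the denominator $2\mathsf{tp}^{\sched}+\mathsf{fp}^{\sched}+\mathsf{fn}^{\sched}=(\mathsf{tp}^{\sched}+\mathsf{fp}^{\sched})+(\mathsf{tp}^{\sched}+\mathsf{fn}^{\sched})$ is strictly positive, so clearing it in $\fscore^{\sched}(C)>\vartheta$ yields exactly $2(1{-}\vartheta)\mathsf{tp}^{\sched}-\vartheta\,\mathsf{fn}^{\sched}-\vartheta\,\mathsf{fp}^{\sched}>0$, which is ($\times$) after substituting the three reachability probabilities. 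If instead $\Pr^{\sched}_{\cM}(\Diamond C)=0$, then $\mathsf{tp}^{\sched}=\mathsf{fp}^{\sched}=0$ and $\fscore^{\sched}(C)=0$ by the convention covering this case, while $\mathsf{fn}^{\sched}=\Pr^{\sched}_{\cM}(\Diamond\Effect)>0$; here both $\fscore^{\sched}(C)>\vartheta$ and ($\times$), which reduces to $-\vartheta\,\mathsf{fn}^{\sched}>0$, are equivalent to $\vartheta<0$, so the equivalence persists.

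It remains to pass from this per-scheduler statement to the infimum defining $\fscore(C)$. I would first note that, since $C$ is a PR cause, $\Pr^{\sched}_{\cM}(\Diamond C)>0$ forces $\Pr^{\sched}_{\cM}(\Diamond\Effect)>0$; hence $\fscore^{\sched}(C)$ is defined precisely for the schedulers $\sched$ with $\Pr^{\sched}_{\cM}(\Diamond\Effect)>0$, which are exactly those quantified over in ($\times$). The forward direction is then immediate: if $\fscore(C)=\inf_{\sched}\fscore^{\sched}(C)>\vartheta$, then $\fscore^{\sched}(C)\ge\fscore(C)>\vartheta$ for every admissible $\sched$, so ($\times$) holds throughout. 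For the converse, assuming ($\times$) for all admissible $\sched$ gives $\fscore^{\sched}(C)>\vartheta$ for all of them and hence only $\fscore(C)\ge\vartheta$; to upgrade this to the strict inequality I would invoke a worst-case scheduler attaining the infimum, furnished by Theorem \ref{ratiocov-fscore-in-PTIME} when $\fscore(C)>0$ and by Lemma \ref{lem:fscore=0} (which supplies a scheduler with $\Pr^{\sched}_{\cM}(\Diamond\Effect)>0$ and $\Pr^{\sched}_{\cM}(\Diamond C)=0$) when $\fscore(C)=0$. Applying the per-scheduler equivalence to this minimizer gives $\fscore(C)=\fscore^{\sched}(C)>\vartheta$. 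The main obstacle is exactly this attainment point: without a minimizing scheduler, ``$\fscore^{\sched}(C)>\vartheta$ for all $\sched$'' only yields $\fscore(C)\ge\vartheta$, so the argument genuinely relies on the infimum being achieved.
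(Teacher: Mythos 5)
Your proposal is correct and follows essentially the same route as the paper's proof: the same algebraic rearrangement of $\fscore^{\sched}(C)$ (the paper writes it as $2\Pr^{\sched}_{\cM}(\Diamond(C\wedge\Diamond\Effect))/(\Pr^{\sched}_{\cM}(\Diamond\Effect)+\Pr^{\sched}_{\cM}(\Diamond C))$, which equals your $2\,\mathsf{tp}^{\sched}/(2\,\mathsf{tp}^{\sched}+\mathsf{fp}^{\sched}+\mathsf{fn}^{\sched})$), followed by clearing the positive denominator, and the same appeal to a worst-case scheduler attaining the infimum (Theorem \ref{ratiocov-fscore-in-PTIME}, with Lemma \ref{lem:fscore=0} covering the $\fscore(C)=0$ case) to upgrade $\fscore(C)\ge\vartheta$ to a strict inequality in the converse direction. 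Your explicit treatment of the case $\Pr^{\sched}_{\cM}(\Diamond C)=0$ inside the per-scheduler equivalence is a minor tidying of the paper's argument (which handles that case via Lemma \ref{lem:fscore=0} and implicitly assumes $\vartheta\geq 0$), but it is not a different method.
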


\begin{proof}
First, assume that $\fscore(C)>\vartheta$ and let $\sched$ be a scheduler with $\Pr^{\sched}_{\cM}(\Diamond \Effect) > 0$.
If $\Pr^{\sched}_{\cM}(\Diamond C)=0$, then $\fscore(C)$ would be $0$. So, $\Pr^{\sched}_{\cM}(\Diamond C) > 0$.
Then,
 \begin{align*}
  \fscore^{\sched}(C) \ \ 
 & = \ \
	2 \cdot
	\frac{\Pr^{\sched}_{\cM}(\Diamond (C \wedge \Diamond \Effect))}
	{\Pr^{\sched}_{\cM}(\Diamond \Effect) + \Pr^{\sched}_{\cM}(\Diamond C)} > \vartheta.
 \end{align*}
So,
\begin{align*}
& 2 \cdot \Pr^{\sched}_{\cM}(\Diamond (C \wedge \Diamond \Effect)) \\
> \,\, & \vartheta \cdot (Pr^{\sched}_{\cM}(\neg \Diamond C \land \Diamond \Effect) + 2\cdot  \Pr^{\sched}_{\cM}(\Diamond (C \wedge \Diamond \Effect)) + \Pr^{\sched}_{\cM}(\Diamond C \land \neg \Diamond \Effect)  )
\end{align*}
 from which we can conclude ($\times$) for $\sched$.
 
 Now, suppose that ($\times$) holds for a schedulers $\sched$ with $\Pr^{\sched}_{\cM}(\Diamond \Effect) > 0$. Let $\sched$ be a scheduler that minimizes 
 $\fscore^{\sched}(C)$. Such a scheduler exists by Theorem \ref{ratiocov-fscore-in-PTIME}.
 From ($\times$), we conclude 
 \begin{align*}
& 2 \cdot \Pr^{\sched}_{\cM}(\Diamond (C \wedge \Diamond \Effect)) \\
> \,\, & \vartheta \cdot (Pr^{\sched}_{\cM}(\neg \Diamond C \land \Diamond \Effect) + 2\cdot  \Pr^{\sched}_{\cM}(\Diamond (C \wedge \Diamond \Effect)) + \Pr^{\sched}_{\cM}(\Diamond C \land \neg \Diamond \Effect)  )
\end{align*}
and hence that $\fscore^{\sched}(C)>\vartheta$ as above. 
\qed

\end{proof}

\fscoreNPcoNP*

\begin{proof}
Let $\cM=(S,\Act,P,\init)$ be an MDP, $\Effect\subseteq S$ a set of terminal states, and $\vartheta$ a rational.
As before, let $\cC$ be the set of states $c\in S \setminus \Effect$ where $\{c\}$
	is an SPR cause.
	If $\cC$ is empty then the threshold problem is trivially
	solvable as there is no SPR cause at all.
	Suppose now that $\cC$ is nonempty.

Note that $\Pr^{\min}_{\cM,c}(\Diamond \Effect)>0$ for all $c\in \cC$. 
As the terminal states in $\Effect$ are not part of any end component of $\cM$,  no state $c\in \cC$ is contained in an end component of $\cM$ either.
Let $\cN=(S_{\cN},\Act_{\cN},P_{\cN},\init_{\cN})$ be the MEC-quotient of $\cM$ with the new additional terminal state $\bot$. The MEC-quotient $\cN$ contains the states from $\Effect$ and $\cC$.

\paragraph*{Claim 1: }
There is an SPR cause $C$ for $\Effect$ in $\cM$ with $\fscore(C)>\vartheta$ if and only if there is an SPR cause $C^\prime$ for $\Effect$ in $\cN$ with $\fscore(C^\prime)>\vartheta$.

\noindent
{\it Proof of Claim 1.}
We first observe that all reachability probabilities involved in the claim do not depend on the behavior during the traversal of MECs. Furthermore, staying inside a MEC in $\cM$ can be mimicked in $\cN$ by moving to $\bot$, and vice versa. 
More precisely, let $C\subseteq \cC$. Then,
analogously to Lemma \ref{lem:probabilities_MEC-quotient},  for each scheduler $\sched$ for $\cM$, there is a scheduler $\tsched$ for $\cN$, and vice versa,
such that 
\begin{itemize}
\item
$\Pr^{\sched}_{\cM}(\Diamond \Effect \mid (\neg C) \Until c) = \Pr^{\tsched}_{\cN}(\Diamond \Effect \mid (\neg C) \Until c)$ for all $c\in C$ for which the values are defined, 
\item
$\Pr^{\sched}_{\cM}(\Diamond \Effect) = \Pr^{\tsched}_{\cN}(\Diamond \Effect)$,
\item
$\Pr^{\sched}_{\cM}(\Diamond \Effect\mid \Diamond C) = \Pr^{\tsched}_{\cN}(\Diamond \Effect \mid \Diamond C)$ if the values are defined, and 
\item
$\Pr^{\sched}_{\cM}(\Diamond C\mid \Diamond \Effect) = \Pr^{\tsched}_{\cN}(\Diamond C \mid \Diamond \Effect)$ if the values are defined.
\end{itemize}
Hence, $C$ is an SPR cause for $\Effect$ in $\cM$ if and only if it is in $\cN$ and furthermore, if it is an SPR cause, the f-score of $C$ in $\cM$ and in $\cN$ agree. This finishes the proof of Claim~1.

\paragraph*{\it Model transformation 
      for ensuring positive effect probabilities.}
Recall that the f-score is only defined for  schedulers reaching $\Effect$ with positive probability.
Now, we will provide a further model transformation that will ensure that $\Effect$ is reached with positive probability under all schedulers. 
If this is already the case, there is nothing to do. So, we assume now that $\Pr^{\min}_{\cN,\init_{\cN}}(\Diamond  \Effect)=0$.

We define the subset $D\subseteq S_{\cN}$ by
\[
D\eqdef \{s\in S_{\cN} \mid \Pr^{\min}_{\cN,s}(\Diamond \Effect) =0\}.
\]
Note that $\init_{\cN} \in D$. For each $s\in D$, we further define 
\[
\Act^{\min}(s)=\{\alpha \in \Act_{\cN}(s) \mid P_{\cN}(s,\alpha,D)=1\}.
\]
Finally, let $E\subseteq D$ be the set of states that are reachable from $\init_{\cN}$ when only choosing actions from $\Act^{\min}(\cdot)$. Note that $E$ does not contain any states from $\cC$.

All schedulers that reach $\Effect$ with positive probability in $\cN$ have to leave the sub-MDP consisting of $E$ and the actions in $\Act^{\min}(\cdot)$ at some point. Let us call this sub-MDP $\cN^{\min}_E$.
We define the set of state-action pairs $\Pi$ that leave the sub-MDP $\cN^{\min}_E$:
\[
\Pi\eqdef \{ (s,\alpha) \mid s\in E\text{ and } \alpha\in \Act_{\cN}(s) \setminus \Act^{\min}(s)\}.
\]
We now construct a further MDP $\cK$. The idea is that $\cK$ behaves like $\cN$ after initially a scheduler is forced to  choose  a probability distribution over 
state-action pairs from $\Pi$. In this way, $\Effect$ is reached with positive probability under all schedulers.
Given an SPR cause, we will observe that for the f-score of this cause under a scheduler, it is only important how large the probabilities with which state action pairs from $\Pi$ are chosen are relative to each other while the absolute values are not important. Due to this observation, for each SPR cause $C$ and for each scheduler $\sched$ for $\cN$ that reaches $\Effect$ with positive probability, we can then construct a scheduler for $\cK$ that leads to the same recall and precision of $C$.

Formally, $\cK$ is defined as follows: The state space is $S_{\cN}\cup \{\init_{\cK}\}$ where $\init_{\cK}$ is a fresh initial state.
For all states in $S_{\cN}$, the same actions as in $\cN$ are available with the same transition probabilities. I.e., for all $s,t\in S_{\cN}$,
\[
\Act_{\cK}(s)\eqdef \Act_{\cN}(s) \text{ and } P_{\cK}(s,\alpha,t)\eqdef P_{\cN}(s,\alpha,t) \text{ for all }\alpha \in \Act_{\cK}(s).
\]
For each state-action pair $(s,\alpha)$ from $\Pi$, we now add a new action $\beta_{(s,\alpha)}$ that is enabled only in $\init_{\cK}$. 
These are all actions enabled in $\init_{\cK}$, i.e., 
\[
\Act_{\cK}(\init_{\cK}) \eqdef \{\beta_{(s,\alpha)}\mid (s,\alpha)\in \Pi\}.
\]
For each state $t\in S_{\cN}$, we define the transition probabilities under $\beta_{(s,\alpha)}$ by
\[
P_{\cK}(\init_{\cK},\beta_{(s,\alpha)}, t) \eqdef P_{\cN}(s,\alpha,t).
\]

\paragraph*{Claim 2: }
A subset $C\subseteq \cC$ that satisfies (M) is an SPR cause  for $\Effect$ in $\cN$ with $\fscore(C)>\vartheta$
 if and only if 
 for all schedulers $\tsched$ for $\cK$, we have
 \[
2(1{-}\vartheta)\Pr^{\tsched}_{\cK}(\Diamond C \land \Diamond \Effect) - \vartheta \Pr^{\tsched}_{\cK}(\neg \Diamond C \land \Diamond \Effect)
	-\vartheta \Pr^{\tsched}_{\cK}(\Diamond C \land \neg \Diamond \Effect)  >  0. \tag{$\ast$}
 \]

\noindent
{\it Proof of Claim 2.}
We first prove the direction ``$\Rightarrow$''. So, let $C$ be an SPR cause for $\Effect$ in $\cN$ with $\fscore(C)>\vartheta$.

We first observe that in order to prove ($\ast$) for all schedulers $\tsched$ for $\cK$, it suffices to consider schedulers $\tsched$ that start with a deterministic choice for state $\init_{\cK}$ and then behave in an arbitrary way. 
\begin{enumerate}
\item []
To see this, we consider the MDP $\cK_C$ that consists of two copies of $\cK$: ``before $C$'' and ``after $C$''. That is, when $\cK_C$ enters a $C$-state in the first copy (``before $C$''), it switches to the second copy (``after $C$'') and stays there forever. Let us write $(s,1)$ for state $s$ in the first copy and $(s,2)$ for the copy of state $s$ in the second copy. Thus, in $\cK_C$ the event $\Diamond C \wedge \Diamond \Effect$ is equivalent to reaching a state $(\eff,2)$ where $\eff\in \Effect$, while $\Diamond C \wedge \neg \Diamond \Effect$ is equivalent to reaching a non-terminal state in the second copy, while $\neg \Diamond C \wedge \Diamond \Effect$ corresponds to the event reaching an effect state in the first copy.

Obviously, there is a one-to-one-correspondendence of the schedulers of $\cK$ and $\cK_C$. With $\cK$ also $\cK_C$ has no end components, i.e., a terminal state will be reached almost surely under every scheduler. Furthermore, we equip $\cK_C$ with a weight function for the states that assigns 
\begin{itemize}
\item
  weight $2(1{-}\vartheta)$ to the states $(\eff,2)$ where $\eff \in \Effect$,
\item
  weight $-\vartheta$ to the states $(\eff,1)$ where $\eff \in \Effect$ and
  to the states $(s,2)$ where $s$ is a terminal non-effect state in $\cK$ 
  (and $\cK_C$), and
\item
  weight 0 to all other states.
\end{itemize}
Let $V$ denote the set of all terminal states in $\cK_C$.
Then, the expression on the left hand side of ($\ast$) equals $\mathrm{E}^{\tsched}_{\cK_C}(\boxplus V)$,
 the expected accumulated weight until reaching a terminal state under scheduler $\tsched$. Hence,  ($\ast$) holds for all schedulers $\tsched$ in $\cK$ if and only if $\mathrm{E}^{\min}_{\cK_C}(\boxplus V) >0$. 

It is well-known that the minimal expected accumulated weight in EC-free MDPs is achieved by an MD-scheduler. That is, there is an MD-scheduler $\tsched$ of $\cK_C$ such that $\mathrm{E}^{\min}_{\cK_C}(\boxplus V) = \mathrm{E}^{\tsched}_{\cK_C}(\boxplus V)$. When viewed as a scheduler of $\cK$, $\tsched$ behaves memeoryless deterministic before reaching $C$. In particular, $\tsched$'s initial choice in $\init_{\cK}$ is deterministic.
\end{enumerate}
So, let now $\tsched$ be a scheduler for $\cK$ with a deterministic initial choice in $\init_{\cK}$. Say $\tsched(\init_{\cK})(\beta_{(s,\alpha)})=1$ where $(s,\alpha)\in \Pi$. 

To construct an analogous scheduler $\sched$ of $\cN$, we pick an MD-scheduler $\usched$ of the sub-MDP $\cN^{\min}_E$ of 
$\cN$ induced by the state-action pairs $(u,\beta)$ where $u\in E$ and $\beta \in \Act^{\min}(u)$ such that there is a $\usched$-path from $\init_{\cN}$ to state $s$.

Scheduler $\sched$ of $\cN$ operates with the mode $\mathfrak{m}_1$ and
the modes $\mathfrak{m}_{2,t}$ for $t\in S_{\cN}$. In its initial mode $\mathfrak{m}_1$, scheduler $\sched$ behaves as $\usched$ as long as state $s$ has not been visited. When having reached state $s$ in mode $\mathfrak{m}_1$, then $\sched$ schedules the action $\alpha$ with probability 1. Let $t \in S_{\cN}$ be the state that $\sched$ reaches via the $\alpha$-transition from $s$. Then, $\sched$ switches to mode $\mathfrak{m}_{2,t}$ and behaves from then on as the residual scheduler $\residual{\tsched}{\varpi}$ of $\tsched$ for the $\tsched$-path $\varpi = \init_{\cK} \, \beta_{(s,\alpha)} \, t$ in $\cK$. That is, after having scheduled the action $\beta_{(s,\alpha)}$, scheduler $\sched$ behaves exactly as $\tsched$.

Let $\lambda$ denote $\sched$'s probability to leave mode $\mathfrak{m}_1$, which equals $\usched$'s probability to reach $s$ from $\init_{\cN}$. That is, $\lambda = \Pr_{\cN}^{\usched}(\Diamond s)$ when $\usched$ is viewed as a scheduler of $\cN$. 
As $E$ is disjoint from $C$ and $\Effect$,
scheduler $\sched$ stays forever in mode $\mathfrak{m}_1$ and never reaches a state in $C \cup \Effect$ with probability $1{-}\lambda$. 

As $\sched$ and $\tsched$ behave identically after choosing the state-action pair $(s,\alpha) \in \Pi$ or the corresponding action $\beta_{(s,\alpha)}$, respectively, this implies that 
\begin{itemize}
\item 
$\Pr^{\sched}_{\cN}(\Diamond C \land \Diamond \Effect) = \lambda \cdot \Pr^{\tsched}_{\cK}(\Diamond C \land \Diamond \Effect)$,
\item
$\Pr^{\sched}_{\cN}(\Diamond \Effect) = \lambda \cdot \Pr^{\tsched}_{\cK}(\Diamond \Effect)$, and 
\item
$\Pr^{\sched}_{\cN}(\Diamond C \land \neg \Diamond \Effect)  = \lambda \cdot \Pr^{\tsched}_{\cK}(\Diamond C \land \neg \Diamond \Effect)$.
\end{itemize}

As $\sched$ leaves the sub-MDP $\cN^{\min}_E$ with probability $\lambda >0$, we have $\Pr^{\sched}_{\cN}(\Diamond \Effect)>0$.
By Lemma \ref{lem:reformulating_fscore}, we can conclude that 
 \[
2(1{-}\vartheta)\Pr^{\sched}_{\cN}(\Diamond C \land \Diamond \Effect) - \vartheta \Pr^{\sched}_{\cN}(\neg \Diamond C \land \Diamond \Effect)
	-\vartheta \Pr^{\sched}_{\cN}(\Diamond C \land \neg \Diamond \Effect)  >  0.
 \]
 By the equations above, this in turn implies that 
  \[
2(1{-}\vartheta)\Pr^{\tsched}_{\cK}(\Diamond C \land \Diamond \Effect) - \vartheta \Pr^{\tsched}_{\cK}(\neg \Diamond C \land \Diamond \Effect)
	-\vartheta \Pr^{\tsched}_{\cK}(\Diamond C \land \neg \Diamond \Effect)  >  0.
 \]

For the direction ``$\Leftarrow$'', first recall that any subset of $\cC$ satisfying (M) is an SPR cause for $\Effect$ in $\cN$ (see Lemma \ref{lemma:charac-strict-SPR-causes}). Now, let $\sched$ be a scheduler for $\cN$ with $\Pr^{\sched}_{\cN}(\Diamond \Effect)>0$.
Let $\Gamma$ be the set of finite $\sched$-paths $\gamma$ in the sub-MDP $\cN^{\min}_E$  such that 
$\sched$ chooses an action in $\Act_{\cN}(\mathit{last}(\gamma))\setminus \Act^{\min}(\mathit{last}(\gamma))$ with positive probability after $\gamma$ where $\mathit{last}(\gamma)$ denotes the last state of $\gamma$.
Let
\[
q\eqdef \sum_{\gamma \in \Gamma} \qquad \sum_{\alpha\in \Act_{\cN}(\mathit{last}(\gamma))\setminus \Act^{\min}(\mathit{last}(\gamma))} P_{\cN}(\gamma)\cdot \sched(\gamma)(\alpha).
\]
 So, $q$ is the overall probability that a state-action pair from $\Pi$ is chosen under $\sched$.
 We now define a scheduler $\tsched$ for $\cK$: For each $\gamma\in\Gamma$ ending in a state $s$ and each $\alpha\in \Act_{\cN}(s)\setminus \Act^{\min}(s)$, the scheduler $\tsched$ chooses action $\beta_{(s,\alpha)}$ in $\init_{\cK}$ with probability $P_{\cN}(\gamma)\cdot \sched(\gamma)(\alpha) / q$.
 When reaching a state $t$ afterwards, $\tsched$ behaves like $\residual{\sched}{\gamma\, \alpha \, t}$ afterwards.
 Note that by definition this indeed defines a probability distribution over the actions in the initial state $\init_{\cK}$.
 
 By assumption, we know that now
   \[
2(1{-}\vartheta)\Pr^{\tsched}_{\cK}(\Diamond C \land \Diamond \Effect) - \vartheta \Pr^{\tsched}_{\cK}(\neg \Diamond C \land \Diamond \Effect)
	-\vartheta \Pr^{\tsched}_{\cK}(\Diamond C \land \neg \Diamond \Effect)  >  0.
 \]
 As the probability with which an action $\beta_{(s,\alpha)}$ is chosen by $\tsched$ for a $(s,\alpha) \in \Pi$ is $1/q$ times the probability that $\alpha$ is chosen in $s$ to leave the sub-MDP $\cN^{\min}_E$  under $\sched$ in $\cN$ and as the residual behavior is identical, we conclude that 
 \begin{align*}
& 2(1{-}\vartheta)\Pr^{\sched}_{\cN}(\Diamond C \land \Diamond \Effect) - \vartheta \Pr^{\sched}_{\cN}(\neg \Diamond C \land \Diamond \Effect)
	-\vartheta \Pr^{\sched}_{\cN}(\Diamond C \land \neg \Diamond \Effect)\\
= \,\,& 	q\cdot ( 2(1{-}\vartheta)\Pr^{\tsched}_{\cK}(\Diamond C \land \Diamond \Effect) - \vartheta \Pr^{\tsched}_{\cK}(\neg \Diamond C \land \Diamond \Effect)
	-\vartheta \Pr^{\tsched}_{\cK}(\Diamond C \land \neg \Diamond \Effect))  >  0.
 \end{align*}
By Lemma \ref{lem:reformulating_fscore}, this shows that $\fscore(C)>\vartheta$ in $\cN$ and finishes the proof of Claim 2.

\paragraph*{\it Construction of a game structure.}
We now construct a \emph{stochastic shortest path game} (see \cite{patek1999stochastic}) to check whether there is a subset $C\subseteq \cC$ in $\cK$ such that ($\ast$) holds. Such a game is played on an MDP-like structure with the only difference that the set of states is partitioned into two sets indicating which player controls which states.

The game $\cG$ has states $(S_{\cK}\times \{\yes,\no\}) \cup \cC\times\{\choice\}$.
On the subset $S_{\cK}\times\{\yes\}$, all available actions and transition probabilities are just as in $\cK$ and this copy of $\cK$ cannot be left.
More formally, for all $s,t\in S_{\cK}$ and $\alpha \in \Act_{\cK}(s)$, we have $\Act_{\cG}((s,\yes))=\Act_{\cK}(s)$ and $P_{\cG}((s,\yes),\alpha,(t,\yes ))=P_{\cK}(s,\alpha,t)$.

In the ``$\no$''-copy, the game also behaves like $\cG$ but when a state in $\cC$ would be entered, the game moves to a state in $\cC\times \{\choice\}$ instead.
In a state of the form $(c,\choice)$ with $c\in \cC$, two action $\alpha$ and $\beta$ are available. Choosing $\alpha$ leads to the state $(c,\yes)$ while choosing $\beta$ leads to $(c,\no)$ with probability $1$.

Formally, this means that for all state $s\in S_{\cK}$, we define $\Act_{\cG}((s,\no))=\Act_{\cK}(s)$ and for all actions $\alpha\in \Act_{\cK}(s)$:
\begin{itemize}
\item
$P_{\cG}((s,\no),\alpha,(t,\no))=P_{\cK}(s,\alpha,t)$
for all states $t\in S_{\cK}\setminus \cC$ 
\item
$P_{\cG}((s,\no),\alpha,(c,\choice))=P_{\cK}(s,\alpha,c)$
  for all states $c\in \cC$
\end{itemize}
For states $s\in S_{\cK}$, $c\in \cC$, and $\alpha\in \Act_{\cK}(s)$,
we furthermore define: 
\begin{center}
  $P_{\cG}((c,\choice),\alpha,(c,\yes))=P_{\cG}((c,\choice),\beta,(c,\no))=1$.
\end{center}
Intuitively speaking, whether a state $c\in \cC$ should belong to the cause set can be decided in the state $(c,\choice)$. The 
``$\yes$''-copy encodes that an effect state has been selected. 
More concretely, the ``$\yes$-copy'' is entered as soon as $\alpha$ has been
chosen in some state $(c,\choice)$ and will never be left from then on.
The ``$\no$''-copy of $\cK$ then encodes that no state $c\in \cC$ which has been selected to become a cause state has been visited so far.  
That is, if the current state of a play in $\cG$ belongs to the $\no$-copy then
in all previous decisions in the states $(c,\choice)$, action $\beta$ has been
chosen.

Finally, we equip the game with a weight structure.
All states in $\Effect\times\{\yes\}$ get weight $2(1-\vartheta)$. All remaining terminal states in $S_{\cK}\times \{\yes\}$ get weight $-\vartheta$. Further, 
all  states in $\Effect \times \{\no\}$ get weight $-\vartheta$. All remaining states have weight $0$.

 The game is played between two players $0$ and $1$. Player $0$ controls all states in $\cC\times \{\choice\}$ while player $1$ controls the remaining states.
 The goal of player $0$ is to ensure that the expected accumulated weight is $>0$.

 \paragraph*{Claim 3: }
 Player $0$ has a winning strategy ensuring that the expected accumulated weight is $>0$ in the game $\cG$ if and only if there is a subset $C\subseteq \cC$ in $\cK$ that satisfies (M) and 
  for all schedulers $\tsched$ for $\cK$, 
 \[
2(1{-}\vartheta)\Pr^{\tsched}_{\cK}(\Diamond C \land \Diamond \Effect) - \vartheta \Pr^{\tsched}_{\cK}(\neg \Diamond C \land \Diamond \Effect)
	-\vartheta \Pr^{\tsched}_{\cK}(\Diamond C \land \neg \Diamond \Effect)  >  0. \tag{$\ast$}
 \]

\noindent
{\it Proof of Claim 3.}
As $\cK$ has no end components, also in the game $\cG$ a terminal state is reached almost surely under any pair of strategies. Hence, we can rely on the results of \cite{patek1999stochastic} that state that both players have an optimal memoryless deterministic strategy.

We start by proving direction ``$\Rightarrow$'' of Claim 3. Let $\zeta$ be a memoryless deterministic winning strategy for player $0$. I.e., $\zeta$ assigns to each state in $\cC\times \{\choice\}$ an action from $\{\alpha, \beta\}$.
We define 
\[
  \cC_{\alpha} \eqdef \{c\in \cC \mid \zeta((c,\choice))=\alpha \}.
\]
Note that $\cC_{\alpha}$ is not empty as otherwise a positive expected accumulated weight in the game is not possible. (Here we use the fact that only the effect states
in the $\yes$-copy have positive weight and that the $\yes$-copy
can only be entered by taking $\alpha$ in one of the states $(c,\choice)$.)

To ensure that (M) is satisfied, we remove states that cannot be visited as the first state of this set:
\[
C\eqdef \{c\in \cC_{\alpha} \mid \cK,c \models \exists (\neg \cC_{\alpha})\Until c\}.
\]
 Note that the strategies for player $0$ in $\cG$ that correspond to the sets $\cC_{\alpha}$ and $C$ lead to exactly the same plays.

 Let $\tsched$ be a scheduler for $\cK$. This scheduler can be used as a strategy for player $1$ in $\cG$.
 Let us denote the expected accumulated weight when player $0$ plays according to $\zeta$ and player $1$ plays according to $\tsched$ by $w(\zeta,\tsched)$.
As $\zeta$ is winning for player 0 we have
$$
  w(\zeta,\tsched) > 0
$$
 By the construction of the game, it follows directly that 
 \[
 w(\zeta,\tsched) = 2(1{-}\vartheta)\Pr^{\tsched}_{\cK}(\Diamond C \land \Diamond \Effect) - \vartheta \Pr^{\tsched}_{\cK}(\neg \Diamond C \land \Diamond \Effect)
	-\vartheta \Pr^{\tsched}_{\cK}(\Diamond C \land \neg \Diamond \Effect). 
 \]
Putting things together yields:
 \[
 2(1{-}\vartheta)\Pr^{\tsched}_{\cK}(\Diamond C \land \Diamond \Effect) - \vartheta \Pr^{\tsched}_{\cK}(\neg \Diamond C \land \Diamond \Effect)
	-\vartheta \Pr^{\tsched}_{\cK}(\Diamond C \land \neg \Diamond \Effect) 
\ > \ 0 \tag{$\dagger$}
 \]
 For the other direction, suppose there is a set $C\subseteq \cC$ that satisfies (M) and ($\ast$) for all schedulers $\tsched $ for $\cK$.
 We define the MD-strategy $\zeta$ from $C$ by letting $\zeta((c,\choice))=\alpha $ if and only if $c\in C$. For any strategy $\tsched$ for player $1$, we can again view $\tsched$ also as a scheduler for $\cK$. Equation ($\dagger$) holds again and shows that the expected accumulated weight in $\cG$ is positive if player $0$ plays according to $\zeta$ against any strategy for player $1$. This finishes the proof of Claim 3.
 
\paragraph*{Putting together Claims 1-3.} 
We conclude that there is an SPR cause $C$ in the original MDP $\cM$ with $\fscore(C)>\vartheta$ if and only if player $1$ has a winning strategy in the constructed game $\cG$.
 As both players have optimal MD-strategies in $\cG$ \cite{patek1999stochastic}, the decision problem is in $\NP \cap \coNP$: We can guess the MD-strategy for player $0$ and solve the resulting stochastic shortest path problem in polynomial time \cite{BT91} to obtain an NP-upper bound. Likewise, we can guess the MD-strategy for player $1$ and solve the resulting stochastic shortest path problem to obtain the coNP-upper bound. \qed
 
\end{proof}

\measureNPhardness*

\begin{proof}
	\textit{$\PSPACE$-membership.}
	As $\NPSPACE = \PSPACE$,
	it suffices to provide a non-deterministic polynomially space-bounded
	algorithm for GPR-covratio, GPR-recall and GPR-f-score.
	The algorithms rely on the guess-and-check principle: they start
	by non-deterministically guessing a set $\Cause \subseteq S$,
	then check in polynomial space whether
	$\Cause$ constitutes a GPR cause
	(see Section \ref{sec:check}) and finally check $\relcov(\Cause) \leq \vartheta$ (with standard techniques),
	resp. $\ratiocov(\Cause) \leq \vartheta$,
	resp. $\fscore(\Cause) \leq \vartheta$
	(Theorem \ref{ratiocov-fscore-in-PTIME})
	in polynomial time.
	
	\paragraph*{\it $\NP$-membership for Markov chains.}
	$\NP$-membership for all three problems within Markov chains is straightforward as we
	may non-deterministically guess a cause and check in
	polynomial time whether it constitutes a GPR cause and satisfies the threshold
	condition for the recall, coverage ratio or f-score.
	
	\paragraph*{\it $\NP$-hardness of GPR-recall and GPR-covratio.}
	With arguments as in the proof of Lemma \ref{lemma:recall-opt=ratio-opt},
	the problems GPR-recall and GPR-covratio are polynomially interreducible
	for Markov chains.
	Thus, it suffices to prove NP-hardness of GPR-recall.
	For this, we provide a polynomial reduction from the knapsack problem.
	The input of the latter are sequences
	$A_1,\ldots,A_n,A$ and $B_1,\ldots,B_n,B$ of positive natural numbers
	and the task is to decide whether there exists a subset $I$ of $\{1,\ldots,n\}$ such that
	\begin{equation}
		\label{knapsack1}
		\sum_{i\in I} A_i \ < \ A \qquad \text{and} \qquad
		\sum_{i\in I} B_i \ \geqslant \ B
		\tag{*}
	\end{equation}
	Let $K$ be the maximum of the values $A,A_1,\ldots,A_n,B,B_1,\ldots,B_n$ and
	$N = 8(n{+}1)\cdot (K{+}1)$.
	We then define
	\begin{center}
		$a_i=\frac{A_i}{N}$, \ \ \
		$a=\frac{A}{N}$, \ \ \
		$b_i=\frac{B_i}{N}$, \ \ \
		$b=\frac{B}{N}$.
	\end{center}
	Then, $a, a_1,\ldots,a_n,b,b_1,\ldots,b_n$
	are positive rational numbers strictly smaller than $\frac{1}{8(n{+}1)}$,
	and \eqref{knapsack1} can be rewritten as:
	\begin{equation}
		\label{knapsack2}
		\sum_{i\in I} a_i \ < \ a \qquad \text{and} \qquad
		\sum_{i\in I} b_i \ \geqslant \ b
		\tag{**}
	\end{equation}
	For $i\in \{1,\ldots,n\}$, let
	\begin{center}
		$p_i=2(a_i+b_i)$ \ \ \ and \ \ \
		$w_i=\frac{b_i}{p_i}=\frac{1}{2}\cdot \frac{b_i}{a_i+b_i}$.
	\end{center}  
	Then,
	$0 < p_i < \frac{1}{2(n{+}1)}$ and $0< w_i < \frac{1}{2}$. Moreover:
	\begin{center}
		$p_i \bigl(\frac{1}{2}-w_i \bigr) = a_i$ \ \ \
		\text{and} \ \ \ $p_i \cdot w_i =  b_i$
	\end{center}
	Hence, \eqref{knapsack2} can be rewritten as:  
	\begin{center}
		$\sum\limits_{i\in I} p_i \bigl(\frac{1}{2}-w_i\bigr) \ < \ a$
		\qquad \text{and} \qquad $\sum\limits_{i\in I} p_i w_i \ \geqslant \ b$
	\end{center}
	which again is equivalent to:
	\begin{equation}
		\label{knapsack3}
		\frac{\sum\limits_{i\in I_0} p_iw_i}{\sum\limits_{i\in I_0} p_i} \ > \
		\frac{1}{2}
		\qquad \text{and} \qquad
		\sum\limits_{i\in I_0} p_i w_i \ \geqslant \ \ p_0+b
		\tag{***}
	\end{equation}
	where $p_0 = 2a$, $w_0=1$ and $I_0=I\cup\{0\}$.
	Note that $a<\frac{1}{8(n{+}1)}$ and hence $p_0 <\frac{1}{4(n{+}1)}$.
	
	Define a tree-shape Markov chain $\cM$ with non-terminal states
	$\init$, $s_0,s_1,\ldots,s_n$, 
	and terminal states
	$\eff_0,\ldots,\eff_n$, $\effuncov$ and
	$\noeff,\noeff_1,\ldots,\noeff_n$.
	Transition probabilities are as follows:
	\begin{itemize}
		\item
		$P(\init,s_i)=p_i$ for $i=0,\ldots,n$
		\item
		$P(\init,\effuncov) \ = \
		\frac{1}{2}-\sum\limits_{i=0}^n p_iw_i$
		
		\item   
		$P(\init,\noeff)=1-\sum\limits_{i=0}^n p_i - P(\init,\effuncov)$,
		\item
		$P(s_i,\eff_i)=w_i$, $P(s_i,\noeff_i)=1{-}w_i$ for $i=1,\ldots,n$
		\item
		$P(s_0,\eff_0)=1=w_0$.
	\end{itemize}
	
	Note that $p_0+p_1+\ldots+p_n  < \frac{1}{2}$ as all $p_i$'s are
	strictly smaller
	than $\frac{1}{2(n{+}1)}$. As the $w_i$'s are bounded by 1,
	this yields $0 < P(\init,\effuncov) <\frac{1}{2}$ and
	$0< P(\init,\noeff) < 1$.
	
	The graph structure of $\cM$ is indeed a tree and $\cM$ can be constructed from the values
	$A,A_1,\ldots,A_n,B,B_1,\ldots,B_n$ in polynomial time.
	Moreover, for $\Effect = \{\effuncov\}\cup\{\eff_i : i=0,1,\ldots,n\}$
	we have:
	$$
	\Pr_{\cM}(\Diamond \Effect)
	\ \ = \ \
	\sum_{i=0}^n p_iw_i + P(\init,\effuncov) \ \ = \ \ \frac{1}{2}
	$$
	As the values $w_1,\ldots,w_n$ are strictly smaller than $\frac{1}{2}$,
	we have
	$\Pr_{\cM}(\ \Diamond \Effect \ | \ \Diamond C \ ) < \frac{1}{2}$
	for each nonempty subset $C$ of $\{s_1,\ldots,s_n\}$.
	Thus, the only candidates for GPR causes are the sets
	$C_I =\{s_i : i\in I_0\}$ where $I \subseteq \{1,\ldots,n\}$
	where as before $I_0=I\cup\{0\}$.
	Note that for all states $s\in C_I$ there is a path satisfying
	$(\neg C_I)\Until s$. Thus, $C_I$ is a GPR cause if and only
	if $C_I$ satisfies the GPR condition.
	We have:
	$$
	\Pr_{\cM}(\ \Diamond \Effect \ | \ \Diamond C_I \ ) \ \ = \ \
	\frac{\sum\limits_{i\in I_0} p_i w_i}{\sum\limits_{i\in I_0} p_i}
	$$
	and
	$$
	\recall(C_I) \ \ = \ \
	\Pr_{\cM}
	(\ \Diamond (C_I \wedge \Diamond \Effect) \ | \ \Diamond \Effect \ )
	\ \ = \ \
	2 \cdot \sum_{i\in I_0} p_iw_i
	$$
	Thus, $C_I$ is a GPR cause with recall at least $2(p_0+b)$
	if and only if the two conditions in \eqref{knapsack3} hold,
	which again is equivalent to the satisfaction of the conditions
	in \eqref{knapsack1}.
	But this yields that $\cM$ has a GPR cause with recall at least
	$2(p_0+b)$ if and only if the knapsack problem is solvable
	for the input $A,A_1,\ldots,A_n,B,B_1,\ldots,B_n$.
	
	\paragraph*{\it $\NP$-hardness of GPR-f-score.}
	Using similar ideas, we also provide a polynomial reduction
	from the knapsack problem.
	Let $A,A_1,\ldots,A_n,B,B_1,\ldots,B_n$ be an input for the knapsack
	problem.
	We replace the $A$-sequence with $a,a_1,\ldots,a_n$ where
	$a=\frac{A}{N}$ and $a_i=\frac{A_i}{N}$
	where $N$ is as before.
	The topological structure of the Markov chain that we are going to construct
	is the same as in the NP-hardness proof for GPR-recall.

	We will define polynomial-time computable values
	$p_0,p_1,\ldots,p_n \in \ ]0,1[$ (where $p_i=P(\init,s_i)$),
	$w_1,\ldots,w_n \in \ ]0,1[$ (where $w_i=P(s_i,\eff_i)$)
	and auxiliary variables $\delta \in \ ]0,1[$ and $\lambda > 1$    
	such that:
	\begin{enumerate}
		\item [(1)] $p_0+p_1 + \ldots + p_n < \frac{1}{2}$
		\item [(2)] $\lambda = \frac{p_0 + \frac{1}{2} - \delta}{p_0}$ 
		\item [(3)] for all $i\in \{1,\ldots,n\}$:
		\begin{enumerate}
			\item [(3.1)]
			$a_i \ = \ p_i \bigl(\frac{1}{2}-w_i)$
			(in particular $w_i < \frac{1}{2}$)
			\item [(3.2)]
			$B_i \ = \ \frac{1}{\delta} B p_i \bigl( \lambda w_i -1)$
			(in particular $w_i > \frac{1}{\lambda}$)
		\end{enumerate}
	\end{enumerate} 
	Assuming such values have been defined, we obtain:
	\begin{eqnarray*}
		\sum_{i\in I} B_i \ \geqslant \ B
		& \ \ \text{iff} \ \ & 
		\frac{1}{\delta} B \sum_{i \in I} p_i (\lambda  w_i -1) \ \geqslant \ B
		\\
		\\[0ex]
		& \text{iff} &
		\sum_{i \in I} p_i (\lambda  w_i -1) \ \geqslant \ \delta
		\\
		\\[0ex]
		& \text{iff} &
		\lambda \sum_{i\in I} p_i w_i
		\ \geqslant \ \delta + \sum_{i\in I} p_i
	\end{eqnarray*}
	Hence:
	\begin{eqnarray*}
		\sum_{i\in I} B_i \ \geqslant \ B
		& \ \ \text{iff} \ \ & 
		\frac{\displaystyle \sum\limits_{i\in I} p_i w_i}
		{\displaystyle \delta + \sum\limits_{i\in I} p_i}
		\ \geqslant \ 
		\frac{1}{\lambda}
	\end{eqnarray*}
	For all positive real numbers $x,y,u,v$ with $\frac{x}{y}=\frac{1}{\lambda}$
	we have:
	$$
	\frac{x+u}{y+v} \geqslant \frac{1}{\lambda}
	\ \ \ \ \ \text{iff} \ \ \ \ \
	\frac{u}{v}\geqslant \frac{1}{\lambda}
	$$
	By the constraints for $\lambda$ (see (2)), we have
	$\frac{p_0}{p_0 + \frac{1}{2} - \delta}=\frac{1}{\lambda}$.
	Therefore:
	\begin{eqnarray*}
		\frac{\displaystyle \sum\limits_{i\in I} p_i w_i}
		{\displaystyle \delta + \sum\limits_{i\in I} p_i}
		\ \geqslant \ 
		\frac{1}{\lambda}
		& \ \text{iff} \ &
		\frac{\displaystyle p_0 + \sum\limits_{i\in I} p_i w_i}
		{\displaystyle (p_0 + \frac{1}{2} - \delta)
			+ \delta + \sum\limits_{i\in I} p_i}
		\ \ = \ \   
		\frac{\displaystyle p_0 + \sum\limits_{i\in I} p_i w_i}
		{\displaystyle p_0 + \frac{1}{2} + \sum\limits_{i\in I} p_i}
		\ \geqslant \ 
		\frac{1}{\lambda}    
	\end{eqnarray*}  
	As before let $w_0=1$ and  $I_0=I\cup \{0\}$. Then,
	the above yields:
	\begin{eqnarray*}
		\sum_{i\in I} B_i \ \geqslant \ B
		& \ \ \ \text{iff} \ \ \ & 
		\frac{\displaystyle  \sum\limits_{i\in I_0} p_i w_i}
		{\displaystyle   \frac{1}{2} + \sum\limits_{i\in I_0} p_i}
		\ \geqslant \ 
		\frac{1}{\lambda}    
	\end{eqnarray*}
	As in the NP-hardness proof for GPR-recall and using (3.1):
	$$
	\Pr_{\cM}(\Diamond \Effect) \ = \ \frac{1}{2} \ > \ w_i
	\qquad \text{for $i=1,\ldots,n$}
	$$
	Thus,
	each GPR cause must have the form $C_I=\{s_i : i \in I_0\}$
	for some subset $I$ of $\{1,\ldots,n\}$.
	Moreover:
	$$
	\Pr_{\cM}(\Diamond C_I) \ = \ \sum_{i\in I_0} p_i
	\qquad \text{and} \qquad
	\Pr_{\cM}(\Diamond (C_I \wedge \Diamond \Effect))
	\ = \
	\sum_{i\in I_0} p_i w_i
	$$
	So, the f-score of $C_I$ is:
	$$
	\fscore(C_I) \ \ = \ \
	2 \cdot \frac{\Pr_{\cM}(\Diamond (C_I \wedge \Diamond \Effect))}
	{\Pr_{\cM}(\Diamond \Effect) + \Pr_{\cM}(\Diamond C_I)}
	\ \ = \ \
	2 \cdot \frac{\sum\limits_{i\in I_0} p_i w_i}
	{\frac{1}{2}+\sum\limits_{i\in I_0} p_i}
	$$
	This implies:               
	\begin{eqnarray*}
		\sum_{i\in I} B_i \ \geqslant \ B
		& \ \text{iff} \ &
		\fscore(C_I) \ \geqslant \
		\frac{2}{\lambda}    
	\end{eqnarray*}
	With $p_0=2a$ and using (3.1) and arguments as in the NP-hardness proof
	for GPR-recall, we obtain:
	\begin{eqnarray*}
		\sum_{i\in I} A_i \ < \ A
		& \ \ \text{iff} \ \ &
		\text{$C_I$ is a GPR cause}
	\end{eqnarray*}
	Thus, the constructed Markov chain has a GPR cause with f-score at least
	$\frac{2}{\lambda}$ if and only if the knapsack problem is solvable
	for the input $A,A_1,\ldots,A_n,B,B_1,\ldots,B_n$.
	
	It remains to define the values $p_1,\ldots,p_n,w_1,\ldots,w_n$ and
	$\delta$. (The value of $\lambda$ is then obtained by (2).)
	(3.1) and (3.2) can be rephrased as equations for $w_i$:
	\begin{description}
		\item [\text{\rm (3.1')}] $w_i=\frac{1}{2}-\frac{a_i}{p_i}$
		\item [\text{\rm (3.2')}]
		$w_i=\frac{1}{\lambda}\bigl( \delta \frac{B_i}{Bp_i}+1 \bigr)$
	\end{description}
	This yields an equation for $p_i$:
	$$
	\frac{1}{2}-\frac{a_i}{p_i} \ \ = \ \ 
	\frac{1}{\lambda}\Bigl( \delta \frac{B_i}{Bp_i}+1 \Bigr)
	$$
	and leads to:
	\begin{equation}
         \label{equation-for-pi}
	p_i \ \ = \ \
	\frac{2\lambda}{\lambda-2} a_i \ + \
	\frac{2\delta}{\lambda-2} \cdot \frac{B_i}{B}
        \tag{****}
	\end{equation}
	We now substitute $\lambda$ by (2) and arrive at
	$$
	p_i \ \ = \ \ 
           \frac{p_0}{\frac{1}{2}-\delta}a_i \ + \ a_i + \ \frac{\delta p_0}{\frac{1}{2}-\delta}\frac{B_i}{B}.
	$$
	By choice of $N$, all $a_i$'s and $a$ are smaller than $\frac{1}{8(n{+}1)}$.
	Using this together with $p_0 = 2a$, we get:
	\begin{equation}
        \label{five-star}
	p_i \ < \ \frac{1}{4(n{+}1)(\frac{1}{2}-\delta)}\frac{1}{8(n{+}1)} \ + \ \frac{1}{8(n{+}1)} \ + \ \frac{\delta}{4(n{+}1)(\frac{1}{2}-\delta)}\frac{B_i}{B}
        \tag{*****}
	\end{equation}
	Let now $\delta =\frac{1}{8K}$ (where $K$ is as above, i.e.,
	the maximum of the values $A,A_1,\ldots,A_n$, $B$, $B_1,\ldots,B_n$).
        Then, $p_1,\ldots,p_n$  are computable
        in polynomial time, and so are the values $w_1,\ldots,w_n$ 
        (by (3.1')).  
	As $\frac{2\lambda}{\lambda-2}>2$ and using \eqref{equation-for-pi},
        we obtain 
        $p_i > 2a_i$.
        So, by (3.1') we get $0 < w_i < \frac{1}{2}$.

        It remains to prove (1). 
        Using $\delta=\frac{1}{8K}$, we obtain from \eqref{five-star}:
	$$
	p_i \ < \ \frac{1}{4(n{+}1)(\frac{1}{2}-\frac{1}{8K})}\frac{1}{8(n{+}1)} + \frac{1}{8(n{+}1)}+\frac{1}{32(n{+}1)(\frac{1}{2}-\frac{1}{8K})K}\frac{B_i}{B}  \ \eqdef \ x
	$$  
	As $\frac{1}{2}-\frac{1}{8K}\geq \frac{1}{4}$ and $\frac{B_i}{B} < K$, this yields:
	$$
	p_i \ < \ x \ < \ \frac{1}{8(n{+}1)^2}+ \frac{1}{8(n{+}1)}+\frac{1}{8(n{+}1)} \ < \ \frac{1}{2(n{+}1)}.
	$$
	But then condition (1) holds.
	\qed
\end{proof}

\end{appendix}

\end{document}